\PassOptionsToPackage{pdftex,hyperfootnotes=false,pdfpagelabels}{hyperref}
\RequirePackage[2024-06-01]{latexrelease}

\documentclass[%
aps,pra,
reprint,
superscriptaddress,
twocolumn, 
nofootinbib,
floats,floatfix
]{revtex4-2}

\usepackage[english]{babel}
\usepackage{amsmath}
\usepackage{pifont}
\usepackage{amssymb}
\usepackage{amsthm}
\usepackage{graphicx}
\usepackage{xcolor}
\usepackage[english]{babel}
\usepackage{csquotes}
\usepackage{braket}
\usepackage{natbib}
\usepackage{mathtools}
\usepackage{enumitem}
\usepackage{IEEEtrantools}
\usepackage{relsize}
\usepackage{placeins}
\usepackage{comment}
\usepackage[normalem]{ulem}
\usepackage{systeme}
\usepackage{physics}
\usepackage{algpseudocode}
\usepackage{hhline}

\usepackage{amsmath,amssymb}
\usepackage[table]{xcolor}
\usepackage{makecell}

\usepackage{amsmath,amssymb}
\usepackage[table]{xcolor}
\usepackage{array}
\usepackage{tabularx}
\usepackage{makecell}
\usepackage{leftindex}

\definecolor{WildSand}{HTML}{F7F7F7}
\definecolor{HeaderGray}{HTML}{E5E5E5}
\definecolor{Feijoa}{HTML}{BBE9E2}
\definecolor{TablePink}{HTML}{BBD5EA}

\newcommand{\fullrowcolor}[1]{%
  \rowcolor{#1}%
  [\dimexpr \tabcolsep+\arrayrulewidth\relax]%
  [\dimexpr \tabcolsep+\arrayrulewidth\relax]%
}

\newcolumntype{L}[1]{>{\raggedright\arraybackslash}p{#1}}
\newcolumntype{Y}{>{\raggedright\arraybackslash}X}

\usepackage{titlesec}

\usepackage{array,makecell}
\usepackage{multirow}

\renewcommand{\vec}[1]{\boldsymbol{#1}}  

\makeatletter
\newcommand{\varoverset}[2]{\binrel@{#2}%
  \binrel@@{\mathop{\vphantom{M}{#2}}\limits^{#1}}}
\makeatother
  
\usepackage[noline]{algorithm2e}
\RestyleAlgo{ruled}
\SetKwInput{KwData}{Input}
\SetKwInput{KwResult}{Output}

\makeatletter
\newcommand\footnoteref[1]{\protected@xdef\@thefnmark{\ref{#1}}\@footnotemark}
\makeatother

\usepackage{xparse}
\usepackage{tikz}
\usetikzlibrary{quantikz}

\NewDocumentEnvironment{alignb}{b}{%
  \begin{align*}
  \refstepcounter{equation} #1 \tag{\theequation}
  \end{align*}
}{\ignorespacesafterend}

\newenvironment{proofsketch}{%
  \renewcommand{\proofname}{Proof sketch}\proof}{\endproof}

\newenvironment{proofof}[1][\unskip]{%
\renewcommand{\proofname}{Proof of #1}\proof}{\endproof}

\newcommand{\bigo}[1]{\mathcal{O}\left({#1}\right)}

\newsavebox{\mstrut}
\newcommand{\bbra}[1]{%
    \sbox{\mstrut}{\(#1\)}%
    \mathinner{\left\langle\kern-0.25\ht\mstrut\left\langle{#1}\right|\right.\!}%
}
\newcommand{\kett}[1]{%
    \sbox{\mstrut}{\(#1\)}%
    \mathinner{\!\left.\left|{#1}\right\rangle\kern-0.25\ht\mstrut\right\rangle}%
}
\newcommand{\bbrakett}[2]{%
    \sbox{\mstrut}{\(#1\)}%
    \sbox{\mstrut}{\(#2\)}%
    \mathinner{\left\langle\kern-0.25\ht\mstrut\left\langle{#1}\right.\right.}\!\! | \!\!\mathinner{\left.\left.{#2}\right\rangle\kern-0.25\ht\mstrut\right\rangle}%
}

\newcommand{\kettbbra}[2]{
    \kett{#1} \! \bbra{#1}
}

\newcommand{\ketbraa}[2]{%
    | #1 \rangle \! \langle #1|
}

\newcommand{\exptsuper}[3]{%
    \sbox{\mstrut}{\(#1\)}%
    \sbox{\mstrut}{\(#2\)}%
    \sbox{\mstrut}{\(#3\)}%
    \mathinner{\left\langle\kern-0.25\ht\mstrut\left\langle{#1}\right.\right|}\! {#2} \!\mathinner{\left|\left.{#3}\right\rangle\kern-0.25\ht\mstrut\right\rangle}%
}

\usepackage{tikz-cd} 
\usepackage{adjustbox}

\usepackage{tikz}						
\usetikzlibrary{math, arrows,shapes.misc,
		       automata,backgrounds,
		       petri,topaths, decorations.pathmorphing, tikzmark}	

\usepackage{pgffor}

\usepackage{pgfplots}
\usetikzlibrary{pgfplots.groupplots}
\pgfplotsset{compat=1.11}
\usepgfplotslibrary{fillbetween}

\usepackage{tikz}
\usetikzlibrary{
  shapes,
  shapes.geometric,
	trees,
	matrix,
  positioning,
    pgfplots.groupplots,
  }

\usepackage[toc,page]{appendix}
\usepackage[tight]{minitoc}
\usepackage{tocloft}

\doparttoc 
\faketableofcontents 

\renewcommand \partname{}
\usepackage{hyperref}
\hypersetup{
	colorlinks = true,
    linkcolor = [rgb]{0.25, 0.34, 0.63},
	citecolor = [rgb]{0.13,0.55,0.13},
	urlcolor = [rgb]{0.25, 0.34, 0.63}}

\definecolor{myred}{RGB}{241,60,32}
\definecolor{mygreen}{RGB}{49,177,159}

\def\01{\{0,1\}}

\makeatletter
\let\oldabs\abs
\def\abs{\@ifstar{\oldabs}{\oldabs*}}
\let\oldnorm\norm
\def\norm{\@ifstar{\oldnorm}{\oldnorm*}}
\makeatother

\newtheorem{theorem}{Theorem}
\newtheorem{lemma}{Lemma}
\newtheorem{corollary}{Corollary}
\newtheorem{proposition}{Proposition}

\makeatletter 
    
\renewcommand\onecolumngrid{
\do@columngrid{one}{\@ne}%
\def\set@footnotewidth{\onecolumngrid}
\def\footnoterule{\kern-6pt\hrule width 1.5in\kern6pt}%
}

\renewcommand\twocolumngrid{
        \def\footnoterule{
        \dimen@\skip\footins\divide\dimen@\thr@@
        \kern-\dimen@\hrule width.5in\kern\dimen@}
        \do@columngrid{mlt}{\tw@}
}%

\makeatother

\begin{document}

\title{The Complexity of Dynamical Correlators: \\ Operator Shadows and Exponential Learning Separations}
\date{\today}

\author{Shao-Hen Chiew}
\affiliation{Institute of Physics, Ecole Polytechnique Fédérale de Lausanne (EPFL),  Lausanne CH-1015, Switzerland}
\affiliation{Centre for Quantum Science and Engineering, Ecole Polytechnique F\'{e}d\'{e}rale de Lausanne (EPFL), CH-1015 Lausanne, Switzerland}

\author{Armando Angrisani}
\affiliation{Institute of Physics, Ecole Polytechnique Fédérale de Lausanne (EPFL),  Lausanne CH-1015, Switzerland}
\affiliation{Centre for Quantum Science and Engineering, Ecole Polytechnique F\'{e}d\'{e}rale de Lausanne (EPFL), CH-1015 Lausanne, Switzerland}

\author{Zo\"e Holmes}
\affiliation{Institute of Physics, Ecole Polytechnique Fédérale de Lausanne (EPFL),  Lausanne CH-1015, Switzerland}
\affiliation{Centre for Quantum Science and Engineering, Ecole Polytechnique F\'{e}d\'{e}rale de Lausanne (EPFL), CH-1015 Lausanne, Switzerland}

\begin{abstract}
\normalsize
{Quantum platforms can realize many-body dynamics beyond classical simulation yet complete readout remains intractable: the cost of extracting accessible information scales exponentially with system size. Classical shadows and Bell sampling offer scalable, multi-observable estimation from randomized or entanglement-assisted measurements. Here we aim to push these ideas beyond static snapshots to dynamical correlators, including out-of-time-ordered correlators (OTOCs) and two-point functions. In particular, we introduce the notion of \textit{the shadow of an operator}, defined as the classical shadow of the vectorized time-evolved operator. Pauli operator-shadows enable simultaneous estimation of all local OTOCs, while Clifford operator-shadows enable efficient simultaneous estimation of all two-point correlators. Alternatively, Bell sampling allows one to simultaneously compute all diagonal OTOCs. We also prove information-theoretic lower bounds for learning OTOCs, fully characterizing their query complexities in many cases, and yielding exponential separations that formalize when the vectorized approach provides measurement-efficiency advantages.}

\end{abstract}

\maketitle

\makeatletter

\section{Introduction} 

Quantum computers and simulators evade the exponential computational overhead of representing many-body quantum states, yet the information encoded in quantum systems still grows exponentially with system size, making complete classical readout impossible. Limited experimental time and data-storage capacity therefore force deliberate choices about what to measure and how to measure it, so as to optimally extract physically relevant information about quantum processes.

There is growing recognition that entanglement- \cite{chen2021exponential,huang2021information,chen2022quantum,caro2024learning} and randomization-based \cite{huang2020predicting} measurement schemes can dramatically improve measurement efficiency. Classical shadows provide a framework for predicting many expectation values of quantum states: randomized unitaries followed by simple measurements produce compact classical “snapshots” that can be post-processed to estimate a wide family of observables without designing a separate protocol for each~\cite{huang2020predicting, elben2022randomized}. Complementarily, Bell sampling (an established entanglement-assisted strategy based on joint Bell-basis measurements on two copies) gives direct access to overlap/swap-type quantities, while enabling simultaneous inference of multiple state properties~\cite{montanaro2017learning,gross2021schur,hangleiter2024bell}. To date, however, these approaches have been largely confined to characterizing states rather than dynamics.

In many settings, the goal is to probe dynamical processes directly. Out-of-time-ordered correlators (OTOCs) and multi-point correlators are key diagnostics of many-body dynamics, capturing information scrambling, thermalization, and the spread of quantum correlations~\cite{swingle2016measuring,schuster2023operator,google2025observation, von2018operator,nahum2018operator,rakovszky2018diffusive,khemani2018operator,roberts2018operator,qi2019quantum,fisher2023random,xu2024scrambling,zhang2025quantum}. Here we develop entanglement- and randomization-enhanced protocols that enable the efficient simultaneous estimation of large families of such quantities.

Concretely, our first set of results introduce the concept of the shadow of an operator, defined as the classical shadow of the vectorized time-evolved operator. This framework enables efficient inference of dynamical observables from compact measurement data: Pauli operator-shadow supports the simultaneous estimation of all local OTOCs, while Clifford operator-shadow enables the simultaneous estimation of a large family of two-point correlators. Alternatively, approaches based on Bell sampling allow one to simultaneously access all diagonal OTOCs. 

Our second set of results establish optimal sample-complexity guarantees for operator shadows. We then prove exponential advantages for this vectorized approach over existing single-copy protocols for OTOC-learning tasks. Concretely we show that (i) estimating all $4^n\!-\!1$ diagonal OTOCs admits an exponential separation:
without ancillas any single-copy protocol (even adaptive, and posessing quantum memory) requires
$\Omega(2^n/\varepsilon^2)$ runs, whereas with $n$ ancillas (i.e.\ the $2n$-qubit vectorized setting) Bell/operator
sampling requires only $O(n/\varepsilon^2)$ runs; and (ii) estimating \emph{all} OTOCs exhibits a further exponential
separation with respect to quantum memory, requiring $\Omega(4^n/\varepsilon^2)$ runs without memory but only
$O(n/\varepsilon^4)$ copies with memory. We thus demonstrate that operator-Bell sampling and shadows deliver measurement-efficiency gains for characterizing many-body quantum dynamics.


\begin{table*}[ht!]
\centering

\renewcommand{\tabularxcolumn}[1]{m{#1}}

\newcolumntype{Z}[1]{%
  >{\hsize=#1\hsize\linewidth=\hsize\centering\arraybackslash}X%
}

\newlength{\ResultsTableWidth}
\setlength{\ResultsTableWidth}{0.90\textwidth}

\renewcommand{\arraystretch}{1.45}
\setlength{\tabcolsep}{8pt}
\arrayrulecolor{black!65}

\centering
\begin{tabularx}{\ResultsTableWidth}{
  @{}
  !{\vrule width 1pt}
  Z{0.80} 
  !{\vrule width 1pt}
  Z{1.15} 
  !{\vrule width 1pt}
  Z{1.05} 
  !{\vrule width 1pt}
  @{}
}
\Xhline{1pt}

\fullrowcolor{HeaderGray}
\textbf{Problem}
&
\textbf{Upper bound / algorithm}
&
\textbf{Lower bound}
\\
\hline

\fullrowcolor{Feijoa}
\makecell[c]{\\[0pt]
\textbf{General OTOCs}}
&
\makecell[c]{
    \\[-10pt]
    \(\displaystyle
    \mathcal{O}\!\left(
        \frac{9^w\log(M/\delta)}{\epsilon^2}
    \right)\)
    \\[2pt]
    {\footnotesize
    Theorems~\ref{theorem:simul_otoc_shadows_nq},~\ref{theorem:general_op_shad}}
}
&
\makecell[c]{
    \\[-10pt]
    \(\displaystyle
    \Omega\!\left(
        \frac{9^w\log M}{\epsilon^2}
    \right)\)
    \\[2pt]
    {\footnotesize
    Theorems~\ref{theorem:shadow_lowerbound},~\ref{theorem:shadow_gen_lowerbound_formal}, \ref{theorem:shadow_gen_lowerbound_formal_choi}}
}
\\
\hline

\fullrowcolor{Feijoa}
\makecell[c]{\\[0pt]
\textbf{Diagonal OTOCs}}
&
\makecell[c]{
    \\[-10pt]
    \(\displaystyle
    \mathcal{O}\!\left(
        \frac{3^w\log(M/\delta)}{\epsilon^2}
    \right)\)
    \\[2pt]
    {\footnotesize
    Theorems~\ref{theorem:simul_otoc_shadows_nq},~\ref{theorem:general_op_shad_diag}}
}
&
\makecell[c]{
    \\[-10pt]
    \(\displaystyle
    \Omega\!\left(
        \frac{3^w\log M}{\epsilon^2}
    \right)\)
    \\[2pt]
    {\footnotesize
    Theorems~\ref{theorem:shadow_lowerbound},~\ref{theorem:shadow_diag_lowerbound_formal}}
}
\\
\hline

\fullrowcolor{TablePink}
\makecell[c]{\\[0pt]
\textbf{All OTOCs}}
&
\makecell[c]{
  \\[-10pt]
  \(\displaystyle
  \mathcal{O}\!\left(
    \frac{n}{\epsilon^4}
  \right)\) {\footnotesize with quantum memory}
  \\[8pt]
  {\footnotesize
  Theorem~\ref{theorem:exp_sep_all_otocs_query}, Lemma~\ref{app:qmem_ub}}
}
&
\makecell[c]{
    \\[-10pt]
    \(\displaystyle
    \Omega\!\left(
        \frac{4^n}{\epsilon^2}
    \right)\) {\footnotesize w/o quantum memory}
    \\[2pt]
    {\footnotesize
    Theorem~\ref{theorem:exp_sep_all_otocs_query}}
}
\\
\hline

\fullrowcolor{TablePink}
\makecell[c]{\\[0pt]
\textbf{All diagonal OTOCs}}
&
\makecell[c]{
    \\[-10pt]
    \(\displaystyle
    \mathcal{O}\!\left(
        \frac{n}{\epsilon^2}
    \right)\) {\footnotesize with \(n\) ancillas}
    \\[7pt]
    {\footnotesize
    Theorems~\ref{theorem:expsep_diag},~\ref{theorem:diagonal_otoc_algo_nk}}
}
&
\makecell[c]{
    \\[-10pt]
    \(\displaystyle
    \Omega\!\left(
        \frac{2^n}{\epsilon^2} 
    \right)\) {\footnotesize w/o ancillas}
    \\[2pt]
    {\footnotesize
    Theorem~\ref{theorem:expsep_diag}}
}
\\
\hhline{>{\arrayrulecolor{TablePink}}->{\arrayrulecolor{black!65}}--}

\fullrowcolor{TablePink}
&
\makecell[c]{
    \\[-10pt]
    \(\displaystyle
    \mathcal{O}\!\left(
        \frac{
            n\,2^{\,n-n_{\mathrm{anc}}}\log(1/\delta)
        }{
            \epsilon^2
        }
    \right)\)
    \\[2pt]
    {\footnotesize
    with \(n_{\mathrm{anc}}\) ancillas; non-adaptive}
    \\[2pt]
    {\footnotesize
    Theorems~\ref{theorem:expsep_diag}, \ref{theorem:diagonal_otoc_algo_nk}}
}
&
\makecell[c]{
    \\[-10pt]
    \(\displaystyle
    \Omega\!\left(
        \frac{
            n\,2^{\,n-n_{\mathrm{anc}}}
        }{
            \epsilon^2
        }
    \right)\)
    \\[2pt]
    {\footnotesize
    with \(n_{\mathrm{anc}}\) ancillas; non-adaptive}
    \\[2pt]
    {\footnotesize
    Theorems~\ref{theorem:expsep_diag},~\ref{theorem:expsep_diag_nonad}}
}
\\
\hline

\fullrowcolor{WildSand}
\makecell[c]{\\[0pt]
\textbf{Two-point correlators}}
&
\makecell[c]{
    \\[-10pt]
    \(\displaystyle
    \mathcal{O}\!\left(
        \frac{\log(M/\delta)}{\epsilon^4}
    \right)\)
    \\[2pt]
    {\footnotesize
    Theorem~\ref{theorem:simul_2pc_shadows}}
}
&
\makecell[c]{\\[0pt]
\(\text{?}\)}
\\
\hline

\fullrowcolor{Feijoa}
\textbf{Channel PTM elements}
&
\makecell[c]{
    \\[-20pt]
    \(\displaystyle
    \mathcal{O}\!\left(
        \frac{9^w\log(M/\delta)}{\epsilon^2}
    \right)\)
    \\[2pt]
    {\footnotesize
    Corollary~\ref{theorem:ptm_tight}, Theorem~\ref{theorem:general_ptm_shad}}
}
&
\makecell[c]{
    \\[-20pt]
    \(\displaystyle
    \Omega\!\left(
        \frac{9^w\log M}{\epsilon^2}
    \right)\)
    \\[2pt]
    {\footnotesize
    Corollary~\ref{theorem:ptm_tight}, Theorem~\ref{theorem:ptm_gen_lowerbound_formal}}
}
\\
\hline

\fullrowcolor{Feijoa}
\textbf{Channel diagonal PTM elements}
&
\makecell[c]{
    \\[-20pt]
    \(\displaystyle
    \mathcal{O}\!\left(
        \frac{3^w\log(M/\delta)}{\epsilon^2}
    \right)\)
    \\[2pt]
    {\footnotesize
    Corollary~\ref{theorem:ptm_tight}, Theorem~\ref{theorem:general_ptm_shad_diag}}
}
&
\makecell[c]{
    \\[-20pt]
    \(\displaystyle
    \Omega\!\left(
        \frac{3^w\log M}{\epsilon^2}
    \right)\)
    \\[2pt]
    {\footnotesize
    Corollary~\ref{theorem:ptm_tight}, Theorem~\ref{theorem:ptm_diag_lowerbound_formal}}
}
\\
\Xhline{1pt}

\end{tabularx}

\caption{
Summary of the algorithmic upper bounds and information-theoretic
lower bounds established in this work.
Here, \(w\) denotes the operator weight and \(M\) the number of
quantities estimated simultaneously.
Green rows indicate matching upper and lower bounds, while blue rows
highlight exponential separations under the stated resource
restrictions.
All upper bounds are constructive.
}
\label{tab:algorithms_index}

\end{table*}

\section{Framework}

Two-point correlators and OTOCs are popular operator dynamics metrics that provide natural probes of information scrambling and thermalization~\cite{swingle2016measuring,schuster2023operator,google2025observation,von2018operator,nahum2018operator,rakovszky2018diffusive,khemani2018operator,roberts2018operator,qi2019quantum,fisher2023random,xu2024scrambling,zhang2025quantum}.
Two-point correlators take the form $\tr(O_1 O_2)/2^n$, where $O_1,O_2$ are Heisenberg operators, and $\tr(\cdot) \equiv \sum_{i}\bra{i}(\cdot)\ket{i}$ denotes the trace operator. When one of the operators correspond to a Pauli operator (e.g. $\tr(P_k O)/2^n$), it corresponds to a Pauli amplitude. OTOCs take the general form $\tr(O(t) P O(t) Q)/2^n$, where $P, Q \in {\mathcal{P}}_n$. They are said to be diagonal if $P=Q$, and off-diagonal otherwise. In this manner, OTOCs and two-point correlators probe the properties of a time-evolved operator $O(t)$.

In the Heisenberg picture of dynamics, operators evolve under unital maps $\mathcal{E}^\dagger$ as $O \rightarrow O(t) = \mathcal{E}^\dagger(O)$, while states remain fixed in time. Here, $\mathcal{E}^\dagger$ denotes the adjoint (dual) of the quantum channel $\mathcal{E}$ with respect to the Hilbert–Schmidt (HS) inner product $\tr(A^\dagger B)$. Any $n$-qubit Heisenberg operator $O(t)$ can be expanded in an orthogonal operator basis such as the Pauli basis ${\mathcal{P}}_n = \{P_k\}_{k=1}^{4^n}$, i.e. $O(t) = \sum_k c_k P_k$, where $c_k = \tr(P_k O(t))/2^n \in \mathbb{R}$ are called the \textit{Pauli amplitudes of $O(t)$}. The normalized squared values of $c_k$ define a probability distribution $p_k \equiv c^2_k/\sum_k c_k^2$, called the \textit{Pauli distribution of $O(t)$}.

Ref.~\cite{chiew2026quantum} observed that, by mapping the Heisenberg-evolved operator ($O(t)$) to a quantum state ($\kett{O(t)}$) through vectorization, tasks naturally formulated in the Heisenberg picture can be recast as state-based tasks in the Schrödinger picture. These include sampling from the Pauli distribution (of $O(t)$), computing infinite-temperature OTOCs and two-point correlators, and estimating operator stabilizer entropies (OSE) and local operator entanglement (LOE).

The (normalized) vectorization map with respect to an $n$-qubit orthogonal operator basis $\mathcal{Q} = \{Q_k\}_{k=1}^{2^n}$ and a $2n$-qubit orthonormal state basis $\{\ket{k}\}_{k=1}^{4^n}$ is defined as:
\begin{equation} \label{eq:vect_map}
    \kett{O}_\mathcal{Q} \equiv \sum_{k=1}^{4^n} \frac{\tr(Q_k O)}{\sqrt{\sum_i \abs{\tr(Q_i O)}^2}} \ket{k},
\end{equation}
and maps between $n$-qubit linear operators $O \in \mathcal{L}(\mathcal{H})$ and $2n$-qubit pure states $\kett{O} \in \mathcal{H}'$, where $\mathcal{H}$ and $\mathcal{H}'$ denote the $n$- and $2n$-qubit Hilbert spaces respectively. It is one-to-one, and distance preserving. Two important cases are when $\mathcal{Q} = \mathcal{C} = \{\ketbra{i}{j}\}$ (the computational operator basis), and when $\mathcal{Q} = \mathcal{P}_n$, the Pauli basis; $\forall O$, $\kett{O}_\mathcal{C}$ and $\kett{O}_\mathcal{P}$ are related by a Bell basis transformation. We also drop the subscript in the former case, and simply write $\kett{\cdot}_{\mathcal{C}} = \kett{\cdot}$.

The states $\kett{O(t)}$ can be prepared on quantum computers by exploiting the ricochet identity $A \otimes C^T \kett{B} = \kett{A B C}$, which implies $(U^\dagger O U \otimes \mathbb{I}) \kett{\mathbb{I}} = \kett{O(t)}$, where $\kett{\mathbb{I}}$ is the $2n$-qubit maximally entangled state, consisting of $n$ Bell pairs. Written in this form, the unitary applied to $\kett{\mathbb{I}}$ (written in channel form as $\tilde{U}^\dagger \circ \tilde{O} \circ \tilde{U} (\cdot)$ where $\tilde{X}(\cdot) \equiv X(\cdot)X^\dagger$ denotes conjugation) implements a Loschmidt echo \cite{peres1984stability,jalabert2001environment}, with $O$ playing the role of a local perturbation.

The Pauli distribution of $p_k$ of $O(t)$, obtained by sampling from $\kett{O(t)}$ in the Bell basis encodes a significant amount of operatorial properties, including diagonal OTOCs, and more generally all geometrical properties of $O(t)$, such as its volume, boundary position, etc. This renders it convenient to generate and store an empirical estimate of the Pauli distribution, which subsequently allows many operatorial properties to be extracted in an \textit{a posteriori} manner.
Moreover, two-point correlators $\tr(P_k O)/2^n$ are encoded as the wavefunction amplitudes of $\kett{O(t)}$ (evident from Eq.~(\ref{eq:vect_map}) with $\mathcal{Q} = \mathcal{P}_n$), and OTOCs, or more generally the expectation value of self-adjoint superoperators over $O(t)$, are encoded as Pauli expectation values of $\kett{O(t)}$. This enables the exploitation of the basic but powerful observation that the expectation values of commuting observables can be simultaneously estimated. Examples are the set of OTOCs $\{\tr(O(t)Z_i O(t)Z_j)/2^n\}$ via computational basis measurements, or the set of all diagonal OTOCs $\{\tr(O(t) P O(t) P)/2^n\}$ via Bell basis measurements.

\section{Algorithms}

Our first set of results extends the `measure first, think later' approach, to families of operatorial properties. The core idea is to apply the framework of classical shadows \cite{huang2020predicting} to the vectorized operators.

In its original formulation, classical shadows seeks to output a compressed classical description of given quantum states $\rho$ via randomized measurements. This is achieved by randomly sampling unitaries $U$ from a suitable unitary ensemble $\mathcal{U}$, and applying them to given samples of $\rho$ before measuring in the computational basis. After classical post-processing, a compact representation of $\rho$ is constructed and stored in classical memory, which can be used to simultaneously estimate many expectation values $\{ \tr(\rho M_i) \}$ to low error $\epsilon$, with $\{M_i\}$ depending on the choice of $\mathcal{U}$.

Two classes of shadows can be distinguished: (1) \textit{Pauli shadows} are obtained if the unitaries are random single-qubit Clifford gates, i.e. $\mathcal{U} = \text{Cl}(2)^{\otimes n}$. This enables efficient estimation for observables $M_i$ that are of small support/weight, e.g. low-weight Pauli operators. (2) \textit{Clifford shadows} are obtained with random $n$-qubit unitaries, i.e. $\mathcal{U} = \text{Cl}(2^n)$. This enables efficient estimation for $M_i$'s with bounded Hilbert-Schmidt norm, and when they admit efficient classical representations, e.g. fidelities with stabilizer states. \\

\paragraph*{Pauli operator shadow.}

Firstly, we provide efficient $n$- and $2n$-qubit algorithms for the simultaneous estimation of many low-weight OTOCs:

\begin{theorem}[Pauli operator shadow] \label{theorem:simul_otoc_shadows_nq}
Consider the task of simultaneously estimating $M$ OTOCs $\{ \tr(O(t) P_{i} O(t) Q_{i})/2^n\}_{i=1}^M$ each to additive error $\epsilon$, with success probability $1-\delta$, where $P_{i}, Q_{i}$ are of weight $w$. There are non-adaptive algorithms using $n$ or $2n$ qubits that achieve this using $N = \bigo{9^w \log(M/\delta)/\epsilon^2}$ queries to the unitary $O(t) = U^\dagger O U$ (or alternatively $(U^\dagger \otimes U^T)(O \otimes \mathbb{I})$ with $2n$ qubits). \\
If, additionally, the OTOCs are promised to be diagonal, i.e. $\forall~i ~ P_i = Q_i$, then the scaling in $w$ can be improved to $\bigo{3^w}$.
\end{theorem}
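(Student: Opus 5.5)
The plan is to recast each OTOC as a Pauli expectation value on the vectorized state $\kett{O(t)}$ and then apply the standard Pauli classical-shadow guarantee of \cite{huang2020predicting}.

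\textbf{Vectorization identity.} Take the Pauli-basis vectorization, or equivalently the computational one up to a Bell transformation. Using the ricochet identity $A\otimes C^T\kett{B}=\kett{ABC}$ we obtain
\[
\frac{\tr(O(t)P O(t) Q)}{2^n} = \frac{1}{2^n}\bbra{O(t)}\,(P\otimes Q^T)\,\kett{O(t)}\cdot(\text{normalization}).
\]
Here $O(t)$ is unitary, since it is the conjugation of a Pauli or unitary $O$, so $\tr(O(t)^\dagger O(t))=2^n$ and the normalized state gives exactly $\bbra{O(t)}(P\otimes Q^T)\kett{O(t)}$, possibly up to an adjoint or sign convention. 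The operator $P\otimes Q^T$ is a $2n$-qubit Pauli, up to a phase $\pm1$ from $Y^T=-Y$, and its weight is at most $2w$. The state $\kett{O(t)}$ is prepared with $2n$ qubits from $n$ Bell pairs by applying $O(t)\otimes\mathbb{I}$, or $(U^\dagger\otimes U^T)(O\otimes \mathbb{I})$ acting on $\kett{\mathbb{I}}$.

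\textbf{The $2n$-qubit algorithm.} Perform random single-qubit Clifford measurements on $\kett{O(t)}$ and use median-of-means. The Pauli-shadow variance bound $\|P\otimes Q^T\|_{\rm shadow}^2\le 3^{\mathrm{wt}}$ naively gives $3^{2w}=9^w$. Together with the median-of-means bound this yields $N=\bigo{9^w\log(M/\delta)/\epsilon^2}$. The supports of $P$ and $Q^T$ lie on disjoint registers, so the weight is $|{\rm supp}P|+|{\rm supp}Q|\le 2w$ and the bound $9^w$ holds generally.

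\textbf{Diagonal refinement.} For $P_i=Q_i$, I would first apply the transversal Bell-basis rotation, i.e.\ a CNOT and Hadamard on each pair $(j,j')$. This maps each pair Pauli $\sigma\otimes\sigma^T$ to a single-qubit $Z$-type operator on a pair of qubits. Each diagonal OTOC then becomes a Pauli observable that is diagonal in the rotated basis and has support on at most $w$ \emph{pairs*}. I would then randomize only within the local commutative structure: draw a random Clifford per pair from the $3$ Bell-like bases that diagonalize $\{XX,YY^T,ZZ\}$-type sets, equivalently measuring $P\otimes P^T$ via single-qubit Pauli measurements on the Pauli-basis vectorized state $\kett{O(t)}_{\mathcal P}$. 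The effective local alphabet then has size $3$ per pair rather than $9$, and the shadow-norm computation gives $3^w$.

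\textbf{The $n$-qubit algorithm.} Without ancillas I would instead feed random product stabilizer inputs $\ket{s}$ into the unitary $O(t)$ and perform random single-qubit Pauli measurements, i.e.\ a Pauli process shadow. By the identity $\tr(O(t)PO(t)^\dagger Q)/2^n=\mathbb{E}_{\rho}\,[\ldots]$ over the locally uniform stabilizer ensemble, the estimator is unbiased and has variance $3^{w}\cdot 3^{w}$. This mirrors the Choi-state shadow, since random product stabilizer inputs emulate measuring half of the Bell pairs. For the diagonal case, one correlates the input and output bases, using the same random Pauli basis on each site, to recover $3^w$.

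\textbf{Main obstacle.} The main obstacle is the diagonal $3^w$ improvement. One must verify that the correlated measurement ensemble has inverse channel norm $3$ per pair on the relevant observables $\sigma\otimes\sigma^T$, and that the $Y^T$ signs are tracked correctly. I would check the single-pair case explicitly and then tensorize.
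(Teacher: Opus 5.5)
Your general-case argument (the weight-$2w$ Pauli $P\otimes Q^T$ on $\kett{O(t)}$, shadow norm $3^{2w}=9^w$, median-of-means) matches the paper's Pauli operator shadow algorithms. So do your $n$-qubit schemes: random product-stabilizer inputs, and in the diagonal case the same random local basis at input and output. The identity you leave as ``$\mathbb{E}_\rho[\ldots]$'' is the step the paper actually proves. Take the input to be the transpose $V_L^T\ket{b_L}$; this is where your $Y^T$ signs are absorbed. Then $\tr(A\otimes B^T\rho_{\mathcal N})=\tr(A\,\mathcal N(B))/2^n$ shows the joint law of $(b_L,b_R)$ equals $\tr(V^\dagger\ketbra{b}{b}V\,\rho)$ with $V=V_L\otimes V_R$. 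Hence the $n$-qubit snapshots coincide with the $2n$-qubit ones, and no separate variance computation is needed.

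Your diagonal $2n$-qubit paragraph, however, mixes two different schemes, and the equivalence you invoke is false.
\begin{itemize}
\item After the transversal Bell rotation, every pair Pauli $\sigma\otimes\sigma^T$ is already a $Z$-string. A plain computational-basis (Bell) measurement therefore gives $\pm1$-valued unbiased estimators of \emph{all} diagonal OTOCs at once. This yields $O(\log(M/\delta)/\epsilon^2)$ with no $3^w$ factor, so further randomization is unnecessary. It also uses entangling gates across the two halves, so it does not unfold into your $n$-qubit protocol.
\item Your claimed equivalent, random single-qubit Pauli measurements on $\kett{O(t)}_{\mathcal P}$, does not give $3^w$. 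One of the three pair Paulis becomes weight two (e.g.\ $Y\otimes Y^T\mapsto Z\otimes Z$). The shadow norm is then $3^{w+w_Y}$, with $w_Y$ the number of $Y$ factors of $P$, which can reach $9^w$.
\end{itemize}
The paper's $3^w$ scheme uses no Bell rotation. It applies the same random single-qubit Clifford $V_j\otimes V_j$ to each pair $(L_j,R_j)$ and measures in the computational basis.

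The computation you defer as the ``main obstacle'' is the real content of the proof. The pair channel $\mathcal M_{d,1}$ fixes $\mathbb I\otimes\mathbb I$ and multiplies $\sigma\otimes\mathbb I$, $\mathbb I\otimes\sigma$ and $\sigma\otimes\sigma$ by $1/3$. It annihilates $\sigma\otimes\tau$ for distinct non-identity $\sigma\neq\tau$, so it is not invertible and must be inverted on its image; every measured projector lies in that image. This gives $\mathcal M_d^{-1}(P\otimes P^T)=3^w\,P\otimes P^T$. The shadow-norm operator then equals $3^w\,\mathbb I$, so the variance is at most $3^w$.
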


The proof of Theorem~\ref{theorem:simul_otoc_shadows_nq}, given in Appendix~\ref{app:pauli_shadow}, is a direct consequence of the equivalence between OTOCs and expectation values of Pauli operators over $\kett{O(t)}$, and is based on constructing the Pauli shadow of the state $\kett{O(t)}$ -- which we directly call the \textit{Pauli shadow of $O(t)$} -- that yields a classical description of $O(t)$. 

In particular, with access to $2n$ qubits, the Pauli shadow of $O(t)$ can be obtained by applying the standard Pauli classical shadows protocol to the state $\kett{O(t)}$, which can be prepared by either applying $(U^\dagger O U) \otimes \mathbb{I}$ or $(U^\dagger \otimes U^T)(O \otimes \mathbb{I})$ to $\kett{\mathbb{I}}$. The latter choice parallelizes the application of $U^\dagger$ and $U^T$, but requires access to $U^T$. On the other hand, with access to only $n$ qubits, the same Pauli shadow can be obtained by randomly initializing $n$-qubit product stabilizer states, applying $U^\dagger O U$, and performing random Pauli measurements. This process exactly simulates the measurement of the $2n$-qubit approach using only $n$ qubits, but requires $U$ and $U^\dagger$ to be applied sequentially.

The above approaches based on randomized measurements complement the Pauli operator sampling/Bell-sampling approach of \cite{chiew2026quantum} by efficiently capturing low-weight OTOCs, regardless of (superoperator) commutativity and diagonality. It is therefore expected to be advantageous for tasks that only involve estimating low-weight OTOCs, such as when probing the average operator size, which consists of only weight-1 OTOCs \cite{nahum2018operator,roberts2018operator}, and the LOE over small spatial partitions. It can also be obtained using only $n$ qubits and local Pauli initialization/measurements, rendering it amenable to near-term implementation. 

A question that arises is whether the protocol of Theorem~\ref{theorem:simul_otoc_shadows_nq} is optimal. We will answer this in the affirmative in the next section by providing matching information-theoretic lower bounds (Theorem~\ref{theorem:shadow_lowerbound}). \\

\begin{figure*}[!htb]
    \centering
    \includegraphics[width=1\linewidth]{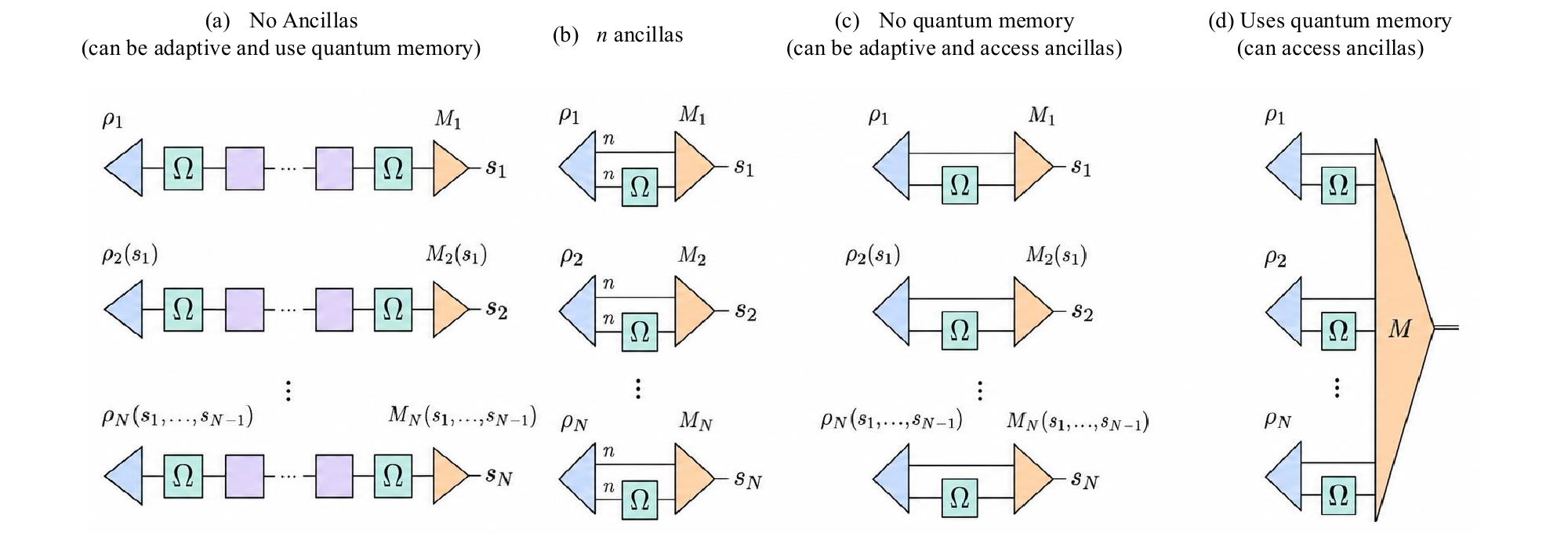}
    \caption{Classes of protocols for learning OTOCs considered in our work, which also captures the majority of existing protocols. For compactness we here write $\Omega = \mathcal{O}_{\mathcal{E}, O}$ to denote the generalized Loschmidt echo channel.
    }
    \label{fig:protocols}
\end{figure*}

\paragraph*{Clifford Operator Shadow.}
Next, we turn to the problem of estimating two-point Pauli correlators, which are of the form $\tr(P_kP_l(t))/2^n$, $P_k, P_l \in {\mathcal{P}}_n$. As useful probes of operator growth \cite{von2018operator,parker2019universal,von2022operator,nahum2022real}, scrambling, and chaos \cite{cotler2017chaos,fisher2023random,moudgalya2024symmetries,yoshimura2025theory}, they also appear in the computation of other physical quantities such as the Operator Stabilizer Entropy (OSE) \cite{dowling2025magic}, dynamical structure factor \cite{baez2020dynamical}, and more. When the index $l$ is held fixed, the set of $4^n-1$ two-point correlators $\{\tr(P_k P_l(t))/2^n\}_{P_k \in {\mathcal{P}}_n \backslash\mathbb{I}}$ correspond to the Pauli amplitudes $c_l \equiv \tr(P_k O(t))/2^n$ of the Heisenberg operator $O(t) \equiv P_l(t) = U^\dagger P_l U $. 

We provide an efficient algorithm that is able to simultaneously estimate \textit{all} $4^n-1$ Pauli amplitudes of a Heisenberg operator efficiently:
\begin{theorem}[Correlators via Clifford classical shadows]\label{theorem:simul_2pc_shadows}
Consider the task of simultaneously estimating $M$ correlators $\{ \tr(O P_{k})/2^n\}_{k=1}^M$ each to additive error $\epsilon$, with success probability $1-\delta$, where $P_i \in \mathcal{P}_n$ are Pauli operators. There is a $2n$-qubit algorithm that achieves this using $N=\bigo{\log(M/\delta)/\epsilon^4}$ queries to the unitary $(U^\dagger R_O(\pi/4) U) \otimes \mathbb{I}$ or $(U^\dagger \otimes U^T)(R_O(\pi/4) \otimes \mathbb{I})$, where $R_O(\pi/4) = e^{-i \pi O/4}$.
\end{theorem}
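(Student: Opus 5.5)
The plan is to reduce each correlator to the expectation value of a low-rank observable built from stabilizer states, evaluated on a $2n$-qubit state that the stated query prepares, and then to invoke the Clifford classical-shadow guarantee of \cite{huang2020predicting}. Throughout I work in the setting the paper motivates, namely $O(t)=P_l(t)=U^\dagger P_l U$ with $O^2=\mathbb{I}$; a general Hermitian $O$ is discussed at the end.

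\textbf{Step 1 (the prepared state).} Since $O^2=\mathbb{I}$,
\begin{equation*}
R_O(\pi/4)=\cos(\pi/4)\,\mathbb{I}-i\sin(\pi/4)\,O=(\mathbb{I}-iO)/\sqrt{2},
\end{equation*}
and therefore $U^\dagger R_O(\pi/4)U=(\mathbb{I}-iO(t))/\sqrt{2}$. By the ricochet identity, applying either form of the query to $\kett{\mathbb{I}}$ gives the normalized state
\begin{equation*}
\ket{\psi}=\bigl(\kett{\mathbb{I}}-i\kett{O(t)}\bigr)/\sqrt{2}.
\end{equation*}
Write everything in the Pauli-vectorized basis, where $\kett{P_k}_\mathcal{P}$ are orthonormal stabilizer states related to Bell states by local Cliffords. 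In this basis the amplitudes of $\kett{O(t)}$ are exactly $c_k=\tr(P_kO(t))/2^n\in\mathbb{R}$, and $\sum_k c_k^2=1$ because $O(t)$ is unitary and Hermitian.

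\textbf{Step 2 (the observable for each $k$).} For $P_k\neq\mathbb{I}$, define
\begin{equation*}
\ket{\phi_k^\pm}=\bigl(\kett{\mathbb{I}}\pm i\kett{P_k}\bigr)/\sqrt{2},\qquad A_k=\ketbra{\phi_k^-}{\phi_k^-}-\ketbra{\phi_k^+}{\phi_k^+}=i\bigl(\kett{\mathbb{I}}\bbra{P_k}-\kett{P_k}\bbra{\mathbb{I}}\bigr).
\end{equation*}
Expanding $\bra{\psi}A_k\ket{\psi}$, the amplitude on $\kett{P_k}$ is $-ic_k/\sqrt{2}$ and the amplitude on $\kett{\mathbb{I}}$ is $(1-ic_0)/\sqrt{2}$, where $c_0=\tr(O(t))/2^n=0$ for a nonidentity Pauli. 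The cross terms then give $\bra{\psi}A_k\ket{\psi}=c_k$, up to a real prefactor that I will pin down in this short computation (and the $c_0$ term should I keep track of it). Each $A_k$ is Hermitian, has rank $2$ and $\tr(A_k^2)=2$, and is a difference of two stabilizer projectors. Hence $\tr(A_k\hat\rho)$ for a Clifford-shadow snapshot $\hat\rho$ can be evaluated in $\mathrm{poly}(n)$ time by Gottesman--Knill.

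\textbf{Step 3 (sample complexity).} The Clifford shadow norm satisfies $\|A\|_{\mathrm{shadow}}^2\le 3\tr(A^2)$, so the single-snapshot variance is $O(1)$. Median-of-means over $N=O(\log(M/\delta)/\epsilon^2)$ snapshots then estimates all $M$ values $c_k$ to error $\epsilon$ with probability $1-\delta$. This already implies the stated $O(\log(M/\delta)/\epsilon^4)$.

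\textbf{Main obstacle.} The hard step is justifying the query model when $O$ is a general Hermitian operator with $\|O\|\le1$ rather than an involution. Then $R_O(\pi/4)=\cos(\pi O/4)-i\sin(\pi O/4)$, and the linear observable $A_k$ returns $\tr(P_k\sin(\pi O(t)/4))/2^n$ instead of the correlator. In that case I would first try to recover $\tr(P_kO(t))$ from a controlled or phase-kicked variant of the query. Failing that, I would estimate $\tr(P_kO(t))$ through squared overlaps and take square roots, using the $\kett{\mathbb{I}}$ reference to fix the sign. Converting an $\epsilon^2$-accurate overlap into an $\epsilon$-accurate amplitude through the square root costs $1/\epsilon^4$ samples. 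I expect this is the source of the $\epsilon^{-4}$ in the statement, so I would keep this route as the fallback that gives the theorem's bound in full generality.
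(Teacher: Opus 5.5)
Your proof is correct, but at the key step it takes a different and stronger route than the paper. The paper prepares the same shifted state and also uses global Clifford shadows. However, it estimates only the fidelity with the matching-phase stabilizer state, which in your notation is $F_k^- = |\langle\phi_k^-|\psi\rangle|^2 = \bigl((1+c_k)/2\bigr)^2$. It then inverts this as $\hat c_k = 2\sqrt{\hat F_k^-}-1$ (Lemma~\ref{lemma:fid_to_overlap}). That inversion is controlled only through $|\sqrt{x}-\sqrt{y}|\le\sqrt{|x-y|}$, which is tight when $F_k^-\approx 0$, i.e.\ $c_k\approx -1$. So the fidelities must be estimated to accuracy of order $\epsilon^2$, and this square-root step is the sole origin of the $\epsilon^{-4}$ in the statement.

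You instead subtract the partner fidelity: $F_k^- - F_k^+ = \bigl((1+c_k)^2-(1-c_k)^2\bigr)/4 = c_k$, so your undetermined prefactor is exactly $1$. Each target is then a linear functional $\bra{\psi}A_k\ket{\psi}$ with $\tr(A_k)=0$ and $\tr(A_k^2)=2$. Theorem~\ref{theorem:huang_predicting_clifford} therefore gives $\bigo{\log(M/\delta)/\epsilon^2}$ queries. This implies the stated bound, is quadratically better in $\epsilon$, and matches the $1/\epsilon^2$ dependence the paper itself notes is unavoidable. Post-processing stays efficient because both $\phi_k^\pm$ are stabilizer states (cf.\ Lemma~\ref{lemma:shifted_stab_state}).

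What the paper's route buys is a reduction to a single-target fidelity, which plugs directly into SWAP tests or DFE. Your linearization works with those primitives too: estimate both fidelities and subtract. So the square-root step is not actually needed.

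Your ``main obstacle'' paragraph is unnecessary, and its diagnosis is off. The paper's proof also assumes $O$ is a Pauli, so that $R_O(\pi/4)=(\mathbb{I}-iO)/\sqrt{2}$ produces the shifted state. The $\epsilon^{-4}$ comes from the inversion described above, not from handling general Hermitian $O$. For general Hermitian $O$ the prepared state encodes $\cos(\pi O(t)/4)$ and $\sin(\pi O(t)/4)$, so square roots of overlaps would not recover $\tr(P_kO(t))$ anyway.

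Incidentally, your $-i$ is the sign actually produced by $e^{-i\pi O/4}$. The paper's $+i$ in $\kett{\sigma(O(t))}$ only flips the sign of the estimator.
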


The above algorithm is based on constructing the Clifford classical shadow of a closely related state $\kett{\sigma(O(t))} \equiv \frac{1}{\sqrt{2}}(\kett{\mathbb{I}} + i\kett{O(t)})$, which we (by slight abuse of terminology) directly call the \textit{Clifford shadow of $O(t)$}. These states can be prepared by applying the unitary $(U^\dagger R_O(\pi/4) U) \otimes \mathbb{I}$ or $(U^\dagger \otimes U^T)(R_O(\pi/4) \otimes \mathbb{I})$ to $\kett{\mathbb{I}}$. They are related to the \textit{Pseudo-Choi state} recently introduced in the context of Hamiltonian learning \cite{castaneda2025hamiltonian, zhao2024learning}.

We emphasize that a naive application of Clifford classical shadows to the state $\kett{O(t)}$ only yields \textit{unsigned} correlators $\{\abs{\tr(P_k O(t))/2^n}\}_{P_k \in {\mathcal{P}}_n}$. The above construction circumvents this limitation through the $i\mathbb{I}$ term, which reduces an overlap estimation problem to a fidelity estimation problem, enabling Clifford shadows to also acquire sign information. As a bonus of the reduction, the latter problem requires less stringent access to $U$ and $U^\dagger$, and yields yet more algorithms for the computation of two-point correlators via methods such as the SWAP test and Direct Fidelity Estimation. Full details are delegated to Appendix~\ref{app:2pc}.

Notably, compared to existing Schrodinger-picture Pauli amplitude algorithms for estimating all $\{\tr(\rho P_k)\}_{P_k \in {\mathcal{P}}_n \backslash\mathbb{I}}$ \cite{huang2021information}, the above algorithm does not require quantum memory (i.e. the need to entangle multiple copies of $\kett{\sigma(O(t))}$). This is a consequence of the unitality of Heisenberg time-evolution, which is not necessarily present in Schrodinger time-evolution.

Theorem~\ref{theorem:simul_2pc_shadows} extends existing $n$-qubit methods based on accessing the channel $\mathcal{E}$ \cite{chiew2026quantum,caro2024learning}, which either requires the set of target quantities to possess additional structure due to commutation \cite{chiew2026quantum}, or require $\bigo{\text{exp(n)}}$ samples (but is suited for the estimation of all possible correlators $\{\tr(P_k P_l(t))/2^n\}_{k,l=1,1}^{4^n,4^n}$ \cite{caro2024learning}). 

The same constraint that precludes Bell sampling on $n$ qubits \cite{chiew2026quantum} also applies to converting our $2n$ qubit Clifford operator shadow protocol into an $n$ qubit protocol. Namely, the requisite global randomization and measurement acts on the full $2n$-qubit Hilbert space and therefore generally requires entangling operations across the two $n$-qubit halves. Establishing a formal impossibility (or tight overhead) result for realizing these global operations with only $n$-qubit access is an interesting open direction.

Finally, we do not provide a matching lower bound for this task at present. However, the \(1/\epsilon^2\) dependence is already unavoidable for estimating a single correlator in the independent single-query model, as a consequence of standard binary state-discrimination bounds \cite{fuchs1999cryptographic}. Moreover, by analogy with standard multi-observable estimation lower bounds, we expect the logarithmic dependence on $M$ to be unavoidable in the worst case. We therefore expect an $\Omega(\log(M)/\epsilon^2)$ lower bound to hold for the restricted family of shifted vectorized operators under our channel-access model.



\section{Learning lower bounds and exponential separations}
Next, we study the hardness of learning OTOCs. In doing so, we complement our algorithms with information-theoretic optimality guarantees, and expose exponential separations in the difficulty of certain OTOC-learning tasks.

For the purposes of this section, we extend our framework beyond closed-system unitary dynamics to encompass non-unitary maps $\mathcal{E}$, which yields $O(t) = \mathcal{E}^\dagger(O)$, where $\mathcal{E}^\dagger$ is the (unital) adjoint of $\mathcal{E}$. In contrast to the unitary case where the pure state $\kett{O(t)}$ is prepared by applying the $n$-qubit unitary $U^\dagger O U$, we consider the natural generalization of this linear map to:
\begin{equation}
    \mathcal{O}_{\mathcal{E}, O}(\cdot) \;\equiv\;\mathcal{E}^\dagger \circ \widetilde{O} \circ \mathcal{E}(\cdot),
\end{equation}
where $\widetilde{O} \equiv O(\cdot)O^\dagger$ denotes conjugation by the initial Heisenberg operator $O \in {\mathcal{P}}_n$. This linear map can be interpreted as the generalization of the Loschmidt echo beyond the unitary case.

Concretely, we consider protocols that estimate OTOCs by querying $\mathcal{O}_{\mathcal{E}, O}$. Summarized in Fig.~\ref{fig:protocols}, they are allowed to possibly be \textit{adaptive} (able to condition the initial state and measurement setting on results of previous runs), possess \textit{quantum memory} (able to repeatedly query $\mathcal{O}_{\mathcal{E}, O}$ and interleave processing channels in each coherent run), and/or access \textit{ancillas} (able to work on $n+n_{\text{anc}}$ qubits). To ensure that $\mathcal{O}_{\mathcal{E}, O}$ can be queried as a physical map (i.e. it is CPTP), it is sufficient to assume that the channel $\mathcal{E}$ is also unital, i.e. $\mathcal{E}$ is doubly-stochastic (though this assumption need not be present for our lower bounds to hold). Formal definitions are delegated to Appendix~\ref{app:models}.

These models capture the vast majority of existing protocols for learning OTOCs, based on using $n$ \cite{swingle2016measuring,mi2021information,schuster2022many,schuster2023operator,schuster2023learning,cotler2023information,google2025observation,algorithmiq} and $2n$ qubits \cite{yoshida2019disentangling,landsman2019verified,schuster2022many,sundar2022proposal,green2022experimental,xu2024scrambling}. For instance, our learning protocols based on Pauli operator shadows (Theorem~\ref{theorem:simul_otoc_shadows_nq}) is also an instance of protocols with access to either $0$ or $n$ ancillas, while requiring neither adaptivity nor quantum memory. \\

\paragraph*{Tight lower bounds on learning OTOCs.}

Firstly, we show that the algorithm of Theorem~\ref{theorem:simul_otoc_shadows_nq} based on Pauli operator shadows is indeed optimal. This is reflected by the following sample-complexity lower bounds:
\begin{theorem}[Sample complexity lower bound for learning low-weight OTOCs, informal]\label{theorem:shadow_lowerbound}
Consider the task of estimating $M$ OTOCs $\{\tr(\mathcal{E}(P_i) O \mathcal{E}(Q_i) O)/2^n\}_{i=1}^M$ where $P_i$ and $Q_i$ are weight-$w$ to additive precision $\epsilon$ with high success probability. Any protocol using $n$ (or more) qubits based on initializing as product states, querying the channel $\mathcal{O}_{\mathcal{E}, O}$, and performing local measurements in each of the $N$ runs requires at least $N \geq \Omega\left(9^w \log(M)/\epsilon^2\right)$ runs to solve this problem.\\
If, additionally, the OTOCs are promised to be diagonal, i.e. $\forall~ i ~ P_i = Q_i$, then at least $N \geq \Omega\left(3^w \log(M)/\epsilon^2\right)$ runs are required.
\end{theorem}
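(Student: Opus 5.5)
We prove the bound with the standard Fano-type (or Holevo/mutual-information) argument used for classical-shadow lower bounds. We reduce the estimation task to identifying a hidden index among $M$ hypotheses and then bound how much information each run can reveal about that index.

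\emph{Hard instance.} We fix a family of channels $\{\mathcal{E}_j\}_{j=1}^M$, or equivalently Loschmidt channels $\mathcal{O}_{\mathcal{E}_j,O}$, indexed by the same pairs $(P_j,Q_j)$ as the targets. The OTOC indexed by $j$ is shifted by about $3\epsilon$ under $\mathcal{E}_j$ relative to a reference channel, and the others are unchanged. A natural choice is the Pauli-twirl-like channel
\[
\mathcal{O}_j(\rho)=\tfrac{1}{2^n}\tr(\rho)\,\mathbb{I}+3\epsilon\,s\,\tfrac{1}{2^n}\tr(P_j\rho)\,Q_j ,
\]
symmetrized if needed to be Hermitian-preserving and CPTP, which is fine for small $\epsilon$. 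With $O$ fixed, the OTOC $\tr(\mathcal{E}(P_i)O\mathcal{E}(Q_i)O)/2^n$ equals the matrix element $\tr(P_i\,\mathcal{O}_{\mathcal{E},O}(Q_i))/2^n$ up to adjoint conventions. So the targets are exactly Pauli transfer matrix elements of the queried channel, and the family above shifts one element at a time. Any $\epsilon$-accurate estimator of all $M$ OTOCs identifies $j$, so Fano's inequality requires the mutual information between $j$ and the $N$ outcomes to be $\Omega(\log M)$.

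\emph{Per-run information bound.} Each run prepares a product state $\rho=\bigotimes_k\rho_k$, applies $\mathcal{O}_j$, and measures a product POVM $\bigotimes_k F_k$. The outcome probability is
\[
p_j(x)=\tfrac{1}{2^n}\tr(F_x)+3\epsilon\, s\,\tr(P_j\rho)\tr(Q_jF_x)/2^n .
\]
By chain rule and convexity (tensorization), the information gained in $N$ runs is at most $N\max_{\rho,F}\,\mathbb{E}_j\,\chi^2(p_j\|p_0)$, where $p_0$ is the reference distribution. This gives
\[
I\lesssim N\,\epsilon^2\,\max_{\rho,F}\mathbb{E}_j\,\tr(P_j\rho)^2\,\sum_x \frac{\tr(Q_jF_x)^2}{2^n\tr(F_x)} .
\]
This step holds even for adaptive choices of $\rho$ and $F$, since adaptivity only changes which setting is maximized over in each run.

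\emph{Key combinatorial step.} We take $P_j,Q_j$ to range over all $9^w\binom{n}{w}$ weight-$w$ Pauli pairs, with $M$ comparable to this count. For a single-qubit state, $\sum_{\sigma\in\{X,Y,Z\}}\tr(\sigma\rho_k)^2\le1$. For a single-qubit POVM, $\sum_{\sigma}\sum_x \tr(\sigma F_x)^2/(2\tr F_x)\le1$, the Bloch-vector bound per effect. Averaging over the product structure, $\mathbb{E}_j\tr(P_j\rho)^2\le 3^{-w}$ and the measurement factor likewise contributes $3^{-w}$. Hence the information per run is $O(\epsilon^2/9^w)$, and $N=\Omega(9^w\log M/\epsilon^2)$.

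\emph{Diagonal case.} Here $P_j=Q_j$, so the hypotheses shift diagonal elements only. The best strategy aligns the input basis with the measurement basis, so only one factor $3^{-w}$ survives: by Cauchy--Schwarz the correlated average $\mathbb{E}_j\tr(P_j\rho)^2\tr(P_jF_x)^2$ is bounded by $3^{-w}$ rather than $9^{-w}$. This gives $\Omega(3^w\log M/\epsilon^2)$.

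\emph{Main obstacle.} The hardest step is making the hypothesis channels valid CPTP (and ideally doubly stochastic, so they arise as $\mathcal{O}_{\mathcal{E},O}$) while keeping $\log M$ of order $w\log n$ plus the $9^w$ combinatorics. A secondary difficulty is handling general local POVMs, which may carry entanglement across runs only if quantum memory is allowed, which the statement excludes. To address the first, I would try restricting to $\mathcal{E}$ being a mixture of Pauli and unital depolarizing maps. For the second, I would reduce to an $M$ that is a subfamily of weight-$w$ Paulis on fixed disjoint blocks, so that $\log M$ matches the target.
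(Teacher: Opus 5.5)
Your skeleton (Fano, plus a per-run $\chi^2$ bound in which product inputs and product measurements each contribute $3^{-w}$) matches the paper's. However, there is a genuine gap in realizing the hypotheses as Loschmidt-echo channels, and the fix you propose fails.

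The protocol is only promised to work on channels $\mathcal{O}_{\mathcal{E},O}$, and every such channel is self-adjoint (Lemma~\ref{lemma:prop_o}). Your $\mathcal{O}_j$ with $P_j\neq Q_j$ is already CPTP and doubly stochastic for $3\epsilon\le 1$, but it is not self-adjoint. So you must symmetrize to $\frac{1}{2^n}\bigl(\mathbb{I}\tr(\cdot)+s\epsilon_0(P_a\tr(P_b\,\cdot)+P_b\tr(P_a\,\cdot))\bigr)$.

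Taking $\mathcal{E}$ to be a Pauli/depolarizing mixture cannot produce this. In that case $\mathcal{E}^\dagger\circ\tilde O\circ\mathcal{E}$ is itself a Pauli channel. Its off-diagonal PTM entries --- exactly the OTOCs with $P_i\neq Q_i$ --- then vanish identically. Your route does work for the diagonal case.

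The missing construction is Lemma~\ref{lemma:hard_instances_echo_form}. It uses a non-Pauli doubly stochastic $\mathcal{E}_{i,j,s}$ with $Q_i\mapsto a(A+B)$ and $Q_j\mapsto sa(A-B)$, where $a=\sqrt{\alpha/2}$. It pairs this with a Pauli $R$ that commutes with $A$ and anticommutes with $B$ (e.g.\ $A=Z_1$, $B=Z_2$, $R=X_2$), so that $\tilde R$ exchanges $A\pm B$. Complete positivity holds for $\alpha\le 1/2$.

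The second gap is the dependence on $M$. You obtain the $9^{-w}$ (or $3^{-w}$) factor by averaging uniformly over the hypothesis index itself. That requires the family to contain all Pauli pairs on each support, i.e.\ $M\gtrsim 9^w$; your block-subfamily fallback still has $M$ a multiple of $9^w$.

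For smaller $M$ this cannot work. If the family is fixed, one run that prepares an eigenstate of $P_{j_0}$ and measures in an eigenbasis of $Q_{j_0}$ already gives $\mathbb{E}_j\chi^2\geq 9\epsilon^2/M$. That exceeds $O(9^{-w}\epsilon^2)$ once $M\ll 9^w$. In fact, known targets can be estimated one at a time with product eigenstate inputs and local measurements in $O(M\log M/\epsilon^2)$ runs. The bound holds only because the targets are a priori unknown.

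The paper encodes this with Loki. A random product unitary $U$ (Haar single-qubit, or $\mathrm{Cl}(2)^{\otimes n}$ in the diagonal case) conjugates a fixed family of $2M$ instances and is revealed only after all measurements. Fano is applied to $I(X:Y|U)$ using $I(X:U)=0$, so only $\log(2M)$ is counted, while the per-run bound averages over $U$. Because a single $U$ acts on both $P_a$ and $P_b$, the paper also takes $\mathrm{supp}(P_a)\cap\mathrm{supp}(P_b)=\emptyset$ so that the average factorizes into $3^{-w}\cdot 3^{-w}$. This is the origin of the condition $w\le\lfloor n/2\rfloor$.

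Finally, your claim that the argument also covers adaptive protocols is unjustified. Conditioned on earlier outcomes, the posterior over $j$ is no longer uniform, so a uniform average no longer bounds the per-step information. The paper's chain rule relies on independence of $Y_1,\dots,Y_N$, i.e.\ the non-adaptive setting the statement concerns.
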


\begin{proofsketch}
A detailed proof, together with generalizations to arbitrary initial states and vectorized-state access, is given in Theorems~\ref{theorem:shadow_diag_lowerbound_formal}, \ref{theorem:shadow_gen_lowerbound_formal}, \ref{theorem:shadow_gen_lowerbound_formal_choi} in Appendix~\ref{app:otoc_lower_shadow}. We employ a proof technique based on \cite{huang2020predicting} and \cite{chen2022quantum}. It consists of showing that any protocol that satisfies the theorem's assumptions can also be used to distinguish between a set of channels expressible in the form $\mathcal{E} \circ \tilde{O} \circ \mathcal{E}$, chosen to be as hard to distinguish as possible. A lower bound on the complexity of the latter task then implies a lower bound on the former. The lower bound on the distinguishability task can be shown by encoding it as a communication protocol between three parties -- Alice (the encoding party), Bob (the decoding party), and Loki (the interfering party) -- and subsequently bounding the mutual information between the encoding and decoding parties.
\end{proofsketch}

As a bonus, Theorems~\ref{theorem:simul_otoc_shadows_nq} and \ref{theorem:shadow_lowerbound} imply tight bounds for learning \textit{arbitrary, weight-$w$} PTM elements of general unknown channels:
\begin{corollary}[Tight bounds for learning PTM elements, informal] \label{theorem:ptm_tight}
Consider the task of simultaneously estimating $M$ PTM elements $\{ \tr(P_i \mathcal{N}(Q_i))/2^n \}_{i=1}^M$ to additive error $\epsilon$, where $P_i,Q_i$ are of weight $w$. In the non-adaptive setting where each run consists of initializing as product states, querying the channel $\mathcal{N}$, and performing local measurements, the query complexity of this problem is $N = \Theta(9^w \log(M)/\epsilon^2)$.

\noindent If, additionally, the PTM elements are promised to be diagonal, i.e. $\forall~i,\ P_i=Q_i$, then the query complexity improves to $N = \Theta(3^w \log(M)/\epsilon^2)$.
\end{corollary}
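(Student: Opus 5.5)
The plan is to derive both directions of the corollary from the two theorems already established. The bridge is to identify PTM elements of an arbitrary channel $\mathcal{N}$ with OTOC-type quantities, and to check that the query models coincide.

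\textbf{Upper bound.} We run the $n$-qubit protocol of Theorem~\ref{theorem:simul_otoc_shadows_nq} with the channel $\mathcal{N}$ in place of the Loschmidt echo $U^\dagger O U$:
\begin{enumerate}[label=(\roman*)]
\item In each run, prepare a uniformly random product stabilizer state $\bigotimes_j \ket{s_j}$, with each $\ket{s_j}$ an eigenstate of $X$, $Y$ or $Z$.
\item Apply $\mathcal{N}$ once.
\item Measure every qubit in a uniformly random Pauli basis.
\end{enumerate}
The snapshots are inverted with the standard local shadow map on both the input and the output side. For Paulis $P,Q$ of weight $w$, the estimator of $\tr(P\,\mathcal{N}(Q))/2^n$ is a product of single-qubit factors. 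On each qubit where exactly one of $P,Q$ acts nontrivially the factor has magnitude $3$, and on each qubit where both act nontrivially it has magnitude $9$. Hence the variance is at most $9^w$.

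The same computation as in Appendix~\ref{app:pauli_shadow} then gives $\bigo{9^w\log(M/\delta)/\epsilon^2}$ runs, via median-of-means. For the diagonal case $P_i=Q_i$, we correlate the input and output bases: choose one random local basis per qubit and use it both for state preparation and for measurement. Each qubit in the support of $P$ then contributes only a factor of $3$, which gives variance $3^w$. This is exactly the diagonal improvement already stated in Theorem~\ref{theorem:simul_otoc_shadows_nq}. So the upper bound amounts to noting that the proof of that theorem only uses the echo as a black-box channel.

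\textbf{Lower bound.} We reduce from Theorem~\ref{theorem:shadow_lowerbound}, or more precisely from its formal hard-instance construction. Any non-adaptive product-input, local-measurement protocol that learns the PTM elements of an arbitrary channel $\mathcal{N}$ can in particular be run with $\mathcal{N}=\mathcal{O}_{\mathcal{E},O}=\mathcal{E}^\dagger\circ\widetilde{O}\circ\mathcal{E}$. For Pauli $O$ this channel satisfies
\[
\tr\bigl(P_i\,\mathcal{O}_{\mathcal{E},O}(Q_i)\bigr)/2^n=\tr\bigl(\mathcal{E}(P_i)\,O\,\mathcal{E}(Q_i)\,O\bigr)/2^n .
\]
So the protocol, run with the same per-run structure, estimates exactly the OTOCs of Theorem~\ref{theorem:shadow_lowerbound}. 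That theorem then forces $N=\Omega(9^w\log M/\epsilon^2)$, and $\Omega(3^w\log M/\epsilon^2)$ when $P_i=Q_i$. Together with the upper bound this gives the $\Theta$ statements.

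\textbf{Main obstacle.} The delicate point is that the lower bound must hold for PTM learning over the channel family actually used in the hard instance. Two things need checking. First, the channels $\mathcal{E}^\dagger\circ\widetilde{O}\circ\mathcal{E}$ used in the proof of Theorem~\ref{theorem:shadow_lowerbound} must be legitimate CPTP maps; this holds when $\mathcal{E}$ is unital. Second, that proof must not use any structure of the access model beyond \emph{product input, one query, local measurement}.

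I would verify this in the formal Theorems~\ref{theorem:shadow_diag_lowerbound_formal} and \ref{theorem:shadow_gen_lowerbound_formal}. There the hard ensemble is a family of channels indexed by a hidden weight-$w$ Pauli pair. The mutual-information bound of the Alice--Bob--Loki argument gives at most $\bigo{\epsilon^2/9^w}$ bits per run, or $\bigo{\epsilon^2/3^w}$ in the diagonal case. Since that bound is stated for arbitrary channel queries of this form, it transfers directly to the PTM setting. If the formal statement is phrased only for echo channels, the fallback is to observe that the hard channels are themselves Pauli-diagonal perturbations of the identity. Such channels are valid general channels $\mathcal{N}$, so the distinguishing argument applies verbatim.
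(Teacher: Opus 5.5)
Your proposal is correct and follows the paper's own argument. For the upper bound, you rerun the $n$-qubit (correlated, in the diagonal case) Pauli operator-shadow algorithms with $\mathcal{N}$ in place of $\mathcal{O}_{\mathcal{E},O}$. For the lower bound, any PTM learner for arbitrary channels is in particular an OTOC learner on the echo channels $\mathcal{E}^\dagger\circ\widetilde{O}\circ\mathcal{E}$, so Theorems~\ref{theorem:shadow_diag_lowerbound_formal} and \ref{theorem:shadow_gen_lowerbound_formal} transfer directly; for the $9^w$ bound this is case~1 of the latter (product inputs, $w\le\lfloor n/2\rfloor$).

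One correction to your fallback remark: the hard instances are perturbations of the completely depolarizing channel, not of the identity, and in the off-diagonal case they are not Pauli-diagonal (they map $P_b$ to $s\epsilon_0 P_a$). This does not affect the argument, since they are valid CPTP maps either way.
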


These results supplement the upper/lower bounds for learning \textit{all} PTM elements, studied in \cite{caro2024learning} for general channels and in \cite{chen2022quantum} for Pauli channels. The optimal algorithms are based on applying Pauli shadows to the Choi state of $\mathcal{N}$.\\

\paragraph*{Exponential learning separations.}
Our next set of main results exposes exponential separations in the difficulty of certain OTOC-learning tasks.

The first task that we consider is the simultaneous estimation of all \textit{diagonal} OTOCs $\left \{\tr(O \mathcal{E}(Q) O \mathcal{E}(Q))/2^n \right \}_{Q \in {\mathcal{P}}_n}$. We show an exponential separation in query complexity between protocols with ancillas versus those without ancillas:
\begin{theorem}[Exponential separation in query complexity for estimating diagonal OTOCs with and without ancillas] \label{theorem:expsep_diag}
Let $\mathcal{E}$ be an arbitrary doubly-stochastic map and $O$ be a Pauli operator.
\begin{enumerate}
    \item Consider a learning protocol that queries $\mathcal{O}_{\mathcal{E}, O}$ on arbitrary $n$-qubit initial density matrices with no ancillas, but can be adaptive and possess quantum memory. 
    
    Such a protocol requires $\Omega(2^n/\epsilon^2)$ queries to solve the diagonal OTOC estimation problem.

    \item In contrast, consider the same learning protocol, but with access to $n$ ancillas. There is a non-adaptive protocol requiring no quantum memory that solves the same problem, using $O(n/\epsilon^2)$ queries to $\mathcal{O}_{\mathcal{E}, O}$.

    \item Furthermore, if we allow $0 \leq n_{\text{anc}} \leq n$ ancillas but disallow adaptivity and quantum memory, then the query complexity of this problem is $\Theta(n2^{n-n_{\text{anc}}}/\epsilon^2)$.
    
\end{enumerate}
\end{theorem}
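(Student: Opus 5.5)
The three parts call for different tools: part 2 is a constructive upper bound, part 1 is a reduction to a known ancilla-free lower bound, and part 3 needs both a protocol and a matching non-adaptive bound. The central observation is that the diagonal OTOCs are the Pauli eigenvalues of a Pauli channel. We twirl $\mathcal{E}$ implicitly by restricting to a family of hard instances where $\mathcal{E}$ is itself a Pauli channel $\mathcal{E}(\rho)=\sum_P q_P P\rho P$. Such an $\mathcal{E}$ is doubly stochastic and self-adjoint, and it commutes with conjugation by the Pauli $O$. Hence $\mathcal{O}_{\mathcal{E},O}=\widetilde{O}\circ\mathcal{E}^2$, and $\mathcal{E}^2$ is again a Pauli channel with eigenvalues $\lambda_Q^2$. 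The diagonal OTOCs then become
\[
\tr(O\mathcal{E}(Q)O\mathcal{E}(Q))/2^n=\pm\lambda_Q^2 ,
\]
with the sign $\pm$ fixed by whether $O$ and $Q$ commute. So estimating all diagonal OTOCs is at least as hard as estimating all $4^n$ Pauli eigenvalues of the known-Pauli-twisted Pauli channel $\widetilde{O}\circ\mathcal{E}^2$. Since $O$ is known, the learner can undo $\widetilde{O}$ for free.

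\textbf{Part 1.} We invoke the lower bound of Chen et al.~\cite{chen2022quantum}: learning the eigenvalues of an $n$-qubit Pauli channel without ancillas requires $\Omega(2^n/\epsilon^2)$ queries, even adaptively and with concatenated (memory) queries. To transfer it, we must check that their hard ensemble, namely the depolarizing channel perturbed by $\pm\epsilon$ on a random stabilizer group, can be written as $\mathcal{E}^2$ with $\mathcal{E}$ a Pauli channel. We take $\lambda_Q=\sqrt{\lambda'_Q}$. Eigenvalues near a constant in $(0,1)$ remain a valid Pauli channel after the square root, because the Fourier (Pauli-error) coefficients change only by $O(\epsilon)$ around a strictly positive depolarizing baseline. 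The $\epsilon$ in the square-rooted instance rescales only by a constant factor. This embedding step is the main obstacle: we need to verify that positivity survives and that the reduction respects the memory model. The latter holds because an extra known Pauli $\widetilde{O}$ can be absorbed into the interleaved processing channels.

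\textbf{Part 2.} We prepare $(\mathcal{O}_{\mathcal{E},O}\otimes\mathrm{id})(\kett{\mathbb{I}}\bbra{\mathbb{I}})$, which is the Choi state of the channel, and measure it in the Bell basis. For a general doubly-stochastic $\mathcal{E}$, the Bell-basis outcome distribution has
\[
\Pr[\text{outcome } Q\text{-stabilizer sign}]\ \text{correlations}\ =\ \tr(Q\,\mathcal{O}_{\mathcal{E},O}(Q))/2^n ,
\]
which, by self-adjointness, equals the diagonal OTOC. Concretely, each $Q\otimes Q^{*}$ commutes with the Bell basis, so its expectation on the Choi state can be read off every Bell sample simultaneously. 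A Hoeffding bound followed by a union bound over $4^n$ observables gives $O(\log(4^n/\delta)/\epsilon^2)=O(n/\epsilon^2)$ queries.

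\textbf{Part 3.} For the upper bound, we use $n_{\mathrm{anc}}$ Bell pairs on $n_{\mathrm{anc}}$ qubits and, on the remaining $n-n_{\mathrm{anc}}$ qubits, random stabilizer-basis product inputs and measurements (Pauli shadows). Each $Q\otimes Q^{*}$ estimator then has variance at most $3^{\,n-n_{\mathrm{anc}}}$ for general $\mathcal{E}$. To reach $2^{\,n-n_{\mathrm{anc}}}$, we instead use random Clifford/stabilizer-state inputs and measurements in the same stabilizer basis on those qubits, which gives variance $O(2^{\,n-n_{\mathrm{anc}}})$. A median-of-means estimate with a union bound over $4^n$ observables contributes the factor $n$. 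For the lower bound, we again reduce to the Pauli-channel family. We adapt the non-adaptive Chen et al. argument with $k$ ancillas, which gives $\Omega(n2^{n-k}/\epsilon^2)$: their Fano/mutual-information argument over a packing of $2^{\Theta(n)}$ hypotheses yields the factor $n$, and the Haar/stabilizer moment bound on the unentangled part yields $2^{n-k}$.
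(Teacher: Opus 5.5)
Your architecture matches the paper's. You realize hard Pauli-channel instances as Loschmidt-echo channels with $\mathcal{E}$ itself a Pauli channel (the paper's $\mathcal{E}_i$ in Lemma~\ref{lemma:hard_instances_echo_form} is exactly your square-root channel), use Bell sampling on the Choi state for part 2, and use the non-adaptive Fano argument of \cite{chen2022quantum} for the part 3 lower bound.

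\textbf{The embedding step.} This is the step you flag as the main obstacle, and you argue it for the wrong instances. The hard instances behind both lower bounds perturb the \emph{completely} depolarizing channel. Every non-identity eigenvalue is $0$ except $\lambda_{Q_i}=\pm\alpha$, with $\alpha=\Theta(\epsilon)$.
\begin{itemize}
\item There is no strictly positive baseline, so the claim that the error rates move by only $O(\epsilon)$ fails: the square-rooted eigenvalue is $\sqrt{\alpha}\gg\alpha$.
\item For the instances with $\lambda_{Q_i}=-\alpha$, no Pauli channel satisfies $\mathcal{E}^2=\Lambda_{i,s}$ with $s=-1$, because $\mathcal{E}^2$ has nonnegative eigenvalues.
\end{itemize}
What is true must be checked directly. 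The Pauli channel with eigenvalues $1$ on $\mathbb{I}$, $\sqrt{\alpha}$ on $Q_i$ and $0$ elsewhere has error rates $4^{-n}(1\pm\sqrt{\alpha})\geq 0$. It is therefore a valid doubly-stochastic channel for all $\alpha\leq 1$, and it squares to $\Lambda_{i,s}$ with $s=+1$.

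The sign has to be supplied by $O$. The map $\widetilde{O}\circ\mathcal{E}^2$ multiplies the eigenvalue of $Q$ by $+1$ or $-1$ according to whether $O$ commutes or anticommutes with $Q$. Choosing $O$ anticommuting with $Q_i$ therefore yields $s=-1$, and the other eigenvalues are $0$ anyway. This is the paper's choice $R_{i,s}$.

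It makes $O$ part of the hidden instance, so your remark that ``$O$ is known'' cannot be used as stated. With a known fixed $O$, you would have to redo the mutual-information argument with each sign fixed by commutation with $O$ rather than averaged over $s$. For part 1 this is harmless: the distinguishing problem of Theorem~\ref{theorem:chen_lowerbound} uses only $+\alpha$, and $O=\mathbb{I}$ works. The part 3 lower bound, however, uses the signed instances.

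\textbf{The part 1 citation.} The bound you invoke is misattributed. For ancilla-free protocols with adaptivity and concatenation, \cite{chen2022quantum} gives only an $\Omega(2^{n/3})$ bound, which does not yield the claim. The $\Omega(2^n/\epsilon^2)$ bound in that model is the one the paper imports from \cite{chen2025efficient} (Theorem~7).

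\textbf{Your part 3 upper bound.} This takes a different but valid route, provided you actually compute the variance. Let $m=n-n_{\mathrm{anc}}$ be the number of unentangled qubits.
\begin{itemize}
\item A uniformly random Clifford makes $C^\dagger Q C$ uniform over non-identity Paulis. Hence the rotated computational-basis stabilizer group contains a fixed non-identity $Q$ with probability $(2^m-1)/(4^m-1)=1/(2^m+1)$.
\item The importance-weighted $\pm1$ estimator therefore has variance at most $2^m+1$.
\item Feeding a uniformly random eigenstate of that basis, correlated with the outcome, removes the off-diagonal PTM contributions. This plays the role of the paper's Pauli twirl.
\end{itemize}
The paper instead cycles deterministically through a covering by $2^m+1$ stabilizer groups (mutually unbiased bases), with a fixed input and a random Pauli twirl. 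The cost is the same; your version trades the explicit MUB construction for random Cliffords.
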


\begin{proofsketch}
A detailed proof is provided in Appendices~\ref{app:proof_ancilla} and \ref{app:ancilla_non_adapt}. The proof of the lower bound is based on the observation that the task of estimating all diagonal OTOCs can be reduced to the task of discriminating between channels of the form $\mathcal{O}_{\mathcal{E},O} \coloneqq \mathcal{E}^\dagger \circ \widetilde{O} \circ \mathcal{E}$, and that hard instances of this distinguishing problem coincide with the hard instances of the Pauli eigenvalue estimation problem (defined in e.g. \cite{chen2022quantum,chen2024tight,chen2025efficient}). This allows the present problem to inherit the lower bounds of the Pauli eigenvalue estimation problem, which are proved using the learning tree formalism \cite{chen2022exponential} and Le Cam’s two-point method for the adaptive case, and by encoding the problem as a communication protocol in the non-adaptive case. The algorithm of \cite{chiew2026quantum}, Theorem~1 based on Bell/Pauli operator sampling is optimal for the $n_{\text{anc}} = n$ case, while an adaptation of the algorithm of \cite{chen2022quantum} for the Pauli channel eigenvalue problem is optimal for the non-adaptive $0 \leq n_{\text{anc}} \leq n$ case.
\end{proofsketch}

The above result implies that the problem of estimating diagonal OTOCs hinges on the availability of ancilla qubits -- neither adaptivity, nor quantum memory, i.e. the ability to repeatedly apply $\mathcal{O}_{\mathcal{E}, O}$, is sufficient to overcome the exponential gap. This mirrors the gap present in the Pauli channel eigenvalue estimation problem \cite{chen2022quantum,chen2024tight,chen2025efficient}.

Similar to Corollary~\ref{theorem:ptm_tight}, the above result also implies tight bounds for the problem of learning all $4^n-1$ diagonal PTM elements of general unknown channels (Theorem~\ref{theorem:diag_ptm_optimal}, Appendix~\ref{app:family_diag_ptm}).

\vspace{5pt}

Secondly, we consider the (harder) task of estimating all OTOCs $\left \{\tr(O \mathcal{E}(Q) O \mathcal{E}(Q'))/2^n \right \}_{Q, Q' \in {\mathcal{P}}_n}$, including off-diagonal ones. We show that this task exhibits an exponential separation in query complexity between protocols with quantum memory versus those without:
\begin{theorem}[Exponential separation in query complexity for estimating all OTOCs with and without quantum memory] \label{theorem:exp_sep_all_otocs_query}
Let $\mathcal{E}$ be an arbitrary doubly-stochastic map and $O$ be a Pauli operator.
\begin{itemize}
    \item Consider a learning protocol that queries $\mathcal{O}_{\mathcal{E}, O}$ on arbitrary initial density matrices (of any size), and can be adaptive, but does not possess quantum memory.
    Such a protocol requires $\Omega(4^n/\epsilon^2)$ runs/measurements to solve the All OTOC estimation problem.

    \item In contrast, consider the same protocol, but with quantum memory. There is such a protocol that solves the same problem, using $O(n/\epsilon^4)$ queries of $\mathcal{O}_{\mathcal{E}, O}$.
\end{itemize}
\end{theorem}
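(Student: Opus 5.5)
Both bullets reduce to known results on learning Pauli channels and Pauli transfer matrices. We choose the hard family so that the echo channel $\mathcal{O}_{\mathcal{E},O}$ is itself the channel whose transfer matrix we want. Every OTOC can be written as a PTM element of the echo channel:
\[
\frac{\tr\!\big(O\,\mathcal{E}(Q)\,O\,\mathcal{E}(Q')\big)}{2^n}
= \frac{\tr\!\big(Q'\,\mathcal{O}_{\mathcal{E},O}(Q)\big)}{2^n},
\]
using $\mathcal{E}^\dagger$ as the adjoint with respect to the Hilbert--Schmidt product, and after relabelling $\mathcal{E}\leftrightarrow\mathcal{E}^\dagger$ to match the statement's convention. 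So the All OTOC problem is exactly the problem of learning all $16^n$ PTM entries $R_{Q',Q}$ of the doubly-stochastic channel $\mathcal{N}=\mathcal{O}_{\mathcal{E},O}$ to additive error $\epsilon$.

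\textbf{Lower bound.} I would embed the hard instances of \cite{caro2024learning} (or the Pauli-channel family of \cite{chen2022quantum}) into this form.

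1. Take $O=\mathbb{I}$, or any fixed Pauli, and $\mathcal{E}=\mathcal{V}^{1/2}$. Then $\mathcal{O}_{\mathcal{E},O}$ is a generic unital channel. For example, if $\mathcal{E}$ is a Pauli channel with eigenvalues $\lambda_P$, the echo is a Pauli channel with eigenvalues $\pm\lambda_P^2$.
2. More useful is to let $\mathcal{E}$ range over the perturbed-depolarizing unital channels used for the $\Omega(4^n/\epsilon^2)$ bound on learning all PTM entries without quantum memory. These are $\mathcal{E}_{s}=\mathrm{id}$ composed with depolarizing noise plus $\epsilon$-sized signed off-diagonal perturbations $\epsilon\, s_{PQ}$. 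Hitting the perturbation with $\widetilde{O}$ and composing with $\mathcal{E}^\dagger$ keeps it at first order in $\epsilon$.
3. Verify that $\mathcal{E}_s$ is unital, so the echo is CPTP.
4. Run the learning-tree argument of \cite{chen2022exponential}: bound the likelihood ratio along each root-to-leaf path between the perturbed ensemble and the fully depolarizing channel, via Le Cam's two-point method. Each adaptive, memoryless run on an arbitrary input (with arbitrary ancillas) contributes $O(\epsilon^2/4^n)$ to the divergence. The reason is that averaging over the $\sim 16^n$ random signs makes the second moment scale like $\sum_{P,Q}\tr(P\rho)^2\tr(Q M)^2/16^n \le 4^n\cdot 4^n/16^n\cdot$const, normalized appropriately.

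The main obstacle is the quadratic composition $\mathcal{E}^\dagger\circ\widetilde{O}\circ\mathcal{E}$. The perturbation enters twice, and the second-order terms must not spoil the per-query divergence bound. I would first try choosing the perturbation as a square root directly, i.e. $\mathcal{E}_s$ with PTM $\sqrt{\cdot}$ of the hard PTM, which exists for near-identity channels. I would then check complete positivity by keeping the perturbation $O(\epsilon/4^n)$ per entry in the Choi picture, as in \cite{caro2024learning}.

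\textbf{Upper bound.} With quantum memory, I would apply Bell sampling on pairs of copies of the Choi state of $\mathcal{N}$, namely $(\mathcal{N}\otimes\mathrm{id})(\kett{\mathbb{I}}\bbra{\mathbb{I}})$. This is the Lemma~\ref{app:qmem_ub} route. Each PTM element $R_{Q',Q}$ is the expectation of $Q'\otimes \overline{Q}$ on the Choi state. Bell measurements on two copies simultaneously estimate $\tr\big((Q'\otimes\overline{Q})\rho\big)^2$ for all $16^n$ commuting products, to error $\epsilon^2$ with $O(n/\epsilon^4)$ copies via a union bound. The signs are then recovered by a second round of single-copy measurements on the $O(1)$-many large entries, following \cite{huang2021information}. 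A Hoeffding bound and a union bound over $16^n$ observables give the $\log(16^n)=O(n)$ factor.
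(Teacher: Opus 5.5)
\textbf{Lower bound: the hard instances are never put in echo form.} This is the step you yourself call the main obstacle, and it is left unresolved. You never exhibit channels $\mathcal{E}^\dagger\circ\tilde O\circ\mathcal{E}$ whose Choi states are $\frac{1}{4^n}(\mathbb{I}+3\epsilon O_i)$, and each route you sketch fails.
\begin{itemize}
\item With $O=\mathbb{I}$ the echo $\mathcal{E}^\dagger\circ\mathcal{E}$ has PTM $R^TR\succeq 0$. It therefore cannot carry the signed, indefinite perturbations the lower bound needs, and it is far from a generic unital channel.
\item For any Pauli $O$ the echo is self-adjoint (Lemma~\ref{lemma:prop_o}), so its PTM is symmetric. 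The single-entry instances $\frac{1}{4^n}(\mathbb{I}+3\epsilon\,Q_i\otimes Q_j^T)$ with $i\neq j$ cannot be embedded at all.
\item Suppose $\mathcal{E}$ is the completely depolarizing channel plus $\epsilon S$ on the traceless block. The first-order cross terms vanish, because $\Lambda_0$ annihilates traceless operators. The echo's traceless block is then $\epsilon^2 S^TD_OS$, where $D_O$ is the PTM of $\tilde O$, so the signal is second order, not first.
\item If the base is only partially depolarizing, the echo's null channel is $\lambda^2\tilde O+(1-\lambda^2)\Lambda_0$ rather than $\Lambda_0$. The learning-tree/Le~Cam bound for perturbations of the maximally mixed Choi state then does not apply as stated.
\item Your square-root remedy is stated for near-identity channels, but these instances are near $\Lambda_0$. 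Writing a signed block as $X^TD_OX$ is impossible for $O=\mathbb{I}$ by indefiniteness. For other $O$ it leaves complete positivity of $\mathcal{E}$ unchecked.
\end{itemize}

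\textbf{The missing construction.} This is Lemma~\ref{lemma:hard_instances_echo_form}(3).
\begin{itemize}
\item Take commuting Paulis $A,B$ and a Pauli $R$, playing the role of $O$, with $[R,A]=0$ and $\{R,B\}=0$; for example $A=Z_1$, $B=Z_2$, $R=X_2$. Then $\tilde R$ exchanges $A+B$ and $A-B$.
\item Let $\mathcal{E}_{i,j,s}$ fix $\mathbb{I}$, send $Q_i\mapsto a(A+B)$ and $Q_j\mapsto sa(A-B)$, and annihilate every other Pauli, with $a=\sqrt{\alpha/2}$.
\item Then $\mathcal{E}_{i,j,s}^\dagger\circ\tilde R\circ\mathcal{E}_{i,j,s}=\Lambda_{i,j,s}$ exactly. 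This gives a first-order signal $\alpha=3\epsilon/2$ on the completely depolarizing background; the $\sqrt{\alpha}$ amplitude is what defeats the quadratic composition.
\item Complete positivity follows from $\big((A+B)\otimes Q_i^T+s(A-B)\otimes Q_j^T\big)^2=4\,\mathbb{I}$, which allows $\alpha\le 1/2$.
\end{itemize}
Because the instances must be symmetric, the observables become $O_{ij}=\pm\frac12(Q_i\otimes Q_j^T+Q_j\otimes Q_i^T)$, and $\Delta$ has to be recomputed for them. The paper bounds it above by convexity plus Lemma~D.8 of \cite{caro2024learning}, and below with the Bell-state input, obtaining $\Delta=\Theta(4^{-n})$.

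Your heuristic count does not replace this. As written, $4^n\cdot 4^n/16^n=1$ gives $O(\epsilon^2)$ per run, not $O(\epsilon^2/4^n)$. It also ignores entangled ancillas, which is exactly why $\Delta$ involves a partial transpose.

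\textbf{Upper bound.} This is the paper's route: Pauli shadow tomography on copies of $\rho_{\mathcal{O}_{\mathcal{E},O}}$ with $O(\log(16^n/\delta)/\epsilon^4)$ copies. However, your description of the sign step is wrong, so you should cite Theorem~2 of \cite{huang2021information} as a black box. The number of PTM entries of magnitude $\ge\epsilon$ is not $O(1)$. Already for $\mathcal{E}=\mathrm{id}$, all $4^n$ diagonal entries of $\tilde O$ equal $\pm1$, and in general up to $4^n/\epsilon^2$ entries can be large. Per-entry sign estimation would therefore not stay within $O(n/\epsilon^4)$.
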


\begin{proofsketch}
A detailed proof is provided in Appendix~\ref{app:proof_memory}. For the lower bound, similar to the proof of Theorem~\ref{theorem:expsep_diag}, it is based on reducing the task of estimating all OTOCs to the task of discriminating between channels of the form $\mathcal{E}^\dagger \circ \widetilde{O} \circ \mathcal{E}$. The hard instances of this problem coincide with the hard instances of the problem of estimating the Pauli transfer matrix elements of general channels of \cite{caro2024learning}, allowing the present problem to inherit their lower bounds, which are proved via the learning tree formalism \cite{chen2022exponential} and Le Cam’s two-point method.
The upper bound is achieved by applying the Pauli shadow tomography algorithm of \cite{huang2021information} on the Choi states of the operator $\mathcal{O}_{\mathcal{E}, O}$.
\end{proofsketch}
This result thus shows that the difficulty of estimating all OTOCs is characterized by the presence of quantum memory, mirroring the problem of estimating all PTM entries of a channel \cite{caro2024learning} and the estimation of all Pauli expectation values of quantum states \cite{huang2021information}.

\section{Discussion}

Vectorizing time-evolved Heisenberg operators allows many correlators and OTOCs to be treated as a \emph{single} estimation problem: one operator-shadow data set can be post-processed into large families of observables, rather than rerunning bespoke circuits for each target. In this sense, a key advantage of the doubled-Hilbert-space setting is a sort of \emph{parallelism} of the operator estimation process. Another advantage is that it can reduce circuit \emph{depth} when $U$ is deep, since the ricochet identity permits forward/backward pieces to be implemented in parallel across the two halves. Of course, this comes at the cost of doubling the number of qubits (and, depending on the variant, requiring access to $U^{T}$ and/or entangling measurements). We therefore expect the approach to be most attractive when measurement overhead dominates and additional qubits are available.

Simultaneous access to many correlators/OTOCs is particularly useful when one aims to reconstruct structured dynamical objects (e.g., response landscapes, dynamical susceptibilities/structure factors, or disorder-averaged spreading profiles) where the naive ``one observable at a time'' paradigm can be shot-prohibitive. One reason such multi-observable dynamical characterization has been less explored is the presumption that it is inherently hard; our framework makes it routine once the experimenter can prepare and measure in the vectorized picture.

Our lower bounds and learning separations establish fundamental limitations on the difficulty of learning OTOCs in a practical setting, complementing sparse existing results on their complexity \cite{cotler2023information,schuster2023learning}.
Open questions include: (i) which separations persist when we limit only to unitary dynamics and other more restrictive physically motivated access models; (ii) how the ``positioning'' of noise $\mathcal{E}$ (forward-only/backward-only/interleaved) changes achievable protocols and lower bounds; and (iii) finer tradeoffs among Bell/operator sampling and operator-shadows, including hybrids that optimize qubit overhead versus depth versus sample complexity.

\section{Acknowledgements}
We thank Giuseppe Carleo and Richard Küng for helpful conversations.
SC acknowledges the funding support from NCCR SPIN, a National Centre of Competence in Research, funded by the Swiss National Science Foundation (grant number 225153). 
AA and ZH acknowledge support from the Sandoz Family Foundation-Monique de Meuron program for Academic Promotion.

\bibliography{quantum}

@article{lawrence2002mutually,
   title={Mutually unbiased binary observable sets on<i>N</i>qubits},
   volume={65},
   ISSN={1094-1622},
   url={http://dx.doi.org/10.1103/PhysRevA.65.032320},
   DOI={10.1103/physreva.65.032320},
   number={3},
   journal={Physical Review A},
   publisher={American Physical Society (APS)},
   author={Lawrence, Jay and Brukner, {\v{C}}aslav and Zeilinger, Anton},
   year={2002},
   month=Feb }

@article{algorithmiq,
	title = {Loschmidt echo for probing operator hydrodynamics in heterogeneous structures},
	author = {Algorithmiq},
	journal = {Algorithmiq},
	url = {https://algorithmiq.fi/files/model-information-flow-complex-material-document.pdf},
	year = {2025}
}

@inproceedings{chen2021exponential,
	title = {Exponential separations between learning with and without quantum memory},
	author = {Chen, Sitan and Cotler, Jordan and Huang, Hsin-Yuan and Li, Jerry},
	booktitle = {2021 IEEE 62nd Annual Symposium on Foundations of Computer Science (FOCS)},
	pages = {574--585},
	year = {2022},
	organization = {IEEE},
	doi = {10.1109/FOCS52979.2021.00063},
	url = {https://ieeexplore.ieee.org/document/9719827}
}

@article{mele2024introduction,
  title={Introduction to Haar measure tools in quantum information: A beginner's tutorial},
  author={Mele, Antonio Anna},
  journal={Quantum},
  volume={8},
  pages={1340},
  year={2024},
  publisher={Verein zur F{\"o}rderung des Open Access Publizierens in den Quantenwissenschaften}
}

@article{jalabert2001environment,
  title={Environment-independent decoherence rate in classically chaotic systems},
  author={Jalabert, Rodolfo A and Pastawski, Horacio M},
  journal={Physical review letters},
  volume={86},
  number={12},
  pages={2490},
  year={2001},
  publisher={APS}
}

@article{peres1984stability,
  title={Stability of quantum motion in chaotic and regular systems},
  author={Peres, Asher},
  journal={Physical Review A},
  volume={30},
  number={4},
  pages={1610},
  year={1984},
  publisher={APS}
}

@article{montanaro2017learning,
	title = {Learning stabilizer states by Bell sampling},
	author = {Montanaro, Ashley},
	journal = {arXiv preprint arXiv:1707.04012},
	year = {2017},
	url = {https://arxiv.org/abs/1707.04012},
	doi = {10.48550/arXiv.1707.04012}
}

@article{gross2021schur,
	title = {Schur--Weyl duality for the Clifford group with applications: Property testing, a robust Hudson theorem, and de Finetti representations},
	author = {Gross, David and Nezami, Sepehr and Walter, Michael},
	journal = {Communications in Mathematical Physics},
	volume = {385},
	number = {3},
	pages = {1325--1393},
	year = {2021},
	publisher = {Springer},
	doi = {10.1007/s00220-021-04118-7},
	url = {https://link.springer.com/article/10.1007%2Fs00220-021-04118-7}
}

@article{caro2024learning,
   title={Learning Quantum Processes and Hamiltonians via the Pauli Transfer Matrix},
   volume={5},
   ISSN={2643-6817},
   url={http://dx.doi.org/10.1145/3670418},
   DOI={10.1145/3670418},
   number={2},
   journal={ACM Transactions on Quantum Computing},
   publisher={Association for Computing Machinery (ACM)},
   author={Caro, Matthias C.},
   year={2024},
   month=jun, pages={1–53} }

@article{swingle2016measuring,
  title = {Measuring the scrambling of quantum information},
  author = {Swingle, Brian and Bentsen, Gregory and Schleier-Smith, Monika and Hayden, Patrick},
  journal = {Phys. Rev. A},
  volume = {94},
  issue = {4},
  pages = {040302},
  numpages = {6},
  year = {2016},
  month = {Oct},
  publisher = {American Physical Society},
  doi = {10.1103/PhysRevA.94.040302},
  url = {https://link.aps.org/doi/10.1103/PhysRevA.94.040302}
}

@article{von2018operator,
	title = {Operator hydrodynamics, OTOCs, and entanglement growth in systems without conservation laws},
	author = {von Keyserlingk, Curt W and Rakovszky, Tibor and Pollmann, Frank and Sondhi, Shivaji Lal},
	journal = {Physical Review X},
	volume = {8},
	number = {2},
	pages = {021013},
	year = {2018},
	publisher = {APS},
	url = {https://journals.aps.org/prx/abstract/10.1103/PhysRevX.8.021013},
	doi = {10.1103/PhysRevX.8.021013}
}

@article{mi2021information,
	title = {Information scrambling in quantum circuits},
	author = {Mi, Xiao and Roushan, Pedram and Quintana, Chris and Mandra, Salvatore and Marshall, Jeffrey and Neill, Charles and Arute, Frank and Arya, Kunal and Atalaya, Juan and Babbush, Ryan and others},
	journal = {Science},
	volume = {374},
	number = {6574},
	pages = {1479--1483},
	year = {2021},
	publisher = {American Association for the Advancement of Science},
	url = {https://www.science.org/doi/10.1126/science.abg5029},
	doi = {10.1126/science.abg5029}
}

@article{landsman2019verified,
	title = {Verified quantum information scrambling},
	volume = {567},
	issn = {1476-4687},
	url = {https://doi.org/10.1038/s41586-019-0952-6},
	doi = {10.1038/s41586-019-0952-6},
	pages = {61--65},
	number = {7746},
	journal = {Nature},
	shortjournal = {Nature},
	author = {Landsman, K. A. and Figgatt, C. and Schuster, T. and Linke, N. M. and Yoshida, B. and Yao, N. Y. and Monroe, C.},
	date = {2019-03-01},
}

@article{yoshida2019disentangling,
  title = {Disentangling Scrambling and Decoherence via Quantum Teleportation},
  author = {Yoshida, Beni and Yao, Norman Y.},
  journal = {Phys. Rev. X},
  volume = {9},
  issue = {1},
  pages = {011006},
  numpages = {17},
  year = {2019},
  month = {Jan},
  publisher = {American Physical Society},
  doi = {10.1103/PhysRevX.9.011006},
  url = {https://link.aps.org/doi/10.1103/PhysRevX.9.011006}
}

@article{huang2021quantum,
	title = {Quantum advantage in learning from experiments},
	author = {{Huang}, Hsin-Yuan and {Broughton}, Michael and {Cotler}, Jordan and {Chen}, Sitan and {Li}, Jerry and {Mohseni}, Masoud and {Neven}, Hartmut and {Babbush}, Ryan and {Kueng}, Richard and {Preskill}, John and {McClean}, Jarrod R.},
	journal = {Science},
	volume = {376},
	number = {6598},
	pages = {1182--1186},
	year = {2022},
	publisher = {American Association for the Advancement of Science},
	url = {https://www.science.org/doi/10.1126/science.abn7293},
	doi = {10.1126/science.abn7293}
}

@article{chen2022quantum,
	title = {Quantum advantages for Pauli channel estimation},
	author = {Chen, Senrui and Zhou, Sisi and Seif, Alireza and Jiang, Liang},
	journal = {Physical Review A},
	volume = {105},
	number = {3},
	pages = {032435},
	year = {2022},
	publisher = {APS},
	url = {https://journals.aps.org/pra/abstract/10.1103/PhysRevA.105.032435},
	doi = {10.1103/PhysRevA.105.032435}
}

@article{fisher2023random,
	title = {Random quantum circuits},
	author = {Fisher, Matthew PA and Khemani, Vedika and Nahum, Adam and Vijay, Sagar},
	journal = {Annual Review of Condensed Matter Physics},
	volume = {14},
	pages = {335--379},
	year = {2023},
	publisher = {Annual Reviews},
	url = {https://www.annualreviews.org/doi/10.1146/annurev-conmatphys-031720-030658},
	doi = {10.1146/annurev-conmatphys-031720-030658}
}

@article{nahum2018operator,
	title = {Operator spreading in random unitary circuits},
	author = {Nahum, Adam and Vijay, Sagar and Haah, Jeongwan},
	journal = {Physical Review X},
	volume = {8},
	number = {2},
	pages = {021014},
	year = {2018},
	publisher = {APS},
	url = {https://journals.aps.org/prx/abstract/10.1103/PhysRevX.8.021014},
	doi = {10.1103/PhysRevX.8.021014}
}

@article{huang2020predicting,
	title = {Predicting many properties of a quantum system from very few measurements},
	author = {Huang, Hsin-Yuan and Kueng, Richard and Preskill, John},
	journal = {Nature Physics},
	volume = {16},
	number = {10},
	pages = {1050--1057},
	year = {2020},
	publisher = {Nature Publishing Group},
	doi = {10.1038/s41567-020-0932-7},
	url = {https://www.nature.com/articles/s41567-020-0932-7}
}

@article{huang2021information,
	title = {Information-Theoretic Bounds on Quantum Advantage in Machine Learning},
	author = {Huang, Hsin-Yuan and Kueng, Richard and Preskill, John},
	journal = {Phys. Rev. Lett.},
	volume = {126},
	issue = {19},
	pages = {190505},
	numpages = {7},
	year = {2021},
	month = {May},
	publisher = {American Physical Society},
	doi = {10.1103/PhysRevLett.126.190505},
	url = {https://link.aps.org/doi/10.1103/PhysRevLett.126.190505}
}

@article{fuchs1999cryptographic,
	title = {Cryptographic distinguishability measures for quantum-mechanical states},
	author = {Fuchs, Christopher A and Van De Graaf, Jeroen},
	journal = {IEEE Transactions on Information Theory},
	volume = {45},
	number = {4},
	pages = {1216--1227},
	year = {1999},
	publisher = {IEEE},
	doi = {10.1109/18.761271},
	url = {https://ieeexplore.ieee.org/document/761271}
}

@article{elben2022randomized,
	title = {The randomized measurement toolbox},
	author = {Elben, Andreas and Flammia, Steven T and Huang, Hsin-Yuan and Kueng, Richard and Preskill, John and Vermersch, Beno{\^\i}t and Zoller, Peter},
	journal = {Nature Review Physics},
	year = {2022},
	doi = {10.1038/s42254-022-00535-2},
	url = {https://www.nature.com/articles/s42254-022-00535-2}
}

@article{zhao2024learning,
	title = {Learning quantum states and unitaries of bounded gate complexity},
	author = {Zhao, Haimeng and Lewis, Laura and Kannan, Ishaan and Quek, Yihui and Huang, Hsin-Yuan and Caro, Matthias C},
	journal = {PRX Quantum},
	volume = {5},
	number = {4},
	pages = {040306},
	year = {2024},
	publisher = {APS},
	doi = {10.1103/PRXQuantum.5.040306},
	url = {https://link.aps.org/doi/10.1103/PRXQuantum.5.040306}
}

@article{google2025observation,
	title = {Observation of constructive interference at the edge of quantum ergodicity},
	journal = {Nature},
	volume = {646},
	number = {8086},
	pages = {825--830},
	year = {2025},
	publisher = {Nature Publishing Group UK London},
	url = {https://doi.org/10.1038/s41586-025-09526-6},
	doi = {https://doi.org/10.1038/s41586-025-09526-6}
}

@article{castaneda2025hamiltonian,
	title = {Hamiltonian learning via shadow tomography of pseudo-choi states},
	author = {Castaneda, Juan and Wiebe, Nathan},
	journal = {Quantum},
	volume = {9},
	pages = {1700},
	year = {2025},
	publisher = {Verein zur F{\"o}rderung des Open Access Publizierens in den Quantenwissenschaften},
	url = {https://quantum-journal.org/papers/q-2025-04-09-1700/},
	doi = {10.22331/q-2025-04-09-1700}
}

@article{levy2024classical,
	title = {Classical shadows for quantum process tomography on near-term quantum computers},
	author = {Levy, Ryan and Luo, Di and Clark, Bryan K},
	journal = {Physical Review Research},
	volume = {6},
	number = {1},
	pages = {013029},
	year = {2024},
	publisher = {APS},
	url = {https://journals.aps.org/prresearch/abstract/10.1103/PhysRevResearch.6.013029},
	doi = {10.1103/PhysRevResearch.6.013029}
}

@article{flammia2011direct,
	title = {Direct fidelity estimation from few Pauli measurements},
	author = {Flammia, Steven T and Liu, Yi-Kai},
	journal = {Physical review letters},
	volume = {106},
	number = {23},
	pages = {230501},
	year = {2011},
	publisher = {APS},
	url = {https://journals.aps.org/prl/abstract/10.1103/PhysRevLett.106.230501},
	doi = {10.1103/PhysRevLett.106.230501}
}

@article{baez2020dynamical,
	title = {Dynamical structure factors of dynamical quantum simulators},
	author = {Baez, Maria Laura and Goihl, Marcel and Haferkamp, Jonas and Bermejo-Vega, Juani and Gluza, Marek and Eisert, Jens},
	journal = {Proceedings of the National Academy of Sciences},
	volume = {117},
	number = {42},
	pages = {26123--26134},
	year = {2020},
	publisher = {National Acad Sciences},
	url = {https://www.pnas.org/doi/10.1073/pnas.2006103117},
	doi = {10.1073/pnas.2006103117}
}

@article{flammia2020efficient,
	title = {Efficient estimation of Pauli channels},
	author = {Flammia, Steven T and Wallman, Joel J},
	journal = {ACM Transactions on Quantum Computing},
	volume = {1},
	number = {1},
	pages = {1--32},
	year = {2020},
	publisher = {ACM New York, NY, USA},
	doi = {10.1145/3408039}
}

@article{kunjummen2023shadow,
	title = {Shadow process tomography of quantum channels},
	author = {Kunjummen, Jonathan and Tran, Minh C and Carney, Daniel and Taylor, Jacob M},
	journal = {Physical Review A},
	volume = {107},
	number = {4},
	pages = {042403},
	year = {2023},
	publisher = {APS},
	url = {https://journals.aps.org/pra/abstract/10.1103/PhysRevA.107.042403},
	doi = {10.1103/PhysRevA.107.042403}
}

@article{jamiolkowski1972linear,
	title = {Linear transformations which preserve trace and positive semidefiniteness of operators},
	author = {Jamio{\l}kowski, Andrzej},
	journal = {Reports on mathematical physics},
	volume = {3},
	number = {4},
	pages = {275--278},
	year = {1972},
	publisher = {Elsevier},
	url = {https://www.sciencedirect.com/science/article/abs/pii/0034487772900110},
	doi = {10.1016/0034-4877(72)90011-0}
}

@article{choi1975completely,
	title = {Completely positive linear maps on complex matrices},
	author = {Choi, Man-Duen},
	journal = {Linear algebra and its applications},
	volume = {10},
	number = {3},
	pages = {285--290},
	year = {1975},
	publisher = {Elsevier},
	doi = {10.1016/0024-3795(75)90075-0},
	url = {https://doi.org/10.1016/0024-3795(75)90075-0}
}

@article{von2022operator,
	title = {Operator backflow and the classical simulation of quantum transport},
	author = {Von Keyserlingk, Curt and Pollmann, Frank and Rakovszky, Tibor},
	journal = {Physical Review B},
	volume = {105},
	number = {24},
	pages = {245101},
	year = {2022},
	publisher = {APS},
	url = {https://journals.aps.org/prb/abstract/10.1103/PhysRevB.105.245101},
	doi = {10.1103/PhysRevB.105.245101}
}

@article{kim2026fundamental,
  title={On the fundamental resource for exponential advantage in quantum channel learning},
  author={Kim, Minsoo and Oh, Changhun},
  journal={Nature Communications},
  year={2026},
  publisher={Nature Publishing Group UK London}
}

@article{chiew2026quantum,
  title={Quantum simulation in the Heisenberg picture via vectorization},
  author={Chiew, Shao-Hen and Angrisani, Armando and Holmes, Zo{\"e} and Carleo, Giuseppe},
  journal={arXiv preprint arXiv:2602.20154},
  year={2026}
}

@article{nahum2022real,
  title={Real-time correlators in chaotic quantum many-body systems},
  author={Nahum, Adam and Roy, Sthitadhi and Vijay, Sagar and Zhou, Tianci},
  journal={Physical Review B},
  volume={106},
  number={22},
  pages={224310},
  year={2022},
  publisher={APS}
}

@article{cotler2017chaos,
	title = {Chaos, complexity, and random matrices},
	volume = {2017},
	issn = {1029-8479},
	url = {https://doi.org/10.1007/JHEP11(2017)048},
	doi = {10.1007/JHEP11(2017)048},
	pages = {48},
	number = {11},
	journal = {Journal of High Energy Physics},
	shortjournal = {Journal of High Energy Physics},
	author = {Cotler, Jordan and Hunter-Jones, Nicholas and Liu, Junyu and Yoshida, Beni},
	date = {2017-11-09},
}

@article{moudgalya2024symmetries,
  title = {Symmetries as Ground States of Local Superoperators: Hydrodynamic Implications},
  author = {Moudgalya, Sanjay and Motrunich, Olexei I.},
  journal = {PRX Quantum},
  volume = {5},
  issue = {4},
  pages = {040330},
  numpages = {41},
  year = {2024},
  month = {Nov},
  publisher = {American Physical Society},
  doi = {10.1103/PRXQuantum.5.040330},
  url = {https://link.aps.org/doi/10.1103/PRXQuantum.5.040330}
}

@INPROCEEDINGS{gleinig2021efficient,
  author={Gleinig, Niels and Hoefler, Torsten},
  booktitle={2021 58th ACM/IEEE Design Automation Conference (DAC)}, 
  title={An Efficient Algorithm for Sparse Quantum State Preparation}, 
  year={2021},
  volume={},
  number={},
  pages={433-438},
  doi={10.1109/DAC18074.2021.9586240}}

@article{zhang2022quantum,
  title = {Quantum State Preparation with Optimal Circuit Depth: Implementations and Applications},
  author = {Zhang, Xiao-Ming and Li, Tongyang and Yuan, Xiao},
  journal = {Phys. Rev. Lett.},
  volume = {129},
  issue = {23},
  pages = {230504},
  numpages = {6},
  year = {2022},
  month = {Nov},
  publisher = {American Physical Society},
  doi = {10.1103/PhysRevLett.129.230504},
  url = {https://link.aps.org/doi/10.1103/PhysRevLett.129.230504}
}

@article{dowling2025magic,
  title = {Magic Resources of the Heisenberg Picture},
  author = {Dowling, Neil and Kos, Pavel and Turkeshi, Xhek},
  journal = {Phys. Rev. Lett.},
  volume = {135},
  issue = {5},
  pages = {050401},
  numpages = {10},
  year = {2025},
  month = {Jul},
  publisher = {American Physical Society},
  doi = {10.1103/p7xt-s9nz},
  url = {https://link.aps.org/doi/10.1103/p7xt-s9nz}
}

@article{xu2024scrambling,
  title = {Scrambling Dynamics and Out-of-Time-Ordered Correlators in Quantum Many-Body Systems},
  author = {Xu, Shenglong and Swingle, Brian},
  journal = {PRX Quantum},
  volume = {5},
  issue = {1},
  pages = {010201},
  numpages = {46},
  year = {2024},
  month = {Jan},
  publisher = {American Physical Society},
  doi = {10.1103/PRXQuantum.5.010201},
  url = {https://link.aps.org/doi/10.1103/PRXQuantum.5.010201}
}

@article{rakovszky2018diffusive,
	title = {Diffusive hydrodynamics of out-of-time-ordered correlators with charge conservation},
	author = {Rakovszky, Tibor and Pollmann, Frank and von Keyserlingk, Curt W},
	journal = {Physical Review X},
	volume = {8},
	number = {3},
	pages = {031058},
	year = {2018},
	publisher = {APS}
}

@article{chen2024tight,
	title = {Tight bounds on Pauli channel learning without entanglement},
	author = {Chen, Senrui and Oh, Changhun and Zhou, Sisi and Huang, Hsin-Yuan and Jiang, Liang},
	journal = {Physical Review Letters},
	volume = {132},
	number = {18},
	pages = {180805},
	year = {2024},
	publisher = {APS}
}

@article{schuster2023operator,
  title = {Operator Growth in Open Quantum Systems},
  author = {Schuster, Thomas and Yao, Norman Y.},
  journal = {Phys. Rev. Lett.},
  volume = {131},
  issue = {16},
  pages = {160402},
  numpages = {6},
  year = {2023},
  month = {Oct},
  publisher = {American Physical Society},
  doi = {10.1103/PhysRevLett.131.160402},
  url = {https://link.aps.org/doi/10.1103/PhysRevLett.131.160402}
}

@article{hangleiter2024bell,
  title = {Bell Sampling from Quantum Circuits},
  author = {Hangleiter, Dominik and Gullans, Michael J.},
  journal = {Phys. Rev. Lett.},
  volume = {133},
  issue = {2},
  pages = {020601},
  numpages = {7},
  year = {2024},
  month = {Jul},
  publisher = {American Physical Society},
  doi = {10.1103/PhysRevLett.133.020601},
  url = {https://link.aps.org/doi/10.1103/PhysRevLett.133.020601}
}

@article{qi2019quantum,
	title = {Quantum epidemiology: operator growth, thermal effects, and {SYK}},
	volume = {2019},
	issn = {1029-8479},
	url = {https://doi.org/10.1007/JHEP08(2019)012},
	doi = {10.1007/JHEP08(2019)012},
	pages = {12},
	number = {8},
	journal = {Journal of High Energy Physics},
	author = {Qi, Xiao-Liang and Streicher, Alexandre},
	date = {2019-08-02},
}

@inproceedings{li2024nearly,
  doi = {10.4230/LIPICS.ICALP.2025.113},
  url = {https://drops.dagstuhl.de/entities/document/10.4230/LIPIcs.ICALP.2025.113},
  author = {Li, Lvzhou and Luo, Jingquan},
  title = {Nearly Optimal Circuit Size for Sparse Quantum State Preparation},
  journal = {LIPIcs, Volume 334, ICALP 2025},
  volume = {334},
  pages = {113:1-113:19},
  publisher = {Schloss Dagstuhl – Leibniz-Zentrum für Informatik},
  year = {2025}
}

@misc{vilmart2025resource,
      title={Resource-Efficient Synthesis of Sparse Quantum States}, 
      author={Renaud Vilmart and Sunheang Ty and Chetra Mang},
      year={2025},
      archivePrefix={arXiv},
      primaryClass={quant-ph},
      url={https://arxiv.org/abs/2508.05386}, 
}

@article{cotler2023information,
  title = {Information-theoretic hardness of out-of-time-order correlators},
  author = {Cotler, Jordan and Schuster, Thomas and Mohseni, Masoud},
  journal = {Phys. Rev. A},
  volume = {108},
  issue = {6},
  pages = {062608},
  numpages = {15},
  year = {2023},
  month = {Dec},
  publisher = {American Physical Society},
  doi = {10.1103/PhysRevA.108.062608},
  url = {https://link.aps.org/doi/10.1103/PhysRevA.108.062608}
}

@inproceedings{chen2022exponential,
	title = {Exponential separations between learning with and without quantum memory},
	author = {Chen, Sitan and Cotler, Jordan and Huang, Hsin-Yuan and Li, Jerry},
	booktitle = {2021 IEEE 62nd Annual Symposium on Foundations of Computer Science (FOCS)},
	pages = {574--585},
	year = {2022},
	organization = {IEEE}
}

@article{parker2019universal,
  title = {A Universal Operator Growth Hypothesis},
  author = {Parker, Daniel E. and Cao, Xiangyu and Avdoshkin, Alexander and Scaffidi, Thomas and Altman, Ehud},
  journal = {Phys. Rev. X},
  volume = {9},
  issue = {4},
  pages = {041017},
  numpages = {29},
  year = {2019},
  month = {Oct},
  publisher = {American Physical Society},
  doi = {10.1103/PhysRevX.9.041017},
  url = {https://link.aps.org/doi/10.1103/PhysRevX.9.041017}
}

@article{sundar2022proposal,
	title = {Proposal for measuring out-of-time-ordered correlators at finite temperature with coupled spin chains},
	volume = {24},
	url = {https://doi.org/10.1088/1367-2630/ac5002},
	doi = {10.1088/1367-2630/ac5002},
	pages = {023037},
	number = {2},
	journal = {New Journal of Physics},
	publisher = {{IOP} Publishing},
	author = {Sundar, Bhuvanesh and Elben, Andreas and Joshi, Lata Kh and Zache, Torsten V},
	date = {2022-02},
}

@article{green2022experimental,
  title = {Experimental Measurement of Out-of-Time-Ordered Correlators at Finite Temperature},
  author = {Green, Alaina M. and Elben, A. and Alderete, C. Huerta and Joshi, Lata Kh and Nguyen, Nhung H. and Zache, Torsten V. and Zhu, Yingyue and Sundar, Bhuvanesh and Linke, Norbert M.},
  journal = {Phys. Rev. Lett.},
  volume = {128},
  issue = {14},
  pages = {140601},
  numpages = {6},
  year = {2022},
  month = {Apr},
  publisher = {American Physical Society},
  doi = {10.1103/PhysRevLett.128.140601},
  url = {https://link.aps.org/doi/10.1103/PhysRevLett.128.140601}
}

@article{chen2025efficient,
	title = {Efficient Pauli channel estimation with logarithmic quantum memory},
	author = {Chen, Sitan and Gong, Weiyuan},
	journal = {PRX Quantum},
	volume = {6},
	number = {2},
	pages = {020323},
	year = {2025},
	publisher = {APS}
}

@article{schuster2022many,
  title = {Many-Body Quantum Teleportation via Operator Spreading in the Traversable Wormhole Protocol},
  author = {Schuster, Thomas and Kobrin, Bryce and Gao, Ping and Cong, Iris and Khabiboulline, Emil T. and Linke, Norbert M. and Lukin, Mikhail D. and Monroe, Christopher and Yoshida, Beni and Yao, Norman Y.},
  journal = {Phys. Rev. X},
  volume = {12},
  issue = {3},
  pages = {031013},
  numpages = {61},
  year = {2022},
  month = {Jul},
  publisher = {American Physical Society},
  doi = {10.1103/PhysRevX.12.031013},
  url = {https://link.aps.org/doi/10.1103/PhysRevX.12.031013}
}

@article{Pineda_Jim_nez_2026,
   title={Operator delocalization in disordered spin chains via exact MPO marginals},
   volume={28},
   ISSN={1367-2630},
   url={http://dx.doi.org/10.1088/1367-2630/ae774b},
   DOI={10.1088/1367-2630/ae774b},
   number={6},
   journal={New Journal of Physics},
   publisher={IOP Publishing},
   author={Pineda-Jiménez, Jonnathan and Collura, Mario and Passarelli, Gianluca and Lucignano, Procolo and Rossini, Davide and Russomanno, Angelo},
   year={2026},
   month=June, pages={064512} }

@article{khemani2018operator,
  title = {Operator Spreading and the Emergence of Dissipative Hydrodynamics under Unitary Evolution with Conservation Laws},
  author = {Khemani, Vedika and Vishwanath, Ashvin and Huse, David A.},
  journal = {Phys. Rev. X},
  volume = {8},
  issue = {3},
  pages = {031057},
  numpages = {25},
  year = {2018},
  month = {Sep},
  publisher = {American Physical Society},
  doi = {10.1103/PhysRevX.8.031057},
  url = {https://link.aps.org/doi/10.1103/PhysRevX.8.031057}
}

@article{roberts2018operator,
	title = {Operator growth in the {SYK} model},
	volume = {2018},
	issn = {1029-8479},
	url = {https://doi.org/10.1007/JHEP06(2018)122},
	doi = {10.1007/JHEP06(2018)122},
	pages = {122},
	number = {6},
	journal = {Journal of High Energy Physics},
	author = {Roberts, Daniel A. and Stanford, Douglas and Streicher, Alexandre},
	date = {2018-06-22},
}

@article{schuster2023learning,
	title = {Learning quantum systems via out-of-time-order correlators},
	author = {Schuster, Thomas and Niu, Murphy and Cotler, Jordan and O'Brien, Thomas and McClean, Jarrod R and Mohseni, Masoud},
	journal = {Physical Review Research},
	volume = {5},
	number = {4},
	pages = {043284},
	year = {2023},
	publisher = {APS}
}

@article{mcginley2022quantifying,
  title = {Quantifying information scrambling via classical shadow tomography on programmable quantum simulators},
  author = {McGinley, Max and Leontica, Sebastian and Garratt, Samuel J. and Jovanovic, Jovan and Simon, Steven H.},
  journal = {Phys. Rev. A},
  volume = {106},
  issue = {1},
  pages = {012441},
  numpages = {14},
  year = {2022},
  month = {Jul},
  publisher = {American Physical Society},
  doi = {10.1103/PhysRevA.106.012441},
  url = {https://link.aps.org/doi/10.1103/PhysRevA.106.012441}
}

@article{yoshimura2025theory,
  title = {Theory of irreversibility in quantum many-body systems},
  author = {Yoshimura, Takato and S\'a, Lucas},
  journal = {Phys. Rev. E},
  volume = {111},
  issue = {6},
  pages = {064135},
  numpages = {22},
  year = {2025},
  month = {Jun},
  publisher = {American Physical Society},
  doi = {10.1103/82f6-qdyd},
  url = {https://link.aps.org/doi/10.1103/82f6-qdyd}
}

@article{zhang2025quantum,
      title={Quantum computation of molecular geometry via many-body nuclear spin echoes}, 
      author={C. Zhang and R. G. Cortiñas and A. H. Karamlou and N. Noll and J. Provazza and J. Bausch and S. Shirobokov and A. White and M. Claassen and S. H. Kang and others},
      year={2025},
      journal = {arXiv:2510.19550},
      archivePrefix={arXiv},
      primaryClass={quant-ph},
      url={https://arxiv.org/abs/2510.19550}
}

@online{wolf2012quantumchannels,
	author = {Michael M. Wolf},
	title = {Quantum Channels and Operations - Guided Tour},
	year = {2012},
	month = {07},
	doi = {},
	pages = {},
    url={https://mediatum.ub.tum.de/doc/1701036/1701036.pdf}
}


\clearpage

\onecolumngrid

\counterwithin*{equation}{section}
\renewcommand\theequation{\thesection\arabic{equation}}

\renewcommand\partname{} 
\appendix
\begin{center}
 {\Large \textbf{Appendix} }   
\end{center}

\part{}
\parttoc 

\vspace{10pt}

\clearpage

\section{Preliminaries}

\subsection{The vectorization and Choi-Jamiolkowski maps} \label{app:prelim_vect_choi}
We begin by briefly reviewing relevant properties of the vectorization and Choi-Jamiolkowski maps, also setting up notation along the way. Further relevant details on these topics will be introduced as needed, and can be found e.g. in \cite{chiew2026quantum} and \cite{wolf2012quantumchannels} respectively.

The normalized vectorization map with respect to an $n$-qubit orthogonal operator basis $\mathcal{Q} = \{Q_k\}_{k=1}^{2^n}$ and a $2n$-qubit orthonormal state basis $\{\ket{k}\}_{k=1}^{4^n}$ is defined as:
\begin{equation}
    \kett{O}_\mathcal{Q} \equiv \sum_{k=1}^{4^n} \frac{\tr(Q_k O)}{\sqrt{\sum_i \abs{\tr(Q_i O)}^2}} \ket{k},
\end{equation}
and maps between $n$-qubit linear operators $O \in \mathcal{L}(\mathcal{H})$ and $2n$-qubit pure states $\kett{O} \in \mathcal{H}'$, where $\mathcal{H}$ and $\mathcal{H}'$ denote the $n$- and $2n$-qubit Hilbert spaces respectively. For convenience, order the doubled space $\mathcal{H}'$ spatially as $\mathcal{H}' =\mathcal{H}_L \otimes \mathcal{H}_R = (\bigotimes_{i=1}^n \mathcal{H}_L^i ) \otimes (\bigotimes_{i=1}^n \mathcal{H}_R^i )$, where $\mathcal{H}_L$ and $\mathcal{H}_R$ are each of dimension $2^n$, and each $\mathcal{H}_L^i$ or $\mathcal{H}_R^i$ has local dimension 2. In particular, when $\mathcal{Q} = \mathcal{C} = \{\ketbra{i}{j}\}_{i,j=1}^{2^n}$, the computational operator basis, the vectorization map corresponds to ‘row-stacking’ a $2^n \times 2^n$ dimensional matrix to form a $4^n$ dimensional vector. For brevity, we will also omit the subscript for this choice, and write $\kett{O}_{\mathcal{C}} = \kett{O}$. Furthermore, the `ricochet identity':
\begin{equation} \label{eq:ricochet_id}
    A \otimes C^T \kett{B} = \kett{A B C}
\end{equation} 
holds for this basis.

A useful fact is that $\kett{\mathbb{I}} \in \mathcal{H}'$, the vectorization of the normalized identity operator $\mathbb{I}$ in $\mathcal{C}$, is the quantum state corresponding to $n$ Bell pairs $(\ket{00} + \ket{11})/\sqrt{2})^{\otimes n}$ (up to qubit permutation), and can be conveniently expressed as:
\begin{equation} \label{eq:bell_state}
    \kettbbra{\mathbb{I}}{\mathbb{I}} = \frac{1}{4^n} \sum_i P_i \otimes P_i^T.
\end{equation}

The Choi-Jamiolkowski isomorphism \cite{jamiolkowski1972linear,choi1975completely} states that any $n$-qubit linear map $\mathcal{N} : \mathcal{L}(\mathcal{H}) \rightarrow \mathcal{L}(\mathcal{H})$ can be mapped to a unique $2n$-qubit operator (the Choi operator of $\mathcal{N}$):
\begin{equation} \label{eq:choi_def}
    \rho_{\mathcal{N}} \equiv (\mathcal{N} \otimes \mathbb{I})(\kettbbra{\mathbb{I}}{\mathbb{I}}) \in \mathcal{L}(\mathcal{H}'),
\end{equation}
and vice versa. Hereafter, $\rho_{\mathcal{N}}$ refers to the Choi state of the channel $\mathcal{N}$ whenever the subscript is a quantum channel. Furthermore, the following (directly verifiable via Eqs.~(\ref{eq:ricochet_id}),(\ref{eq:bell_state}), (\ref{eq:choi_def})) holds:
\begin{enumerate}
    \item $\mathcal{N}$ is a quantum channel if and only if $\rho_{\mathcal{N}}$ is a quantum state with maximally mixed right marginal, i.e. $\tr_{\mathcal{H}_L}(\rho_{\mathcal{N}}) = \mathbb{I}/2^n$.

    \item $\mathcal{N}$ is unital if and only if $\rho_{\mathcal{N}}$ has a maximally mixed left marginal, i.e. $\tr_{\mathcal{H}_R}(\rho_{\mathcal{N}}) = \mathbb{I}/2^n$.

    \item $\mathcal{N}$ is a doubly-stochastic quantum channel if and only if $\rho_{\mathcal{N}}$ is a quantum state with maximally mixed left and right marginals.
\end{enumerate}

In the case where $\mathcal{N}$ corresponds to conjugation by a linear operator, Eq.~(\ref{eq:choi_def}) reduces to the vectorization map. For instance, if $\mathcal{N}(\cdot) = U(\cdot)U^\dagger$, where $U$ is a unitary operator, then $\rho_{\mathcal{N}} = (U \otimes \mathbb{I})\kettbbra{\mathbb{I}}{\mathbb{I}}(U^\dagger \otimes \mathbb{I}) = \kettbbra{U}{U}$, where we used Eq.~(\ref{eq:ricochet_id}) in the second equality.

An alternative characterization of $\mathcal{N}$ is in terms of its Transfer Matrix $M_\mathcal{Q}^\mathcal{N}$, i.e. its matrix representation in an orthonormal operator basis $\mathcal{Q} = \{Q_i\}$, with matrix elements defined by:
\begin{equation}
    \left[M_\mathcal{Q}^\mathcal{N} \right]_{ij} \equiv \frac{1}{2^n} \tr(Q_i^\dagger \mathcal{N}(Q_j)).
\end{equation}
Its matrix representation in the Pauli basis $\mathcal{P}_n$, $M_\mathcal{P}^\mathcal{N}$, is simply referred to as the Pauli Transfer Matrix (PTM) of $\mathcal{N}$. Elements of the transfer matrix of $\mathcal{N}$ can also be related to expectation values on its Choi state: $\forall O_i, O_j \in \mathcal{L}(\mathcal{H})$, we have:
\begin{equation} \label{eq:ricochet_channel}
    \tr(O_i \otimes O_j^T \rho_{\mathcal{N}}) = \frac{1}{2^n}\tr(O_i^\dagger \mathcal{N}(O_j)).
\end{equation}
To show this, it suffices to consider $P_i, P_j \in \mathcal{P}_n$ (by linearity and completeness of $\mathcal{P}_n$), and make use of Eq.~(\ref{eq:bell_state}) to yield:
\begin{equation*}
    \tr(P_i \otimes P_j^T \rho_{\mathcal{N}}) = \frac{1}{4^n} \sum_P \tr(P_i \mathcal{N}(P)) \tr(P_j^T P) = \frac{1}{2^n} \tr(P_i \mathcal{N}(P_j)).
\end{equation*}
Notably, when $O_i, O_j \in \mathcal{P}_n$, the quantity above corresponds to the PTM elements of $\mathcal{N}$.

An object of frequent study in our work is the map $\mathcal{O}_{\mathcal{E}, O} \equiv \mathcal{E}^\dagger \circ \tilde{O} \circ \mathcal{E}$, where $\mathcal{E}$ is a general quantum channel, and $\tilde{O}(\cdot) \equiv O^\dagger(\cdot)O$ dentoes conjugation by the operator $O$. It can be thought of as the generalization of the ``Loschmidt echo" unitary map $\kett{\mathbb{I}} \rightarrow \kett{O(t)}$ carried out by the unitary $O(t) \otimes \mathbb{I} = U^\dagger OU \otimes \mathbb{I}$, to the case where Heisenberg evolution $\mathcal{E}$ is non-unitary. The following properties about $\mathcal{O}_{\mathcal{E}, O}$ hold:
\begin{lemma}\label{lemma:prop_o}
Define the map $\mathcal{O}_{\mathcal{E}, O} \equiv \mathcal{E}^\dagger \circ \tilde{O} \circ \mathcal{E}$, where $\mathcal{E}$ is a quantum channel and the conjugation map $\tilde{O}(\cdot) = O(\cdot)O^\dagger$, with $O \in \mathcal{P}_n$ unless stated.
\begin{enumerate}
    \item $\mathcal{O}_{\mathcal{E}, O}$ is Hermitian-preserving.

    \item $\mathcal{O}_{\mathcal{E}, O}$ is self-adjoint.

    \item $\mathcal{O}_{\mathcal{E}, O}$ is completely-positive.

    \item If $\mathcal{E}$ is unital, then $\mathcal{O}_{\mathcal{E}, O}$ is both trace-preserving and unital (and is therefore a doubly-stochastic quantum channel).
\end{enumerate}
\end{lemma}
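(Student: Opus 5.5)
The plan is to prove each property by writing $\mathcal{O}_{\mathcal{E},O}$ as a composition and checking that each factor has the property; composition then transfers it. Fix a Kraus decomposition $\mathcal{E}(X)=\sum_a K_a X K_a^\dagger$ with $\sum_a K_a^\dagger K_a=\mathbb{I}$. The HS adjoint is then $\mathcal{E}^\dagger(Y)=\sum_a K_a^\dagger Y K_a$. This gives an explicit Kraus form
\begin{equation*}
\mathcal{O}_{\mathcal{E},O}(X)=\sum_{a,b} K_b^\dagger O K_a X K_a^\dagger O^\dagger K_b ,
\end{equation*}
with Kraus operators $A_{ab}\equiv K_b^\dagger O K_a$. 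Complete positivity (item 3) follows immediately from this form, since any map of the form $X\mapsto\sum A X A^\dagger$ is CP. It also holds for arbitrary $O$, not only Paulis. Hermiticity preservation (item 1) follows from CP, or directly: $\mathcal{O}_{\mathcal{E},O}(X)^\dagger=\sum A_{ab}X^\dagger A_{ab}^\dagger=\mathcal{O}_{\mathcal{E},O}(X^\dagger)$.

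For self-adjointness (item 2), I use $(\mathcal{A}\circ\mathcal{B})^\dagger=\mathcal{B}^\dagger\circ\mathcal{A}^\dagger$ and $(\mathcal{E}^\dagger)^\dagger=\mathcal{E}$ to get
\begin{equation*}
\mathcal{O}_{\mathcal{E},O}^\dagger=\mathcal{E}^\dagger\circ\tilde{O}^\dagger\circ\mathcal{E}.
\end{equation*}
It therefore suffices to show $\tilde{O}^\dagger=\tilde{O}$. For the conjugation map, cyclicity of the trace gives
\begin{equation*}
\tr\bigl(Y^\dagger O X O^\dagger\bigr)=\tr\bigl((O^\dagger Y O)^\dagger X\bigr),
\end{equation*}
so $\tilde{O}^\dagger(Y)=O^\dagger Y O$. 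For $O\in\mathcal{P}_n$ we have $O=O^\dagger$, so $\tilde{O}^\dagger=\tilde{O}$. This is the one place where the Pauli assumption is genuinely needed; a Hermitian $O$ would also suffice. I would note this, and also note that the paper's two conventions $O(\cdot)O^\dagger$ and $O^\dagger(\cdot)O$ coincide for Paulis, so the statement is unambiguous.

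For item 4, assume $\mathcal{E}$ is unital. Then $\mathcal{E}^\dagger$ is trace-preserving, because $\tr(\mathcal{E}^\dagger(Y))=\tr(\mathcal{E}(\mathbb{I})^\dagger Y)=\tr(Y)$. Also $\mathcal{E}^\dagger$ is unital because $\mathcal{E}$ is trace-preserving, by the dual computation $\tr(X^\dagger\mathcal{E}^\dagger(\mathbb{I}))=\tr(\mathcal{E}(X)^\dagger)=\tr(X^\dagger)$ for all $X$. The map $\tilde{O}$ is trace-preserving and unital since $O$ is unitary. The composition $\mathcal{E}^\dagger\circ\tilde{O}\circ\mathcal{E}$ of trace-preserving maps is trace-preserving, and that of unital maps is unital. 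Combining this with item 3, $\mathcal{O}_{\mathcal{E},O}$ is a doubly-stochastic channel. Alternatively, unitality follows from trace preservation and item 2, since the adjoint of a trace-preserving map is unital.

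There is no real obstacle here. The only step needing care is item 2: tracking the adjoint of the conjugation map correctly, and flagging that it uses $O=O^\dagger$. Equivalently, one could check that the Kraus set $\{A_{ab}^\dagger\}=\{K_a^\dagger O^\dagger K_b\}$ equals $\{A_{ba}\}$ when $O$ is Hermitian.
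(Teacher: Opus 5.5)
Your proposal is correct and follows the paper's argument: items 1--3 come from the Kraus expansion and composition of CP maps, and item 4 comes from unitality of the three factors (with $\mathcal{E}^\dagger$ unital because $\mathcal{E}$ is trace-preserving). The one difference is that you obtain trace preservation directly from $\mathcal{E}^\dagger$ being trace-preserving, whereas the paper deduces it from unitality plus self-adjointness (the alternative you also mention); your direct route has the minor advantage that item 4 then holds for any unitary $O$, not only Hermitian ones.
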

\begin{proof}
(1) and (2) can be directly verified, e.g. by expanding $\mathcal{O}_{\mathcal{E}, O}$ in terms of the Kraus operators of $\mathcal{E}$. (3) is true because $\tilde{P}$ and $\mathcal{E}^\dagger$ are CP, and the composition of CP maps is still CP. To show (4), observe that:
\begin{equation*}
    \mathcal{O}_{\mathcal{E}, O} \text{~is TP~} \iff \mathcal{O}_{\mathcal{E}, O}^\dagger \text{~is unital~} \varoverset{(2)}{\iff} \mathcal{O}_{\mathcal{E}, O} \text{~is unital~} \iff \mathcal{O}_{\mathcal{E}, O}(\mathbb{I}) = (\mathcal{E}^\dagger \circ \tilde{P} \circ \mathcal{E})(\mathbb{I}) = \mathbb{I}.
\end{equation*}
Therefore, if $\mathcal{E}$ is unital, the final equality holds, and $\mathcal{O}_{\mathcal{E}, O}$ is both TP and unital.
\end{proof}

The above Lemma implies that $\mathcal{O}_{\mathcal{E}, O}$ is a quantum channel if the quantum channel $\mathcal{E}$ is also unital (i.e. it is doubly-stochastic). In what follows, we will thus take $\mathcal{E}$ to be unital, to guarantee that $\mathcal{O}_{\mathcal{E}, O}$ is a bona fide physically queryable/implementable quantum channel. Note that this assumption need not be present for any of our lower bounds to hold (since lower bounds for protocols with this assumption imply lower bounds for protocols without), though for completeness we state all lower bounds with this assumption.

The Choi state of $\mathcal{O}_{\mathcal{E}, O}$ is the state $\rho_{\mathcal{O}_{\mathcal{E}, O}} \equiv (\mathcal{O}_{\mathcal{E}, O} \otimes \mathbb{I})(\kettbbra{\mathbb{I}}{\mathbb{I}})$. When $\mathcal{E}(\cdot) = U^\dagger (\cdot) U$ and $\tilde{O}(\cdot)=O(\cdot)O$, $\rho_{\mathcal{O}_{\mathcal{E}, O}}$ reduces to $\kettbbra{O(t)}{O(t)}$ (by again applying Eq.~(\ref{eq:ricochet_id})), the pure state obtained by vectorizing $O(t)=U^\dagger O U$ in the computational basis. By Lemma~\ref{lemma:prop_o}, since $\mathcal{E}$ is unital, $\rho_{\mathcal{O}_{\mathcal{E}, O}}$ is a quantum state that has maximally mixed first and second marginals. Furthermore, for the channel $\mathcal{O}_{\mathcal{E}, O} = \mathcal{E}^\dagger \circ \tilde{P} \circ \mathcal{E}$, the expectation value $\tr(O_i \otimes O_j^T \rho_{\mathcal{O}_{\mathcal{E}, O}})$ evaluates to $\tr(P_i \otimes P_j^T \rho_{\mathcal{O}_{\mathcal{E}, O}}) = \tr(P_i \mathcal{O}_{\mathcal{E}, O}(P_j))/2^n = \tr(\mathcal{E}(P_i) O \mathcal{E}(P_j) O)/2^n$, which is an OTOC. When $\mathcal{E}$ is unitary, Eq.~(\ref{eq:ricochet_channel}) reduces to the expression of OTOC for vectorized operators:
\begin{equation}
    \tr(P_i \otimes P_j^T \rho_{\mathcal{O}_{\mathcal{E}, O}}) = \bbra{O(t)} P_i \otimes P_j^T \kett{O(t)} = \frac{1}{2^n} \tr(P_i O(t) P_j O(t)),
\end{equation}
where we applied the usual `ricochet identity' in the second equality. Finally, we remark that the distribution resulting from sampling from $\rho_{\mathcal{O}_{\mathcal{E}, O}}$ in the Bell basis reduces to the Pauli distribution of $O(t)$ when Heisenberg evolution is unitary, which recovers the case studied in \cite{chiew2026quantum}.

The following Lemma shows that some quantum channels can be expressed in the ``Loschmidt-echo form" $\mathcal{E}^\dagger \circ \tilde{P} \circ \mathcal{E}$, for suitably chosen doubly stochastic channels $\mathcal{E}$ and Pauli operator $P$. As we will discuss, these channels appear as hard instances used in our lower-bound proofs.
\begin{lemma} \label{lemma:hard_instances_echo_form}
Let $\widetilde P(\cdot)=P(\cdot)P^\dagger$ denote conjugation by the Pauli operator $P \in \mathcal{P}_n$.
\begin{enumerate}
    \item The completely depolarizing channel can be written as:
    \begin{equation}
         \Lambda_0(X) \equiv \frac{\mathbb I}{2^n}\Tr(X) = (\Lambda_0 ^*\circ\widetilde P \circ \Lambda_0) (X)
    \end{equation}
    for any $P\in\mathcal P_n$. Its Choi state is:
    \begin{equation}
        \rho_{\Lambda_0} = \frac{\mathbb{I}}{4^n}.
    \end{equation}

    \item Let $Q_i\in\mathcal P_n\setminus\{\mathbb I\}$,
    $s\in\{-1,1\}$, and $0\leq\alpha\leq1$. Then the Pauli channel:
    \begin{equation} \label{eq:instance_ei}
        \Lambda_{i,s}(X)
        \equiv
        \frac{1}{2^n}
        \left[
            \mathbb I\Tr(X)
            +s\alpha Q_i\Tr(Q_iX)
        \right]
        =
        \mathcal (\mathcal{E}_i^\dagger\circ\widetilde R_{i,s} \circ\mathcal E_i)(X),
    \end{equation}
    where $\mathcal E_i$ is the doubly-stochastic channel with the action:
    \begin{equation} \label{eq:instance_action_ei}
        \mathcal E_i(\mathbb I)=\mathbb I,\qquad
        \mathcal E_i(Q_i)=\sqrt{\alpha}\,Q_i,\qquad
        \mathcal E_i(Q)=0 ~~\text{for all $Q\in\mathcal P_n\setminus\{\mathbb I,Q_i\}$},
    \end{equation}
    and $R_{i,s}$ is any Pauli operator that commutes with $Q_i$ when $s=1$ and anticommutes with $Q_i$ when $s=-1$. Its Choi state is:
    \begin{equation} \label{eq:choi_diag}
        \rho_{\Lambda_{i,s}}
        =
        \frac{1}{4^n}
        \left(
            \mathbb I^{\otimes 2n}
            + \alpha s \, Q_i\otimes Q_i^T
        \right).
    \end{equation}

    \item Let $Q_i,Q_j\in\mathcal P_n\setminus\{\mathbb I\}$ be
    distinct, $s \in \{-1,1\}$, and $0\leq\alpha\leq1/2$.
    Then the channel that exchanges $Q_i, Q_j$ while annihilating all other Paulis and keeping the identity unchanged:
    \begin{equation} \label{eq:instance_eij}
        \Lambda_{i,j,s}(X)
        \equiv
        \frac{1}{2^n}
        \left[
            \mathbb I\Tr(X)
            +s\alpha
            \left(
                Q_i\Tr(Q_jX)+Q_j\Tr(Q_iX)
            \right)
        \right]
        =
        (\mathcal E_{i,j,s}^*
        \circ\widetilde R
        \circ\mathcal E_{i,j,s})(X),
    \end{equation}
    where $\mathcal{E}_{i,j,s}$ is a doubly-stochastic channel chosen as follows. Choose non-identity Paulis $A,B,R$ satisfying
    \begin{equation}
        [A,B]=0,\qquad [R,A]=0,\qquad \{R,B\}=0,
    \end{equation}
    so that
    \begin{equation} \label{eq:instance_action_eij_r}
        R(A+B)R^\dagger=A-B,\qquad
        R(A-B)R^\dagger=A+B.
    \end{equation}
    For $n\geq2$, one may take
    $A=Z_1$, $B=Z_2$, and $R=X_2$. Setting
    $a=\sqrt{\alpha/2}$, define
    \[
    \begin{aligned}
        \mathcal E_{i,j,s}(\mathbb I)&=\mathbb I,\\
        \mathcal E_{i,j,s}(Q_i)&=a(A+B),\\
        \mathcal E_{i,j,s}(Q_j)&=sa(A-B),\\
        \mathcal E_{i,j,s}(Q)&=0 ~~\text{for all $Q\in\mathcal P_n\setminus\{\mathbb I,Q_i,Q_j\}$}.
    \end{aligned}
    \]
    Its Choi state is:
    \begin{equation} \label{eq:choi_hard_instances_gen}
        \rho_{\Lambda_{i,j,s}}
        =
        \frac{1}{4^n}
        \left(
            \mathbb I^{\otimes 2n}
            + \alpha s \, (Q_i\otimes Q_j^T + Q_j\otimes Q_i^T)
        \right).
    \end{equation}
\end{enumerate}
\end{lemma}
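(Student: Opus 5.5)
The plan is to verify each identity in the Pauli transfer matrix (PTM) picture, and then separately check that the proposed $\mathcal{E}$'s are doubly-stochastic channels. By linearity it suffices to compute the action of both sides on every $Q\in\mathcal{P}_n$. For the adjoint I will use $\mathcal{E}^\dagger(P)=\sum_Q 2^{-n}\tr(P\,\mathcal{E}(Q))\,Q$, valid for Hermitian Paulis. The Choi states then follow from Eq.~(\ref{eq:bell_state}): $\rho_{\mathcal{N}}=4^{-n}\sum_P \mathcal{N}(P)\otimes P^T$. Hence, once $\Lambda$'s action on Paulis is known, its Choi state can be read off directly; this is consistent with Eq.~(\ref{eq:ricochet_channel}).

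\textbf{Part 1.} $\Lambda_0$ kills every non-identity Pauli and fixes $\mathbb{I}$. Moreover $\widetilde P(\mathbb{I})=\mathbb{I}$ and $\Lambda_0^*=\Lambda_0$. Therefore $\Lambda_0^*\circ\widetilde P\circ\Lambda_0(X)=\Lambda_0(\mathbb{I}\tr X/2^n)=\mathbb{I}\tr X/2^n$, and the Choi state is $4^{-n}\mathbb{I}\otimes\mathbb{I}$.

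\textbf{Part 2.} $\mathcal{E}_i$ is diagonal with real entries in the PTM, so $\mathcal{E}_i^\dagger=\mathcal{E}_i$. Evaluating the composition on Paulis gives:
\begin{itemize}
\item $\mathbb{I}\mapsto\mathbb{I}$;
\item $Q_i\mapsto \sqrt\alpha\,\mathcal{E}_i(R_{i,s}Q_iR_{i,s})=s\alpha Q_i$;
\item all other Paulis $\mapsto 0$.
\end{itemize}
This matches Eq.~(\ref{eq:instance_ei}), and Eq.~(\ref{eq:choi_diag}) follows from the Choi formula. To see that $\mathcal{E}_i$ is a valid channel, I will view it as a Pauli-diagonal map with eigenvalues $\lambda_{\mathbb{I}}=1$, $\lambda_{Q_i}=\sqrt\alpha$, and $\lambda_Q=0$ otherwise. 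Its Pauli error rates are $p_P=4^{-n}\sum_Q(-1)^{\langle P,Q\rangle}\lambda_Q=4^{-n}(1\pm\sqrt\alpha)\ge0$, so it is a Pauli channel, hence CPTP and unital.

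\textbf{Part 3, the identity.} The adjoint of $\mathcal{E}_{i,j,s}$ acts as
\[
\mathcal{E}^\dagger(A)=a(Q_i+sQ_j),\qquad \mathcal{E}^\dagger(B)=a(Q_i-sQ_j),
\]
fixes $\mathbb{I}$, and sends all other Paulis to $0$. Using Eq.~(\ref{eq:instance_action_eij_r}), $Q_i\mapsto a\,\mathcal{E}^\dagger(A-B)=2a^2sQ_j=s\alpha Q_j$. Similarly $Q_j\mapsto sa\,\mathcal{E}^\dagger(A+B)=2sa^2Q_i=s\alpha Q_i$. All remaining Paulis are annihilated and $\mathbb{I}$ is fixed, which gives Eq.~(\ref{eq:instance_eij}) and Eq.~(\ref{eq:choi_hard_instances_gen}). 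One also checks that $A,B,R$ exist: $Z_1,Z_2,X_2$ satisfy the stated commutation relations.

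\textbf{Part 3, the main obstacle: complete positivity of $\mathcal{E}_{i,j,s}$.} Trace preservation and unitality are immediate, since $\mathcal{E}(\mathbb{I})=\mathbb{I}$ and the image of every non-identity Pauli is traceless; equivalently, both marginals of the Choi operator are maximally mixed. For positivity, write the Choi operator as
\[
\rho=4^{-n}\big(\mathbb{I}+a(X_1+X_2)\big),\qquad X_1=A\otimes(Q_i+sQ_j)^T,\quad X_2=B\otimes(Q_i-sQ_j)^T.
\]
It then suffices to show $\norm{X_1+X_2}\le 2$, because $a=\sqrt{\alpha/2}\le1/2$ when $\alpha\le1/2$. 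The naive triangle inequality gives only $4$, so I will split into cases according to whether $Q_i,Q_j$ commute.
\begin{itemize}
\item \emph{Commuting case.} $X_1$ and $X_2$ commute. On each joint eigenspace of $Q_i,Q_j$, one of $Q_i\pm sQ_j$ vanishes and the other equals $\pm2$. Hence $X_1+X_2$ has norm at most $2$.
\item \emph{Anticommuting case.} Here $(Q_i\pm sQ_j)^2=2\mathbb{I}$, and the anticommutator of $Q_i+sQ_j$ with $Q_i-sQ_j$ vanishes; the same holds after transposition. Since $A$ and $B$ commute, the cross terms in $(X_1+X_2)^2$ cancel, leaving $(X_1+X_2)^2=4\mathbb{I}$. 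So again the norm is $2$.
\end{itemize}
In both cases $\rho\ge0$, which completes the proof.
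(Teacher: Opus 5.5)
Your proposal is correct and follows essentially the paper's route: check the echo identities on Pauli inputs, read off the Choi states, and certify that $\mathcal{E}_{i,j,s}$ is a channel by showing that the traceless part of its Choi operator has operator norm $2$. For $\mathcal{E}_i$ the paper uses the bound $\|\sqrt{\alpha}\,Q_i\otimes Q_i^T\|_\infty=\sqrt{\alpha}\le 1$ instead of your Pauli error rates, which is equivalent. Your case split on whether $Q_i,Q_j$ commute is unnecessary, because $\{Q_i+sQ_j,\,Q_i-sQ_j\}=2(Q_i^2-Q_j^2)=0$ and $(Q_i+sQ_j)^2+(Q_i-sQ_j)^2=4\mathbb{I}$ hold always; hence $(X_1+X_2)^2=4\mathbb{I}$ in general. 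This is exactly the paper's one-line check, which the paper phrases via $(A+B)(A-B)=0$.
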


\begin{proof}
The claim for $\Lambda_0$ is straightforward. For the remaining claims, first observe that the channels $\mathcal{E}_i$ and $\mathcal{E}_{i,j,s}$ have the Choi states:
\begin{equation}
    \rho_{\mathcal E_i}
    =
    \frac{1}{4^n}
    \left(
        \mathbb I^{\otimes 2n}
        +\sqrt{\alpha}\,Q_i\otimes Q_i^T
    \right), \qquad \rho_{\mathcal E_{i,j,s}}
    =
    \frac{1}{4^n}
    \left(
        \mathbb I^{\otimes 2n} + a((A+B)\otimes Q_i^T
    +s(A-B)\otimes Q_j^T)
    \right),
\end{equation}
which both have maximally mixed left and right marginals. $\rho_{\mathcal E_i}$ is positive semidefinite for $\alpha \leq 1$, since $\norm{\sqrt{\alpha}\,Q_i\otimes Q_i^T}_{\infty} = \sqrt{\alpha}$. For $\rho_{\mathcal{E}_{i,j,s}}$, it can be checked that $\left((A+B)\otimes Q_i^T +s(A-B)\otimes Q_j^T \right)^2 = 4 \mathbb{I}^{\otimes 2n}$, and therefore $\norm{(A+B)\otimes Q_i^T +s(A-B)\otimes Q_j^T}_{\infty} = 2$, implying that $\rho_{\mathcal{E}_{i,j,s}}$ is positive semidefinite for $a \leq 1/2$ (equivalently $\alpha \leq 1/2$). Therefore, $\mathcal{E}_i$ and $\mathcal{E}_{i,j,s}$ are doubly-stochastic quantum channels.

Next, it suffices to check that for any input Pauli operator $X$, the right equalities of Eqs.~(\ref{eq:instance_ei}) and (\ref{eq:instance_eij}) hold. The case of $\Lambda_{i,s}$ can be verified straightforwardly via Eqs.~(\ref{eq:instance_action_ei}). The case of $\Lambda_{i,j,s}$ can be verified by noting that $\mathcal E_{i,j,s}^*(A+B)=2aQ_i$ and $\mathcal E_{i,j,s}^*(A-B)=2saQ_j$, and applying Eqs.~(\ref{eq:instance_action_eij_r}).

Finally, the form of the Choi states $\rho_{\Lambda_{i,s}}$ and $\rho_{\Lambda_{i,j,s}}$ can be obtained by e.g. applying the channels to the Bell state via Eq.~(\ref{eq:bell_state}).

\end{proof}

\subsection{Models of learning protocols} \label{app:models}

Here, we describe the models of OTOC learning protocols considered in our work. We begin by describing the closely related models of protocols for learning quantum channels \cite{chen2021exponential,chen2022quantum,caro2024learning,chen2024tight,chen2025efficient}. 

In such protocols, the user is given samples of a queryable $n$-qubit quantum channel $\mathcal{N}$, and access to a quantum computer with at least $n$ qubits. A generic coherent execution/run (each labeled with $i=1,...,N$) involves initializing the quantum computer as a quantum state $\rho_i$, querying $\mathcal{N}$, and measuring all qubits of the quantum computer (using a POVM $M_i = \{M^b_i\}_b$) to obtain the bitstring $b_i$ (with probability $\tr(M_b^i (\mathcal{N} \otimes \mathbb{I}^{\otimes n_{\text{anc}}})(\rho_i))$). After $N$ repetitions, the acquired classical data $\{b_1,...,b_N\}$ are then processed on a classical computer; we (generously) assume unlimited classical processing resources when deriving our lower bounds, while our algorithms/upper bounds require only classical processing resources that scale polynomially with problem size.

Additionally, the learning protocol may possess the following additional features\footnote{We follow terminologies of \cite{chen2021exponential,huang2021quantum,caro2024learning,chen2024tight}, etc. Note that \cite{chen2025efficient} instead refers to what we call `quantum memory' \textit{concatenation}, and what we call `having access to ancillas' \textit{quantum memory}.}:
\begin{enumerate}

    \item The protocol is \textit{adaptive} if the choice of initial state and measurement (and intermediate processing channels, if present) of later runs can depend on those of previous runs.

    \item The protocol has access to ($n_{\text{anc}}$ qubits of) ancillas if each coherent run operates on a total $n + n_{\text{anc}}$ qubits. Protocols without ancillas can only operate on $n$ qubits at each coherent run.

    \item The protocol has \textit{quantum memory} if each coherent run involves multiple queries of $\mathcal{N}$ before measurement, together with arbitrary interleaved channels. Protocols without quantum memory only query $\mathcal{N}$ once in each coherent run, and is always preceded by initialization and followed by measurement.

\end{enumerate}

Next, differing to the models above for channel learning, we restrict the set of queryable channels to those that take the form $\mathcal{O}_{\mathcal{E}, O} = \mathcal{E}^\dagger \circ \tilde{O} \circ \mathcal{E}$, as described in Appendix~\ref{app:prelim_vect_choi} above. These models capture the vast majority of existing protocols for learning OTOCs, such as \cite{swingle2016measuring,mi2021information,schuster2022many,schuster2023operator,schuster2023learning,cotler2023information,google2025observation,algorithmiq} (without ancillas) and \cite{yoshida2019disentangling,landsman2019verified,schuster2022many,sundar2022proposal,green2022experimental,xu2024scrambling} (with ancillas). Schematics of the main classes of protocols considered in this work are illustrated in Fig.~\ref{fig:protocols}:
\begin{enumerate}[label=(\alph*)]
    \item displays the class of protocols with no ancillas, but with quantum memory;

    \item displays a protocol with access to $n$ ancillas, but no quantum memory;

    \item displays the class of protocols with access to arbitrary number of ancillas, but no quantum memory;

    \item displays a protocol with access to $n$ ancillas and quantum memory.
\end{enumerate}

Finally, for a minority of results, we distinguish between protocols that can initialize as arbitrary states, versus those that can only initialize as product states. We also distinguish between protocols having \textit{channel access}, where the user is given samples of queryable channels $\mathcal{O}_{\mathcal{E}, O}$ as in the above cases, versus those having only \textit{Choi access} \cite{caro2024learning}, where the user is only given samples of $\rho_{\mathcal{O}_{\mathcal{E}, O}}$, the Choi state of $\mathcal{O}_{\mathcal{E}, O}$. In models that have Choi access, quantum memory thus translates to the ability to entangle multiple copies of $\rho_{\mathcal{O}_{\mathcal{E}, O}}$ in each coherent run, before collective measurement. Protocols with channel access are more general than those having Choi access (lower bounds on the former imply lower bounds on the latter), and are our main subject of study.

\subsection{Classical shadows and Pauli tomography}

Here, we collect several known results for learning properties of unknown quantum states and channels, to be invoked later.

\begin{theorem}[Pauli classical shadow, \cite{huang2020predicting}] \label{theorem:huang_predicting_pauli}
Given $N$ copies of the quantum state $\rho$, consider the task of simultaneously estimating $M$ (a priori unknown) expectation values $\{ \tr(\rho Q_i)\}_{i=1}^M$ each to additive error $\epsilon$, with success probability $1-\delta$, where $Q_i$ are tensor product operators of weight $w$. There is an algorithm that achieves this using $N = \bigo{3^w \log(M/\delta)/\epsilon^2}$ copies of $\rho$.
\end{theorem}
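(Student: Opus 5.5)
This is the standard Pauli classical shadow result of \cite{huang2020predicting}. I would reproduce its proof in three steps: construct the snapshot estimator, bound its variance by $3^w$, then apply median-of-means with a union bound.

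\textbf{Estimator.} In each of the $N$ rounds, apply an independent uniformly random single-qubit Clifford $U=\bigotimes_{j=1}^n U_j$ to a fresh copy of $\rho$ and measure in the computational basis, obtaining $b\in\{0,1\}^n$. The measurement channel factorizes as $\mathcal{M}=\bigotimes_j \mathcal{D}_{1/3}$, where $\mathcal{D}_{1/3}(X)=\tfrac13 X+\tfrac13\tr(X)\mathbb{I}$; this follows from the single-qubit Clifford group being a unitary 2-design. Its inverse gives the snapshot
\begin{equation*}
\hat\rho=\bigotimes_{j=1}^n\bigl(3U_j^\dagger\ketbra{b_j}{b_j}U_j-\mathbb{I}\bigr),
\end{equation*}
which satisfies $\mathbb{E}[\hat\rho]=\rho$. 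Hence $\hat o_i=\tr(Q_i\hat\rho)$ is an unbiased estimator of $\tr(\rho Q_i)$.

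\textbf{Variance bound.} This is the only nontrivial step. It suffices to treat the case where $Q_i$ is a weight-$w$ Pauli, or more generally a tensor product of single-qubit operators $q_j$ with $\norm{q_j}_\infty\le1$ acting nontrivially on $w$ sites. Because both $\hat\rho$ and $Q_i$ are tensor products, $\hat o_i$ factorizes as $\prod_j \tr(q_j(3U_j^\dagger\ketbra{b_j}{b_j}U_j-\mathbb{I}))$. On sites where $q_j=\mathbb{I}$ the factor is exactly $1$. On a site where $q_j$ is a nontrivial Pauli, the factor equals $\pm3$ when the random basis matches $q_j$, which happens with probability $1/3$, and $0$ otherwise. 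The single-site second moment is therefore $3$, and the product over the $w$ nontrivial sites gives
\begin{equation*}
\mathbb{E}[\hat o_i^2]\le 3^w.
\end{equation*}
For general bounded tensor-product $Q_i$, one expands each $q_j$ in the Pauli basis and uses traceless/identity orthogonality to obtain the same shadow-norm bound $\norm{Q_i}^2_{\mathrm{shadow}}\le 3^w\prod_j\norm{q_j}^2_\infty$. I expect this single-qubit moment computation under the Clifford 2-/3-design to be the main obstacle, though it is routine.

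\textbf{Concentration.} Split the $N$ snapshots into $K=\lceil 2\log(2M/\delta)\rceil$ batches of size $L=\lceil 34\cdot 3^w/\epsilon^2\rceil$, and output the median of the $K$ batch means. By Chebyshev, each batch mean deviates from $\tr(\rho Q_i)$ by more than $\epsilon$ with probability at most $1/4$. Since the median fails only if at least half the batch means fail, a Hoeffding bound gives a per-observable failure probability of at most $\delta/M$. A union bound over the $M$ observables then completes the proof with $N=KL=\bigo{3^w\log(M/\delta)/\epsilon^2}$. Since the observables enter only in classical post-processing, they may be chosen after the data is collected.
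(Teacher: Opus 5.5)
Your proposal is correct and is the same argument the paper relies on. The paper does not reprove this statement but cites \cite{huang2020predicting}, and in Appendix~\ref{app:pauli_shadow} it uses exactly your ingredients: the inverse shadow channel, the bound $\norm{P}^2_{\text{sh}}=3^w$ for weight-$w$ Paulis (Lemma~3 of that reference), and median-of-means with a union bound. Two small repairs, neither affecting the $\bigo{3^w\log(M/\delta)/\epsilon^2}$ conclusion, are needed.

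\textbf{Median-of-means constants.} With per-batch failure probability $1/4$ and $K=\lceil 2\log(2M/\delta)\rceil$, Hoeffding only gives $e^{-K/8}$, which is not $\le\delta/M$. Either use the $1/34$ that Chebyshev actually yields for your batch size, together with the binomial tail bound $2^K(1/34)^{K/2}\le e^{-K}$, or take $K=\lceil 8\log(M/\delta)\rceil$.

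\textbf{Non-Pauli tensor factors.} For $q_j=a_0\mathbb{I}+\vec{a}\cdot\vec{\sigma}$ the single-site cross term does not vanish by orthogonality. Instead write $\mathbb{E}[\hat o_i^2]=\tr(\rho\bigotimes_j L_j)\le\prod_j\norm{L_j}_\infty$, where $L_j=(a_0^2+3|\vec{a}|^2)\mathbb{I}+2a_0\,\vec{a}\cdot\vec{\sigma}$. Its norm is at most $3\norm{q_j}_\infty^2$, which gives your claimed $3^w\prod_j\norm{q_j}_\infty^2$ bound.
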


\begin{theorem}[Clifford classical shadow, \cite{huang2020predicting}] \label{theorem:huang_predicting_clifford}
Given $N$ copies of the quantum state $\rho$, consider the task of simultaneously estimating $M$ (a priori unknown) Pauli expectation values $\{ \tr(\rho O_i)\}_{i=1}^M$ each to additive error $\epsilon$, with success probability $1-\delta$, where $O_i$ has Hilbert-Schmidt norm $\tr(O_i^2)$. There is an algorithm that achieves this using $N = \bigo{\max_i(\tr(O_i^2)) \log(M/\delta)/\epsilon^2}$ copies of $\rho$.

In particular, if the observables are $O_i = \ketbra{\psi_i}{\psi_i}$ so that $\tr(\rho O_i)$ correspond to the fidelity with target pure states $\ket{\psi_i}$, $\tr(O_i^2) = 1$ and $N = \bigo{\log(M/\delta)/\epsilon^2}$.
\end{theorem}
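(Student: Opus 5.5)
The plan is to reproduce the standard argument of \cite{huang2020predicting}. It has three parts: build an unbiased single-shot estimator of $\rho$ from one random Clifford measurement, bound its variance against any fixed observable in terms of $\tr(O_i^2)$, and boost to high confidence uniformly over all $M$ observables with a median-of-means estimator.

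\emph{Unbiased snapshots.} In each run we draw $U$ uniformly from $\mathrm{Cl}(2^n)$, apply it to a copy of $\rho$, and measure in the computational basis, obtaining $b$. Averaged over $U$ and $b$, the classical-quantum map $\rho \mapsto \mathbb{E}\,[U^\dagger\ketbra{b}{b}U]$ is a measurement channel $\mathcal{M}$. Since the Clifford group is a unitary 2-design, we can compute $\mathcal{M}$ by a Haar second-moment (Weingarten) calculation. This gives the depolarizing form $\mathcal{M}(\rho) = (\rho + \tr(\rho)\mathbb{I})/(2^n+1)$, which is invertible with $\mathcal{M}^{-1}(X) = (2^n+1)X - \tr(X)\mathbb{I}$. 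We then define the snapshot $\hat\rho = (2^n+1)U^\dagger\ketbra{b}{b}U - \mathbb{I}$, which satisfies $\mathbb{E}[\hat\rho] = \rho$. Consequently $\hat o_i = \tr(O_i\hat\rho)$ is an unbiased estimator of $\tr(O_i\rho)$, and it can be evaluated classically whenever $O_i$ admits an efficient representation. This is the case for stabilizer fidelities, via Gottesman--Knill.

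\emph{Variance bound (main obstacle).} Replacing $O_i$ by its traceless part $O_i^{(0)} = O_i - \tr(O_i)\mathbb{I}/2^n$ shifts $\hat o_i$ by a constant, so $\mathrm{Var}[\hat o_i]$ is unchanged. We therefore need to bound $\mathbb{E}\,[\tr(O^{(0)}\hat\rho)^2]$. Expanding this gives $\sum_b \mathbb{E}_U\, \bra{b}U\rho U^\dagger\ket{b}\,\bra{b}U\mathcal{M}^{-1}(O^{(0)})U^\dagger\ket{b}^2$, which is a degree-3 polynomial in $U \otimes \bar U$. This is the step I expect to be hardest, and the plan is as follows.
\begin{enumerate}
    \item Use that the Clifford group is a unitary 3-design \cite{gross2021schur,mele2024introduction}, so the average can be replaced by a Haar third moment.
    \item Evaluate that third moment with the sum over permutations in $S_3$ acting on three copies.
    \item Exploit tracelessness of $O^{(0)}$ to kill most terms.
\end{enumerate}
The target is the shadow-norm bound $\|O^{(0)}\|_{\mathrm{shadow}}^2 \le 3\tr\big((O^{(0)})^2\big)$, which holds uniformly over states $\rho$. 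Combined with $\tr\big((O^{(0)})^2\big) \le \tr(O^2)$, this yields $\mathrm{Var}[\hat o_i] \le 3\tr(O_i^2)$.

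\emph{Median of means.} We split $N = KL$ snapshots into $K$ batches of size $L = 34\cdot 3\max_i\tr(O_i^2)/\epsilon^2$. By Chebyshev, each batch mean is $\epsilon$-accurate with probability at least $3/4$. The median over batches therefore fails with probability at most $2e^{-K/2}$, by a Chernoff bound on the number of bad batches. Choosing $K = 2\log(2M/\delta)$ and taking a union bound over the $M$ observables gives simultaneous success with probability $1-\delta$. The total is $N = \bigo{\max_i\tr(O_i^2)\log(M/\delta)/\epsilon^2}$. The batches are collected before any $O_i$ is specified, so the observables may be chosen a posteriori. For $O_i = \ketbra{\psi_i}{\psi_i}$ we have $\tr(O_i^2) = 1$, which gives the stated special case.
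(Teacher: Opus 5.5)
Your proposal is correct and is essentially the Huang--Kueng--Preskill argument, which the paper does not reprove but imports from \cite{huang2020predicting}. The three ingredients match: the depolarizing form of $\mathcal{M}$ from the 2-design property, the 3-design evaluation $\mathbb{E}[\tr(O^{(0)}\hat\rho)^2] = \frac{2^n+1}{2^n+2}\bigl(\tr(O^{(0)}O^{(0)}) + 2\tr(\rho\, O^{(0)}O^{(0)})\bigr) \le 3\tr(O^{(0)}O^{(0)})$, and median-of-means with a union bound. The only slip is a constant: per-batch success probability $3/4$ alone yields a median failure bound $e^{-cK}$ with $c<1/2$, not $2e^{-K/2}$. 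However, your batch size $L=34\cdot 3\max_i\tr(O_i^2)/\epsilon^2$ makes the per-batch failure probability at most $1/34$, which does give $2e^{-K/2}$, and the constant does not affect the $\mathcal{O}(\cdot)$ claim in any case.
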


\begin{theorem}[Pauli tomography with quantum memory, \cite{huang2021information}, Theorem 2] \label{result:pauli_tomo}
Given $N$ copies of the quantum state $\rho$, consider the task of simultaneously estimating $M$ (a priori unknown) Pauli expectation values $\{ \tr(\rho P_i)\}_{i=1}^M$ each to additive error $\epsilon$, with success probability $1-\delta$. There is an algorithm requiring quantum memory that achieves this using $N=\bigo{\log(M/\delta)/\epsilon^4}$ copies of $\rho$.
\end{theorem}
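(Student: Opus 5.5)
The plan is a two-stage procedure in the spirit of \cite{huang2021information}. Stage one uses Bell sampling on pairs of copies to estimate all squared Pauli expectations $\tr(\rho P_i)^2$ simultaneously. Stage two recovers the signs, but only for those $P_i$ whose magnitude is non-negligible.

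\textbf{Stage one (magnitudes).} Jointly measure two copies $\rho\otimes\rho$ in the Bell basis, after conjugating the second copy appropriately so that the relevant operators are $P\otimes P^{T}$. By Eq.~(\ref{eq:bell_state}), the Bell basis simultaneously diagonalises the commuting family $\{P\otimes P^{T}\}_{P\in\mathcal P_n}$. Each Bell outcome therefore yields a $\pm1$ eigenvalue sample for every $P_i$ at once, and its expectation is $\tr((P_i\otimes P_i^T)(\rho\otimes\rho^{T}))=\tr(\rho P_i)^2$. Hoeffding's inequality and a union bound over the $M$ observables then show that $N_1=\mathcal O(\log(M/\delta)/\eta^2)$ pairs give every $\tr(\rho P_i)^2$ to additive error $\eta$. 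We choose $\eta=\Theta(\epsilon^2)$. This gives $\abs{\tr(\rho P_i)}$ to additive error $\mathcal O(\epsilon)$ at cost $\mathcal O(\log(M/\delta)/\epsilon^4)$, which is the dominant term.

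\textbf{Stage two (signs).} If the estimated magnitude is below $\sim 2\epsilon$, we output $0$, which is within error $\epsilon$ regardless of the sign. For the remaining Paulis, we have $\abs{\tr(\rho P_i)}\gtrsim\epsilon$, so a constant-advantage sign test repeated $\mathcal O(\log(M/\delta)/\epsilon^2)$ times with a majority vote suffices. To make the test simultaneous across all $i$, I would again use two-copy Bell measurements, but between $\rho$ and a reference state built from $\rho$ and a known state. The idea is to adjoin one ancilla qubit and choose the reference so that the Bell statistic for an extended Pauli $P_i\otimes X$ (or $P_i\otimes Z$) becomes linear in $\tr(\rho P_i)$ rather than quadratic. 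Its sign then equals the sign of $\tr(\rho P_i)$, up to a known factor. The magnitudes from stage one fix the normalisation, so each such estimate need only be accurate to $\Theta(\epsilon)$.

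\textbf{Main obstacle.} The hard step is stage two: producing, with single copies of $\rho$ only, a Bell-measurable statistic linear in $\tr(\rho P_i)$ for all large $P_i$ at once, while keeping its per-sample variance $\mathcal O(1)$. A naive mixture $\tfrac12(\rho\otimes\ketbra{0}{0}+\tfrac{\mathbb I}{2^n}\otimes\ketbra{1}{1})$ gives no coherence between the two branches. I would first try a reference copy $\sigma$ such that $\tr(P_i\sigma)$ has a known sign pattern correlated with the stage-one data; failing that, I would follow the construction in \cite{huang2021information} directly. Once such a statistic exists, the accounting is routine. Stage two costs $\mathcal O(\log(M/\delta)/\epsilon^2)$, the $1/\epsilon^4$ from stage one dominates, and a final union bound over both stages gives the stated $N=\mathcal O(\log(M/\delta)/\epsilon^4)$ with success probability $1-\delta$.
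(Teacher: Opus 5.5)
\paragraph*{Stage one is sound.}
There is one slip: you cannot prepare $\rho^{T}$ from copies of $\rho$, but you do not need to. A Bell measurement on $\rho\otimes\rho$ gives, for every $P$, a $\pm1$ sample of $P\otimes P^{T}$ with mean $\tr(P\rho)\tr(P^{T}\rho)=(-1)^{n_Y(P)}\tr(P\rho)^2$. Here $n_Y(P)$ is the number of $Y$ factors, so the sign is known and can be divided out. With $\eta=\Theta(\epsilon^2)$ and $\abs{\sqrt{x}-\sqrt{y}}\le\sqrt{\abs{x-y}}$, this gives all magnitudes with $\mathcal{O}(\log(M/\delta)/\epsilon^4)$ copies, and your thresholding is fine.

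Note, however, that the paper does not prove this statement; it imports it from \cite{huang2021information}. Everything beyond Bell sampling, namely recovering the signs of all $M$ expectation values simultaneously, is the nontrivial content of the cited result. That is exactly the part your proposal leaves open.

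\paragraph*{Stage two is a genuine gap.}
You state the property a statistic would need to have, and then defer to \cite{huang2021information}. The obvious fallback, testing each surviving $P_i$ separately on single copies, costs $\mathcal{O}(M\log(M/\delta)/\epsilon^2)$ and destroys the $\log M$ dependence.

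The mechanism you sketch does not by itself supply the missing step either. For a Bell measurement on a product $\alpha\otimes\beta$ (ancillas included), the sample for an extended Pauli $\tilde P$ has mean $\tr(\tilde P\alpha)\tr(\tilde P^{T}\beta)$. If $\alpha$ and $\beta$ come from applying fixed channels to copies of $\rho$, possibly mixed with known states, each factor is affine in $\rho$. The only contributions linear in $\rho$ are then cross terms with the $\rho$-independent part $\sigma$ of the other factor, of size $\abs{\tr(P_i\rho)}\,\abs{\tr(\tilde P_i^{T}\sigma)}$.

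To recover all signs you therefore need a single $\sigma$, chosen using only the sign-free stage-one data, with non-negligible and known-sign overlaps on every flagged Pauli. You give no such $\sigma$ and no argument that one exists. The one state known to overlap well with exactly the flagged set is $\rho$ itself, whose signs are the unknowns; pairing with it just returns $\tr(P_i\rho)^2$ up to a known sign.

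Even granting such a $\sigma$, your $\mathcal{O}(\log(M/\delta)/\epsilon^2)$ count for stage two presumes overlaps bounded below by a constant $c$. A state on $n+k$ qubits can have that with at most $2^{n+k}/c^2$ Paulis, since $\sum_{\tilde P}\tr(\tilde P\sigma)^2\le 2^{n+k}$. A complete proof needs a concrete simultaneous sign-recovery procedure together with its own sample-complexity analysis, as in \cite{huang2021information}.
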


When the algorithm of Theorem~\ref{result:pauli_tomo} is applied with the Choi state of the channel $\mathcal{N}$ as inputs, we immediately obtain:
\begin{theorem}[Pauli Transfer Matrix Learning, \cite{caro2024learning}, Theorem 4.1]
Given $N$ copies of the queryable channel $\mathcal{N}$, consider the task of simultaneously estimating $M$ (a priori unknown) Pauli Transfer Matrix elements $\{ \tr(P_i \mathcal{N}(Q_i))\}_{i=1}^M$ each to additive error $\epsilon$, with success probability $1-\delta$. There is a an algorithm using quantum memory that achieves this using $N=\bigo{\log(M/\delta)/\epsilon^4}$ runs.
\end{theorem}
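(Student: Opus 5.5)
The statement is obtained by feeding Choi states of $\mathcal{N}$ into the algorithm of Theorem~\ref{result:pauli_tomo}. We need to check three things: the quantum data required can be produced from queries to $\mathcal{N}$, the target quantities are Pauli expectation values on that data, and the errors match.

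\emph{Step 1 (state preparation).} In each coherent run we prepare $\kett{\mathbb{I}}$ on $2n$ qubits, which is $n$ Bell pairs and therefore needs $n$ ancillas. We then apply one query of $\mathcal{N}\otimes\mathbb{I}$. By Eq.~(\ref{eq:choi_def}) this yields exactly one copy of $\rho_{\mathcal{N}}$. The procedure of Theorem~\ref{result:pauli_tomo} uses quantum memory, i.e.\ it acts jointly on a bounded number of copies. We realize this jointly held set of copies by querying $\mathcal{N}$ several times within a single coherent run, on fresh Bell-pair registers, before the collective measurement. This matches the quantum-memory access model of Appendix~\ref{app:models}. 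The number of queries therefore equals the number $N$ of Choi-state copies consumed.

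\emph{Step 2 (reduction of targets).} By Eq.~(\ref{eq:ricochet_channel}) with $O_i=P_i$ and $O_j=Q_i$,
\[
\tr\bigl((P_i\otimes Q_i^T)\rho_{\mathcal{N}}\bigr)=\frac{1}{2^n}\tr\bigl(P_i\,\mathcal{N}(Q_i)\bigr).
\]
Here $Q_i^T=\pm Q_i$, with the sign $-1$ exactly when $Q_i$ contains an odd number of $Y$ factors. Hence each target is, up to a known sign, the expectation value of the $2n$-qubit Pauli operator $P_i\otimes Q_i$ on $\rho_{\mathcal{N}}$. We apply Theorem~\ref{result:pauli_tomo} to the $M$ Paulis $\{P_i\otimes Q_i\}$ and then flip signs classically. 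This gives every normalized PTM element to additive error $\epsilon$ with probability $1-\delta$, using $N=\bigo{\log(M/\delta)/\epsilon^4}$ copies, and hence the same number of queries.

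\emph{Step 3 (normalization).} As written, the statement omits the factor $1/2^n$. I would interpret it as the normalized PTM element $\frac{1}{2^n}\tr(P_i\mathcal{N}(Q_i))$, consistent with the definition of $M^{\mathcal{N}}_{\mathcal{P}}$, since otherwise an exponential rescaling of $\epsilon$ would be needed. If instead the unnormalized quantity is meant, one substitutes $\epsilon\to\epsilon/2^n$.

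\emph{Main obstacle.} The only step requiring care is Step 1: the Choi-access algorithm must be simulable under channel access. Each Choi copy must use its own fresh Bell register, and the coherent multi-copy measurement must be carried out after all queries in the run. Neither requirement introduces any adaptivity issue, because Theorem~\ref{result:pauli_tomo} only ever consumes independent copies of $\rho_{\mathcal{N}}$. The rest is immediate.
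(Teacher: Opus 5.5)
Your proposal is correct and follows the paper's argument: the paper obtains this result by running the quantum-memory Pauli tomography of Theorem~\ref{result:pauli_tomo} on copies of the Choi state $\rho_{\mathcal{N}}$, each prepared with one query of $\mathcal{N}\otimes\mathbb{I}$ on $\kett{\mathbb{I}}$, and reading off PTM elements as Pauli expectation values via Eq.~(\ref{eq:ricochet_channel}). Your additional bookkeeping is accurate and makes explicit what the paper leaves implicit. This covers the fresh Bell register for each copy, the sign from $Q_i^T=\pm Q_i$, and reading the target as the normalized element $\frac{1}{2^n}\tr(P_i\mathcal{N}(Q_i))$.
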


\section{Pauli operator shadow for learning arbitrary weight-$w$ OTOCs} \label{app:pauli_shadow}

Here, we prove Theorem~\ref{theorem:simul_otoc_shadows_nq} on $n$- and $2n$-qubit algorithms for simultaneously estimating OTOCs (for both the general and diagonal cases, as Theorems~\ref{theorem:general_op_shad} and \ref{theorem:general_op_shad_diag} respectively.). They are direct applications of the approach introduced in \cite{chiew2026quantum} based on vectorizing $O(t)$, and regarding OTOCs as the expectation value of observables over $\kett{O(t)}$, which can subsequently be estimated via the classical shadow formalism. We phrase our results in full generality here, by considering Heisenberg evolution under general unital channels $\mathcal{E}^\dagger$ and assuming the ability to query $\mathcal{E}$ and its adjoint $\mathcal{E}^\dagger$. The (weaker) model that only has access to copies of the state $\kett{O(t)}$ or $\rho_{{\mathcal{O}_{\mathcal{E}, O}}}$ is also discussed later in this section.

\subsection{Pauli operator shadow for general local OTOCs} \label{app:pauli_shad_general}
The procedures for obtaining the classical shadow of Heisenberg operators, as briefly described in the main text, are summarized as Algorithms~\ref{algo:pauli_shad_2n} and \ref{algo:pauli_shad_n} below.

\vspace{5pt}
\noindent
\begin{minipage}[c]{0.48\textwidth}
\begin{algorithm}[H] \label{algo:pauli_shad_2n}
\linespread{1.25}\selectfont
\caption{Pauli operator shadow, $2n$ qubits}

\KwData{$N$ copies of the quantum channel ${\mathcal{O}_{\mathcal{E}, O}} = \mathcal{E}^\dagger \circ \tilde{O} \circ \mathcal{E}$}
\KwResult{$N$ classical snapshots $\{\hat{\rho}^{(1)}_{{\mathcal{O}_{\mathcal{E}, O}}}, ..., \hat{\rho}^{(N)}_{{\mathcal{O}_{\mathcal{E}, O}}} \}$.}

For $i \in [1,\ldots,N]$:\\
\Indp
Prepare $\rho_{{\mathcal{O}_{\mathcal{E}, O}}} = ({\mathcal{O}_{\mathcal{E}, O}} \otimes \mathbb{I})(\kettbbra{\mathbb{I}}{\mathbb{I}})$\;
Independently sample $V^{(i)}_1,...,V^{(i)}_{2n} \sim \text{Cl}(2)$ and apply $V^{(i)} \equiv \bigotimes_{j=1}^{2n} V^{(i)}_j$\;
Measure in computational basis to obtain outcome $\ket{b^{(i)}}$\;
Record $\hat{\rho}^{(i)}_{{\mathcal{O}_{\mathcal{E}, O}}} = M^{-1}(V^{(i)\dagger} \ketbra{b^{(i)}}{b^{(i)}} V^{(i)})$\;
\Indm
\textbf{Return} $\{\hat{\rho}^{(1)}_{{\mathcal{O}_{\mathcal{E}, O}}}, ..., \hat{\rho}^{(N)}_{{\mathcal{O}_{\mathcal{E}, O}}} \}$\;
\end{algorithm}
\end{minipage}
\hfill
\begin{minipage}[c]{0.48\textwidth}
\begin{algorithm}[H] \label{algo:pauli_shad_n}
\linespread{1.15}\selectfont
\caption{Pauli operator shadow, $n$ qubits}

\KwData{$N$ copies of the quantum channel ${\mathcal{O}_{\mathcal{E}, O}} = \mathcal{E}^\dagger \circ \tilde{O} \circ \mathcal{E}$}
\KwResult{$N$ classical snapshots $\{\hat{\rho}^{(1)}_{{\mathcal{O}_{\mathcal{E}, O}}}, ..., \hat{\rho}^{(N)}_{{\mathcal{O}_{\mathcal{E}, O}}} \}$.}

For $i \in [1,\ldots,N]$:\\
\Indp

Prepare random basis state $| b^{(i)}_L \rangle$\;
Independently sample $V^{(i)}_1,...,V^{(i)}_{n} \sim \text{Cl}(2)$ and apply $(V_L^{(i)})^T \equiv \bigotimes_{j=1}^{n} (V^{(i)}_j)^T$\;
Apply ${\mathcal{O}_{\mathcal{E}, O}}$\;
Independently sample $V^{(i)}_{n+1},...,V^{(i)}_{2n} \sim \text{Cl}(2)$ and apply $V_R^{(i)} \equiv \bigotimes_{j=n+1}^{2n} V^{(i)}_j$\;
Measure in computational basis to obtain outcome $| b^{(i)}_R \rangle$\;
Record $\hat{\rho}^{(i)}_{{\mathcal{O}_{\mathcal{E}, O}}} = M^{-1}((V_L^{(i)} \otimes V_R^{(i)})^\dagger \ketbraa{b_L^{(i)}}{b_L^{(i)}} \otimes \ketbraa{b_R^{(i)}}{b_R^{(i)}} (V_L^{(i)} \otimes V_R^{(i)}))$\;
\Indm
\textbf{Return} $\{\hat{\rho}^{(1)}_{{\mathcal{O}_{\mathcal{E}, O}}}, ..., \hat{\rho}^{(N)}_{{\mathcal{O}_{\mathcal{E}, O}}} \}$\;
\end{algorithm}
\end{minipage}
\vspace{5pt}

The snapshots $\{\hat{\rho}^{(1)}_{{\mathcal{O}_{\mathcal{E}, O}}}, ..., \hat{\rho}^{(N)}_{{\mathcal{O}_{\mathcal{E}, O}}} \}$ can subsequently be used to construct estimators for OTOCs, yielding:

\begin{theorem}[General OTOCs] \label{theorem:general_op_shad}
Consider the task of simultaneously estimating $M$ arbitrary OTOCs $\{\tr(\mathcal{E}(Q_i) O \mathcal{E}(Q_j) O)/2^n \}_{i,j}$, where $Q_i, Q_j$ are weight-$w$ tensor product operators, to additive error $\epsilon$ and success probability at least $1-\delta$. Algorithms~\ref{algo:pauli_shad_2n} and \ref{algo:pauli_shad_n} achieves this using $\bigo{9^w \log(M/\delta)/\epsilon^2}$ queries of the channel ${\mathcal{O}_{\mathcal{E}, O}} = \mathcal{E}^\dagger \circ \tilde{O} \circ \mathcal{E}$.
\end{theorem}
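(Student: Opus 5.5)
The plan is to reduce the statement to Theorem~\ref{theorem:huang_predicting_pauli} applied to the $2n$-qubit Choi state $\rho_{\mathcal{O}_{\mathcal{E},O}}$. By Eq.~(\ref{eq:ricochet_channel}) and the discussion following Lemma~\ref{lemma:prop_o}, each target OTOC can be written as a Choi-state expectation value:
\begin{equation*}
\frac{1}{2^n}\tr(\mathcal{E}(Q_i) O \mathcal{E}(Q_j) O) = \frac{1}{2^n}\tr(Q_i\, \mathcal{O}_{\mathcal{E},O}(Q_j)) = \tr\big((Q_i \otimes Q_j^T)\, \rho_{\mathcal{O}_{\mathcal{E},O}}\big).
\end{equation*}
The first equality uses that the adjoint of $\tilde O$ is $\tilde O$ for Pauli $O$ (Lemma~\ref{lemma:prop_o}). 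The observable $Q_i\otimes Q_j^T$ is a tensor product operator on $2n$ qubits. Its weight is at most $2w$, since each factor has weight $w$ and transposition preserves support. It also has operator norm at most one when $Q_i,Q_j$ are Paulis. In general they are products of single-qubit operators of bounded norm, and the constant is absorbed.

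Algorithm~\ref{algo:pauli_shad_2n} is exactly the Pauli classical shadow protocol run on copies of $\rho_{\mathcal{O}_{\mathcal{E},O}}$. Each copy costs one query of $\mathcal{O}_{\mathcal{E},O}$ applied to half of $\kett{\mathbb{I}}$. Theorem~\ref{theorem:huang_predicting_pauli} with weight $2w$ then gives, via median-of-means estimators $\tr((Q_i\otimes Q_j^T)\hat\rho^{(k)})$, that $N=\bigo{3^{2w}\log(M/\delta)/\epsilon^2}=\bigo{9^w\log(M/\delta)/\epsilon^2}$ queries suffice for all $M$ OTOCs simultaneously. The union bound over the $M$ targets is already contained in the $\log(M/\delta)$ factor.

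For Algorithm~\ref{algo:pauli_shad_n}, I will show that the joint distribution of $(V_L,b_L,V_R,b_R)$ coincides with that of Algorithm~\ref{algo:pauli_shad_2n}. The same snapshots and estimators then inherit the identical guarantee. The key computation uses the ricochet identity and Eq.~(\ref{eq:bell_state}). Measuring the left half of $\kett{\mathbb{I}}$ after $V_L$ yields a uniformly random outcome $b_L$. Conditioned on $b_L$, the right half collapses to a state proportional to $(V_L^\dagger\ket{b_L}\bra{b_L}V_L)^T = V_L^T\ket{b_L}\bra{b_L}\overline{V_L}$, with the appropriate transpose convention.

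This step requires care with the labelling. The Choi state has the channel acting on the first tensor factor. So I must check that preparing $V_L^T\ket{b_L}$ with uniform random $b_L$, applying $\mathcal{O}_{\mathcal{E},O}$, and then measuring with $V_R$ reproduces $\Pr[b_L,b_R\mid V_L,V_R]=\tr\big((V_L^\dagger\ketbra{b_L}{b_L}V_L\otimes V_R^\dagger\ketbra{b_R}{b_R}V_R)\,\rho_{\mathcal{O}_{\mathcal{E},O}}\big)$, up to the factor-ordering swap. Two facts make this go through. Measurements on the untouched half commute with the channel on the other half. Also, $\mathcal{O}_{\mathcal{E},O}$ is self-adjoint (Lemma~\ref{lemma:prop_o}), which lets one move the channel between factors if the ordering convention demands it.

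I expect the main obstacle to be bookkeeping of transposes and conjugates. In particular, $(V_j)^T$ of a random single-qubit Clifford is again distributed over a set for which the inverse channel $M^{-1}$ (the local depolarizing inversion $X\mapsto 3X-\tr(X)\mathbb{I}$ per qubit) is unchanged. Once the distributional equivalence holds, the variance bound $\norm{Q_i\otimes Q_j^T}_{\mathrm{shadow}}^2\le 3^{2w}$ from Theorem~\ref{theorem:huang_predicting_pauli} applies verbatim, which completes the proof.
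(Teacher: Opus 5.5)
Your proposal is correct and follows the paper's proof essentially step for step. You rewrite each OTOC via Eq.~(\ref{eq:ricochet_channel}) as the expectation of the weight-$2w$ observable $Q_i\otimes Q_j^T$ on $\rho_{\mathcal{O}_{\mathcal{E},O}}$, then invoke Theorem~\ref{theorem:huang_predicting_pauli} for the $3^{2w}=9^w$ scaling. You handle Algorithm~\ref{algo:pauli_shad_n} by showing, again through Eq.~(\ref{eq:ricochet_channel}), that its outcome distribution matches the $2n$-qubit one, with the prepared side playing the role of the untouched Choi factor.

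Your appeal to self-adjointness of $\mathcal{O}_{\mathcal{E},O}$ is unnecessary, and it would not justify moving the channel across factors anyway; that would require invariance under transposing its Kraus operators. The ordering issue is resolved by relabeling the two halves, since $\kett{\mathbb{I}}$ is swap-symmetric and every qubit receives an i.i.d.\ local Clifford. Likewise, your worry about the distribution of $V_L^T$ is moot: Eq.~(\ref{eq:ricochet_channel}) absorbs the transpose directly.
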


\begin{proof}
Algorithm~\ref{algo:pauli_shad_2n} prepares $N$ copies of the Choi state $\rho_{{\mathcal{O}_{\mathcal{E}, O}}}$ by querying ${\mathcal{O}_{\mathcal{E}, O}}$, which yields the snapshots $\{\hat{\rho}^{(1)}_{{\mathcal{O}_{\mathcal{E}, O}}}, ..., \hat{\rho}^{(N)}_{{\mathcal{O}_{\mathcal{E}, O}}} \}$. Eq.~(\ref{eq:ricochet_channel}) implies that the OTOC $\tr(\mathcal{E}(Q_i) O \mathcal{E}(Q_j) O)/2^n$ correspond to the expectation value $\tr(Q_i \otimes Q_j^T ~\rho_{{\mathcal{O}_{\mathcal{E}, O}}})$ on the Choi state $\rho_{{\mathcal{O}_{\mathcal{E}, O}}}$, for the weight-$2w$ tensor product operator $Q_i \otimes Q_j^T$. These two facts, taken together with Theorem~\ref{theorem:huang_predicting_pauli} (from \cite{huang2020predicting}), imply that $N = \bigo{3^{2w} \log(M/\delta)/\epsilon^2}$ snapshots suffices to predict the $M$ quantities $\{\tr(\mathcal{E}(Q_i) O \mathcal{E}(Q_j) O)/2^n\}_{i,j}$ via a median-of-means estimator, proving our claim for the $2n$-qubit approach.

Next, we show that the $n$-qubit Algorithm~\ref{algo:pauli_shad_n} also yields the classical shadow of $\rho_{{\mathcal{O}_{\mathcal{E}, O}}}$. In Algorithm~\ref{algo:pauli_shad_2n}, each experimental run is labeled by the tuple $(V,b)$, consisting of the sampled random unitary $V$ and the final measured outcome $\ket{b}$. The probability of measuring $\ket{b}$, given that $V$ is sampled, is given by $p(b) = \tr( V^\dagger\ketbra{b}{b}V \rho_{{\mathcal{O}_{\mathcal{E}, O}}})$.

In Algorithm~\ref{algo:pauli_shad_n}, each experimental run is instead labeled by $((V_L, V_R), (b_L, b_R))$, where $b_L \in \{0,1\}^n$ and $V_L$ labels the initial state (randomly chosen with probability $1/2^n$) and initial $n$-qubit random unitary respectively, and $b_R \in \{0,1\}^n$ and $V_R$ labels the measurement outcome and final $n$-qubit random unitary respectively. The probability that the tuple $(b_L, b_R)$ is obtained is then:
\begin{align} \label{eq:probability_equiv}
    p(b_L, b_R) &= p(b_R ~\text{measured} ~ | ~ \text{initial state is }~ b_L) *p(\text{initial state is }~ b_L) \\
    &= \tr( V_R^\dagger\ketbra{b_R}{b_R} V_R ~ \mathcal{O}((V_L^\dagger\ketbra{b_L}{b_L}V_L )^T))* (1/2^n) \\
    &= 2^n \tr( V_R^\dagger\ketbra{b_R}{b_R}V_R \otimes V_L^\dagger\ketbra{b_L}{b_L}V_L ~\rho_{{\mathcal{O}_{\mathcal{E}, O}}}) *(1/2^n) \\
    &= \tr( V^\dagger\ketbra{b}{b}V \rho_{{\mathcal{O}_{\mathcal{E}, O}}}),
\end{align}
where in the third equality we used Eq.~(\ref{eq:ricochet_channel}) ($\tr( A \otimes B^T \rho_{\mathcal{N}}) = \tr(A \mathcal{N} (B))/2^n$ for any channel $\mathcal{N}$), and in the final equality we identified $V = V_L \otimes V_R$ and $\ket{b} = \ket{b_L} \otimes \ket{b_R}$. As expected, $p(b_L, b_R)$ is identical to the probability distribution $p(b)$ observed from the $2n$-qubit approach above, yielding the identical classical snapshot $\hat{\rho}_{{\mathcal{O}_{\mathcal{E}, O}}}$, and completing the proof.
\end{proof}

Note that when $\mathcal{E}^\dagger$ is unitary, we recover the more conventional case where $\rho_{{\mathcal{O}_{\mathcal{E}, O}}} = \kettbbra{O(t)}{O(t)}$ is the pure state corresponding to the vectorization of the Heisenberg operator $O(t)$, and due to the cyclicity of trace, that:
\begin{equation}
    \frac{1}{2^n}\tr(\tr(\mathcal{E}(Q_i) O \mathcal{E}(Q_j) O)) = \frac{1}{2^n} \tr(U Q_i U^\dagger O U Q_j U^\dagger O) = \frac{1}{2^n} \tr(Q_i O(t) Q_j O(t)).
\end{equation}

Alternatively and more precisely, following the proof of Theorem~\ref{theorem:huang_predicting_pauli} in \cite{huang2020predicting}, the above algorithms can be analyzed by regarding them as implementing a quantum channel $\mathcal{M}$ acting on $\rho_{\mathcal{O}_{\mathcal{E}, O}}$, and analyzing the resulting shadow norm. In the $2n$-qubit approach, the action of repeatedly drawing $V_1, ..., V_{2n} \sim \text{Cl}(2)$, applying $V \equiv \bigotimes_{i=1}^{2n} V_i$ to the state $\rho_{{\mathcal{O}_{\mathcal{E}, O}}}$, and measuring it in the computational basis defines the quantum channel:
\begin{equation} \label{eq:general_shadows_channel}
    \mathcal{M}(\rho_{\mathcal{O}_{\mathcal{E}, O}}) = \mathop{\mathbb{E}}_{V \sim \mathcal{U}} \left[ \sum_{b \in \{0,1\}^{2n}} \tr(\rho_{\mathcal{O}_{\mathcal{E}, O}} ~V^\dagger \ketbra{b}{b}V) ~ V^\dagger \ketbra{b}{b}V \right],
\end{equation}
with the inverse channel $\mathcal{M}^{-1}$, and $\mathcal{U}=\text{Cl}(2)^{\otimes 2n}$. Each experimental run labeled by $(V,b)$ then defines a classical snapshot of $\rho_{\mathcal{O}_{\mathcal{E}, O}}$, given by $\hat{\rho}_{\mathcal{O}_{\mathcal{E}, O}} \equiv \mathcal{M}^{-1}(V^\dagger \ketbra{b}{b}V)$, which is an unbiased estimator of $\rho_{\mathcal{O}_{\mathcal{E}, O}}$. Similarly for the $n$-qubit approach, each experimental run labeled by $((V_L, V_R), (b_L, b_R))$ yields the identical classical snapshots $\hat{\rho}_{{\mathcal{O}_{\mathcal{E}, O}}}$ via Eq.~(\ref{eq:general_shadows_channel}).

Subsequently, $\hat{\rho}_{{\mathcal{O}_{\mathcal{E}, O}}}$ can be used to construct unbiased estimators for OTOCs via $\widehat{\text{OTOC}}(Q_i, Q_j) \equiv \tr(Q_i \otimes Q_j^T ~\hat{\rho}_{{\mathcal{O}_{\mathcal{E}, O}}})$. Adopting a medians-of-means approach for estimating the $M$ weight-$w$ OTOCs $\{\tr(Q_i O(t)Q_j O(t))/2^n\}$ yields the upper-bound $N = \bigo{B \log(M/\delta)/\epsilon^2}$, where $B$ is the maximum shadow norm:
\begin{equation}
    B = \max_{(i,j)} \left\Vert Q_i \otimes Q_j^T - \frac{\tr(Q_i \otimes Q_j^T)}{4^n} \mathbb{I} \otimes \mathbb{I} \right\Vert ^2_{\text{sh}},
\end{equation}
with the shadow norm defined as:
\begin{equation} \label{eq:shadow_norm}
    \left\Vert \cdot \right\Vert^2_{\text{sh}} \equiv \max_{\sigma \in \mathcal{H}'} \mathop{\mathbb{E}}_{V \sim \mathcal{U}} \left[ \sum_{b \in \{0,1\}^{2n}} \bra{b} V \sigma V^\dagger \ket{b} \left( \bra{b} V M^{-1}(\cdot) V^\dagger \ket{b} \right)^2 \right].
\end{equation}
Invoking Lemma 3 of \cite{huang2020predicting} which states $\norm{P}^2_{\text{sh}} = 3^w$ for weight $w$ Pauli observables $P$, and applying it to the weight-$2w$ observables $Q_i \otimes Q_j^T$ then proves the upper bounds.

\subsection{Pauli operator shadow for diagonal local OTOCs} \label{app:pauli_shad_diag}

When one only seeks to compute diagonal OTOCs -- i.e. quantities of the form $\{\tr(O(t) P O(t) P)/2^n\}_P$ -- such as when studying local operator geometrical properties, a simple modification of the previous operator shadows protocol yields a $\bigo{3^w}$ scaling in sample complexity, compared to the $\bigo{9^w}$ scaling if Theorem~\ref{theorem:general_op_shad} is naively applied. These algorithms are summarized as Algorithms~\ref{algo:diag_pauli_shad_2n} and \ref{algo:diag_pauli_shad_n} below:

\vspace{5pt}
\noindent
\begin{minipage}[c]{0.48\textwidth}
\begin{algorithm}[H] \label{algo:diag_pauli_shad_2n}
\linespread{1.25}\selectfont
\caption{Correlated Pauli operator shadow, $2n$ qubits}

\KwData{$N$ copies of the quantum channel ${\mathcal{O}_{\mathcal{E}, O}} = \mathcal{E}^\dagger \circ \tilde{O} \circ \mathcal{E}$}
\KwResult{$N$ classical snapshots $\{\hat{\sigma}^{(1)}_{{\mathcal{O}_{\mathcal{E}, O}}}, ..., \hat{\sigma}^{(N)}_{{\mathcal{O}_{\mathcal{E}, O}}} \}$.}

For $i \in [1,\ldots,N]$:\\
\Indp
Prepare $\rho_{{\mathcal{O}_{\mathcal{E}, O}}} = ({\mathcal{O}_{\mathcal{E}, O}} \otimes \mathbb{I})(\kettbbra{\mathbb{I}}{\mathbb{I}})$\;
Independently sample $V^{(i)}_1,...,V^{(i)}_{n} \sim \text{Cl}(2)$ and apply $V^{(i)} = \bigotimes_{j=1}^{n} V^{(i)}_j \otimes V^{(i)}_j$\;
Measure in computational basis to obtain outcome $\ket{b^{(i)}}$\;
Record $\hat{\sigma}^{(i)}_{{\mathcal{O}_{\mathcal{E}, O}}} = M^{-1}((V^{(i)} \otimes V^{(i)}_j)^\dagger \ketbra{b^{(i)}}{b^{(i)}} (V^{(i)} \otimes V^{(i)}_j))$\;
\Indm
\textbf{Return} $\{\hat{\sigma}^{(1)}_{{\mathcal{O}_{\mathcal{E}, O}}}, ..., \hat{\sigma}^{(N)}_{{\mathcal{O}_{\mathcal{E}, O}}} \}$\;
\end{algorithm}
\end{minipage}
\hfill
\begin{minipage}[c]{0.48\textwidth}
\begin{algorithm}[H] \label{algo:diag_pauli_shad_n}
\linespread{1.15}\selectfont
\caption{Correlated Pauli operator shadow, $n$ qubits}

\KwData{$N$ copies of the quantum channel ${\mathcal{O}_{\mathcal{E}, O}} = \mathcal{E}^\dagger \circ \tilde{O} \circ \mathcal{E}$}
\KwResult{$N$ classical snapshots $\{\hat{\sigma}^{(1)}_{{\mathcal{O}_{\mathcal{E}, O}}}, ..., \hat{\sigma}^{(N)}_{{\mathcal{O}_{\mathcal{E}, O}}} \}$.}

For $i \in [1,\ldots,N]$:\\
\Indp

Prepare random basis state $| b^{(i)}_L \rangle$\;
Independently sample $V^{(i)}_1,...,V^{(i)}_{n} \sim \text{Cl}(2)$ and apply $(V^{(i)})^T = \bigotimes_{j=1}^{n} (V^{(i)}_j)^T$\;
Apply ${\mathcal{O}_{\mathcal{E}, O}}$\;
Apply $V^{(i)} = \bigotimes_{j=1}^{n} V^{(i)}_j$\;
Measure in computational basis to obtain outcome $| b^{(i)}_R \rangle$\;
Record $\hat{\sigma}^{(i)}_{{\mathcal{O}_{\mathcal{E}, O}}} = M^{-1}((V^{(i)} \otimes V^{(i)})^\dagger \ketbraa{b_L^{(i)}}{b_L^{(i)}} \otimes \ketbraa{b_R^{(i)}}{b_R^{(i)}} (V^{(i)} \otimes V^{(i)}))$\;
\Indm
\textbf{Return} $\{\hat{\sigma}^{(1)}_{{\mathcal{O}_{\mathcal{E}, O}}}, ..., \hat{\sigma}^{(N)}_{{\mathcal{O}_{\mathcal{E}, O}}} \}$\;
\end{algorithm}
\end{minipage}
\vspace{5pt}

The snapshots $\{\hat{\sigma}^{(1)}_{{\mathcal{O}_{\mathcal{E}, O}}}, ..., \hat{\sigma}^{(N)}_{{\mathcal{O}_{\mathcal{E}, O}}} \}$ can subsequently be used to construct estimators for diagonal OTOCs, yielding:

\begin{theorem}[Diagonal OTOCs] \label{theorem:general_op_shad_diag}
Consider the task of simultaneously estimating $M$ arbitrary diagonal OTOCs $\{\tr(\mathcal{E}(P) O \mathcal{E}(P) O)/2^n\}_{i,j}$, where $P$ are weight-$w$ tensor product operators, to additive error $\epsilon$ and success probability at least $1-\delta$. Algorithms~\ref{algo:diag_pauli_shad_2n} and \ref{algo:diag_pauli_shad_n} achieves this using $\bigo{3^w \log(M/\delta)/\epsilon^2}$ queries of the channel ${\mathcal{O}_{\mathcal{E}, O}} = \mathcal{E}^\dagger \circ \tilde{O} \circ \mathcal{E}$.
\end{theorem}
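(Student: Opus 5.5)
Both algorithms are analysed through the correlated shadow channel they implement. The plan has four steps: write down this channel, invert it, bound the relevant shadow norm by $3^w$, and transfer the result to the $n$-qubit protocol exactly as in the proof of Theorem~\ref{theorem:general_op_shad}.

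\textbf{Step 1: the target observable.} By Eq.~(\ref{eq:ricochet_channel}), each diagonal OTOC equals $\tr((P\otimes P^T)\rho_{\mathcal{O}_{\mathcal{E},O}})$. Since $P^T=\pm P$ for Paulis, this is, up to a known sign, the expectation of $\bigotimes_{k}(P_k\otimes P_k)$ taken over the $n$ pairs $(L_k,R_k)$. Let $S=\mathrm{supp}(P)$ with $|S|=w$. On each pair in $S$ the observable is $\sigma_a\otimes\sigma_a$ for some $a\in\{x,y,z\}$, and on every other pair it is $\mathbb{I}\otimes\mathbb{I}$.

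\textbf{Step 2: the measurement channel.} Algorithm~\ref{algo:diag_pauli_shad_2n} applies the same random single-qubit Clifford $V_k$ to both $L_k$ and $R_k$ and then measures in the computational basis. The induced measurement channel factorizes as $\mathcal{M}=\bigotimes_k\mathcal{M}_k$, where
\[
\mathcal{M}_k(\sigma)=\mathop{\mathbb{E}}_{V\sim \mathrm{Cl}(2)}\sum_{b_1,b_2}\tr\big(\sigma\,(V\otimes V)^\dagger\ket{b_1b_2}\bra{b_1b_2}(V\otimes V)\big)\,(V\otimes V)^\dagger\ket{b_1b_2}\bra{b_1b_2}(V\otimes V).
\]
Rather than invert $\mathcal{M}_k$ in full, it is enough to produce an unbiased estimator for each of $\mathbb{I}$, $XX$, $YY$ and $ZZ$ on a pair. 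Conjugation by $V$ maps $\sigma_a\otimes\sigma_a$ to $\sigma_{\pi(a)}\otimes\sigma_{\pi(a)}$ for a uniformly random permutation $\pi$ (signs cancel because the operator appears twice). Hence with probability $1/3$ the rotated basis is the $\sigma_a$-basis, and then the measured value $(-1)^{b_1\oplus b_2}$ has expectation $\langle\sigma_a\otimes\sigma_a\rangle$. The per-pair estimator $3\,\delta_{\text{basis}=a}(-1)^{b_1\oplus b_2}$ is therefore unbiased, and it is exactly what applying $\mathcal{M}_k^{-1}$ to the snapshot restricted to the $\mathrm{span}\{\mathbb{I},XX,YY,ZZ\}$ sector produces. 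I will check this explicitly, which fixes the $M^{-1}$ appearing in the algorithm.

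\textbf{Step 3: shadow norm.} The estimator for $P\otimes P^T$ is the product of the $w$ per-pair estimators, each taking values in $\{0,\pm3\}$. So its value is $\pm 3^w$ with probability $3^{-w}$ and $0$ otherwise, giving second moment exactly $3^w$ uniformly in the state. This is the main (though mild) obstacle, because one must check that the correlated channel really yields the $3$ per pair rather than the $9$ that independent rotations on $L_k$ and $R_k$ would give. The point to verify is that the pair is matched into the same basis with probability $1/3$, not $1/9$.

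\textbf{Step 4: conclusion and the $n$-qubit algorithm.} Median-of-means over $N=\bigo{3^w\log(M/\delta)/\epsilon^2}$ snapshots then gives all $M$ estimates within $\epsilon$ with probability $1-\delta$, as in Theorem~\ref{theorem:huang_predicting_pauli}. For Algorithm~\ref{algo:diag_pauli_shad_n}, I repeat the computation of Eq.~(\ref{eq:probability_equiv}) with $V_L=V_R=V$. Preparing a uniformly random $\ket{b_L}$, applying $V^T$, querying $\mathcal{O}_{\mathcal{E},O}$, applying $V$ and measuring $b_R$ yields $p(b_L,b_R)=\tr\big((V\otimes V)^\dagger\ket{b}\bra{b}(V\otimes V)\rho_{\mathcal{O}_{\mathcal{E},O}}\big)$. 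This is the same outcome distribution as the $2n$-qubit protocol, so the same snapshots and the same bound follow.
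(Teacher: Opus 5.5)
Your proposal is correct and follows essentially the same route as the paper. The paper likewise factorizes the correlated shadow channel into two-qubit blocks $\mathcal{M}_{d,1}$ and inverts it on the span of $\mathbb{I}\otimes\mathbb{I}$, $P\otimes\mathbb{I}$, $\mathbb{I}\otimes P$, $P\otimes P$, where it acts as multiplication by $3$ on the non-identity terms. It then shows that the shadow-norm operator is $L_{P\otimes P^T}=3^w\,\mathbb{I}$, and it treats Algorithm~\ref{algo:diag_pauli_shad_n} by setting $V_L=V_R$ in Eq.~(\ref{eq:probability_equiv}).

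Your observation that the product estimator equals $\pm 3^w$ with probability $3^{-w}$ (and $0$ otherwise) is a more elementary phrasing of that same state-independent second-moment computation. Your deferred check also holds: $3\,\delta_{\text{basis}=a}(-1)^{b_1\oplus b_2}$ is exactly what $\mathcal{M}_{d,1}^{-1}$ applied to the two-qubit snapshot gives when traced against $\sigma_a\otimes\sigma_a$.
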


Before formally proving Theorem 1, let us start by explaining the intuition behind it and set up a few helpful lemmas for the proof. \\

The $\bigo{3^w}$ scaling can be intuitively understood as follows. Let $P \in \mathcal{P}_n$ be weight-$w$ Pauli operators, with corresponding diagonal OTOCs $\tr(O(t) P O(t) P)/2^n = \bbra{O(t)} P \otimes P^T \kett{O(t)}$, corresponding to expectation values of $2n$-qubit Pauli operators of the form $P \otimes P^T$. The strategy in Appendix~\ref{app:pauli_shad_general} of performing random single-qubit Pauli measurements to the state $\kett{O(t)}$ leads to a $\bigo{(1/3)^{2w}}$ probability that a particular realization of measurement setting (in one of the eigenbases of $\{X,Y,Z\}$) contributes to the computation of the weight-$2w$ Pauli operator $P \otimes P^T$. Instead, we can exploit the fact that only operators of the form $P \otimes P^T$ are needed, by performing random measurements in one of the three Pauli product bases $\{XX, YY, ZZ\}$ on qubits in $\mathcal{H}^i_L \otimes \mathcal{H}^i_R$, so that the probability is $\bigo{(1/3)^{w}}$ (instead of measuring in $\{XX, YY, ZZ, XY, XZ, YX, YZ, ZX, ZY\}$). Since this procedure still only involves local (but classically correlated) measurements, the above $2n$-qubit algorithm can also be unfolded to a $n$-qubit version, by (classically) correlating the unitary applied after initialization and before measurements. The $2n$-qubit approach was also observed by \cite{Pineda_Jim_nez_2026} (without analysis). \\

In the following, we will derive the above scaling by upper-bounding the relevant shadow norm, resulting in Theorem~\ref{theorem:general_op_shad_diag}. Later in Theorem~\ref{theorem:shadow_diag_lowerbound_formal}, we will also see that this scaling is unavoidable -- the above strategy for computing diagonal OTOCs is optimal among the class of non-adaptive protocols without quantum memory (i.e. single-copy protocols), capable of local measurements and arbitrary initializations.

Similar to the general case of the previous section, consider the Choi state of ${\mathcal{O}_{\mathcal{E}, O}}$, $\rho_{{\mathcal{O}_{\mathcal{E}, O}}} \equiv ({\mathcal{O}_{\mathcal{E}, O}} \otimes \mathbb{I})(\kettbbra{\mathbb{I}}{\mathbb{I}})$. In the $2n$-qubit approach, the action of repeatedly drawing $n$ random single-qubit Cliffords $V_1, ..., V_{n} \sim \text{Cl}(2)$, applying $V \equiv \bigotimes_{i=1}^{n} (V_i \otimes V_i)$ to the state $\rho_{\mathcal{O}_{\mathcal{E}, O}}$, and measuring it in the computational basis implements the quantum channel:
\begin{equation} \label{eq:shadow_channel_diagonal}
    \mathcal{M}_d(\rho_{\mathcal{O}_{\mathcal{E}, O}}) = \mathop{\mathbb{E}}_{V \sim \mathcal{U}} \left[ \sum_{b \in \{0,1\}^{2n}} \tr(\rho_{\mathcal{O}_{\mathcal{E}, O}} ~(V \otimes V)^\dagger \ketbra{b}{b} (V \otimes V)) ~ (V \otimes V)^\dagger \ketbra{b}{b} (V \otimes V) \right],
\end{equation}
where $\mathcal{U}=\text{Cl}(2)^{\otimes n}$. Each experimental run labeled by $(V, b)$ then defines a classical snapshot $\hat{\sigma}_{\mathcal{O}_{\mathcal{E}, O}} \equiv \mathcal{M}_d^{-1}(V^\dagger \ketbra{b}{b}V)$. 

Note that $\mathcal{M}_d$ is not tomographically complete, and therefore not invertible in general. The map $\mathcal{M}_d^{-1}$ denotes instead the inverse of $\mathcal{M}_d$ when its kernel is removed, so that the domain of $\mathcal{M}_d^{-1}$ is restricted to the image of $\mathcal{M}_d$. Since $V^\dagger \ketbra{b}{b}V$ always lies in the domain of $\mathcal{M}_d^{-1}$ (due to Lemma~\ref{lemma:m_inverse} below), $\hat{\sigma}_{\mathcal{O}_{\mathcal{E}, O}}$ is always well defined. The snapshots $\hat{\sigma}_{\mathcal{O}_{\mathcal{E}, O}}$ can only be used to estimate diagonal OTOCs, which is sufficient for a large class of physically relevant tasks, e.g. applications involving operator growth and hydrodynamics \cite{nahum2018operator,von2018operator,rakovszky2018diffusive,khemani2018operator}.

From Eq.~(\ref{eq:shadow_channel_diagonal}), it can be straightforwardly verified that $\mathcal{M}_d$ factorizes as $\mathcal{M}_d = \bigotimes_{i=1}^n \mathcal{M}_{d,1}$. The two-qubit channels $\mathcal{M}_{d,1}, \mathcal{M}_{d,1}^{-1}$ satisfy similar properties as the factorized single-qubit channels $\mathcal{M}_1, \mathcal{M}_1^{-1}$ (from $\mathcal{M} = \bigotimes_{i=1}^n \mathcal{M}_1$) resulting from the usual classical shadow protocol: 
\begin{lemma} \label{lemma:m_inverse}
The two-qubit channels $\mathcal{M}_{d,1}$ and $\mathcal{M}_{d,1}^{-1}$ are self-adjoint and unital. In addition, $\tr(\hat{\sigma}_{\mathcal{O}_{\mathcal{E}, O}}) = 1$, and $\mathcal{M}_{d,1}$ is trace-preserving. Furthermore, $\mathcal{M}_{d,1}$ takes the following form for $P \in \{X,Y,Z\}$:
\begin{align}
    \mathcal{M}_{d,1}(\mathbb{I} \otimes \mathbb{I}) = \mathbb{I} \otimes \mathbb{I}, &\quad\quad \mathcal{M}_{d,1}(P \otimes \mathbb{I}) = (P \otimes \mathbb{I})/3, \\
    \mathcal{M}_{d,1} (\mathbb{I} \otimes P) = (\mathbb{I} \otimes P)/3, &\quad\quad \mathcal{M}_{d,1}(P \otimes P) = (P \otimes P)/3,
\end{align}
while mapping all other Pauli operators to 0. $\mathcal{M}_{d,1}^{-1}$, the inverse of $\mathcal{M}_{d,1}$ in the image of $\mathcal{M}_{d,1}$, has the action:
\begin{align} \label{eq:inverse_diag_channel}
    \mathcal{M}^{-1}_{d,1}(\mathbb{I} \otimes \mathbb{I}) = \mathbb{I} \otimes \mathbb{I}, &\quad\quad \mathcal{M}^{-1}_{d,1}(P \otimes \mathbb{I}) = 3(P \otimes \mathbb{I}), \\
    \mathcal{M}^{-1}_{d,1} (\mathbb{I} \otimes P) = 3(\mathbb{I} \otimes P), &\quad\quad \mathcal{M}^{-1}_{d,1}(P \otimes P) = 3(P \otimes P), \label{eq:inverse_diag_channel2}
\end{align}
and is undefined elsewhere.
\end{lemma}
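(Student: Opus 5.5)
Since both the measurement channel and its claimed inverse act qubit-pair by qubit-pair, I will work with a single pair $\mathcal{H}_L^i\otimes\mathcal{H}_R^i$ and compute the action of $\mathcal{M}_{d,1}$ on the sixteen two-qubit Paulis $A\otimes B$. The tool is the single-qubit twirl identity: for $V\sim\mathrm{Cl}(2)$, conjugation $V^\dagger(\cdot)V$ sends $Z$ uniformly to $\pm X,\pm Y,\pm Z$ and fixes $\mathbb{I}$. Writing $\ketbra{b_1}{b_1}\otimes\ketbra{b_2}{b_2}=\tfrac14(\mathbb{I}+(-1)^{b_1}Z)\otimes(\mathbb{I}+(-1)^{b_2}Z)$, one term of $\mathcal{M}_{d,1}(A\otimes B)$ is
\begin{equation*}
\mathbb{E}_V\sum_{b_1,b_2}\tr\!\big((A\otimes B)(V\otimes V)^\dagger\ketbra{b_1b_2}{b_1b_2}(V\otimes V)\big)\,(V\otimes V)^\dagger\ketbra{b_1b_2}{b_1b_2}(V\otimes V).
\end{equation*}
Summing over $(b_1,b_2)$ uses $\sum_{b}(-1)^{b\cdot c}(-1)^{b\cdot c'}=4\delta_{c,c'}$, so only products of matching $Z$-patterns survive. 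The operator then collapses to
\begin{equation*}
\tfrac14\,\mathbb{E}_V\sum_{S\in\{\mathbb{I},Z\}^{\otimes 2}}\tr\!\big((A\otimes B)\,\tilde S_V\big)\,\tilde S_V,\qquad \tilde S_V=(V\otimes V)^\dagger S(V\otimes V).
\end{equation*}

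The four cases are as follows.
\begin{itemize}
\item $S=\mathbb{I}\otimes\mathbb{I}$ reproduces the $\mathbb{I}\otimes\mathbb{I}$ component of $A\otimes B$.
\item $S=Z\otimes\mathbb{I}$: here $\tilde S_V=\sigma\otimes\mathbb{I}$ with $\sigma$ uniform over $\{X,Y,Z\}$ (the signs square away). This yields $(P\otimes\mathbb{I})/3$ when $A\otimes B=P\otimes\mathbb{I}$, and zero otherwise.
\item $S=\mathbb{I}\otimes Z$ is symmetric to the previous case.
\item $S=Z\otimes Z$: the correlated twirl gives $\tilde S_V=\sigma\otimes\sigma$, with the same $\sigma$ on both qubits and sign squared to $+1$. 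This yields $(P\otimes P)/3$ for $A\otimes B=P\otimes P$, and zero for $P\otimes P'$ with $P\neq P'$.
\end{itemize}
The key structural point is that because the same $V$ acts on both qubits, the mixed terms $P\otimes P'$ with $P\neq P'$ never appear. Trace normalization factors ($\tr(\sigma^2)=2$ per qubit against the $1/4$) have to be tracked so the prefactors come out to exactly $1$ and $1/3$; that bookkeeping is routine.

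From this explicit diagonal action in the Pauli basis, the remaining claims follow.
\begin{itemize}
\item $\mathcal{M}_{d,1}$ is diagonal in an orthogonal basis with real eigenvalues, so it is self-adjoint with respect to the Hilbert--Schmidt inner product.
\item It fixes $\mathbb{I}\otimes\mathbb{I}$, so it is unital.
\item It kills the trace-zero Paulis or keeps them traceless, so it is trace-preserving.
\item The image is spanned by $\mathbb{I}\otimes\mathbb{I}$, $P\otimes\mathbb{I}$, $\mathbb{I}\otimes P$ and $P\otimes P$. On this image the inverse multiplies by $1$ or $3$ as stated, and is again diagonal, hence self-adjoint and unital.
\end{itemize}

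To show $\hat\sigma$ is well defined, I will expand $V^\dagger\ketbra{b}{b}V$ over the $2n$ qubits as a tensor product of pair-factors $\tfrac14\,\tilde{(\mathbb{I}+s_1Z)\otimes(\mathbb{I}+s_2Z)}_{V_i}$. Each pair-factor is a combination of $\mathbb{I}\otimes\mathbb{I}$, $\sigma\otimes\mathbb{I}$, $\mathbb{I}\otimes\sigma$ and $\sigma\otimes\sigma$ with the same $\sigma$, so it lies in the image. Because $\mathcal{M}_d^{-1}=\bigotimes_i\mathcal{M}_{d,1}^{-1}$ preserves the $\mathbb{I}\otimes\mathbb{I}$ coefficient, it follows that $\tr(\hat\sigma)=\tr(V^\dagger\ketbra{b}{b}V)=1$.

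I expect the main obstacle to be the careful sign and normalization bookkeeping in the correlated $Z\otimes Z$ twirl. In particular, I must verify that the signs picked up by $V^\dagger ZV$ on both qubits are identical and so cancel in the product. That follows because the two copies are conjugated by the same $V$.
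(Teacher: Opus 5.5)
Your proposal is correct and takes essentially the same route as the paper. Both compute the correlated single-qubit Clifford twirl directly on the two-qubit Paulis, with the $P\otimes P'$ ($P\neq P'$) terms vanishing because the same $V$ acts on both qubits; your expansion of the computational-basis projectors into $\{\mathbb{I},Z\}$-strings is just a repackaging of the paper's case analysis via $\langle l|VAV^\dagger|l\rangle$. If anything, you are more explicit than the paper in deriving self-adjointness, unitality, trace preservation, well-definedness of the snapshot and $\tr(\hat{\sigma}_{\mathcal{O}_{\mathcal{E},O}})=1$ from the diagonal Pauli-basis form, which the paper only asserts by analogy with the standard shadow channel.
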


\begin{proof}
The properties of $\mathcal{M}_{d,1}, \mathcal{M}_{d,1}^{-1}$, and $\tr(\hat{\sigma}_{\mathcal{O}_{\mathcal{E}, O}})$ can be straightforwardly shown via the form of Eq.~(\ref{eq:shadow_channel_diagonal}), fully mirroring the derivations of the same properties for the usual (state) shadow channel \cite{huang2020predicting}. 

To obtain the explicit form of the two-qubit channel $\mathcal{M}_{d,1}$, write, $\forall A,B \in \mathcal{L}(\mathcal{H}_1)$:
\begin{equation} \label{eq:m_ab}
    \mathcal{M}_{d,1}(A \otimes B) = \mathop{\mathbb{E}}_{V \sim \text{Cl}(2)} \left[ \sum_{l,r \in \{0,1\}} \langle l | VAV^\dagger | l \rangle \langle r | VBV^\dagger | r \rangle ~ V^\dagger \ketbra{l}{l} V \otimes V^\dagger \ketbra{r}{r} V \right].
\end{equation}
Note also that the expectation value $\langle l | VAV^\dagger | l \rangle$ takes value $\{-1,1\}$ if $VAV^\dagger \in \{Z, \mathbb{I}\}$ and $0$ if $ VAV^\dagger \in \{X,Y\}$ (and similarly for $\langle r | VBV^\dagger | r \rangle$). Possible cases are then as follows:
\begin{itemize}
    \item If $A=B=\mathbb{I}$: 
    \begin{equation*}
        \mathcal{M}_{d,1}(\mathbb{I} \otimes \mathbb{I}) = \mathop{\mathbb{E}}_{V \sim \text{Cl}(2)} \left[ \sum_{l,r \in \{0,1\}} V^\dagger \ketbra{l}{l} V \otimes V^\dagger \ketbra{r}{r} V \right] = \mathop{\mathbb{E}}_{V \sim \text{Cl}(2)} \left[ V^\dagger V \otimes V^\dagger V \right] = \mathbb{I} \otimes \mathbb{I},
    \end{equation*}
    by completeness $\sum_l \ketbra{l}{l} = \sum_r \ketbra{r}{r}$ and the unitarity of $V$.

    \item If one of $A,B$ is $\mathbb{I}$, and the other is $X,Y,$ or $Z$:
    \begin{equation*}
    \begin{aligned}
    \mathcal{M}_{d,1}(A \otimes \mathbb{I})
    &=
    \mathop{\mathbb{E}}_{V \sim \operatorname{Cl}(2)}
    \left[
        \sum_{l,r \in \{0,1\}}
        \langle l | V A V^\dagger | l \rangle
        \,
        V^\dagger \ketbra{l}{l} V
        \otimes
        V^\dagger \ketbra{r}{r} V
    \right]
    \\
    &=
    \mathop{\mathbb{E}}_{V \sim \operatorname{Cl}(2)}
    \left[
        \sum_{l \in \{0,1\}}
        \langle l | V A V^\dagger | l \rangle
        \,
        \left(V^\dagger \ketbra{l}{l} V\right)
        \otimes \mathbb{I}
    \right].
    \end{aligned}
    \end{equation*}
    Then, noting that conjugating an arbitrary non-identity Pauli with random single-qubit Cliffords $V \sim \text{Cl}(2)$ maps it to one of $\{X,Y,Z\}$, each with probability 1/3, and denoting by $V_z$ the Clifford satisfying $V_z A V_z^\dagger = Z$, we have:
    \begin{align*}
        \mathop{\mathbb{E}}_{V \sim \text{Cl}(2)} \left[ \sum_{l \in \{0,1\}} \langle l | VAV^\dagger | l \rangle ~~ (V^\dagger \ketbra{l}{l} V) \otimes \mathbb{I} \right] &= \frac{1}{3}\left( \sum_{l \in \{0,1\}} \langle l | V_z A V_z^\dagger | l \rangle ~~V^\dagger \ketbra{l}{l}V \right) \otimes \mathbb{I} + 0 + 0 \\
        &=  \frac{1}{3}\left(V_z^\dagger \ketbra{0}{0}V_z - V_z^\dagger \ketbra{1}{1}V_z \right) \otimes \mathbb{I} \\
        &= \frac{1}{3}\left(V_z^\dagger Z V_z\right) \otimes \mathbb{I} \\
        &= \frac{1}{3} A \otimes \mathbb{I}.
    \end{align*}
    Following the same steps as above, we also have $\mathcal{M}_{d,1}(\mathbb{I} \otimes B) = \frac{1}{3} \left( \mathbb{I} \otimes B \right)$.

    \item If $A \neq B$ and neither is $\mathbb{I}$: At least one of $\langle l | VAV^\dagger | l \rangle$ or $\langle r | VBV^\dagger | r \rangle$ must be zero, so $\mathcal{M}_{d,1}(A \otimes B) = 0$.

    \item If $A = B \neq \mathbb{I}$: Again denoting by $V_z$ the Clifford satisfying $V_z A V_z^\dagger = Z$, Eq.~(\ref{eq:m_ab}) evaluates to:
    \begin{align*}
        \mathcal{M}_{d,1}(A \otimes B) &= \mathop{\mathbb{E}}_{V \sim \text{Cl}(2)} \left[ \sum_{l,r \in \{0,1\}} \langle l | VAV^\dagger | l \rangle \langle r | VAV^\dagger | r \rangle ~ V^\dagger \ketbra{l}{l} V \otimes V^\dagger \ketbra{r}{r} V \right] \\
        &= \frac{1}{3} \sum_{l,r \in \{0,1\}} \langle l | V_z A V_z^\dagger | l \rangle \langle r | V_z A V_z^\dagger | r \rangle ~ V_z^\dagger \ketbra{l}{l} V_z \otimes V_z^\dagger \ketbra{r}{r} V_z  \\
        &= \frac{1}{3}  V_z^\dagger \otimes V_z^\dagger \left( \ketbra{0}{0} \otimes \ketbra{0}{0} - \ketbra{0}{0} \otimes \ketbra{1}{1} - \ketbra{1}{1} \otimes \ketbra{0}{0} + \ketbra{1}{1} \otimes \ketbra{1}{1} \right) V_z \otimes V_z \\
        &= \frac{1}{3}  V_z^\dagger \otimes V_z^\dagger \left( Z \otimes Z \right) V_z \otimes V_z \\
        &= \frac{1}{3} A \otimes A.
    \end{align*}
\end{itemize}

Both the domain and image of $\mathcal{M}_{d,1}$ is $\{\mathbb{I} \otimes \mathbb{I}, P \otimes \mathbb{I}, \mathbb{I} \otimes P, P \otimes P\}_{P \in \{X,Y,Z\}}$. Defining $\mathcal{M}_{d,1}^{-1}$ as the inverse of $\mathcal{M}_{d,1}$ restricted to this domain, the form of $\mathcal{M}_{d,1}^{-1}$ can be determined by inspection.
\end{proof}

\begin{proofof}[Theorem~\ref{theorem:general_op_shad_diag}]

$\hat{\sigma}_{{\mathcal{O}_{\mathcal{E}, O}}}$ can be used to construct unbiased estimators for diagonal OTOCs via $\widehat{\text{OTOC}}(P, P) \equiv \tr(P \otimes P^T ~\hat{\sigma}_{{\mathcal{O}_{\mathcal{E}, O}}})$. Using the properties of $\mathcal{M}_{d,1}$ in Lemma~\ref{lemma:m_inverse}, it can then be straightforwardly verified that the variance of this estimator satisfies:
\begin{equation} \label{eq:diag_variance}
    \text{Var}(\widehat{\text{OTOC}}(P, P)) \leq \left\Vert P \otimes P^T - \frac{\tr(P \otimes P^T)}{4^n} \mathbb{I} \otimes \mathbb{I} \right\Vert ^2_{\text{sh}, \text{diag}},
\end{equation}
where the shadow norm is now $\left\Vert X \right\Vert^2_{\text{sh}, \text{diag}} \equiv \max_{\sigma \in \mathcal{H}'} \tr(\sigma L) ~~ \forall X \in \mathcal{L}(\mathcal{H}_{2n})$ with the operator $L_X$ defined as:
\begin{equation} \label{eq:diag_shadow_norm}
    L_X \equiv \mathop{\mathbb{E}}_{V \sim \mathcal{U}} \left[ \sum_{b \in \{0,1\}^{2n}} (V \otimes V)^\dagger \ket{b} \bra{b} (V \otimes V)   \left( \bra{b} (V \otimes V) \mathcal{M}_d^{-1}(X) (V \otimes V)^\dagger \ket{b} \right)^2 \right]
\end{equation}
with $\mathcal{U} = \text{Cl}(2)^{\otimes n}$. 

Subsequently, again adopting a medians-of-means approach for estimating $M$ arbitrary diagonal OTOCs $\{\tr(P_i O(t) P_i O(t))/2^n\}$, we have the upper bound $N=\bigo{B \log(M/\delta)/\epsilon^2}$, with $B$ the right-hand-side of Eq.~(\ref{eq:diag_variance}) maximized over $\{P_i\}_{i=1}^M$. It remains to upper-bound $B$ to prove the scaling of the $2n$-qubit Algorithm~\ref{algo:diag_pauli_shad_2n}. 

If $P = \bigotimes_{j=1}^n P^{(j)}$ is a weight-$w$ Pauli operator, Eqs.~(\ref{eq:inverse_diag_channel}) and (\ref{eq:inverse_diag_channel2}) imply that $\mathcal{M}_d^{-1}(P \otimes P^T) = \bigotimes_{j=1}^n \mathcal{M}_{d,1}^{-1}(P^{(j)} \otimes (P^{(j)})^T) = 3^w (P \otimes P^T)$. This yields:
\begin{align}
   L_{P \otimes P^T} &= 9^w \mathop{\mathbb{E}}_{V \sim \mathcal{U}}\left[ \sum_{b \in \{0,1\}^{2n}}  (V \otimes V)^\dagger \ket{b} \bra{b} (V \otimes V) \left( \bra{b} (V \otimes V) (P \otimes P) (V \otimes V)^\dagger \ket{b} \right)^2 \right] \\
   &= 9^w \mathop{\mathbb{E}}_{V \sim \mathcal{U}} \left[ \Big( \sum_{b' \in \{0,1\}^{n}}  \left(V^\dagger \ket{b'} \bra{b'} V \right) \bra{b'} V P V^\dagger \ket{b'}^2 \Big)^{\otimes 2} \right] \\
   &= 9^w \bigotimes_{j=1}^n \mathop{\mathbb{E}}_{V_j \sim \text{Cl}(2)} \left[ \Big( \sum_{b'' \in \{0,1\}}  \left(V_j^\dagger \ket{b''} \bra{b''} V_j \right) \bra{b''} V_j P^{(j)} V_j^\dagger \ket{b''}^2 \Big)^{\otimes 2} \right], \label{eq:l_expression}
\end{align}
where we dropped the transpose in the first equality because it only contributes a phase that squares to 1, and refactored terms in the others. To evaluate the tensor product factors, observe that:
\begin{itemize}
    \item If $P^{j} = \mathbb{I}$:
    \begin{equation*}
        \mathop{\mathbb{E}}_{V_j \sim \text{Cl}(2)} \left[ \Big( \sum_{b'' \in \{0,1\}}  \left(V_j^\dagger \ket{b''} \bra{b''} V_j \right) \bra{b''} V_j P^{(j)} V_j^\dagger \ket{b''}^2 \Big)^{\otimes 2} \right] = \mathop{\mathbb{E}}_{V_j \sim \text{Cl}(2)} \left[ \Big( \sum_{b'' \in \{0,1\}}  \left(V_j^\dagger \ket{b''} \bra{b''} V_j \right) \Big)^{\otimes 2} \right] = \mathbb{I} \otimes \mathbb{I}.
    \end{equation*}

    \item If $P^{j} \in \{X,Y,Z\}$: Firstly, conjugating an arbitrary non-identity Pauli $P^{(j)}$ with a random single-qubit Cliffords $V_j \sim \text{Cl}(2)$ maps it to one of $\{X,Y,Z\}$, each with probability 1/3.. Secondly, the expectation value $\bra{b''} V_j P^{(j)} V_j^\dagger \ket{b''}^2 = 0$ if $V_j P^{(j)} V_j^\dagger \in \{X,Y\}$, and is 1 if $V_j P^{(j)} V_j^\dagger = Z$. Denoting by $V_{z,j}$ the Clifford satisfying $V_{z,j} P^{(j)} V_{z,j}^\dagger = Z$, this implies that:
    \begin{align*}
        \mathop{\mathbb{E}}_{V_j \sim \text{Cl}(2)} \left[ \Big( \sum_{b'' \in \{0,1\}}  \left(V_j^\dagger \ket{b''} \bra{b''} V_j \right) \bra{b''} V_j P^{(j)} V_j^\dagger \ket{b''}^2 \Big)^{\otimes 2} \right] &= \frac{1}{3} \left( \sum_{b'' \in \{0,1\}}  \left(V_{z,j}^\dagger \ket{b''} \bra{b''} V_{z,j} \right) \right)^{\otimes 2} + 0 + 0 \\
        &= \frac{1}{3}(\mathbb{I} \otimes \mathbb{I}).
    \end{align*}
\end{itemize}
Putting all together, Eq.~(\ref{eq:l_expression}) evaluates to $L_{P \otimes P^T} = 3^w ~\mathbb{I}^{\otimes 2n}$ for any weight-$w$ Pauli operator $P$, implying that $\left\Vert P \otimes P^T \right\Vert^2_{\text{sh}, \text{diag}} \equiv \max_{\sigma \in \mathcal{H}'} \tr(\sigma L) = 3^w \max_{\sigma \in \mathcal{H}'} \tr(\sigma) = 3^w$, and therefore the bound $N=\bigo{3^w \log(M/\delta)/\epsilon^2}$. This completes the proof for the $2n$-qubit algorithm.

An identical scaling can be proven for the $n$-qubit Algorithm~\ref{algo:diag_pauli_shad_n}, following the same steps as the previous section. That is, in the expression Eq.~(\ref{eq:probability_equiv}), it suffices to impose $V_L = V_R$, which reproduces the channel $\mathcal{M}$ (Eq.~(\ref{eq:shadow_channel_diagonal})) of the above procedure, yielding the snapshots $\hat{\sigma}_{{\mathcal{O}_{\mathcal{E}, O}}}$. This completes the proof for the $n$-qubit algorithm.

\end{proofof}

\section{Tight lower bounds for learning arbitrary weight-$w$ OTOCs} \label{app:otoc_lower_shadow}

In this appendix, we will prove the formal version of Theorem~\ref{theorem:shadow_lowerbound} of the main text (as Theorems~\ref{theorem:shadow_diag_lowerbound_formal}, \ref{theorem:shadow_gen_lowerbound_formal}). We begin by outlining the proof technique, largely based on the proofs of Theorem~6 of \cite{huang2020predicting} and Theorem~4 of \cite{chen2022quantum}.

\subsection{Proof strategy} \label{app:shadow_proof_strat}
Consider a three-party communication protocol. Alice wishes to communicate an integer $X$ in $\{1,...,2M\}$ to Bob; they achieve this by encoding them as quantum channels expressible in the form $\mathcal{E}^\dagger \circ \tilde{O} \circ \mathcal{E}$. Alice then chooses one of the $2M$ channels randomly, and sends $N$ copies of it to Bob.  

Alice's instances are chosen to take the form: 
\begin{equation} \label{eq:off_diag_instances}
    \Lambda_{a,b,s} (\cdot) \equiv \frac{1}{2^n}(\mathbb{I} \tr(\cdot) + s \epsilon_0 \left( P_a \tr(P_b ~\cdot) + P_b \tr(P_a ~\cdot) \right),
\end{equation}
where $ P_a, P_b \in {\mathcal{P}}_n \backslash \mathbb{I}^{\otimes n}$, $s\in\{-1,1\}$, and $\epsilon_0 \geq 0$, resulting in the $2M$ channels $\{\Lambda_{a,b,s=1}\} \cup \{\Lambda_{a,b,s=-1}\}$. By Lemma~\ref{lemma:hard_instances_echo_form}, $\Lambda_{a,b,s}$ can be written in the form $\mathcal{E}^\dagger \circ \tilde{O} \circ \mathcal{E}$, and therefore accepted as instances of any OTOC learning protocol satisfying the assumptions of the theorem. To eventually obtain a strong lower bound, we choose these instances to be as hard to distinguish as possible; the exact choice of the $2M$ instances will be specified later, depending on the statement we wish to prove.

Ideally, Bob uses the protocol for predicting OTOCs that satisfies the assumptions of the theorem to identify Alice's chosen integer. Bob can achieve this with high success probability by consulting the output of the protocol, which yield the OTOCs:
\begin{equation} \label{eq:otoc_to_distinguish}
    \frac{1}{2^n} \tr(\mathcal{E}(P_i) O \mathcal{E}(P_j) O) = \frac{1}{2^n} \tr(P_i \Lambda_{a,b,s}(P_j)) = s \epsilon_0 (\delta_{i,a} \delta_{j,b} + \delta_{i,b} \delta_{j,a}),
\end{equation}
which takes value $s \epsilon_0$ if $(i,j) \in \{(a,b), (b,a)\}$ and $a \neq b$, $2s \epsilon_0$ if $i=j=a=b$, and $0$ otherwise. Setting $\epsilon \leq \epsilon_0/2$, it suffices for Bob to output the integer corresponding to the index $(i,j)$ with $\frac{1}{2^n} \abs{\tr(\mathcal{E}(P_i) O \mathcal{E}(P_j) O)} \geq \epsilon$, and with the sign of $\tr(\mathcal{E}(P_i) O \mathcal{E}(P_j) O)/2^n$ determining $s$.

However, to further reflect the fact that Bob has no knowledge on the identities of the $2M$ Pauli operators $\{(P_a,P_b)\}_{a,b}$ (which forces Bob to utilize strategies that are agnostic of the set of $2M$ Pauli operators), a third party (Loki) is introduced, who intercepts the $N$ copies of $\Lambda_{a,b,s}$ sent by Alice. Loki scrambles the Pauli operators $P_a \rightarrow Q_{a} \equiv U P_a U^\dagger$, $P_b \rightarrow Q_{b} \equiv U P_b U^\dagger$ without changing their weights, by sampling a random $n$-qubit product unitary $U = \bigotimes_{\alpha=1}^n U^{(\alpha)}$, where $U^{(\alpha)} \in \mathcal{U}$ are chosen from a suitable unitary ensemble (specified later to either be random single-qubit unitaries or Clifford gates) and sandwiching $ \Lambda_{a,b,s}$ between $U$, yielding:
\begin{equation} \label{eq:loki_scrambling_channel}
    \Lambda_{a,b,s} \longrightarrow ~\mathcal{U} \circ \Lambda_{a,b,s} \circ \mathcal{U} = \frac{1}{2^n}(\mathbb{I} \tr(\cdot) + s \epsilon_0 \left( Q_a \tr(Q_b ~\cdot) + Q_b \tr(Q_a ~\cdot) \right) \equiv {\tilde{\Lambda}}_{a,b,s},
\end{equation}
which is still of the form $\mathcal{E}^\dagger \circ \tilde{O} \circ \mathcal{E}$.

Bob subsequently applies the learning protocol to the $N$ given copies of the quantum channel. Finally, Loki reveals the choice of $U$ applied, allowing Bob to post-process their measurement results to identify the true label of Alice's message despite Loki's interference. 

Summarizing, the above protocol proceeds as follows:
\begin{enumerate}
    \item Alice selects an integer $X$ randomly from $\{1,...,2M\}$, and prepares $N$ copies of the corresponding channel $\Lambda_{a,b,s}$.

    \item Alice attempts to send them to Bob, but the copies are intercepted by Loki, who modifies all $N$ copies to $\Lambda_{\tilde{a},\tilde{b},s}$, with $U = \bigotimes_{\alpha=1}^n U^{(\alpha)}$, with $U^{(\alpha)} \in \mathcal{U}$ sampled from a suitable ensemble.

    \item Bob, upon receiving the channels, applies the OTOC learning protocol. In each of the $N$ runs $j=1,...,N$, they prepare $(n+k)$-qubit initial states $\ket{u_j}$, query the channel, and perform local measurements with the rank-1 POVM:
    \begin{equation}
        F_{j} = \{ w_i d \ketbra{v_{ij}}{v_{ij}} \},  
    \end{equation}
    with $\ket{v_{ij}} \equiv \bigotimes_{\alpha = 1}^{n+k} | v_{ij}^{(\alpha)}\rangle$, $d = 2^{n+k}$, and $\sum_i w_i = 1$.
    
    Bob receives classical outcomes $Y = (Y_1,...,Y_N)$, which are random variables; they observe outcome $i$ in the $j$-th run with probability $p_{ij}$.

    \item Loki reveals the identity of $U$ to Bob. With this knowledge, Bob uses $Y$ to infer Alice's selected integer $X$, identifying the integer to be $\overline{X}$, which succeeds with high probability.
\end{enumerate}
Note that it suffices to consider pure input states and rank-1 POVMs without loss of generality; the former because a mixed input state can always be regarded as part of a larger pure state via purification, and the latter because higher-rank POVMs can always be viewed as rank-1 POVMs by decomposing in their eigenbases.

Given the above protocol, Fano's inequality implies that the mutual information $I(X:\overline{X})$ between Alice's choice $X$ and Bob's inference result $\overline{X}$ is bounded as:
\begin{equation} \label{eq:fano}
    I(X:\overline{X}) \geq \Omega(\log(2M)).
\end{equation}
By the fact that $I(X:U) = 0$ and the data processing inequality $I(X:Y | U) \geq I(X:\overline{X} | U)$, we have:
\begin{equation}
    I(X:Y | U) \geq I(X:\overline{X} | U) \geq I(X:\overline{X}) \geq \Omega(\log(2M)).
\end{equation}
Subsequently, making use of properties of the conditional entropy and the independence of the random variables $Y_1,...,Y_N$, we have:
\begin{align}
    I(X:Y | U) &= H(Y|U) - H(Y|X,U) \\
    & \leq \sum_{j=1}^N (H(Y_j | U) - H(Y_j|X,U)) \\
    & = \sum_{j=1}^N I(X:Y_j | U). \label{eq:mutual_inf_ineq}
\end{align}
Following \cite{huang2020predicting} (S.M. Eq.~(S112)), applications of the concavities of the Shannon entropy $H$ and the logarithm function yield:
\begin{equation} \label{eq:moment_frac}
    I(X:Y_j | U) \leq \mathlarger{\sum}_i ~\frac{\mathop{\mathbb{E}}_{X,U}[p_{ij}^2] - \mathop{\mathbb{E}}_{X,U}[p_{ij}]^2}{\mathop{\mathbb{E}}_{X,U}[p_{ij}]}.
\end{equation}

\medskip

In order to compute the terms that appear in Eq.~\eqref{eq:moment_frac} explicitly, and thereby bound $I(X:Y_j | U)$ it will be helpful to first introduce several definitions and identities for convenience. Due to the one-to-one mapping between $(n+k)$-qubit pure states and Hilbert-Schmidt normalized matrices in $\mathbb{C}^{2^n \times 2^k}$, we can define $A_j, B_{ij} \in \mathbb{C}^{2^n \times 2^k}$ to be the unique `matricizations' of the input and POVM states $\ket{u_j}$ and $\ket{v_{ij}}$ respectively, i.e. the matrices $A_j = \mathrm{vec}^{-1}(\ket{u_j})$ and $B_{ij} = \mathrm{vec}^{-1}(\ket{v_{ij}})$ with $\tr(A_j^\dagger A_j) = \tr(B_{ij}^\dagger B_{ij}) = 1$. Define also:
\begin{equation}
    C_{ij} \equiv B_{ij}A_j^\dagger \in \mathbb{C}^{2^n \times 2^n}.
\end{equation}
Writing $| v^{(1 \sim m)}_{ij} \rangle \equiv \bigotimes_{\alpha = 1}^{m} | v_{ij}^{(\alpha)}\rangle$, the following identities then hold:
\begin{align}
    B_{ij} &= |v_{ij}^{(1 \sim n)} \rangle \langle v^{(n+1 \sim n+k)}_{ij} |,   \\ 
    \tr(C_{ij}^\dagger C_{ij}) &=  \langle v_{ij}^{(n+1 \sim n+k)}| A^\dagger_j A_j | v_{ij}^{(n+1 \sim n+k)} \rangle, \label{eq:cc_vaav} \\
    \tr(P_a C_{ij}^\dagger P_b C_{ij}) &=  \langle v_{ij}^{(1 \sim n)}| P_a | v_{ij}^{(1 \sim n)} \rangle ~ \langle v_{ij}^{(n+1 \sim n+k)}| A^\dagger_j P_b A_j | v_{ij}^{(n+1 \sim n+k)} \rangle \\
    &= \langle P_a \rangle_{v_{ij}^{(1 \sim n)}} \langle A^\dagger_j P_b A_j \rangle_{v_{ij}^{(n+1 \sim n+k)}}. \label{eq:pcpc}
\end{align}
We also have:
\begin{equation} \label{eq:apa_inequality}
    \langle A^\dagger_j P_b A_j \rangle_{v_{ij}^{(n+1 \sim n+k)}}^2 \leq \langle A^\dagger_j A_j \rangle_{v_{ij}^{(n+1 \sim n+k)}}^2 = \tr(C_{ij}^\dagger C_{ij})^2,
\end{equation}
where the first inequality is due to the Cauchy–Schwarz inequality and the second equality is due to Eq.~(\ref{eq:cc_vaav}).

\begin{lemma}\label{lemma_mi_expression}
The mutual information upper-bound Eq.~(\ref{eq:moment_frac}) can be rewritten as:
\begin{equation}
    I(X:Y_j | U) \leq \mathlarger{\sum}_i ~ \frac{2\epsilon_0^2~ w_{ij} 2^k}{\tr(C_{ij}^\dagger C_{ij})} ~ \left( \mathop{\mathbb{E}}_{X,U} \left[ \langle Q_a \rangle_{v_{ij}^{(1 \sim n)}}^2 \langle A^\dagger_j Q_b A_j \rangle^2_{v_{ij}^{(n+1 \sim n+k)}}\right] + \mathop{\mathbb{E}}_{X,U} \left[ \langle Q_b \rangle_{v_{ij}^{(1 \sim n)}}^2 \langle A^\dagger_j Q_a A_j \rangle^2_{v_{ij}^{(n+1 \sim n+k)}}\right] \right).\label{eq:mutual_info_bound}
\end{equation}
\end{lemma}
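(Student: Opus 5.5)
The plan is to compute the outcome probability $p_{ij}$ explicitly for the scrambled instance $\tilde\Lambda_{a,b,s}$, and then evaluate the numerator and denominator of Eq.~(\ref{eq:moment_frac}) separately. For a pure input $\ket{u_j}$ on $n+k$ qubits and POVM element $w_{ij} d\ketbra{v_{ij}}{v_{ij}}$, we write $\ket{u_j}=\mathrm{vec}(A_j)$ and $\ket{v_{ij}}=\mathrm{vec}(B_{ij})$. The action of $\tilde\Lambda_{a,b,s}\otimes\mathbb{I}$ on $\ketbra{u_j}{u_j}$ then becomes $\mathcal{N}(A_jA_j^\dagger)$-type expressions through the ricochet identity (\ref{eq:ricochet_id}). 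Since each term of $\tilde\Lambda_{a,b,s}$ has the form $R\tr(S\,\cdot)$, the overlap with $\ketbra{v_{ij}}{v_{ij}}$ factorizes into traces of the form $\tr(R\, C_{ij}^\dagger S\, C_{ij})$ with $C_{ij}=B_{ij}A_j^\dagger$. Using $d=2^{n+k}$ and the normalization $1/2^n$, this yields
\begin{equation*}
p_{ij}= w_{ij}2^k\Big[\tr(C_{ij}^\dagger C_{ij}) + s\epsilon_0\big(\tr(Q_aC_{ij}^\dagger Q_bC_{ij})+\tr(Q_bC_{ij}^\dagger Q_aC_{ij})\big)\Big].
\end{equation*}
Fixing the ordering and transposition conventions, so that the identity term gives exactly $\tr(C_{ij}^\dagger C_{ij})$ via Eq.~(\ref{eq:cc_vaav}), is the first bookkeeping step.

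Next, I average over Alice's choice $X=(a,b,s)$. Because $s=\pm1$ occurs with equal probability for each pair and is independent of $U$, the linear term cancels. Hence $\mathbb{E}_{X,U}[p_{ij}]=w_{ij}2^k\tr(C_{ij}^\dagger C_{ij})$, which does not depend on $U$, and this gives the denominator. For the numerator, $\mathbb{E}[p_{ij}^2]-\mathbb{E}[p_{ij}]^2$ equals $(w_{ij}2^k)^2\epsilon_0^2\,\mathbb{E}\big[(T_{ab}+T_{ba})^2\big]$ with $T_{ab}\equiv\tr(Q_aC_{ij}^\dagger Q_bC_{ij})$. This holds because the constant term cancels and the cross term vanishes, again since $s^2=1$ while $s$ averages to zero. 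I then use $(x+y)^2\le 2x^2+2y^2$ and substitute the factorization (\ref{eq:pcpc}), $T_{ab}=\langle Q_a\rangle_{v^{(1\sim n)}_{ij}}\langle A_j^\dagger Q_bA_j\rangle_{v^{(n+1\sim n+k)}_{ij}}$, which is valid because $\ket{v_{ij}}$ is a product state.

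Dividing the numerator by the denominator cancels one factor of $w_{ij}2^k$. What remains is $2\epsilon_0^2 w_{ij}2^k/\tr(C_{ij}^\dagger C_{ij})$ multiplied by the two expectations, which is exactly Eq.~(\ref{eq:mutual_info_bound}). The main obstacle I expect is the careful derivation of the closed form for $p_{ij}$. This means getting the matricization identity $\bra{v}(\mathcal{N}\otimes\mathbb{I})(\ketbra{u}{u})\ket{v}$ in terms of $B$, $A$ and the channel action right, including transposes and the complex conjugation of $Q$. I would verify it on the identity term first, where it must reproduce $\tr(C^\dagger C)$, and I would note that conjugation and transposition only affect signs, which are squared away in the final bound. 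Terms with $\tr(C_{ij}^\dagger C_{ij})=0$ have $p_{ij}=0$ identically, so they are dropped from the sum.
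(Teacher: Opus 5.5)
Your proposal is correct and follows the paper's proof essentially step for step: the closed form of $p_{ij}$ in terms of $C_{ij}=B_{ij}A_j^\dagger$, the vanishing of the $s$-linear terms in both moments, the inequality $(x+y)^2\leq 2(x^2+y^2)$, and the factorization (\ref{eq:pcpc}). One harmless bookkeeping remark: with $C_{ij}=B_{ij}A_j^\dagger$ one actually gets $\tr(Q_aC_{ij}^\dagger Q_bC_{ij})=\langle Q_b\rangle_{v_{ij}^{(1\sim n)}}\langle A_j^\dagger Q_aA_j\rangle_{v_{ij}^{(n+1\sim n+k)}}$, so the labels in (\ref{eq:pcpc}) are swapped as written, but both orderings enter the bound symmetrically and this changes nothing.
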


\begin{proof}
Developing $p_{ij}$ yields:
\begin{align*}
    p_{ij} &= w_{ij} 2^{n+k} \bra{v_{ij}} (\tilde{\Lambda}_{a,b,s} \otimes \mathbb{I}^{\otimes k})(\ketbra{u_j}{u_j}) \ket{v_{ij}} \\
    & = w_{ij} 2^{n+k} \bbra{B_{ij}} (\tilde{\Lambda}_{a,b,s} \otimes \mathbb{I}^{\otimes k})(\kett{A_j}\!\bbra{A_j}) \kett{B_{ij}} \\
    &= w_{ij} \sum_{P \in \mathcal{P}_{k}} \tr(A^\dagger_j A_j P) \tr(B_{ij}^\dagger B_{ij} P) + s \epsilon_0 \left( \tr(A_j^\dagger Q_b A_j P) \tr(B_{ij}^\dagger Q_a B_{ij} P) + \tr(A_j^\dagger Q_a A_j P) \tr(B_{ij}^\dagger Q_b B_{ij} P) \right)  \\
    &= w_{ij} 2^k \left[ \tr(B_{ij}^\dagger B_{ij} A^\dagger_j A_j) + s \epsilon_0 \left( \tr(B_{ij}^\dagger Q_a B_{ij} A^\dagger_j Q_b A_j) + \tr(B_{ij}^\dagger Q_b B_{ij} A^\dagger_j Q_a A_j) \right) \right] \\
    &= w_{ij} 2^k \left[ \tr(C_{ij}^\dagger C_{ij}) + s \epsilon_0 \left( \tr(Q_a C_{ij} Q_b C_{ij}^\dagger) + \tr(Q_b C_{ij} Q_a C_{ij}^\dagger) \right) \right],
\end{align*}
where we made use of the previous definitions in the second and final equalities, $\kett{M}\bbra{M} = {(\tr(M^\dagger M) \tr(\mathbb{I}^{\otimes k}))}^{-1} \sum_{P} MPM^\dagger \otimes P^T$ in the third equality, and $\sum_P P\otimes P = 2^k ~\mathrm{SWAP}$ and $\tr(MN) = \tr((M \otimes N) \mathrm{SWAP})$ in the fourth equality.

\vspace{5pt}

The first moment in Eq.~(\ref{eq:moment_frac}) is given by:
\begin{align}
    \mathop{\mathbb{E}}_{X,U}[p_{ij}] &= w_{ij} 2^k \left( \mathop{\mathbb{E}}_{X,U} \left[\tr(C_{ij}^\dagger C_{ij}) \right] + \epsilon_0 ~ \mathop{\mathbb{E}}_{X,U}\left[ s \tr(Q_a C_{ij} Q_b C_{ij}^\dagger) + s \tr(Q_b C_{ij} Q_a C_{ij}^\dagger) \right]  \right) \\
    &= w_{ij} 2^k \tr(C_{ij}^\dagger C_{ij}), 
\end{align}
where the second term vanishes because $\mathbb{E}[s]=0$, and the first term is independent of $X,U$. Together with $\sum_i \mathop{\mathbb{E}}_{X,U}[p_{ij}] = \mathop{\mathbb{E}}_{X,U} \left[\sum_i p_{ij} \right] = 1$, this also implies that:
 \begin{equation} \label{eq:sum_unity}
     \sum_i w_{ij} 2^k \tr(C_{ij}^\dagger C_{ij}) = 1.
 \end{equation}
Similarly, for the second moment, we have:
\begin{align}
    \mathop{\mathbb{E}}_{X,U}[p_{ij}^2] &= w^2_{ij} 4^k \left( \tr(C_{ij}^\dagger C_{ij})^2 + \epsilon_0^2 ~ \mathop{\mathbb{E}}_{X,U} \left[ \left( \tr(Q_a C_{ij} Q_b C_{ij}^\dagger) + \tr(Q_b C_{ij} Q_a C_{ij}^\dagger) \right)^2 \right] \right).
\end{align}
Putting all together, Eq.~(\ref{eq:moment_frac}) becomes:
\begin{align}
    I(X:Y_j | U) &\leq \mathlarger{\sum}_i ~ \frac{\epsilon_0^2 ~w_{ij} 2^k}{\tr(C_{ij}^\dagger C_{ij})} ~ \mathop{\mathbb{E}}_{X,U} \left[ \left( \tr(Q_a C_{ij} Q_b C_{ij}^\dagger) + \tr(Q_b C_{ij} Q_a C_{ij}^\dagger) \right)^2 \right] \\
    &\leq \mathlarger{\sum}_i ~ \frac{2\epsilon_0^2~w_{ij} 2^k}{\tr(C_{ij}^\dagger C_{ij})} ~ \left( \mathop{\mathbb{E}}_{X,U} \left[ \left( \tr(Q_a C_{ij} Q_b C_{ij}^\dagger) \right)^2\right] + \mathop{\mathbb{E}}_{X,U} \left[ \left( \tr(Q_b C_{ij} Q_a C_{ij}^\dagger) \right)^2\right] \right) \\
    &= \mathlarger{\sum}_i ~ \frac{2\epsilon_0^2~ w_{ij} 2^k}{\tr(C_{ij}^\dagger C_{ij})} ~ \left( \mathop{\mathbb{E}}_{X,U} \left[ \langle Q_a \rangle_{v_{ij}^{(1 \sim n)}}^2 \langle A^\dagger_j Q_b A_j \rangle^2_{v_{ij}^{(n+1 \sim n+k)}}\right] + \mathop{\mathbb{E}}_{X,U} \left[ \langle Q_b \rangle_{v_{ij}^{(1 \sim n)}}^2 \langle A^\dagger_j Q_a A_j \rangle^2_{v_{ij}^{(n+1 \sim n+k)}}\right] \right),
\end{align}
where we used $(x +y)^2 \leq 2(x^2 + y^2)$ in the second inequality, and Eq.~(\ref{eq:pcpc}) in the last equality.
\end{proof}

\vspace{5pt}

Finally, we list several standard identities on expectation values over random unitaries, to be evoked in subsequent proofs (see e.g. \cite{mele2024introduction}). Denoting with $\mathcal{U}_H(2^n)$ the ensemble of Haar-random $n$-qubit unitaries, the following hold:
\begin{align}
    \mathop{\mathbb{E}}_{U \sim \mathcal{U}_H(2)} \left[ \left( U \ketbra{\phi}{\phi} U^\dagger \right) \right] &= \frac{\mathbb{I} \otimes \mathbb{I}}{2} \\
   \mathop{\mathbb{E}}_{U \sim \mathcal{U}_H(2)} \left[ \left( U \ketbra{\phi}{\phi} U^\dagger \right)^{\otimes 2} \right] &= \frac{\mathbb{I} \otimes \mathbb{I} + \mathrm{SWAP}}{6} \\
   \mathop{\mathbb{E}}_{U \sim \mathcal{U}_H(2)} \left[ \left( U P U^\dagger \right)^{\otimes 2} \right] &= \frac{1}{3}(XX + YY + ZZ ) \label{eq:haar_upu}
\end{align}
with $\ket{\phi}$ an arbitrary single-qubit state, and $P \in \{X,Y,Z\}$. Further denoting with $\ket{\psi}$ an arbitrary $n$-qubit product state and $P^{(w)} \in \mathcal{P}_n$ an arbitrary weight-$w$ Pauli operator, the above identities straightforwardly imply:
\begin{align}
    \mathop{\mathbb{E}}_{U \sim \mathcal{U}_H(2)^{\otimes n}} \left[ \langle \psi |U P^{(w)} U^\dagger | \psi \rangle^2 \right] &= \frac{1}{3^w} \label{eq:haar_eq1} \\ 
    \mathop{\mathbb{E}}_{U \sim \mathcal{U}_H(2)^{\otimes n}} \left[ \left( U P^{(w)} U^\dagger \right)^{\otimes 2} \right] &= \frac{1}{3^w} \sum_{Q \in \{X,Y,Z\}^{\otimes w}} Q \otimes Q. \label{eq:haar_eq2}
\end{align}
With some further algebra and using Eq.~(\ref{eq:haar_eq2}), one can also show that for arbitrary $n$-qubit quantum states $\ket{\psi'}$ we have:
\begin{equation}
    \mathop{\mathbb{E}}_{U \sim \mathcal{U}_H(2)^{\otimes n}} \left[ \langle \psi' |U P^{(w)} U^\dagger | \psi' \rangle^2 \right] \leq \left(\frac{2}{3}\right)^w. \label{eq:haar_eq3}
\end{equation}

\subsection{Lower bounds for estimating diagonal OTOCs}

We begin with the (simpler) case where the OTOCs to be estimated are promised to be diagonal, which is the formal version of the diagonal OTOC part of Theorem~\ref{theorem:shadow_lowerbound} of the main text:
\begin{theorem}[Sample complexity lower bound on predicting local diagonal OTOCs; arbitrary initialization] \label{theorem:shadow_diag_lowerbound_formal}
Consider any protocol that is able to predict any set of $M$ weight-$w$ diagonal OTOCs $\{\tr(\mathcal{E}(P_i) O \mathcal{E}(P_i) O)/2^n\}_{i=1}^M$, i.e. $\forall P_i \in \mathcal{P}_n, ~w(P_i)=w$, based on initializing as arbitrary states of $n$ or more qubits, querying the channel ${\mathcal{O}_{\mathcal{E}, O}} = \mathcal{E}^\dagger \circ \tilde{O} \circ \mathcal{E}$, and performing local measurements in each of the $N$ runs. Any such protocol requires at least $N \geq \Omega\left(3^w \log(M)/\epsilon^2\right)$ runs to solve this problem with constant success probability, $\forall w, M$.
\end{theorem}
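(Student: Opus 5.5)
We instantiate the three-party (Alice–Loki–Bob) framework of Appendix~\ref{app:shadow_proof_strat} with \emph{diagonal} hard instances. Concretely, Alice's $2M$ messages are encoded as the Pauli channels $\Lambda_{a,s}(\cdot)=\frac{1}{2^n}(\mathbb{I}\tr(\cdot)+s\epsilon_0 P_a\tr(P_a\,\cdot))$. Here $P_a$ ranges over $M$ distinct weight-$w$ Paulis and $s\in\{\pm1\}$. By Lemma~\ref{lemma:hard_instances_echo_form}(2), each $\Lambda_{a,s}$ is of Loschmidt-echo form $\mathcal{E}_a^\dagger\circ\widetilde R_{a,s}\circ\mathcal{E}_a$ with $\mathcal{E}_a$ doubly stochastic. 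This is the $a=b$ case of Eq.~(\ref{eq:off_diag_instances}) with $\epsilon_0$ halved.

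The instances are distinguishable from the OTOCs. The diagonal OTOC $\tr(\mathcal{E}(P_i)O\mathcal{E}(P_i)O)/2^n$ equals $s\epsilon_0\delta_{ia}$. Taking $\epsilon_0=2\epsilon$, an $\epsilon$-accurate protocol lets Bob recover $(a,s)$. Loki conjugates by a random product unitary $U\in\mathcal{U}_H(2)^{\otimes n}$, which preserves weight and keeps the channel in echo form. Fano's inequality and Eq.~(\ref{eq:mutual_inf_ineq}) then give $\Omega(\log M)\le\sum_{j=1}^N I(X:Y_j|U)$. It therefore suffices to show $I(X:Y_j|U)=O(\epsilon^2/3^w)$ for every run.

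To bound a single run, we redo Lemma~\ref{lemma_mi_expression} with $a=b$. This gives
\[
I(X:Y_j|U)\le\sum_i\frac{\epsilon_0^2 w_{ij}2^k}{\tr(C_{ij}^\dagger C_{ij})}\,\mathop{\mathbb{E}}_{X,U}\!\left[\langle Q_a\rangle^2_{v_{ij}^{(1\sim n)}}\langle A_j^\dagger Q_aA_j\rangle^2_{v_{ij}^{(n+1\sim n+k)}}\right],
\]
with $Q_a=UP_aU^\dagger$.

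The key step is to exploit locality of the measurement. The vector $v_{ij}^{(1\sim n)}$ is a product state, so Eq.~(\ref{eq:haar_eq1}) applies: averaging over $U$ alone, $\langle Q_a\rangle^2$ has expectation $3^{-w}$. The obstacle is that the same $U$ also appears in the second factor $\langle A_j^\dagger Q_aA_j\rangle^2$, so the two factors are correlated.

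The first fix I would try is the crude bound Eq.~(\ref{eq:apa_inequality}). It gives $\langle A_j^\dagger Q_aA_j\rangle^2\le\tr(C_{ij}^\dagger C_{ij})^2$ pointwise in $U$, so it can be pulled out of the expectation before averaging $\langle Q_a\rangle^2$. This yields
\[
I(X:Y_j|U)\le\epsilon_0^2 3^{-w}\sum_i w_{ij}2^k\tr(C_{ij}^\dagger C_{ij})=\epsilon_0^2 3^{-w},
\]
where the final equality is Eq.~(\ref{eq:sum_unity}). This works for arbitrary, possibly entangled, input states $A_j$ and any number $k$ of ancillas. That is exactly why the theorem only assumes local measurements and allows arbitrary initialization, and why the rate is $3^{-w}$ rather than $9^{-w}$.

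Summing over the $N$ runs gives $N\epsilon_0^2 3^{-w}\ge\Omega(\log M)$, hence $N=\Omega(3^w\log M/\epsilon^2)$.

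Two details remain to check. First, the $O(1)$ factor from $a=b$: the cross terms now coincide, giving a factor $(2\epsilon_0)^2$, which only changes constants. Second, we need $M$ distinct weight-$w$ Paulis to exist; when $M$ exceeds the number of weight-$w$ Paulis, we cap $\log M$ accordingly, and the bound is stated up to that caveat.

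The main point to verify carefully is the ordering of the expectations. The bound Eq.~(\ref{eq:apa_inequality}) must hold for each fixed $U$, so that the Haar average over $U$ acts only on the product-state factor $\langle Q_a\rangle^2_{v^{(1\sim n)}}$.
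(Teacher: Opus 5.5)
Your argument follows the paper's proof nearly step for step. You use the same diagonal Pauli-channel instances from Lemma~\ref{lemma:hard_instances_echo_form}(2) and the same Fano/chain-rule reduction. You make the same pointwise use of Eq.~(\ref{eq:apa_inequality}) to strip off the input-state factor, which is why entangled inputs and ancillas do no harm. You then use the product-state second moment $3^{-w}$ and sum with Eq.~(\ref{eq:sum_unity}).

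However, one step fails as written: your choice of Loki's ensemble. With $U\sim\mathcal{U}_H(2)^{\otimes n}$, the scrambled operator $Q_a=UP_aU^\dagger$ is a generic weight-$w$ tensor-product operator, not a Pauli. The theorem only assumes the protocol predicts diagonal OTOCs of \emph{Pauli} operators. For $\tilde\Lambda_{a,s}$ these equal $s\epsilon_0\,(\tr(P_iQ_a)/2^n)^2$, which is spread over the $3^w$ Paulis on $\mathrm{supp}(P_a)$. Its maximum over $P_i$ can be as small as $\epsilon_0 3^{-w}$, when each rotated Bloch direction is along $(1,1,1)/\sqrt{3}$. So Bob cannot recover $(a,s)$ from $\epsilon$-accurate Pauli OTOCs with $\epsilon_0=2\epsilon$. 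He would need the diagonal OTOC of $Q_a$ itself, which the hypothesis does not supply. The matching correlated-shadow algorithm cannot supply it either, since $Q_a\otimes Q_a^T$ has cross terms such as $X\otimes Y^T$ lying outside the image of $\mathcal{M}_d$. As written, your argument therefore only lower-bounds protocols that handle arbitrary tensor-product operators. That is a stronger hypothesis than the theorem's, and it would not certify optimality of Algorithm~\ref{algo:diag_pauli_shad_2n}.

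The repair is the choice the paper makes: draw Loki's unitaries from $\mathrm{Cl}(2)^{\otimes n}$. Then $UP_aU^\dagger=\pm P_{a'}$ with $P_{a'}$ again a weight-$w$ Pauli. Because $Q_a$ appears twice in $\tilde\Lambda_{a,s}$, the sign cancels and $\tilde\Lambda_{a,s}=\Lambda_{a',s}$. Loki thus merely relabels Alice's channels, and after $U$ is revealed Bob queries the protocol on the Pauli set $\{P_{a'}\}$. Since $\mathrm{Cl}(2)$ is a unitary $2$-design (indeed a $3$-design), $\mathbb{E}_U\langle v|UP_aU^\dagger|v\rangle^2=3^{-w}$ still holds for product $|v\rangle$. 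The rest of your computation then goes through unchanged and gives $I(X:Y_j|U)\le\epsilon_0^2 3^{-w}$; the paper's normalization $2s\epsilon_0$ yields $4\epsilon_0^2/3^w$, a difference only in constants.
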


\begin{proof}
When the OTOCs are promised to be diagonal, the hard instances Eq.~(\ref{eq:off_diag_instances}) take the form:
\begin{equation} \label{eq:hard_instances_diag}
    \Lambda_{a,s} (\cdot) \equiv \frac{1}{2^n}(\mathbb{I} \tr(\cdot) + 2s \epsilon_0 P_a \tr(P_a ~\cdot),
\end{equation}
which are Pauli channels. There are $3^w {n \choose w}$ possible choices of weight-$w$ Pauli operators, and their ordering does not matter. We also take Loki's random unitaries to be $\mathcal{U}=\text{Cl}(2)^{\otimes n}$, which has the effect of permuting the labels of Alice's channels. 

It remains to upper-bound the summand of Eq.~(\ref{eq:mutual_info_bound}) as follows:
\begin{align}
    \mathop{\mathbb{E}}_{X,U} [ \langle P_{\tilde{a}} \rangle_{v_{ij}^{(1 \sim n)}}^2 \langle A^\dagger_j P_{\tilde{a}} A_j \rangle^2_{v_{ij}^{(n+1 \sim n+k)}}] & \leq  \mathop{\mathbb{E}}_{X,U} [ \langle U P_a U^\dagger \rangle_{v_{ij}^{(1 \sim n)}}^2 ] \tr(C_{ij}^\dagger C_{ij})^2 \label{eq:diag_proof1} \\
    &= \frac{1}{3^w} \tr(C_{ij}^\dagger C_{ij})^2, \label{eq:diag_proof2}
\end{align}
where we made use of Eq.~(\ref{eq:apa_inequality}) in the first inequality, and Eq.~(\ref{eq:haar_eq1}) in the second equality together with the fact that $\mathcal{U}=\text{Cl}(2)^{\otimes n}$ form a 3-design. Finally, substituting into Eq.~(\ref{eq:mutual_info_bound}) and making use of Eq.~(\ref{eq:sum_unity}), we have:
\begin{equation*}
    I(X:Y_j | U) \leq \frac{4\epsilon_0^2}{3^w},
\end{equation*}
and therefore via Eqs.~(\ref{eq:fano}) and (\ref{eq:mutual_inf_ineq}) the claimed lower bound $N \geq \Omega\left(3^w \log(M)/\epsilon^2\right)$.
\end{proof}

This lower bound is saturated by the correlated Pauli operator shadow protocols of Algorithms~\ref{algo:diag_pauli_shad_2n} and~\ref{algo:diag_pauli_shad_n}, which estimate $M$ weight-$w$ diagonal OTOCs using $\bigo{3^w\log(M/\delta)/\epsilon^2}$ queries. Hence, for constant $\delta$, Algorithms~\ref{algo:diag_pauli_shad_2n} and~\ref{algo:diag_pauli_shad_n} are optimal in their dependence on $w$, $M$, and $\epsilon$ within the no-quantum-memory, local-measurement model considered here. Notably, the lower bound permits arbitrary, possibly entangled, input states on the system and ancillary qubits, whereas the matching upper bound of Algorithm~\ref{algo:diag_pauli_shad_n} requires only product-state preparation, and is therefore also optimal in the number of qubits needed. 

\subsection{Lower bounds for estimating general OTOCs}
Next, we prove the lower bound for learning general OTOCs, the formal version of Theorem~\ref{theorem:shadow_lowerbound} of the main text:
\begin{theorem}[Sample complexity lower bound on predicting local OTOCs] \label{theorem:shadow_gen_lowerbound_formal}
Consider any protocol that is able to predict any set of $M$ weight-$w$ OTOCs $\{\tr(\mathcal{E}(Q_i) O \mathcal{E}(Q_j) O)/2^n\}_{i,j}$ where $Q_i, Q_j$ are weight-$w$ tensor product operators, i.e. $\forall ~ (i,j), ~w(Q_i)=w(Q_j)=w$, that uses $n$ or more qubits, based on querying the channel ${\mathcal{O}_{\mathcal{E}, O}} = \mathcal{E}^\dagger \circ \tilde{O} \circ \mathcal{E}$, and performing local measurements in each of the $N$ runs.
\begin{enumerate}
    \item Any such protocol that can initialize as arbitrary product states requires at least $N \geq \Omega\left(9^w \log(M)/\epsilon^2\right)$ runs to solve this problem, if $w \leq \lfloor n/2 \rfloor$ and $M \leq \frac{3^{2w}}{2} {n \choose w}{n-w \choose w}$.

    \item Any such protocol that can initialize as arbitrary states requires at least $N \geq \Omega\left((9/2)^w \log(M)/\epsilon^2\right)$ runs to solve this problem, if $w \leq \lfloor n/2 \rfloor$ and $M \leq \frac{3^{2w}}{2} {n \choose w}{n-w \choose w}$.

    \item If $w > \lfloor n/2 \rfloor$ and $M \leq \frac{3^{2w}}{2} {2n \choose 2w}$, we have the weaker $N \geq \Omega\left(3^w \log(M)/\epsilon^2\right)$ bound instead, for any such protocol that can initialize as arbitrary states.
\end{enumerate}
\end{theorem}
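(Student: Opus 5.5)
We follow the three-party (Alice--Loki--Bob) framework of Appendix~\ref{app:shadow_proof_strat}. By Fano's inequality and Eq.~(\ref{eq:mutual_inf_ineq}), it suffices to show $I(X:Y_j|U)\le c\,\epsilon_0^2\,\kappa_w$ for every run $j$. Here $\kappa_w=9^{-w}$, $(2/9)^w$, or $3^{-w}$ in cases 1, 2, 3 respectively.

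The hard instances are the off-diagonal channels $\Lambda_{a,b,s}$ of Eq.~(\ref{eq:off_diag_instances}). By Lemma~\ref{lemma:hard_instances_echo_form}(3) they are of Loschmidt-echo form for $\epsilon_0\le 1/2$. In cases 1 and 2 we choose pairs $(P_a,P_b)$ of weight-$w$ Paulis with \emph{disjoint supports}. This is possible when $2w\le n$, and the count $\tfrac{3^{2w}}{2}\binom{n}{w}\binom{n-w}{w}$ matches the stated bound on $M$, with the factor $1/2$ from the unordered pair and the factor $2$ from $s$ cancelling against $2M$. Loki scrambles with $U\sim\mathrm{Cl}(2)^{\otimes n}$, a 3-design, so the Haar identities (\ref{eq:haar_eq1})--(\ref{eq:haar_eq3}) apply. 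Setting $\epsilon\le\epsilon_0/2$ lets Bob decode $X$ from the predicted OTOCs.

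The main work is bounding the summand of Lemma~\ref{lemma_mi_expression}, namely $\mathbb{E}_U[\langle Q_a\rangle_{v^{(1\sim n)}}^2\langle A_j^\dagger Q_b A_j\rangle^2_{v^{(n+1\sim n+k)}}]$.

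\textbf{Case 1 (product inputs).} Here $A_j$ is the matricization of a product state, so $A_j=|\alpha\rangle\langle\beta|$ with $|\alpha\rangle$ an $n$-qubit product state. Then $\langle A_j^\dagger Q_bA_j\rangle_v=\langle\alpha|Q_b|\alpha\rangle\,|\langle\beta|v\rangle|^2$ and $\tr(C^\dagger C)=|\langle\beta|v\rangle|^2$. The summand therefore equals $\tr(C^\dagger C)^2\,\mathbb{E}_U[\langle v|UP_aU^\dagger|v\rangle^2\langle\alpha|UP_bU^\dagger|\alpha\rangle^2]$. Because $P_a,P_b$ have disjoint supports and $U$ is a tensor product, the two factors involve independent single-qubit unitaries on disjoint qubits. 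Both $|v\rangle$ and $|\alpha\rangle$ are product states, so the expectation factorizes and Eq.~(\ref{eq:haar_eq1}) gives $3^{-w}\cdot3^{-w}=9^{-w}$. Summing with Eq.~(\ref{eq:sum_unity}) yields $I\le 4\epsilon_0^2/9^w$.

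\textbf{Case 2 (arbitrary inputs).} The second factor $\langle A^\dagger Q_bA\rangle$ is no longer a product-state expectation. Write $\langle A^\dagger Q_bA\rangle_v=\tr(C^\dagger C)\langle\psi'|Q_b|\psi'\rangle$ with $|\psi'\rangle\propto A|v\rangle$, an arbitrary normalized state. Condition on the unitaries on $\mathrm{supp}(P_b)$. The first factor depends only on $U$ restricted to $\mathrm{supp}(P_a)$, which is disjoint, and averages to $3^{-w}$ by Eq.~(\ref{eq:haar_eq1}), since $|v\rangle$ is a product state. The second factor averages to at most $(2/3)^w$ by Eq.~(\ref{eq:haar_eq3}). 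Their product is $(2/9)^w$.

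The step I expect to be the main obstacle is justifying this decoupling. $|\psi'\rangle$ does not depend on $U$, but I must check that the joint expectation factorizes exactly under the product-unitary average, so that Cauchy--Schwarz is not needed. It does, because the two factors depend on disjoint tensor factors of $U$.

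\textbf{Case 3 ($w>\lfloor n/2\rfloor$).} Disjoint supports are impossible, so we fall back on the crude bound of the diagonal proof. Apply Eq.~(\ref{eq:apa_inequality}) to drop the second factor, which leaves $\mathbb{E}_U\langle UP_aU^\dagger\rangle_v^2=3^{-w}$. The instances are now all pairs of weight-$w$ Paulis, counted as in the stated bound on $M$, and we obtain $I\le4\epsilon_0^2/3^w$. In all three cases, combining with Fano gives $N\ge\Omega(\log M/(\kappa_w^{-1})^{-1}\epsilon^2)$, that is, $N\ge\Omega(\kappa_w^{-1}\log M/\epsilon^2)$.
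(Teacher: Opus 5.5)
Your proposal is correct and follows the paper's proof essentially step for step. It uses the same disjoint-support off-diagonal instances and the same factorization of the summand of Lemma~\ref{lemma_mi_expression}: a product-state factor giving $3^{-w}$, and an input-dependent factor giving $3^{-w}$ for product inputs via Eq.~(\ref{eq:haar_eq1}) or $(2/3)^w$ for arbitrary inputs via Eq.~(\ref{eq:haar_eq3}), with the crude bound of Eq.~(\ref{eq:apa_inequality}) in Case 3. Your deviations are harmless: scrambling with $\mathrm{Cl}(2)^{\otimes n}$ instead of Haar single-qubit unitaries is fine because only second moments enter (and it even gives the bound for Pauli OTOCs), and your per-instance factorization over disjoint tensor factors of $U$ is a cleaner justification of the paper's Eq.~(\ref{eq:disjoint_factorize}).
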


\begin{proof} We prove each case in turn. In all cases, we take Loki's random unitaries to be products of independently sampled Haar-random single-qubit unitaries $U \sim \mathcal{U}_H(2)$, which has the effect of transforming the weight-$w$ Pauli operators $P_a, P_b$ of Alice's channels to weight-$w$ tensor product operators $Q_a, Q_b$.

\vspace{5pt}

\noindent \textit{1. Product state initialization, weights satisfy $w \leq \lfloor n/2 \rfloor$ and $M \leq \frac{3^{2w}}{2} {n \choose w}{n-w \choose w}$}: We choose Alice's $2M$ instances Eq.~(\ref{eq:off_diag_instances}) such that $\forall \Lambda_{a,b,s}$, the supports of $P_a$ and $P_b$ are completely disjoint, i.e. $\mathrm{supp}(P_a) \cap \mathrm{supp}(P_b) = \emptyset$. This can only be done if $w \leq \lfloor n/2 \rfloor$, and there are $M \leq \frac{3^{2w}}{2} {n \choose w}{n-w \choose w}$ such choices. The ordering of $a,b$ within these $M$ choices are unimportant.

As a consequence of disjointness, the expectation value in the summand of Eq.~(\ref{eq:mutual_info_bound}) factorize as follows:
\begin{equation} \label{eq:disjoint_factorize}
    \mathop{\mathbb{E}}_{X,U} \left[ \langle Q_a \rangle_{v_{ij}^{(1 \sim n)}}^2 \langle A^\dagger_j Q_b A_j \rangle^2_{v_{ij}^{(n+1 \sim n+k)}}\right] = \mathop{\mathbb{E}}_{X,U} \left[ \langle Q_a \rangle_{v_{ij}^{(1 \sim n)}}^2\right] \mathop{\mathbb{E}}_{X,U} \left[ \langle A^\dagger_j Q_b A_j \rangle^2_{v_{ij}^{(n+1 \sim n+k)}}\right].
\end{equation}
The first factor evaluates to $1/3^w$ using Eq.~(\ref{eq:haar_eq1}). To evaluate the second factor, define the normalized state $\ket{v} \equiv A_j | v_{ij}^{(n+1 \sim n+k)} \rangle/ \norm{A_j | v_{ij}^{(n+1 \sim n+k)} \rangle}$ with $\sqrt{\tr(C_{ij}^\dagger C_{ij})} = \norm{A_j | v_{ij}^{(n+1 \sim n+k)} \rangle}$ (via Eq.~(\ref{eq:cc_vaav})), and note that the initial state $\ket{u_j}$ is a product state and therefore $A_{j} = |u_{j}^{(1 \sim n)} \rangle \langle u^{(n+1 \sim n+k)}_{j} |$, which implies that $\ket{v}$ is also a product state. Subsequently, it can be written as:
\begin{align}
\mathop{\mathbb{E}}_{X,U}
\left[
    \bigl\langle A_j^\dagger Q_b A_j \bigr\rangle_
    {v_{ij}^{(n+1\sim n+k)}}^{2}
\right]
&=
\mathop{\mathbb{E}}_{X,U}
\left[
    \langle v \rvert U P_b U^\dagger \lvert v\rangle
\right]
\,
\norm{
    A_j \lvert v_{ij}^{(n+1\sim n+k)}\rangle
}^{4}
\label{eq:lowerbound_eq1}
\\
&=
\mathop{\mathbb{E}}_{X,U}
\left[
    \langle v \rvert U P_b U^\dagger \lvert v\rangle
\right]
\tr\!\left(C_{ij}^\dagger C_{ij}\right)^{2}
\label{eq:lowerbound_eq2}
\\
&=
\frac{1}{3^{w}}
\tr\!\left(C_{ij}^\dagger C_{ij}\right)^{2}.
\end{align}
where we again used Eq.~(\ref{eq:haar_eq1}) in the final equality for product states.

Finally, substituting into Eq.~(\ref{eq:mutual_info_bound}) and making use of Eq.~(\ref{eq:sum_unity}), we have $I(X:Y_j | U) \leq {4\epsilon_0^2}/{9^w}$ and therefore via Eqs.~(\ref{eq:fano}) and (\ref{eq:mutual_inf_ineq}) the claimed lower bound $N \geq \Omega\left(9^w \log(M)/\epsilon^2\right)$.

\vspace{5pt}

\noindent \textit{2. Arbitrary state initialization, weights satisfy $w \leq \lfloor n/2 \rfloor$ and $M \leq \frac{3^{2w}}{2} {n \choose w}{n-w \choose w}$}: The proof mirrors the above case closely. We choose the same hard instances as the above case with disjoint supports, which again factorize to yield Eq.~(\ref{eq:disjoint_factorize}). The first factor again evaluates to $1/3^w$ using Eq.~(\ref{eq:haar_eq1}), and we similarly have Eqs.~(\ref{eq:lowerbound_eq1}) and (\ref{eq:lowerbound_eq2}). However, because $\ket{v}$ is no longer a product state, $\mathop{\mathbb{E}}_{X,U} \left[ \langle v | U P_b U^\dagger | v  \rangle \right]$ can take values up to $(2/3)^w$ via Eq.~(\ref{eq:haar_eq3}).

Finally, substituting into Eq.~(\ref{eq:mutual_info_bound}) and making use of Eq.~(\ref{eq:sum_unity}), we have $I(X:Y_j | U) \leq {4\epsilon_0^2}{(2/9)^w}$ and therefore via Eqs.~(\ref{eq:fano}) and (\ref{eq:mutual_inf_ineq}) the claimed lower bound $N \geq \Omega\left((9/2)^w \log(M)/\epsilon^2\right)$.

\vspace{5pt}
\noindent \textit{3. Arbitrary state initialization, any weight, and $M \leq \frac{3^{2w}}{2} {2n \choose 2w}$}: In this case, we can no longer make use of the hard instances of the previous cases, and have to make do with the weaker class of instances, where $P_a \neq P_b$ are arbitrary weight-$w$ Pauli operators, of which there are $\frac{3^{2w}}{2} {2n \choose 2w}$ many.

To upper bound the summand of Eq.~(\ref{eq:mutual_info_bound}), it suffices to observe that Eqs.~(\ref{eq:diag_proof1}), (\ref{eq:diag_proof2}) remain true for $U \sim \mathcal{U}_H(2)^{\otimes n}$, which yields the claimed lower bound $N \geq \Omega\left(3^w \log(M)/\epsilon^2\right)$ after following similar steps.

\end{proof}

The first lower bound is saturated by Algorithm~\ref{algo:pauli_shad_n}, the $n$-qubit Pauli operator shadow protocol. Allowing arbitrary, possibly entangled, input states weakens the proven lower bound to $\Omega((9/2)^w\log(M)/\epsilon^2)$, leaving a $2^w$ gap with Algorithm~\ref{algo:pauli_shad_2n}. Whether this gap can be closed is left as an open problem. 

Nonetheless, we find that OTOC learning models with Choi access -- i.e. models that are given access to copies of the state $\rho_{{\mathcal{O}_{\mathcal{E}, O}}}$, instead of the ability to query the channel ${\mathcal{O}_{\mathcal{E}, O}}$ -- have the sample complexity $\Theta(9^w)$, which is saturated by Algorithm~\ref{algo:pauli_shad_2n} (with the preparation of the states $\rho_{\mathcal{O}_{\mathcal{E},O}}$ skipped, since they are assumed to be given):
\begin{theorem}[Sample complexity lower bound on predicting local OTOCs with Choi access] \label{theorem:shadow_gen_lowerbound_formal_choi}
Consider any protocol that is able to predict any set of $M$ weight-$w$ OTOCs $\{\tr(\mathcal{E}(Q_i) O \mathcal{E}(Q_j) O)/2^n\}_{i,j}$ where $Q_i, Q_j \in \mathcal{P}_n$ are weight-$w$ Pauli operators, i.e. $\forall ~ (i,j), ~w(Q_i)=w(Q_j)=w$. The learner is given access to copies of the state $\rho_{\mathcal{O}_{\mathcal{E},O}}$, and the ability to perform local measurements in each of the $N$ independent runs. Any such protocol requires at least $N \geq \Omega\left(9^w \log(M)/\epsilon^2\right)$ runs to solve this problem, if $w \leq \lfloor n/2 \rfloor$ and $M \leq \frac{3^{2w}}{2} {n \choose w}{n-w \choose w}$.
\end{theorem}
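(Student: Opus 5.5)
The plan is to reuse the Alice--Loki--Bob communication argument of Appendix~\ref{app:shadow_proof_strat}, but now with Bob holding copies of the Choi state $\rho_{\tilde{\Lambda}_{a,b,s}}$ rather than channel access. Alice's $2M$ hard instances will be the same disjoint-support channels $\Lambda_{a,b,s}$ as in case~1 of Theorem~\ref{theorem:shadow_gen_lowerbound_formal}, with $\mathrm{supp}(P_a)\cap\mathrm{supp}(P_b)=\emptyset$. This choice is possible under the stated conditions $w\le\lfloor n/2\rfloor$ and $M\le\frac{3^{2w}}{2}\binom{n}{w}\binom{n-w}{w}$. By Lemma~\ref{lemma:hard_instances_echo_form}, each instance is of echo form, and by Eq.~(\ref{eq:choi_hard_instances_gen}) its Choi state is
$\frac{1}{4^n}\big(\mathbb{I}^{\otimes 2n}+\epsilon_0 s(P_a\otimes P_b^T+P_b\otimes P_a^T)\big)$.
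Loki applies $U=\bigotimes_\alpha U^{(\alpha)}$ with $U^{(\alpha)}\sim\mathcal{U}_H(2)$. On Choi states this acts as $U\otimes \bar U$, which sends $P_a\otimes P_b^T$ to $Q_a\otimes Q_b^T$. Since $Q_b^T$ is again a weight-$w$ tensor-product operator obtained by conjugating $P_b^T$ with the Haar-random $\bar U$, the transpose does not matter.

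Fano's inequality, Eq.~(\ref{eq:fano}), and the chain rule, Eq.~(\ref{eq:mutual_inf_ineq}), carry over verbatim. It therefore suffices to bound $I(X:Y_j|U)\le O(\epsilon_0^2/9^w)$ for a single run consisting of a local rank-1 POVM $\{w_i 4^n\ketbra{v_i}{v_i}\}$ with $\ket{v_i}=\ket{v_i^L}\otimes\ket{v_i^R}$. Each of the two factors is a product over qubits. The outcome probabilities are
$p_i=w_i\big(1+s\epsilon_0(\langle Q_a\rangle_{v_i^L}\langle Q_b^T\rangle_{v_i^R}+\langle Q_b\rangle_{v_i^L}\langle Q_a^T\rangle_{v_i^R})\big)$.
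This gives $\mathbb{E}_{X,U}[p_i]=w_i$ and $\sum_i w_i=1$, so Eq.~(\ref{eq:moment_frac}) reduces to
$I(X:Y_j|U)\le 2\epsilon_0^2\sum_i w_i\big(\mathbb{E}[\langle Q_a\rangle^2_{v_i^L}\langle Q_b^T\rangle^2_{v_i^R}]+(a\leftrightarrow b)\big)$.

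The key step is to evaluate each expectation. By disjointness, $Q_a$ and $Q_b$ depend on independent single-qubit unitaries acting on disjoint qubits, and both $\ket{v_i^L}$ and $\ket{v_i^R}$ are product states. The expectation therefore factorizes as $\mathbb{E}[\langle Q_a\rangle^2_{v_i^L}]\,\mathbb{E}[\langle Q_b^T\rangle^2_{v_i^R}]$, and each factor equals $3^{-w}$ by Eq.~(\ref{eq:haar_eq1}). Summing over $i$ with $\sum_i w_i=1$ yields $I(X:Y_j|U)\le 4\epsilon_0^2/9^w$. Combining this with Fano's inequality and taking $\epsilon_0=2\epsilon$ gives $N\ge\Omega(9^w\log M/\epsilon^2)$.

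The main point to verify is precisely why the $2^w$ loss of case~2 of Theorem~\ref{theorem:shadow_gen_lowerbound_formal} disappears here. With channel access, an entangled input makes the effective state $A_j\ket{v^{R}}$ non-product, and only the weaker bound Eq.~(\ref{eq:haar_eq3}) applies. With Choi access, by contrast, the input is fixed to the Bell state and the measurement is local, so both the left and right effective states are products. I will check this carefully, including that the $\mathbb{I}\otimes Q^T$ cross terms vanish (Eq.~(\ref{eq:choi_hard_instances_gen}) contains none) and that a first-order term in $\epsilon_0$ averages to zero under $\mathbb{E}[s]=0$.
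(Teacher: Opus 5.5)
Your proposal is correct and follows essentially the same route as the paper's proof. It uses the same disjoint-support instances $\rho_{\Lambda_{a,b,s}}$ scrambled by Loki, and the same observation that the fixed Bell input together with a product POVM leaves both effective states product, so each second moment factorizes into two factors of $3^{-w}$ and $I(X:Y_j|U)\le 4\epsilon_0^2/9^w$. The paper obtains $p_{ij}$ by specializing Lemma~\ref{lemma_mi_expression} to $k=n$, $A_j=\mathbb{I}/\sqrt{2^n}$, whereas you compute it directly and keep the transpose on $Q_b^T$ more carefully. One small point, shared with the paper: the statement only obliges the protocol to answer OTOCs of weight-$w$ Pauli operators, so Loki should draw independent single-qubit Cliffords rather than Haar-random unitaries, keeping the scrambled pairs $UP_aU^\dagger, UP_bU^\dagger$ (signed) Paulis; since $\text{Cl}(2)$ is a $2$-design, each factor is still exactly $3^{-w}$ and nothing else in your argument changes.
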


\begin{proof}
The proof is a simplified version of the proof strategy detailed in Appendix~\ref{app:shadow_proof_strat}, which describes the more general case where the access model involves queries to the channel $\mathcal{O}_{\mathcal{E},O}$. We can therefore ignore the discussion on the initial states, and work with the Choi states $\rho_{\mathcal{O}_{\mathcal{E},O}}$ directly. The proof proceeds almost verbatim as the proof of Theorem~6 of \cite{huang2020predicting}, except that the hard instances are taken to be the states $\rho_{\Lambda_{i,j,s}}$ defined in Eq.~(\ref{eq:choi_hard_instances_gen}). Subsequently, it suffices to evaluate $\mathop{\mathbb{E}}_{X,U}[p_{ij}]$ and $\mathop{\mathbb{E}}_{X,U}[p^2_{ij}]$ to bound Eq.~(\ref{eq:moment_frac}).

As in Theorem~\ref{theorem:shadow_gen_lowerbound_formal}, choose weight-$w$ Paulis $P_a,P_b$ with disjoint supports and let $Q_a=UP_aU^\dagger$, $Q_b=UP_bU^\dagger$, where $U\sim\mathcal{U}_H(2)^{\otimes n}$. There are $M\leq \frac{3^{2w}}{2}{n\choose w}{n-w\choose w}$ such unordered pairs when $w\leq\lfloor n/2\rfloor$.

In the notation of Lemma~\ref{lemma_mi_expression}, Choi access corresponds to $k=n$ and
$\ket{u_j}=\kett{\mathbb I}$, so $A_j=\frac{\mathbb I}{\sqrt{2^n}}$. Since the POVM states are product states
$\ket{v_{ij}}=\ket{v^L_{ij}}\otimes\ket{v^R_{ij}}$,
Eqs.~(\ref{eq:cc_vaav}) and~(\ref{eq:pcpc}) yield:
\[
\tr(C_{ij}^\dagger C_{ij})=\frac1{2^n},\qquad
\tr(Q_aC_{ij}Q_bC_{ij}^\dagger)
=\frac1{2^n}\langle Q_a\rangle_{v^L_{ij}}
          \langle Q_b\rangle_{v^R_{ij}}.
\]
Disjointness of the supports and Eq.~(\ref{eq:haar_eq1}) therefore imply:
\begin{equation*}
\mathbb E_{X,U}\!\left[
\tr(Q_aC_{ij}Q_bC_{ij}^\dagger)^2
\right]
=
\mathbb E_{X,U}\!\left[
\tr(Q_bC_{ij}Q_aC_{ij}^\dagger)^2
\right]
=\frac{1}{4^n ~9^w}.
\end{equation*}
Using $(x+y)^2\leq2(x^2+y^2)$ and the moment formulas derived in the
proof of Lemma~\ref{lemma_mi_expression}, we obtain:
\[
\mathbb E_{X,U}[p_{ij}]=w_{ij},
\qquad
\mathbb E_{X,U}[p_{ij}^2]
\leq
w_{ij}^2\left(1+\frac{4\epsilon_0^2}{9^w}\right).
\]
Substitution into Eq.~(\ref{eq:moment_frac}), together with
$\sum_iw_{ij}=1$, yields the upper bound:
\begin{equation*}
    I(X:Y_j | U) \leq \frac{4\epsilon_0^2}{9^w},
\end{equation*}
and therefore via Eqs.~(\ref{eq:fano}) and (\ref{eq:mutual_inf_ineq}) the claimed lower bound $N \geq \Omega\left(9^w \log(M)/\epsilon^2\right)$.
\end{proof}

\section{Details on algorithm for two-point correlators and Clifford operator shadows} \label{app:2pc}

The vectorization map Eq.~(\ref{eq:vect_map}) maps wavefunction amplitudes to two-point correlators, and more generally inner products between vectorized operators $\kett{O(t)} = \kett{U O U^\dagger}$, $\kett{O'(t')} = \kett{U' O' U'^\dagger}$ to operator inner products:
\begin{equation}
    c(O ,O') \equiv \frac{1}{2^n} \tr(O^\dagger(t) O'(t')) = ~_{\mathcal{Q}} \langle \! \langle O(t) \kett{O'(t')}_\mathcal{Q}.
\end{equation}
A naive approach for their computation on quantum computers involve applying protocols for evaluating amplitudes and inner products to the vectorized states, such as the Hadamard test, which may require ancillas or queries to controlled unitaries. In the following, we describe an alternative approach which exploits the unitality of Heisenberg time-evolution to reduce the overlap estimation problem to the easier fidelity estimation problem, via the shifted encoding $\kett{\sigma(O(t))}_\mathcal{Q}$. This ultimately allows for their efficient simultaneous estimation, via algorithms such as Clifford classical shadows.

We begin by discussing the computation of operator overlaps $c(O(t), O'(t')) \equiv \tr(O(t) O'(t'))/2^n$ and amplitudes $c_k \equiv \tr(P_k O(t))/2^n$. Subsequently, we discuss the shifted-vectorized encoding $\kett{\sigma(O)}_\mathcal{Q}$ and its use in the computation of two-point correlators via fidelity estimation protocols (such as DFE) and Clifford classical shadows.

For simplicity, $O$ and $O'$ are assumed to be Pauli operators; this can be straightforwardly generalized to sparse linear combinations (c.f. Appendix~B of \cite{chiew2026quantum}). We will make use of different operator encodings, each requiring different number of qubits and accesses to $O(t) = U^\dagger OU$ and $O'(t') = U'^\dagger O' U'$, which we highlight. It is worth remarking that many results obtained in this section exploit the unitality of Heisenberg time evolution (translating to $U^\dagger \otimes U^T \kett{\mathbb{I}}_\mathcal{C} = \kett{\mathbb{I}}_\mathcal{C}$ in vectorized form), distinct from generic Schrodinger-picture situations.

\subsection{The shifted-vectorization encoding $\kett{\sigma(O}$}

We begin by showing that the state $\kett{\sigma(O(t))} \equiv \frac{1}{\sqrt{2}}(\kett{\mathbb{I}} + i\kett{O(t)})$ can be prepared under essentially the same conditions as the unshifted state $\kett{O(t)}_\mathcal{Q}$. This is a consequence of the unitality of Heisenberg time-evolution, which is generally absent from Schrodinger-picture time-evolution (which are only required to be CPTP). 

For simplicity, we work with the more general state parametrized by $\theta \in \mathbb{R}$:
\begin{equation}
    \kett{\sigma(O(t), \theta)}_\mathcal{Q} \equiv \kett{\cos(\theta) \mathbb{I} + i \sin(\theta) O(t)}_\mathcal{Q}
\end{equation}
which reduces to $\kett{\sigma(O(t))}_\mathcal{Q}$ when $\theta = \pi/4$ and $\kett{O(t)}_\mathcal{Q}$ when $\theta = \pi/2$ (up to a global phase of $i$). The following is a direct consequence of Proposition 1 of \cite{chiew2026quantum}, on the computational complexity of preparing the states $\kett{O(t)}_\mathcal{Q}$:

\begin{proposition} \label{theorem:shift_vect_prep}
$\kett{\sigma(O(t))}_\mathcal{Q}$ can be prepared to $\epsilon$ error in trace distance with $L_U + \bigo{\textup{poly(n)} \textup{polylog}(n/\epsilon)}$ gates if:
    \begin{enumerate}
    \item $O(t=0)$ can be expanded as a sum of $\bigo{\textup{poly}(n)}$-many terms in a $k$-producible, orthogonal operator basis $\mathcal{Q}_i$ with $k=\bigo{\log(n)}$, or $R_O(\theta)$ can be implemented using $\bigo{\textup{poly}(n)}$ gates,
    \item $U^\dagger$, and $U^\mathsf{T}$ or $U$ can each be implemented to error $\epsilon/6$ with $L_U$ gates,
    \item and the final basis $\mathcal{Q}$ has a tensor product structure over partitions of at most size $\bigo{\log(n)}$. 
\end{enumerate}
\end{proposition}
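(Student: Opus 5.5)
The idea is to reduce preparing $\kett{\sigma(O(t))}_\mathcal{Q}$ to preparing a vectorized Heisenberg operator of the same type covered by Proposition~1 of \cite{chiew2026quantum}, with $O$ replaced by the unitary $R_O(\theta)=e^{-i\theta O}$. For a Pauli (or any involutory) $O$ we have $R_O(\theta)=\cos\theta\,\mathbb{I}-i\sin\theta\,O$. Unitality of conjugation then gives $U^\dagger R_O(\theta)U=\cos\theta\,\mathbb{I}-i\sin\theta\,O(t)$. After the harmless substitution $\theta\to-\theta$, or applying $R_O(-\theta)$ directly, this equals $\cos\theta\,\mathbb{I}+i\sin\theta\,O(t)$. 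The right-hand side is unitary, so its normalized vectorization is exactly $\kett{\sigma(O(t),\theta)}_\mathcal{Q}$, with no renormalization issue. By the ricochet identity~(\ref{eq:ricochet_id}), $\kett{\sigma(O(t),\theta)}=(U^\dagger R_O(\theta)U\otimes\mathbb{I})\kett{\mathbb{I}}=(U^\dagger\otimes U^T)(R_O(\theta)\otimes\mathbb{I})\kett{\mathbb{I}}$. Here I use $(U\otimes\mathbb{I})\kett{\mathbb{I}}=(\mathbb{I}\otimes U^T)\kett{\mathbb{I}}$.

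The key steps, in order, are as follows.

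\textbf{(i)} Implement $R_O(\theta)$ on the left register. If condition 1 gives $R_O(\theta)$ directly with $\mathcal{O}(\mathrm{poly}(n))$ gates, this is immediate. Otherwise $O$ is a sum of $\mathrm{poly}(n)$ terms in a $k$-producible basis with $k=\mathcal{O}(\log n)$. In that case I would invoke Proposition~1 of \cite{chiew2026quantum}, which prepares $\kett{O}_\mathcal{Q}$ from the sparse expansion. Applied to the operator $\cos\theta\,\mathbb{I}+i\sin\theta\,O$, this is again a sparse expansion in the same basis, with one extra identity term, so the same preparation routine applies with $\mathrm{poly}(n)\mathrm{polylog}(n/\epsilon)$ overhead.

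\textbf{(ii)} Apply $U^\dagger$ on the left and $U^T$ on the right, or alternatively $U$ then $U^\dagger$ sequentially on the left. This costs the $L_U$ gates of condition 2. Per condition 2 each is implemented to error $\epsilon/6$, and the errors add by the triangle inequality in trace distance.

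\textbf{(iii)} Change from the computational vectorization to the target basis $\mathcal{Q}$. Condition 3 says $\mathcal{Q}$ is a tensor product over blocks of size $\mathcal{O}(\log n)$. The basis change is then a product of unitaries on $\mathcal{O}(\log n)$-qubit pairs of blocks, each synthesizable to the required precision with $\mathrm{poly}(n)\mathrm{polylog}(n/\epsilon)$ gates. This mirrors the Bell-basis transformation relating $\kett{\cdot}_\mathcal{C}$ and $\kett{\cdot}_\mathcal{P}$.

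The main obstacle is the error bookkeeping and the sparse-sum branch of step (i). There one must check that the shifted operator keeps the sparsity and locality hypotheses of Proposition~1 of \cite{chiew2026quantum}, which holds since it adds only the identity term. One must also check that the normalization in Eq.~(\ref{eq:vect_map}) does not amplify errors. That holds because the shifted operator is unitary, so its norm is fixed. I would allocate the error budget as $\epsilon/6$ to each of the up to three applications of $U$-type unitaries and the remainder to the state preparation and basis change, and conclude by the triangle inequality.
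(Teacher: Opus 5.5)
Your proposal follows the paper's proof. Both reduce to Proposition~1 of \cite{chiew2026quantum} applied to the shifted operator $\cos\theta\,\mathbb{I}+i\sin\theta\,O$, use unitality so that the identity component is left untouched by $U^\dagger\otimes U^T$, and finish with the local basis change to $\mathcal{Q}$. Your $R_O(\theta)$/ricochet route for involutory $O$ is the alternative the paper records in the remark after its proof, and you get the sign $e^{-i\theta O}=\cos\theta\,\mathbb{I}-i\sin\theta\,O$ right where that remark does not. One small fix: $\mathbb{I}$ is a single extra term only in a basis that contains it; it is not one term in $\{\ketbra{i}{j}\}$. This is why the paper first prepares the $(s+1)$-sparse state $\kett{\sigma(O,\theta)}_{\mathcal{P}}=\cos\theta\ket{0\cdots0}+i\sin\theta\kett{O}_{\mathcal{P}}$ in the Pauli basis, and only then rotates with $R_{\mathcal{P}\rightarrow\mathcal{Q}}$.
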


\begin{proof}
The following mirrors the proof for the efficient preparation of $\kett{O(t)}_\mathcal{Q}$ of \cite{chiew2026quantum}, with the only replacement being the conjugation by the rotation operator $R_O(\theta) = e^{-i\theta O/2}$ instead of the operator $O(t=0)$.

Making use of the linearity of vectorization, and the fact that the identity term is unchanged by time evolution due to the unitality of Heisenberg evolution, $\kett{\sigma(O(t), \theta)}_\mathcal{Q}$ can be evolved in the same way as $\kett{O(t)}_\mathcal{Q} = M^\mathcal{U}_{\mathcal{Q}} \kett{O}_\mathcal{Q}$:
\begin{align*}
    \kett{\sigma(O(t), \theta)}_\mathcal{Q} &= \kett{\cos(\theta) \mathbb{I} + i \sin(\theta) O(t)}_\mathcal{Q} \\
    &= \kett{\cos(\theta) \mathbb{I} + i \sin(\theta) U^\dagger O U}_\mathcal{Q} \\
    &= M^\mathcal{U}_{\mathcal{Q}} \left( \cos(\theta) \kett{\mathbb{I}}_\mathcal{Q} + i \sin(\theta) \kett{O}_\mathcal{Q} \right) \\
    &= M^\mathcal{U}_{\mathcal{Q}} \kett{\sigma(O, \theta)}_\mathcal{Q}.
\end{align*}
To prepare $\kett{\sigma(O, \theta)}_\mathcal{Q}$, we then make use of the basis change unitary $R_{{\mathcal{P}} \rightarrow \mathcal{Q}}$ to switch to the Pauli basis via $\kett{\sigma(O, \theta)}_\mathcal{Q} = R_{{\mathcal{P}} \rightarrow \mathcal{Q}} \kett{\sigma(O, \theta)}_{\mathcal{P}}$, and make use of the linearity of the vectorization map, yielding:
\begin{align*}
    \kett{\sigma(O, \theta)}_{\mathcal{P}} &= \cos(\theta) \kett{\mathbb{I}}_{\mathcal{P}} + i \sin(\theta) \kett{O}_{\mathcal{P}} \\
    &= \cos(\theta) \ket{0...0} + i \sin(\theta) \kett{O}_{\mathcal{P}}.
\end{align*}
The state $\kett{\sigma(O, \theta)}_P$ is therefore a linear combination of $s+1$ computational basis states if $O$ is a linear combination of $s$ Pauli terms, which can be prepared with $\bigo{ns}$ resources using the general state-preparation methods \cite{gleinig2021efficient,zhang2022quantum,li2024nearly,vilmart2025resource} as described in \cite{chiew2026quantum}. The resources required to implement $M^\mathcal{U}_{\mathcal{Q}}$ and $R_{{\mathcal{P}} \rightarrow \mathcal{Q}}$ (or $R_{\mathcal{Q} \rightarrow {\mathcal{P}}}$) then follow from \cite{chiew2026quantum}.
\end{proof}

When $O^2 = \mathbb{I}$, note that we can further write:
\begin{align*}
    \ket{\sigma(O, \theta)}_\mathcal{C} = \kett{e^{-i O \theta}}_\mathcal{C} &= \left(e^{-i O \theta} \otimes \mathbb{I} \right) \kett{\mathbb{\mathbb{I}}}_\mathcal{C} \\
    &= \left(\mathbb{I} \otimes e^{-i O^T \theta} \right) \kett{\mathbb{\mathbb{I}}}_\mathcal{C},
\end{align*}
allowing preparation by the unitary generated by $O$ to $2n$ Bell pairs, which may be more convenient on certain analog platforms. $\kett{O}_\mathcal{Q}$ is also easier to prepare for certain choices of $O$; for instance, taking it to be the total magnetization in the z-direction $\sum_i Z_i/\sqrt{n}$, it suffices to implement local rotations in the z-axis $(\bigotimes_{i=1}^n e^{i Z_i \theta/\sqrt{n}}) \otimes \mathbb{I}$ in one half of the doubled system, which is easier than preparing the state $\kett{\sum_i Z_i/\sqrt{n}}_P = \sum_i \kett{Z_i}_P/\sqrt{n} = (\ket{10...0} + \ket{001...0} + ... + \ket{00...10})/\sqrt{n}$. We also remark that the probability distribution of $\kett{O}_{\mathcal{P}}$ (sampled in the computational basis) can be recovered from $\kett{\sigma(O)}_{\mathcal{P}}$ by discarding instances of $\kett{\mathbb{I}}_{\mathcal{P}}$, which occur with frequency $\cos^2(\theta)$.

\vspace{5pt}

We further observe that $\ket{\sigma(O, \pi/4)}_{\mathcal{P}}$ and $\ket{\sigma(O, \pi/4)}_\mathcal{C}$ for $O \in {\mathcal{P}}$ is a stabilizer state that can be efficiently prepared with a Clifford unitary decomposable as $\bigo{n}$ single/two-qubit elementary Clifford gates: 
\begin{lemma} \label{lemma:shifted_stab_state}
    $\kett{\sigma(P_k, \pi/4)}_{\mathcal{P}}$ and $\kett{\sigma(P_k, \pi/4)}_\mathcal{C}$, where $P_k \in {\mathcal{P}}$, are stabilizer states, and can be prepared with a unitary with $\bigo{n}$ elementary Clifford gates.
\end{lemma}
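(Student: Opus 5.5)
We build explicit Clifford circuits, handling the computational-basis encoding first and then transferring the result to the Pauli basis. For $\mathcal{C}$ we use the identity noted just above the lemma: since $P_k^2=\mathbb{I}$,
\begin{equation*}
e^{-i\pi P_k/4}=\tfrac{1}{\sqrt{2}}\left(\mathbb{I}-iP_k\right),
\end{equation*}
so $\kett{\sigma(P_k,\pi/4)}_\mathcal{C}$ equals $(e^{\pm i\pi P_k/4}\otimes\mathbb{I})\kett{\mathbb{I}}_\mathcal{C}$ up to the sign convention of the phase. The sign can be absorbed by using $P_k\to -P_k$, or equivalently the inverse rotation. The Bell-pair state $\kett{\mathbb{I}}_\mathcal{C}$ is a stabilizer state prepared with $n$ Hadamards and $n$ CNOTs. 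The rotation $e^{\pm i\pi P_k/4}$ by a Pauli at angle $\pi/4$ is a Clifford unitary: it maps each Pauli $Q$ to $Q$ if $[Q,P_k]=0$, and to $\pm iP_kQ$ (again a Pauli) if $\{Q,P_k\}=0$.

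Next we bound the gate count for this rotation. Conjugate $P_k$ by single-qubit Cliffords ($H$ or $SH$) on its support to turn it into $Z^{\otimes w}$. A CNOT ladder of length $w-1\le n-1$ then maps $Z^{\otimes w}$ to a single $Z$. Apply $e^{\pm i\pi Z/4}$, which is an $S$ or $S^\dagger$ gate up to global phase, and undo the ladder and the basis changes. The total is $\bigo{n}$ elementary Clifford gates, so $\kett{\sigma(P_k,\pi/4)}_\mathcal{C}$ is a stabilizer state prepared with $\bigo{n}$ Cliffords.

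For the Pauli basis, we observe from the proof of Proposition~\ref{theorem:shift_vect_prep} that
\begin{equation*}
\kett{\sigma(P_k,\pi/4)}_{\mathcal{P}}=\tfrac{1}{\sqrt{2}}\left(\ket{0\dots0}+i\ket{k}\right),
\end{equation*}
where $\ket{k}$ is the $2n$-bit label of $P_k$ and $\ket{0\dots0}$ is the label of the identity. The phase depends on the convention for $\tr(P_kO)$ and is harmless. This is a superposition of two computational basis states, which is a standard stabilizer state. To prepare it, pick one bit $j$ with $k_j=1$ and apply $H$ on qubit $j$. Then apply CNOTs from qubit $j$ to every other qubit with $k_{j'}=1$, which gives $(\ket{0}+\ket{k})/\sqrt{2}$. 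Finally apply an $S$ gate on qubit $j$ to produce the relative phase $i$. This uses at most $2n+1$ gates.

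Alternatively, we can use that $\kett{\cdot}_{\mathcal{P}}$ and $\kett{\cdot}_\mathcal{C}$ are related by the Bell basis change, which is a depth-one Clifford circuit of $n$ CNOTs and $n$ Hadamards on the pairs $(L_i,R_i)$. This transfers stabilizerness and the $\bigo{n}$ count from one encoding to the other. The only delicate point is bookkeeping the normalization and the phase $i$ against the convention $\tr(Q_kO)$ in Eq.~(\ref{eq:vect_map}). Any discrepancy is a diagonal phase $S$ or $S^\dagger$ on a single qubit, or a Pauli correction, neither of which affects the claims.
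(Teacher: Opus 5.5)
Your proof is correct. For the $\mathcal{P}$ encoding it is the paper's argument. The paper also writes the state as $\frac{1}{\sqrt{2}}(\ket{0\cdots0}+\phi_k\ket{b})$, applies $H$ on one qubit of the support, and spreads it with CNOTs (a chain $(x_1,x_2),(x_2,x_3),\ldots$ instead of your fan-out). It then fixes the phase with one of $S,\mathbb{I},S^\dagger,Z$.

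The real difference is the $\mathcal{C}$ encoding. The paper obtains it by applying the Clifford $U_{\text{Bell}}$ to the $\mathcal{P}$-state. You instead prepare it directly as $(e^{i\pi P_k/4}\otimes\mathbb{I})\kett{\mathbb{I}}$ and compile the $\pi/4$ Pauli rotation with local basis changes and a CNOT ladder. (The sign is in fact $+$, since $\cos\theta\,\mathbb{I}+i\sin\theta\,P_k=e^{i\theta P_k}$.)

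The paper's route is shorter, but it rests on identifying $\kett{\cdot}_{\mathcal{P}}$ with $U_{\text{Bell}}\kett{\cdot}_{\mathcal{C}}$. That identification is where its phase $\phi_k=i(-i)^{n_Y}$ comes from, because $\kett{Y}_{\mathcal{C}}=-i\ket{\Psi^-}$. If Eq.~(\ref{eq:vect_map}) is read literally with Hermitian Paulis, the phase is exactly $i$, as you found. In that reading the map between the encodings is $U_{\text{Bell}}$ followed by a phase $i$ on the $Y$-label of each pair, i.e.\ a controlled-$S$. That map is not Clifford, and it is harmless here only because the states are two-term superpositions.

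Your direct treatment of both encodings avoids this convention issue. It also yields the natural ``Bell pairs plus one Pauli rotation'' circuit, which the paper mentions only in passing before the lemma. By the same token, your closing claim that the Bell change ``transfers stabilizerness'' holds only for these two-term states, not in general; your last sentence effectively makes that restriction already.
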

\begin{proof}
We show that $\kett{\sigma(P_k, \pi/4)}_{\mathcal{P}} = \kett{\mathbb{I} + i P_k}_\mathcal{Q}/\sqrt{2}$ are stabilizer states, which immediately imply the same for $\kett{\sigma(P_k, \pi/4)}_\mathcal{C}$ since they are related by the Clifford Bell basis transformation $U_{\text{Bell}} = \bigotimes_{i=1}^n (\text{H}_{i_L} \cdot \text{CNOT}_{i_L, i_R})$, where $\text{H}_{i_L}$ is the Hadamard gate acting on $\mathcal{H}_L^i$ while $\text{CNOT}_{i_L, i_R}$ is the CNOT gate with control on the qubit in $\mathcal{H}_L^i$ and target on $\mathcal{H}_R^i$.

Let $\kett{P_k}_{\mathcal{P}} = \ket{b_0...b_{n-1}}$, where $b_i \in \{0,1\}$. The number of $Y$ Paulis present in the Pauli string $P_k$ is then given by $n_Y = \sum_{j=0}^{n-1} b_{2j} b_{2j+1}$. The state takes the form:
\begin{equation}
    \kett{\sigma(P_k, \pi/4)}_{\mathcal{P}} = \frac{1}{\sqrt{2}}\left(\ket{0...0} + \phi_k \ket{b_0...b_{2n-1}} \right),
\end{equation}
where $\phi_k = i (-i)^{n_Y} \in \{1,-i,-1,i\}$.

The Clifford circuit that prepares it from $\ket{0...0}$ is as follows. Let $(x_1,...,x_l)$ denote the indices of the $l \in \{0,..., 2n-1\}$ bits where $b_{x_i} = 1$. Apply the Hadamard gate to the $x_1$-th qubit, followed by a sequence of $l-1$ CNOT gates with controls and targets $(x_1, x_2), (x_2, x_3), ..., (x_{l-1} , x_l)$, yielding the state $\frac{1}{\sqrt{2}}(\ket{0...0} + \ket{b_0 ... b_{2n-1}})$. Finally, to account for the phase $\phi_k$, it suffices to apply a single qubit Clifford to the $x_1$-th qubit, depending on the value of $n_Y$. For $n_Y \! \pmod{4} = 0, 1, 2, 3$, apply $S, \mathbb{I}, S^\dagger, Z$ respectively.
\end{proof}

For example, the circuit that prepares $\kett{\sigma(ZY, \pi/4)}_{\mathcal{P}}$ is:
\begin{center}
    \begin{quantikz}[row sep={0.8cm,between origins}, column sep=0.6cm]
    \lstick{$q_0$} & \gate{H}  & \ctrl{2} & \qw      &  \qw \\
    \lstick{$q_1$} & \qw            & \qw      & \qw       & \qw \\
    \lstick{$q_2$} & \qw            & \targ{}  & \ctrl{1}  & \qw \\
    \lstick{$q_3$} & \qw            & \qw      & \targ{}   & \qw
    \end{quantikz} $~ ~ = \frac{1}{\sqrt{2}}(\ket{0000} + \ket{1011}) = \kett{\sigma(ZY, \pi/4)}_{\mathcal{P}}$.
\end{center}

\subsection{Reduction of operator overlaps to fidelities} \label{app:overlap_fidelity_red}

We begin with the following fact:
\begin{lemma} \label{lemma:fid_to_overlap}
    Let $\hat{p}_\sigma$ be an $\epsilon$-accurate estimator of $p(\sigma(O),\sigma(O'))$ taking $N=\bigo{1/\epsilon^2}$ samples, i.e.:
    \begin{equation} \label{eq:epsilon_p_estimate}
        \left| \hat{p}_\sigma - p(\sigma(O),\sigma(O')) \right| \leq \epsilon.
    \end{equation}
    Then $\hat{c} \equiv 2\sqrt{\hat{p}_\sigma} -1$ is an $\epsilon$-accurate estimator of $c(O,O')$ taking $N=\bigo{1/\epsilon^4}$ samples.
\end{lemma}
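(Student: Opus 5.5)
The plan is to compute the fidelity $p(\sigma(O),\sigma(O')) \equiv |\langle\!\langle \sigma(O(t)) | \sigma(O'(t'))\rangle\!\rangle|^2$ exactly in terms of $c(O,O')$, invert this relation, and then propagate the estimation error through the square root. I take $\kett{\sigma(O(t))} = \frac{1}{\sqrt{2}}(\kett{\mathbb{I}} + i\kett{O(t)})$ with $O,O'$ non-identity Pauli operators, as assumed in this section.

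First I expand the overlap by linearity:
\[
\langle\!\langle \sigma(O(t)) | \sigma(O'(t'))\rangle\!\rangle = \tfrac{1}{2}\Big(\langle\!\langle \mathbb{I}|\mathbb{I}\rangle\!\rangle + i\langle\!\langle \mathbb{I}|O'(t')\rangle\!\rangle - i\langle\!\langle O(t)|\mathbb{I}\rangle\!\rangle + \langle\!\langle O(t)|O'(t')\rangle\!\rangle\Big).
\]
The cross terms are $\tr(O(t))/2^n$ and $\tr(O'(t'))/2^n$. Both vanish, because unitary Heisenberg evolution preserves tracelessness; equivalently, by unitality the identity component is untouched. The last term equals $c(O,O') = \tr(O(t)O'(t'))/2^n$. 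This quantity is real because both operators are Hermitian, and it lies in $[-1,1]$ by Cauchy--Schwarz, since $\tr(O(t)^2)/2^n = 1$. Hence the overlap is $(1+c)/2 \ge 0$, which gives
\[
p = \Big(\tfrac{1+c}{2}\Big)^2, \qquad c = 2\sqrt{p}-1 .
\]
The nonnegativity of $(1+c)/2$ is exactly what lets the fidelity retain the sign of $c$. I would stress this point, since it is the whole purpose of the $i\mathbb{I}$ shift.

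Next comes error propagation. I replace $\hat p_\sigma$ by its clipping to $[0,1]$, which can only reduce the error in Eq.~(\ref{eq:epsilon_p_estimate}). I then use $|\sqrt{a}-\sqrt{b}| \le \sqrt{|a-b|}$ for $a,b\ge0$, which gives
\[
|\hat c - c| = 2|\sqrt{\hat p_\sigma} - \sqrt{p}| \le 2\sqrt{|\hat p_\sigma - p|}.
\]
So it suffices to estimate $p$ to accuracy $\epsilon' = \epsilon^2/4$. By the hypothesis, this costs $N = \bigo{1/\epsilon'^2} = \bigo{1/\epsilon^4}$ samples, which is the claim.

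The only delicate step is the square-root error propagation. The bound $2|\hat p - p|/(\sqrt{\hat p}+\sqrt p)$ degenerates near $p=0$, that is, near $c=-1$. The $\sqrt{|a-b|}$ inequality handles this uniformly, and it is precisely what causes the loss from $1/\epsilon^2$ to $1/\epsilon^4$. A second point to handle explicitly is the vanishing of the cross terms, which relies on $O$ being traceless and on the evolution being unital.
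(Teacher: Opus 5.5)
Your proof is correct and follows essentially the same route as the paper. It uses tracelessness to reduce the shifted-state fidelity to $p=\bigl((1+c)/2\bigr)^2$, inverts to $c=2\sqrt{p}-1$, and propagates the error through $|\sqrt{x}-\sqrt{y}|\le\sqrt{|x-y|}$ with $\epsilon\to\epsilon^2/4$. Clipping $\hat p_\sigma$ to $[0,1]$ and checking $(1+c)/2\ge 0$ are small rigor points the paper leaves implicit.
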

\begin{proof}
The overlap $c(O,O')$ can be written as a function of the fidelity between the shifted operators $p(\sigma(O,\theta), \sigma(O',\theta))$, as:
\begin{align*}
p(\sigma(O, \theta), \sigma(O', \theta)) &= \abs{c(\sigma(O, \theta), \sigma(O', \theta))}^2 \\ 
&= \abs{\tr\left((\cos(\theta)\mathbb{I} + i\sin(\theta) O)^\dagger(\cos(\theta)\mathbb{I} + i\sin(\theta) O')\right)/2^n}^2 \\
&= \left( \cos^2(\theta) + \sin^2(\theta) c(O,O') \right)^2,
\end{align*}
since $O$ and $O'$ are traceless. The parameter $\theta$ is chosen so that they are one-to-one functions; a possible choice is $\theta = \pi/4$, which implies that $c(O,O') = -1 \iff p(\sigma(O),\sigma(O')) = 0$ and $c(O,O') = 1 \iff p(\sigma(O),\sigma(O')) = 1$. 

Making use of $\abs{\sqrt{x}-\sqrt{y}} \leq \sqrt{\abs{x-y}}$ yields:
\begin{equation*}
    \abs{\hat{c}-c(O,O')} \leq 2 \abs{\sqrt{\hat{p}_\sigma} - \sqrt{p(\sigma(O),\sigma(O'))}} \leq 2 \sqrt{\abs{\hat{p}_\sigma - p(\sigma(O),\sigma(O'))}} \leq 2 \sqrt{\epsilon},
\end{equation*}
which implies that $\hat{c}$ is an $\epsilon$-accurate estimator of $c(O,O')$ taking $N=\bigo{1/\epsilon^4}$ samples.
\end{proof}

As a result of Lemma~\ref{lemma:fid_to_overlap}, operator overlaps $c(O,O')$ can be estimated through fidelity estimation algorithms such as the SWAP test or (global) Clifford classical shadows (Theorem~\ref{theorem:huang_predicting_clifford}), which generally require less stringent access to $O$ and $O'$ (they only require access to samples of $\kett{\sigma(O)},\kett{\sigma(O')}$), at the expense of a quadratically worse dependence on $\epsilon$. Applying the estimator $\hat{c}$ on the set of fidelities obtained via the Clifford classical shadow of $O$ straightforwardly proves the query complexity upper bound of Theorem~\ref{theorem:simul_2pc_shadows}. \\

In addition, due to Lemma~\ref{lemma:shifted_stab_state}, Direct Fidelity Estimation (DFE) \cite{flammia2011direct} can be employed to compute $c(P_k,O)$, which only requires $N=\bigo{\log(1/\delta)/\epsilon^4}$ samples of the state $\kett{\sigma(O)}_{\mathcal{P}}$. That is, express the fidelity as:
\begin{alignb}
\label{app:dfe_eq}
p(P_k, O)
&= \tr\left(
    \kett{\sigma(P_k)}_{\mathcal{P}}
    \leftindex_{{\mathcal{P}}}{\bbra{\sigma(P_k)}}
    \,
    \kett{\sigma(O)}_{\mathcal{P}}
    \leftindex_{{\mathcal{P}}}{\bbra{\sigma(O)}}
\right)
\\
&= \frac{1}{4^n}
\sum_{i=1}^{4^n}
\langle Q_i \rangle_{\kett{\sigma(P_k)}_{\mathcal{P}}}
\langle Q_i \rangle_{\kett{\sigma(O)}_{\mathcal{P}}}
\\
&= \mathop{\mathbb{E}}_{i \sim \pi(\kett{\sigma(P_k)})}
\left[
    \frac{
        \langle Q_i \rangle_{\kett{\sigma(O)}_{\mathcal{P}}}
    }{
        \langle Q_i \rangle_{\kett{\sigma(P_k)}_{\mathcal{P}}}
    }
\right].
\end{alignb}
where $\pi(\kett{\sigma(P_k)}_{\mathcal{P}}) \equiv \langle Q_i \rangle^2_{\kett{\sigma(P_k)}_{\mathcal{P}}}/4^n$ is the distribution of the state $\kett{\sigma(P_k)}_{\mathcal{P}}$ in the operator basis $\mathcal{Q}$ (usually also taken to be the Pauli basis). It can then be estimated via Monte Carlo by sampling from $\pi(\kett{\sigma(P_k)}_{\mathcal{P}})$, evaluating the fraction in Eq.~\ref{app:dfe_eq}, and taking their empirical average. By virtue of $\kett{\sigma(P_k)}_{\mathcal{P}}$ being a stabilizer state, both sampling and the evaluation of expectation values (of the denominator) can be efficiently achieved classically. The total number of samples of $\pi(\kett{\sigma(O)}_{\mathcal{P}})$ needed to estimate $p(\sigma(P_k), \sigma(O))$ to precision $\epsilon$ and success probability $1-\delta$ is $N=\bigo{\log(1/\delta)/\epsilon^2}$ \cite{flammia2011direct}. This estimate can then be used to compute $c(P_k,O)$ via Lemma~\ref{lemma:fid_to_overlap}.

Finally, return to the task of Theorem~\ref{theorem:simul_2pc_shadows} for simultaneously estimating $M$ two-point correlators $\{\tr(P_kO(t))/2^n\}_{k=1}^{M}$. Naive term-by-term estimation takes $\bigo{M \log(M/\delta)/\epsilon^2}$ samples due to the union bound. Alternatively, classical shadows with random global Cliffords (Theorem~\ref{theorem:huang_predicting_clifford} from \cite{huang2020predicting}) -- which is able to simultaneously estimate fidelities with respect to $M$ target pure states at once using $\bigo{\log(M/\delta)/\epsilon^2}$ runs -- can be employed for this task to improve the dependence on $M$. That is, choose the unitary ensemble $\mathcal{U}$ to be the $2n$-qubit Clifford group $\text{Cl}(4^n)$. Random unitaries $U \in \mathcal{U}$ are then applied to samples of $\kett{\sigma(O(t))}_\mathcal{Q}$ before measurement in the computational basis, resulting in measurement results that can be efficiently post-processed to generate a classical shadow of $\kett{\sigma(O(t))}_\mathcal{Q}$. Applying the reduction discussed in the previous section to the obtained fidelities yields the guarantee of Theorem~\ref{theorem:simul_2pc_shadows}. An analogous procedure holds for general choices of $\theta$, since $\kett{\sigma(P_k, \theta)}_{\mathcal{Q}}$ is simply a linear combination of two stabilizer states, which can be efficiently represented classically.

This result implies that a modest amount of resources (at most $\bigo{n}$ in number of samples and $\bigo{\log(n)}$ in additional depth for random measurements) suffice to construct a classical description of $O(t)$ that contains accurate information on all two-point correlators.

\section{Exponential separation for learning diagonal OTOCs with vs without ancillas} \label{app:proof_ancilla}

In this appendix, we are concerned with the following problem:\\

\noindent \textbf{All diagonal OTOC Estimation Problem:} Estimate $\left \{\tr(O \mathcal{E}(Q) O \mathcal{E}(Q))/2^n \right \}_{Q \in {\mathcal{P}}_n}$ up to additive error $\epsilon$ with success probability $1-\delta$. \\

We will prove Theorem~\ref{theorem:expsep_diag}.1 and \ref{theorem:expsep_diag}.2, which shows an exponential separation in query complexity for this problem, for protocols with vs without ancillas, but can be adaptive and possess quantum memory. Theorem~\ref{theorem:expsep_diag}.3, the case without adaptivity and quantum memory is separately proven in Appendix~\ref{app:ancilla_non_adapt}.

Our (brief) proof of the lower bound is based on the observation that the hard instances for our problem are also hard instances of the Pauli channel eigenvalue estimation problem, defined in e.g. \cite{chen2022quantum,chen2024tight,chen2025efficient}. This immediately allows us to make use of existing bounds for the Pauli channel eigenvalue estimation problem, based on reducing it to a channel discrimination problem. 

\begin{lemma}[Reduction to distinguishing problem]
Fix $0<\alpha\leq 1/3$. Any protocol that can solve the All Diagonal OTOC Estimation Problem to additive error $\epsilon<\alpha/2$ and success probability at least $2/3$ by querying channels of the form ${\mathcal{O}_{\mathcal{E}, O}} \equiv \mathcal{E}^\dagger \circ \tilde{O} \circ \mathcal{E}$ (where $\mathcal{E}$ is a doubly-stochastic channel and $\tilde{O}$ is conjugation by the Pauli operator $O$) can also solve, with success probability at least $2/3$, the two-hypothesis distinguishing problem between the following two scenarios, which occur with equal probability:
\begin{enumerate}
    \item The given channel corresponds to the completely depolarizing channel
    \(
        \Lambda_0(\cdot) \equiv \frac{1}{2^n} \mathbb{I} \tr(\cdot).
    \)
    
    \item The given channel is sampled uniformly from the channels $\{\Lambda_i\}_{i=1}^{4^n-1}$, where
    \(
        \Lambda_i(\cdot)
        \equiv
        \frac{1}{2^n}
        \left(
            \mathbb{I}\tr(\cdot)
            +
            \alpha Q_i \tr(Q_i\,\cdot)
        \right),
        Q_i \in \mathcal{P}_n \backslash \{\mathbb{I}^{\otimes n}\}.
    \)
\end{enumerate}
\end{lemma}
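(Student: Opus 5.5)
The reduction runs in two steps. First we check that every channel in the two hypotheses is an admissible input, i.e. can be written as $\mathcal{E}^\dagger\circ\tilde O\circ\mathcal{E}$ with $\mathcal{E}$ doubly stochastic and $O$ Pauli. Then we give a decision rule that turns the estimator's output into a correct guess whenever the estimator succeeds.

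For admissibility, Lemma~\ref{lemma:hard_instances_echo_form}(1) shows that $\Lambda_0=\Lambda_0^*\circ\widetilde P\circ\Lambda_0$ for any Pauli $P$, and $\Lambda_0$ is doubly stochastic. Lemma~\ref{lemma:hard_instances_echo_form}(2) with $s=1$ gives $\Lambda_i=\mathcal{E}_i^\dagger\circ\widetilde R_{i,1}\circ\mathcal{E}_i$, where $R_{i,1}$ is any Pauli commuting with $Q_i$ (for instance $Q_i$ itself or $\mathbb{I}$). The channel $\mathcal{E}_i$ is doubly stochastic for $0\le\alpha\le1$, so in particular for $\alpha\le 1/3$. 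The OTOC protocol therefore produces valid output on every channel the distinguishing problem can present. One subtlety: the hidden Pauli $O$ differs between instances ($P$ arbitrary versus $R_{i,1}$). This causes no trouble, since the protocol is only required to succeed for every admissible pair $(\mathcal{E},O)$ and only ever accesses the composite channel $\mathcal{O}_{\mathcal{E},O}$.

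Next we compute the target quantities in each hypothesis using $\tr(\mathcal{E}(Q)O\mathcal{E}(Q)O)/2^n=\tr(Q\,\mathcal{O}_{\mathcal{E},O}(Q))/2^n$, which follows from the cyclicity identity already used in Eq.~(\ref{eq:otoc_to_distinguish}). For $\Lambda_0$, every non-identity $Q$ gives $\tr(Q\Lambda_0(Q))/2^n=0$. For $\Lambda_i$, we get $\tr(Q\Lambda_i(Q))/2^n=\alpha\,\delta_{Q,Q_i}$ for $Q\neq\mathbb{I}$. (The $Q=\mathbb{I}$ entry equals $1$ in both cases and is ignored.)

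The decision rule is: run the estimator and output ``scenario 2'' if some non-identity $Q$ has estimate at least $\alpha/2$, otherwise output ``scenario 1''. On the success event of the estimator, whose probability is at least $2/3$, all estimates lie within $\epsilon<\alpha/2$ of their true values. Under $\Lambda_0$ every estimate is then $<\alpha/2$. Under $\Lambda_i$ the $Q_i$ estimate is $>\alpha-\alpha/2=\alpha/2$. Hence the guess is correct whenever the estimator succeeds, in each scenario. Averaging over the equal priors and over the uniform choice of $i$ gives overall success probability at least $2/3$.

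The estimator is used as a black box with the same queries and the same resources (adaptivity, memory, ancillas), so the reduction preserves the access model. This is what lets the distinguishing lower bounds transfer later. The only step needing care is the admissibility check, and Lemma~\ref{lemma:hard_instances_echo_form} already settles it. The restriction $\alpha\le1/3$ is stronger than needed here; presumably it is inherited from the Pauli-eigenvalue lower bounds that will be invoked afterwards.
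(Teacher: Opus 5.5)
Your proposal is correct and follows the paper's proof. Both use Lemma~\ref{lemma:hard_instances_echo_form} to show that $\Lambda_0$ and each $\Lambda_i$ are admissible Loschmidt-echo channels, compute the diagonal OTOCs as $0$ and $\alpha\,\delta_{Q,Q_i}$, and threshold the estimates at $\alpha/2$, so the guess is correct whenever the estimator succeeds. The only cosmetic difference is that you threshold the signed estimate rather than $|\widehat{C}_j|$; this works equally well because the nonzero OTOC equals $+\alpha$.
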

\begin{proof}
By Lemma~\ref{lemma:hard_instances_echo_form}, each channel $\Lambda_i$, for $i=0,\ldots,4^n-1$, can be written in the form
\[
    \Lambda_i
    \equiv
    \mathcal{E}_i^\dagger \circ \tilde{P}_i \circ \mathcal{E}_i.
\]
Subsequently, for the input $\Lambda_i$, the resulting set of diagonal OTOCs associated with the nonidentity Pauli operators $\{Q_j\}_{j=1}^{4^n-1}$ evaluates to
\begin{equation*}
    \frac{1}{2^n}
    \tr\!\left(
        P_i\mathcal{E}_i(Q_j)P_i\mathcal{E}_i(Q_j)
    \right)
    =
    \tr\!\left(
        \rho_{\Lambda_i}Q_j\otimes Q_j^T
    \right)
    =
    \frac{1}{2^n}\tr\!\left(Q_j\Lambda_i(Q_j)\right)
    =
    \alpha\delta_{i,j}.
\end{equation*}
For $\Lambda_0$, these quantities are equal to $0$ for every $j$. On the other hand, for $\Lambda_i$ with $i\neq 0$, it is $\alpha\delta_{i,j}$.

Let $\widehat{C}_j$ denote the estimates returned by the protocol. We output the second hypothesis if
\[
    \max_{1\leq j\leq 4^n-1}|\widehat{C}_j|
    >
    \frac{\alpha}{2},
\]
and the first hypothesis otherwise. Whenever all the estimates have additive error at most $\epsilon$, this rule correctly distinguishes the two cases. Indeed, if the channel is $\Lambda_0$, then
\[
    \max_j|\widehat{C}_j|
    \leq
    \epsilon
    <
    \frac{\alpha}{2},
\]
whereas, if the channel is $\Lambda_i$ with $i\neq 0$, then
\[
    |\widehat{C}_i|
    \geq
    \alpha-\epsilon
    >
    \frac{\alpha}{2}.
\]
Therefore, the distinguishing procedure succeeds whenever the All Diagonal OTOC Estimation protocol succeeds, and hence with probability at least $2/3$. This validates our claim.
\end{proof}

The hardness of the above two-hypothesis distinguishing problem is given by the following known result, which we quote without proof.

\begin{theorem}[adapted from \cite{chen2025efficient}, Theorem 7] \label{theorem:chen_lowerbound}
Let $0 < \epsilon \leq 1$. Consider any protocol without ancillas, but can possibly be adaptive and use quantum memory\footnote{Note that the terminology `quantum memory' is used in \cite{chen2025efficient} to refer to access to ancillas, while we use it to refer to the ability to query the channel multiple times per coherent run (which \cite{chen2025efficient} calls `concatenation').}, based on querying unknown Pauli channels, that can distinguish between the following scenarios to at least large constant probability: 
\begin{enumerate}
    \item The Pauli channel corresponds to the completely depolarizing channel $\Lambda_0(\cdot) \equiv \frac{1}{2^n} \mathbb{I} \tr(\cdot)$.
    
    \item The Pauli channel is sampled uniformly from one of the channels $\{\Lambda_{i}\}_{i=1}^{4^n-1}$, where $\Lambda_{i} (\cdot) \equiv \frac{1}{2^n}(\mathbb{I} \tr(\cdot) + \alpha Q_i \tr(Q_i ~\cdot))$, $Q_i \in {\mathcal{P}}_n \backslash \mathbb{I}^{\otimes n}$, where $\alpha \leq 1/3$.
\end{enumerate}
The protocol must take at least $\Omega(2^n/\epsilon^2)$ runs.
\end{theorem}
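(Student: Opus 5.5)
The plan is to use Le Cam's two-point method inside the learning-tree formalism of \cite{chen2022exponential}, with the mixture $\frac{1}{4^n-1}\sum_i \Lambda_i$ playing the role of the alternative hypothesis. Any adaptive protocol making $T$ total queries is a tree. Each node carries an $n$-qubit input state, with no ancilla. It also carries a sequence of interleaved processing channels between queries (the concatenation, or ``quantum memory'' in our terminology) and a final POVM whose outcome selects the child. Let $p_0(\ell)$ and $p_i(\ell)$ be the leaf distributions under $\Lambda_0$ and $\Lambda_i$. The success probability is at most $\tfrac12+\tfrac12\mathrm{TV}(p_0,\mathbb{E}_i p_i)$. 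So it suffices to prove the one-sided likelihood-ratio bound $\mathbb{E}_i p_i(\ell)\geq (1-\eta)\,p_0(\ell)$ for most leaves. This gives $\mathrm{TV}\leq \eta+o(1)$, and we aim for $\eta=O(T\alpha^2/2^n)$. Setting $\alpha\asymp\epsilon$, which is forced by the reduction lemma with $\epsilon<\alpha/2$, then yields $T=\Omega(2^n/\epsilon^2)$.

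\textbf{Structure of the channels.} I write $\Lambda_i=\Lambda_0+\alpha\Delta_i$, where $\Delta_i(\rho)=Q_i\tr(Q_i\rho)/2^n$. The key point is that $\Lambda_0$ discards its input. Hence, in the product expansion of a run with $t$ queries, any term containing a $\Lambda_0$ factor factorizes at that point: the state is reset to $\mathbb{I}/2^n$. Each run probability therefore splits into a $\Lambda_0$ baseline plus corrections. Each correction carries a chain of consecutive $\Delta_i$ factors, and each factor contributes the scalar $\alpha\tr(Q_i\sigma)$ for the intermediate (unnormalized) state $\sigma$ that precedes it.

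\textbf{First and second moments over $i$.} The averaging uses the identity
\[
\sum_{Q\in\mathcal{P}_n}\tr(Q\sigma)^2 = 2^n\tr(\sigma^2)\leq 2^n\,(\tr\sigma)^2 .
\]
This is exactly where the no-ancilla assumption enters: $\sigma$ is an $n$-qubit operator, so its purity is not amplified. (With $n$ ancillas, $\sigma$ could be half of a Bell state and the sum would be $4^n$.) Consequently $\mathbb{E}_i\,\tr(Q_i\sigma)^2\lesssim 2^{-n}$.

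\textbf{Per-leaf bound.} I will write the likelihood ratio $p_i(\ell)/p_0(\ell)$ along the path as a product over runs of factors $1+\alpha X_{i,r}+\alpha^2(\cdots)$. I then lower-bound its $i$-average using $\prod_r(1+x_r)\geq 1+\sum_r x_r$ for $x_r\geq -1$ in the spirit of \cite{chen2022exponential}. The linear terms average to zero only approximately, because $\sum_{Q\neq\mathbb{I}}\tr(Q\sigma)=2^n\,\sigma_{\rm something}-\tr\sigma$ is a single-trace quantity. So I expect to use a Cauchy--Schwarz or AM--GM step that controls these linear terms by the second moment. That bound gives a loss of $O(\alpha^2/2^n)$ per query, and summing over $T$ queries completes the argument.

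\textbf{Main obstacle.} The hardest step is the concatenated runs with arbitrary interleaved channels. Chains of several consecutive $\Delta_i$ factors produce terms of the form $\alpha^m\prod_k\tr(Q_i\sigma_k)$, where the $\sigma_k$ depend on earlier processing, and these are higher-order correlations in the same $Q_i$. I would first try to show that these chains are dominated by $\alpha^2$ times a two-point term. The tool is $|\tr(Q_i\sigma)|\leq\tr|\sigma|$, which reduces a chain to its two endpoint factors, to which the purity identity then applies. I would then proceed by induction on the number of queries in a run, so that a run of length $t$ costs at most $O(t\alpha^2/2^n)$ in the likelihood ratio (using $\alpha\leq 1/3$ to sum the geometric tail).
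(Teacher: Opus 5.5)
The paper does not prove this statement; it quotes it from \cite{chen2025efficient}, Theorem~7. Your framework is sound: the learning tree with Le Cam, the reset property of $\Lambda_0$ (which confines all surviving $\Delta_i$ factors of a run to one final chain), and the no-ancilla purity bound $\mathbb{E}_i\tr(Q_i\sigma)^2\le 1/(2^n+1)$. The per-leaf step, however, does not close.

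First, $\prod_r(1+x_r)\ge 1+\sum_r x_r$ is false for mixed signs (take $x_1=1$, $x_2=-1/2$). The signs here are mixed, because the instances carry no random sign. More seriously, the first-order term is not single-trace. Each $\Delta_i$ contains $Q_i$ twice, so the $O(\alpha)$ part of a run's likelihood ratio is $\alpha X_i$ with $X_i=\tr(Q_i\sigma)\tr(Q_i\tilde E)/\tr(\tilde E)$, where $\tilde E$ is the POVM element pulled back through the last interleaved channel. Its average $\frac{1}{4^n-1}\bigl(2^n\tr(\sigma\tilde E)/\tr(\tilde E)-1\bigr)$ lies in $[-\frac{1}{4^n-1},\frac{1}{2^n+1}]$. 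A Cauchy--Schwarz or AM--GM bound is blind to this asymmetry and only gives $|\alpha\,\mathbb{E}_iX_i|\le\alpha/(2^n+1)$. That is a loss of order $\alpha/2^n$ per run, hence only $\Omega(2^n/\epsilon)$. A first-order term cannot be dominated by a second moment without using its sign. Even using the exact lower end (which comes from $\tr(\sigma\tilde E)\ge 0$) inside a Jensen argument leaves a residual $\alpha/(4^n-1)$ per run, which is harmless only for $\alpha\gtrsim 2^{-n}$.

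Second, the statement counts \emph{runs}, and a run may contain arbitrarily many queries. A per-run cost $O(t\alpha^2/2^n)$ summed over $T$ queries only lower-bounds the number of queries, which is weaker.

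Both issues disappear if you average each run's outcome under $p_0$ rather than bounding leaves worst-case. Write the deviation of outcome $o$ of a $t$-query run as $\delta_o(i)=\sum_{m=1}^{t}\alpha^m\,\frac{\tr(\tilde E_oQ_i)}{\tr(\tilde E_o)}\,\tr(Q_i\sigma_{(m)})\,\kappa_m(i)$. Here each $\sigma_{(m)}$ is a normalized $n$-qubit state (the processed input, or an interleaved channel applied to $\mathbb{I}/2^n$), and $|\kappa_m(i)|\le 1$ is a product of diagonal PTM entries of the interleaved channels. Then:
\begin{itemize}
\item $|\delta_o(i)|\le\alpha/(1-\alpha)\le 1/2$.
\item Cauchy--Schwarz over $m$ together with the purity bound gives $\mathbb{E}_i\delta_o(i)^2\le(\alpha/(1-\alpha))^2/(2^n+1)$, uniformly in $t$.
\item Since $\sum_o p_0(o)\delta_o(i)=0$ exactly for each fixed $i$, the inequality $\log(1+x)\ge x-x^2$ on $|x|\le 1/2$ gives a per-run bound $\mathrm{KL}(p_0\|p_i)\le\sum_o p_0(o)\delta_o(i)^2$.
\item The chain rule (the $p_0$-law of the path does not depend on $i$) and convexity of KL in its second argument give $\mathrm{KL}(p_0\|\mathbb{E}_ip_i)\le\mathbb{E}_i\mathrm{KL}(p_0\|p_i)\le 9R\alpha^2/(4(2^n+1))$ for $R$ runs.
\end{itemize}
Pinsker then yields $R=\Omega(2^n/\alpha^2)$ for all $\alpha\le 1/3$.
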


We can now prove Theorem~\ref{theorem:expsep_diag}.

\begin{proof}
For the lower bound, consider any protocol that satisfies the assumptions of Theorem~\ref{theorem:expsep_diag}. By the above Lemma, it can also solve the stated distinguishing problem. Invoking Theorem~\ref{theorem:chen_lowerbound} -- which shows a $\Omega(2^n/\epsilon^2)$ measurement lower bound on the stated distinguishing problem -- proves our claim.

On the other hand, the upper bound is obtained by applying our $2n$-qubit algorithm based on querying ${\mathcal{O}_{\mathcal{E}, O}}$ to prepare $\rho_{\mathcal{O}_{\mathcal{E}, O}}$, and measuring it in the Bell basis, which is the simultaneous eigenbasis of the $4^n -1$ observables $\{Q_i \otimes Q_i^T\}$, because:
\begin{equation} \label{eq:expt_otoc}
    \tr(\rho_{\mathcal{O}_{\mathcal{E}, O}} Q_i \otimes Q_i^T) = \frac{1}{2^n} \tr(Q_i^\dagger (\mathcal{E}^\dagger \circ \tilde{O} \circ \mathcal{E})(Q_i)) = \frac{1}{2^n} \tr(O \mathcal{E}(Q_i) O \mathcal{E}(Q_i)).
\end{equation}
By the union bound, this algorithm achieves the $\bigo{n/\epsilon^2}$ upper bound. 
\end{proof}

\section{Family of tight bounds for learning diagonal OTOCs using ancillas} \label{app:ancilla_non_adapt}

In this appendix, we prove the tight $\Theta(n2^{n-n_{\text{anc}}}/\epsilon^2)$ bounds stated in Theorem~\ref{theorem:expsep_diag}.3 on the All Diagonal OTOC Estimation Problem discussed in the main text and Appendix~\ref{app:proof_ancilla}. The tight bounds interpolate between the exponentially separated upper and lower bounds of Theorem~\ref{theorem:expsep_diag}.1 and \ref{theorem:expsep_diag}.2 (up to a factor of $n$), when the number of ancillas $n_{\text{anc}}$ is varied between $0$ and $n$.

\subsection{Lower bound}

\begin{theorem}[Lower bound of Theorem~\ref{theorem:expsep_diag}.3] \label{theorem:expsep_diag_nonad}
Consider OTOC learning protocols based on querying channels of the form ${\mathcal{O}_{\mathcal{E}, O}} = \mathcal{E}^\dagger \circ \tilde{O} \circ \mathcal{E}$ (for arbitrary doubly-stochastic maps $\mathcal{E}$ and Paulis $O$) on arbitrary $(n+n_{\text{anc}})$-qubit density matrices, with $0 \leq n_{\text{anc}} \leq n$, but without adaptivity and quantum memory. The query complexity lower bound $N=\Omega(n 2^{n-n_{\text{anc}}}/\epsilon^2)$ holds. 
\end{theorem}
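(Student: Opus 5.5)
We reduce the problem to the Alice--Bob--Loki communication framework of Appendix~\ref{app:shadow_proof_strat}, using the Pauli channels of Lemma~\ref{lemma:hard_instances_echo_form}(2) as hard instances. This mirrors the non-adaptive ancilla-assisted Pauli eigenvalue lower bound of \cite{chen2022quantum}.

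Alice encodes $X=(a,s)$, with $a$ ranging over all $4^n-1$ non-identity Paulis $Q_a$ and $s\in\{\pm1\}$, into the channel
\begin{equation*}
\Lambda_{a,s}(\cdot)=\frac{1}{2^n}\left(\mathbb{I}\tr(\cdot)+2s\epsilon_0 Q_a\tr(Q_a\,\cdot)\right).
\end{equation*}
By Lemma~\ref{lemma:hard_instances_echo_form} this channel is of Loschmidt-echo form. With $\epsilon\le\epsilon_0/2$, any protocol for the All Diagonal OTOC Estimation Problem identifies $X$. Fano's inequality then gives $I(X:Y)\ge\Omega(\log 4^n)=\Omega(n)$. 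This is the source of the factor $n$: we distinguish among all $\sim 4^n$ Paulis, not merely test against depolarizing. Here no Loki scrambling is needed, because the instance set is already the full Pauli group.

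Since the protocol is non-adaptive and memoryless, the runs are independent, and Eq.~(\ref{eq:mutual_inf_ineq}) gives $I(X:Y)\le\sum_j I(X:Y_j)$. The quantities $p_{ij}$, $\mathbb{E}[p_{ij}]$ and $\mathbb{E}[p_{ij}^2]$ are those of Lemma~\ref{lemma_mi_expression}, specialized to $a=b$ and $k=n_{\rm anc}$. In that specialization $p_{ij}=w_{ij}2^k[\tr(C^\dagger C)+2s\epsilon_0\tr(Q_aCQ_aC^\dagger)]$, and the general local-measurement POVM can be replaced by an arbitrary rank-one POVM. Eq.~(\ref{eq:moment_frac}) then yields
\begin{equation*}
I(X:Y_j)\le \sum_i \frac{4\epsilon_0^2\,w_{ij}2^k}{\tr(C_{ij}^\dagger C_{ij})}\,\frac{1}{4^n-1}\sum_{a}\tr(Q_aC_{ij}Q_aC_{ij}^\dagger)^2 .
\end{equation*}

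The key step, and the main obstacle, is to bound the Pauli-averaged quantity $\frac{1}{4^n}\sum_a \tr(Q_aCQ_aC^\dagger)^2$ by $O(2^{k-n})\tr(C^\dagger C)^2$ for every $C=BA^\dagger$ with $B$, $A$ Hilbert--Schmidt normalized $2^n\times 2^k$ matrices. We write $\tr(Q_aCQ_aC^\dagger)=\langle\!\langle C|Q_a\otimes\bar Q_a|C\rangle\!\rangle$ and use $\frac{1}{4^n}\sum_a (Q_a\otimes\bar Q_a)^{\otimes 2}$, which equals the projector-type twirl onto the commutant. This gives $\sum_a\tr(Q_aCQ_aC^\dagger)^2=2^n\sum_P|\tr(PC)|^2\,|\tr(PC^\dagger)|^2/\ldots$, in a form reducing to $\sum_P|\tr(PC)|^4$ up to normalization.

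We then bound this by the rank of $C$. Since $\mathrm{rank}(C)\le 2^k$, we have $\sum_P|\tr(PC)|^4\le\max_P|\tr(PC)|^2\cdot 2^n\tr(C^\dagger C)$. Moreover $|\tr(PC)|\le\|C\|_1\le\sqrt{2^k}\,\|C\|_2$, while $\|C\|_2^2=\tr(C^\dagger C)$. Combining these should give $\frac{1}{4^n}\sum_a\tr(Q_aCQ_aC^\dagger)^2\lesssim 2^{k-n}\tr(C^\dagger C)^2$. I expect the constants and this Pauli-fourth-moment identity to need care; the analogous computation in \cite{chen2022quantum} (their ancilla-assisted lower bound) is the template to follow. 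Substituting the bound and using Eq.~(\ref{eq:sum_unity}) gives $I(X:Y_j)=O(\epsilon_0^2 2^{k-n})$. Hence $N\cdot O(\epsilon^2 2^{n_{\rm anc}-n})\ge\Omega(n)$, that is, $N=\Omega(n2^{n-n_{\rm anc}}/\epsilon^2)$.
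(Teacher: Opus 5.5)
Your proposal is correct and follows the paper's route: the Pauli-channel hard instances $\Lambda_{i,s}$ over all $2(4^n-1)$ labels with no Loki, Fano giving $\Omega(n)$, subadditivity over the independent runs, and a per-run bound $O(\epsilon^2 2^{n_{\text{anc}}-n})$. The only difference is that the paper imports that per-run bound from Lemma~6 of \cite{chen2022quantum}, whereas you derive it through Lemma~\ref{lemma_mi_expression}, and your hedged key step is in fact exact: symplectic Parseval gives $\sum_{a}\tr(Q_aCQ_aC^\dagger)^2=\sum_{P}|\tr(PC)|^4$, and $|\tr(PC)|^2\le\|C\|_1^2\le 2^{k}\tr(C^\dagger C)$ then yields $\frac{1}{4^n}\sum_a\tr(Q_aCQ_aC^\dagger)^2\le 2^{k-n}\tr(C^\dagger C)^2$ with constant $1$.

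One point you should make explicit is the reduction of mixed $(n+n_{\text{anc}})$-qubit inputs to pure ones. Do this by convexity of $I(X:Y_j)$ (or of the $\chi^2$ bound) in the output distribution, not by purification as in Appendix~\ref{app:shadow_proof_strat}, since purifying would enlarge $k$ and destroy the rank bound $\mathrm{rank}(C)\le 2^{k}$ on which your estimate relies.
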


\begin{proof}
We use the proof techniques of \cite{huang2020predicting,chen2022quantum} to encode the OTOC estimation task as a two-party communication protocol, whose mutual information can be bounded to yield a query-complexity lower bound. We opt for brevity here, since a similar approach was used in Appendix~\ref{app:otoc_lower_shadow}, with the general proof strategy detailed in Appendix~\ref{app:shadow_proof_strat}. It is also simpler here because the learner has knowledge of the exact observables to be estimated (namely, all diagonal OTOCs), which eliminates need for the third party Loki.

In particular, consider the channels:
\begin{equation}
    \Lambda_{i,s} (\cdot) \equiv \mathcal{E}^\dagger_i \circ \tilde{P}_{i,s} \circ \mathcal{E}_i (\cdot) = \frac{1}{2^n}(\mathbb{I} \tr(\cdot) + 3 \epsilon s Q_i \tr(Q_i ~\cdot)),
\end{equation}
where $\mathcal{E}_i, \tilde{P}_{i,s}$ follows the definition of the channels appearing in Eq.~(\ref{eq:instance_ei}) of Lemma~\ref{lemma:hard_instances_echo_form}, whose diagonal PTM elements correspond to OTOCs. The Choi state of $\Lambda_{i,s}$ takes the form of Eq.~(\ref{eq:choi_diag}).

Alice randomly chooses one of the $2(4^n-1)$ possible choices of $ \Lambda_{i,s} (\cdot)$, repeats $N$ times, and sends them to Bob. If there exists a diagonal OTOC estimation protocol satisfying the Theorem's assumptions which uses $N$ queries to $\Lambda_{i,s}$, Bob can use this protocol to uniquely determine $(i,s)$ with high probability.

Notice that $\Lambda_{i,s} (\cdot)$ are Pauli channels; in particular, they are identical to the hard instances defined in the proof of Theorem 5 in \cite{chen2022quantum}, up to the $3\epsilon$ factor. Subsequently, the proof proceeds almost verbatim as \cite{chen2022quantum}, which we sketch below for completeness:
\begin{itemize}
    \item Let $\{\rho_i, E_i\}$ denote Bob's input state and POVM pair for the $i$-th sample, and $I((i,s): \{\rho_1, E_1\}, ..., \{\rho_N, E_N\})$ the mutual information between Alice's distribution and Bob's measurement results.

    \item Applying the chain rule of mutual information and Fano's inequality yields:
    \begin{equation*}
        I((i,s): \{\rho_1, E_1\}, ..., \{\rho_N, E_N\}) = \sum_{i=1}^N I((i,s): \{\rho_i, E_i\}) \geq \Omega(n).
    \end{equation*}

    \item Bounding $I((i,s): \{\rho_i, E_i\})$, we find (Lemma 6, \cite{chen2022quantum}):
    \begin{equation*}
        I((i,s): \{\rho_i, E_i\}) \leq 3 \epsilon^2 \frac{2^{n_{\text{anc}}}}{2^n-1}.
    \end{equation*}
    Substituting into the preceding equation completes the proof.
\end{itemize}
\end{proof}

\subsection{Upper bound \& optimal algorithms for diagonal OTOCs}

Next, we describe a set of algorithms (without adaptivity and quantum memory) that solve the All Diagonal OTOC Estimation Problem. They require between $0$ to $n$ ancilla qubits (for a total of $n$ to $2n$ qubits), and can be viewed as interpolating between the $n$ and $2n$-qubit algorithms. Their query complexities also achieve the lower bound of Theorem~\ref{theorem:expsep_diag_nonad}, and are therefore optimal among the class of protocols satisfying the Theorem's assumptions.

In fact, more generally, the algorithm queries any quantum channel $\mathcal{N}$, and is able to estimate its $4^n-1$ diagonal PTM elements $\{\frac{1}{2^n} \mathrm{tr} \left( P_{(u,s)} \mathcal{N}(P_{(u,s)}) \right)\}_{(u,s)}$ using $n + k$ qubits, without quantum memory or adaptivity. It is direct generalization of existing algorithms for computing the Pauli eigenvalues/diagonal PTM elements of Pauli channels \cite{flammia2020efficient,chen2022quantum}; in particular, see Algorithm 1 of \cite{chen2022quantum}. Inputting the channels $\mathcal{N} = \mathcal{E}^\dagger \circ \tilde{O} \circ \mathcal{E}$ into this algorithm then immediately yields an algorithm for the All Diagonal OTOC Estimation Problem. For this reason, we will first describe the general algorithm for estimating diagonal PTM elements (in Appendix~\ref{app:algo_diag_ptm}), which straightforwardly yields the algorithm for diagonal OTOCs (in Appendix~\ref{app:algo_diag_otoc}).

\subsubsection{Preliminaries}
We begin by setting up some additional notation and results for convenience.
\begin{itemize}
    \item Symplectic representation of Pauli operators: Let $a = (a^x_1,...,a^x_n,a^z_1,...,a^z_n) \in \mathbb{Z}_2^{2n}$ be length-$2n$ binary vectors. Any Pauli operator can be uniquely labelled by $a$ as $P_a = \bigotimes_{i=1}^n i^{a_i^x a_i^z} X^{a_i^x} Z^{a_i^z}$.

    \item Symplectic inner product: $\langle a,b\rangle \equiv \sum_{i=1}^n (a_i^x b_i^z + a_i^z b_i^x)$. We have that $P_a P_b = (-1)^{\langle a,b \rangle} P_b P_a$.

    \item Other useful identities:
    \begin{align}
        P_a P_b P_a &= (-1)^{\langle a,b \rangle} P_b \label{eq:conjug} \\
        (-1)^{\langle a, b \oplus c \rangle} &= (-1)^{\langle a,b \rangle} (-1)^{\langle a,c \rangle} \label{eq:bitadd}
    \end{align}
    where $\oplus$ indicates bitwise addition modulo 2.

    \item Twirling identity:
    \begin{equation} \label{eq:twirling_id}
        \frac{1}{\norm{\mathcal{P}_n}} \sum_{P_r} (-1)^{\langle r,  s \oplus s'\rangle} = \delta_{s, s'}.
    \end{equation}
    \begin{proof}
    When $s=s'$, $\langle r, s\oplus s'\rangle = \langle r, 0\rangle = 0$.
    When $s \neq s'$, $s \oplus s'$ indexes a non-identity Pauli operator. Since non-identity Pauli operators commute and anticommute with the same number of Pauli operators in $\mathcal{P}_n$, half of the terms in the sum with value $-1$ cancel out with the other half with value $1$.
    \end{proof}

    \item Define the maximally entangled $2k$-qubit state / vectorization of $k$-qubit Pauli operator to be:
    \begin{equation}
        \kett{P_v} = (P_v \otimes \mathbb{I}) \kett{\mathbb{I}}.
    \end{equation}
    Using Eq.~(\ref{eq:bell_state}), it can be written in the form:
    \begin{equation}
        \kettbbra{P_v}{P_v} = \frac{1}{4^k} \sum_u (-1)^{\langle u,v \rangle} P_u \otimes P_u^T.
    \end{equation}

    \item A \textit{stabilizer group} $S$ on $k$ qubits is a group of $2^k$ Pauli operators that all pairwise commute. 

    Define the states:
    \begin{equation}
        \ketbra{\phi^S_e}{\phi^S_e} \equiv \frac{1}{2^k} \sum_{s \in S} (-1)^{\langle s,e \rangle} P_s, ~~\forall e \in \mathbb{Z}_2^{2k}/S.
    \end{equation}
    They are simultaneous eigenstates of all operators in $S$, and are stabilizer states. The states $\{\ket{\phi^S_e}\}_{e \in \mathbb{Z}_2^{2k}/S}$ form an orthonormal basis.

    \item A \textit{stabilizer covering} $O = \{S_1,...,S_{\norm{O}}\}$ on $n$ qubits is a set of stabilizer groups (on $n$ qubits) such that $\forall P \in \mathcal{P}_n, \exists i ~ \text{s.t} ~P \in S_i$.

    Fact 1: There exists a stabilizer covering of size $2^k + 1$ for $\mathcal{P}_k$ \cite{lawrence2002mutually}.

    Fact 2: The stabilizer covering generated by all local tensor-product Pauli measurements is of size $3^k$ (since there are $3^k$ possible measurement settings in total).
    
\end{itemize}

\subsubsection{Algorithm for diagonal PTM elements} \label{app:algo_diag_ptm}

\begin{figure}[h]
\centering
\begin{quantikz}
\lstick[wires=2]{\(\kett{\mathbb{I}^{\otimes k}}_{AB}\)}   & \qwbundle{n_{\text{anc}}}  & \qw & \qw & \qw & \qw & \gate[2]{U_{\text{Bell}}}  &  \meter{} \\
   & \qwbundle{n_{\text{anc}}}   & \qw         &      \gate[2]{P_r}  & \gate[2]{\mathcal{N}} & \gate[2]{P_r} & \qw & \meter{} \\
\lstick{\(\ket{\phi^S_0}_C\)} & \qwbundle{n-n_{\text{anc}}} & \qw & \qw & \qw & \qw & \gate[1]{U_S} & \meter{}
\end{quantikz}
\caption{\textbf{Circuit for computing the diagonal PTM elements of $\mathcal{N}$ using $n+n_{\text{anc}}$ qubits.} An instance of the $(n+n_{\text{anc}})$-qubit circuit to be run, for a specific choice of $S \in O$ and $P_r \in \mathcal{P}_n$. All qubits are measured in the computational basis in the end. $U_{\text{Bell}}$ is the Bell basis transformation, while $U_S$ transforms to the basis $\{\ket{\phi^S_e}\}_{e \in \mathbb{Z}_2^{2n_{\text{anc}}}/S}$.}
\label{fig:cirq_nk}
\end{figure}

First, we describe the algorithm for estimating the PTM elements of general channels $\mathcal{N}$, based on adapting Algorithm 1 of \cite{chen2022quantum}. Partition the joint $(n+n_{\text{anc}})$-qubit system as $\mathcal{H}_A \otimes \mathcal{H}_B \otimes \mathcal{H}_C$, with $n_{\text{anc}}$, $n_{\text{anc}}$, and $n-n_{\text{anc}}$ qubits respectively. Let $\mathcal{N} : \mathcal{L}(\mathcal{H}_B \otimes \mathcal{H}_C) \rightarrow \mathcal{L}(\mathcal{H}_B \otimes \mathcal{H}_C)$ be a queryable channel whose diagonal PTM elements we wish to learn. Let $O$ be a stabilizer covering on $n-n_{\text{anc}}$ qubits, $S \in O$ a stabilizer group, and define the states $\kett{\mathbb{I}}, \kett{P_v} \in \mathcal{H}_A \otimes \mathcal{H}_B$, $\ket{\phi^S_0}, \ket{\phi^S_e} \in \mathcal{H}_C$. The procedure below -- involving circuits of the form Fig.~(\ref{fig:cirq_nk}) -- outputs the $4^n-1$ diagonal PTM elements $\{\frac{1}{2^n} \mathrm{tr} \left( P_{(u,s)} \mathcal{N}(P_{(u,s)}) \right)\}_{(u,s)}$ using $n + n_{\text{anc}}$ qubits, without quantum memory or adaptivity:
\begin{enumerate}
    \item For each stabilizer group $S \in O$, prepare the initial state $\kett{\mathbb{I}} \otimes \ket{\phi^S_0}$.
    
    \item Apply the ``twirled channel" $\mathcal{N}_{r} \equiv \widetilde{P}_r \circ \mathcal{N} \circ \widetilde{P}_r$ to subsystem $\mathcal{H}_B \otimes \mathcal{H}_C$, with $P_r \in \mathcal{P}_{n}$ chosen randomly.

    \item Measure this state in the basis $\{\kett{P_v} \otimes \ket{\phi^S_e}\}_{v,e}$ and record the obtained outcome $(v,e)$.

    \item Compute and record the quantity $f_{v,e,s,u} = (-1)^{\langle u,v \rangle + \langle s, e \rangle}$.

    \item Repeat the above steps and compute the empirical average of $f_{v,e,s,u}$. This yields an estimate of $\frac{1}{2^n} \mathrm{tr} \left( P_{(u,s)} \mathcal{N}(P_{(u,s)}) \right)$ for all $P_{(u,s)}$ that are covered by the stabilizer group $S$.

    \item After having iterated through all stabilizer groups in the stabilizer covering $O$, classical post-processing yields all $4^n-1$ diagonal PTM elements $\{\frac{1}{2^n} \mathrm{tr} \left( P_{(u,s)} \mathcal{N}(P_{(u,s)}) \right)\}_{(u,s)}$.
\end{enumerate}

The only difference with Algorithm 1 of \cite{chen2022quantum}, which estimates the Pauli eigenvalues of arbitrary Pauli channels (equivalently, their diagonal PTM elements), is the presence of twirling by the unitary channels $\tilde{P}_r(\cdot) = P_r(\cdot)P_r$. This operation converts any input channel into a Pauli channel on average, without changing their diagonal PTM elements. The rest of the algorithm is then identical to Algorithm 1 of \cite{chen2022quantum}. 

Intuitively, Algorithm 1 of \cite{chen2022quantum} can be understood as sampling from the Pauli error rates $\bf{p}$ of an $n$-qubit Pauli channel $\mathcal{M}(\cdot)=\sum_{a \in \mathbb{Z}_2^{2n}} p_a P_a(\cdot)P_a$, which yields estimators for its Pauli eigenvalues/diagonal PTM elements via the Walsh-Hadamard transform:
\begin{equation*}
    \frac{1}{2^n}\tr(P_b \mathcal{M} (P_b)) = \sum_{a \in \mathbb{Z}_2^{2n}} p_a (-1)^{\langle a,b \rangle} = \mathbb{E}_{a}[(-1)^{\langle a,b \rangle}].
\end{equation*}
Since twirling preserves the values of the diagonal PTM elements of $\mathcal{N}$, the above procedure samples from the `effective Pauli error rates' of $\mathcal{N}$ (obtained by applying the inverse Walsh-Hadamard transform to the PTM elements), which yields estimators for its diagonal PTM elements. A more formal description and analysis is given by Algorithm~\ref{alg:diag_ptm_nk} and Theorem~\ref{theorem:ptm_diagonal_algo_nk} below:

\vspace{5pt}
\begin{algorithm}[H]
\linespread{1.15}\selectfont
\caption{Algorithm for computing all diagonal PTM elements using $n+k$ qubits} \label{alg:diag_ptm_nk}

\KwData{Stabilizer covering $O$ of $\mathcal{P}_{n-n_{\text{anc}}}$, $N$ samples of the quantum channel $\mathcal{N}$}
\KwResult{List containing $4^n-1$ diagonal OTOCs $\vec{\gamma}$, with $\gamma_a = \tr(P_a \mathcal{N}( P_a))/2^n$}
  \vspace*{2pt}

  $[\hat{\gamma}_1,...,\hat{\gamma}_{4^n-1}] = [0,...,0]$, $[N_1,...,N_{4^n-1}] = [0,...,0]$ \;
  For $S \in O$:\\
  \Indp For $i \in [1,..., \lfloor N/\abs{O} \rfloor]$:\\
    \Indp Prepare $\kett{\mathbb{I}}_{AB} \otimes \ket{\phi^S_0}_C$ \;
    Apply $(\mathbb{I}_{A} \otimes \tilde{P}_r)(\mathbb{I}_{A} \otimes \mathcal{N})(\mathbb{I}_{A} \otimes \tilde{P}_r)$, with $P_r \in \mathcal{P}_n$ chosen randomly \;
    Measure in the basis $\{\kett{P_v} \otimes \ket{\phi^S_e}\}_{v,e}$ and record outcome $a=(v,e)$ \;
    For $u \in \mathbb{Z}_2^{2n_{\text{anc}}}$: \\
    \Indp For $s\in S$: \\
    \Indp $\hat{\gamma}_{u \oplus s} ~+\!\!= (-1)^{\langle u,v \rangle + \langle s, e \rangle}$ \;
    $N_{u \oplus s} ~+\!\!= 1$ \;

  \Indm \Indm \Indm \Indm \textbf{Return} $\vec{\gamma}$, with $\gamma_a = \hat{\gamma}_a/N_{a}$\;
\end{algorithm}
\vspace{5pt}

\begin{theorem}[Algorithm for computing all diagonal PTM elements of general channels] \label{theorem:ptm_diagonal_algo_nk}
Given the ability to query a general $n$-qubit channel $\mathcal{N}$, Algorithm~\ref{alg:diag_ptm_nk} estimates all $4^n-1$ diagonal PTM elements $\{\frac{1}{2^n} \mathrm{tr} \left( P_{(u,s)} \mathcal{N}(P_{(u,s)}) \right)\}_{(u,s)}$ using $\bigo{n 2^{n-n_{\text{anc}}} \log(1/\delta) 1/\epsilon^2}$ queries. The algorithm uses $n+n_{\text{anc}}$ qubits, is non-adaptive, and does not require quantum memory.
\end{theorem}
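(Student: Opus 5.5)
The proof has two parts: first show that, for each fixed $S\in O$, the recorded outcomes yield unbiased $\pm1$-valued estimators of the diagonal PTM elements; then apply Hoeffding and a union bound. For unbiasedness, first average over the random twirl $P_r$. Using Eq.~(\ref{eq:twirling_id}) together with Eq.~(\ref{eq:conjug}), the twirled channel $\mathbb{E}_r[\widetilde P_r\circ\mathcal N\circ\widetilde P_r]$ kills every off-diagonal PTM entry $\tr(P_a\mathcal N(P_b))$ with $a\neq b$. It keeps the diagonal ones unchanged, and is therefore the Pauli channel $\mathcal M(\cdot)=\sum_a p_a P_a(\cdot)P_a$. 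Its eigenvalues are $\lambda_b=\tr(P_b\mathcal N(P_b))/2^n$, and $p_a$ is the inverse Walsh--Hadamard transform $p_a=4^{-n}\sum_b(-1)^{\langle a,b\rangle}\lambda_b$.

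Positivity of $p_a$ is not needed, because only linearity of the outcome distribution in the channel is used. Since $r$ is drawn fresh in each run and the runs are otherwise independent, each run is distributed exactly as a run of Algorithm 1 of \cite{chen2022quantum} applied to $\mathcal M$. Concretely, I will compute
\[
\Pr(v,e)=\bbra{P_v}\otimes\bra{\phi^S_e}\,(\mathbb I_A\otimes\mathcal M)\big(\kettbbra{\mathbb I}{\mathbb I}\otimes\ketbra{\phi^S_0}{\phi^S_0}\big)\,\kett{P_v}\otimes\ket{\phi^S_e}.
\]
Expanding $\mathcal M$ and using the expansion of $\kettbbra{P_v}{P_v}$ and of the stabilizer projectors, this equals $\sum_{a=(a_{AB},a_C)}p_a\,\delta_{v,a_{AB}}\Pr_S(e\,|\,a_C)$. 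Here $a_C$ is sent to its coset, i.e.\ $e$ is determined by the syndrome of $P_{a_C}$ with respect to $S$. Hence
\[
\mathbb E[(-1)^{\langle u,v\rangle+\langle s,e\rangle}]=\sum_a p_a(-1)^{\langle (u,s),a\rangle}=\lambda_{(u,s)}
\]
for every $u\in\mathbb Z_2^{2n_{\rm anc}}$ and every $s\in S$. The key fact is that $(-1)^{\langle s,e\rangle}$ depends only on the coset of $e$ when $s\in S$, which matches $P_sP_{a_C}P_s=(-1)^{\langle s,a_C\rangle}P_{a_C}$.

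With unbiasedness in hand, each estimator is bounded in $[-1,1]$. For a given Pauli $P_{(u,s)}$, pick any $S\in O$ containing $P_s$; since $O$ covers $\mathcal P_{n-n_{\rm anc}}$, such an $S$ always exists. That group alone contributes $\lfloor N/|O|\rfloor$ i.i.d.\ samples, and if several groups contain $P_s$, averaging over more samples only helps. Hoeffding's inequality then gives failure probability at most $2\exp(-\lfloor N/|O|\rfloor\epsilon^2/2)$ per element. A union bound over all $4^n-1$ elements requires $N/|O|=\bigo{(n+\log(1/\delta))/\epsilon^2}$, which is at most $\bigo{n\log(1/\delta)/\epsilon^2}$ for $\delta$ bounded away from $1$. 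Taking the covering of size $|O|=2^{n-n_{\rm anc}}+1$ from Fact 1 \cite{lawrence2002mutually} then gives the claimed $\bigo{n2^{n-n_{\rm anc}}\log(1/\delta)/\epsilon^2}$ queries.

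Each run uses $n+n_{\rm anc}$ qubits, queries $\mathcal N$ once, and is fixed in advance. The protocol is therefore non-adaptive and uses no quantum memory, which establishes the claimed resource bounds.

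The main obstacle is the bookkeeping in the outcome-distribution computation. It requires checking the Bell-measurement identity $\bbra{P_v}(P_a\otimes\mathbb I)\kett{\mathbb I}\propto\delta$ up to phases on $A\otimes B$, and the action of $P_{a_C}$ on $\ket{\phi^S_0}$ giving $\ket{\phi^S_{a_C}}$ up to a phase, with consistent transpose conventions for $P^T$ and the symplectic labels. This is where sign errors would enter. I would handle it by verifying the single-Pauli case $\mathcal M=\widetilde P_a$ first and then extending by linearity.
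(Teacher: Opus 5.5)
Your proposal is correct and follows essentially the paper's proof: Pauli-twirl to a Pauli-diagonal channel, relate the outcome statistics to the diagonal PTM elements through a Walsh--Hadamard transform to obtain the unbiased $\pm1$ estimator $(-1)^{\langle u,v\rangle+\langle s,e\rangle}$, then apply Hoeffding, a union bound over the $4^n-1$ elements, and the covering of size $2^{n-n_{\text{anc}}}+1$. The only difference is bookkeeping. You verify unbiasedness in the Pauli-error picture, where each $\widetilde P_a$ deterministically yields $(a_B,[a_C])$; the paper gives that picture only as intuition and instead expands the input and measurement projectors in the Pauli basis and inverts the transform. Your explicit handling of $n+\log(1/\delta)\le\bigo{n\log(1/\delta)}$ and of pooling samples across several covering groups is slightly more careful than the paper's write-up.
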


\begin{proof}
Firstly, observe that the probability of obtaining the measurement outcome $\kett{P_v} \otimes \ket{\phi^S_e}$ after measuring the state $(\mathbb{I}_A \otimes \mathcal{N}_{BC})(\kett{\mathbb{I}} \otimes \ket{\phi^S_0})$ in the basis $\{\kett{P_v} \otimes \ket{\phi^S_e}\}_{v,e}$ is given by:
\begin{align}
    \Pr(v,e, \mathcal{N}) &\equiv \mathrm{tr}\left( \kettbbra{P_v}{P_v} \otimes \ketbra{\phi^S_e}{\phi^S_e} (\mathbb{I}_A \otimes \mathcal{N}_{BC}) \left( \kettbbra{\mathbb{I}}{\mathbb{I}} \otimes \ketbra{\phi^S_0}{\phi^S_0} \right) \right) \\ 
&= \frac{1}{2^{2n+k}} \sum_{u,s,s'} (-1)^{\langle u,v \rangle + \langle s', e \rangle} \mathrm{tr} \left( (P_u \otimes P_{s'}) \mathcal{N}_{BC} (P_u \otimes P_s) \right),
\end{align}
which follow directly from the definitions.

Next, denote $\mathcal{N}_{r} \equiv \widetilde{P}_r \circ \mathcal{N} \circ \widetilde{P}_r$, where $\widetilde{P}_r(\cdot) = P_r (\cdot) P_r$. We have:
\begin{align}
    \mathrm{tr} \left( (P_u \otimes P_{s'}) \mathcal{N}_{r} (P_u \otimes P_s) \right) &= \mathrm{tr} \left(P_r (P_u \otimes P_{s'})P_r ~ \mathcal{N}(P_r (P_u \otimes P_s) P_r) \right) \\
    &= (-1)^{\langle r, (u,s')\rangle + \langle r, (u,s)\rangle} \mathrm{tr} \left( P_{(u,s')} \mathcal{N}(P_{(u,s)}) \right) \\
    &= (-1)^{\langle r, (u,s') \oplus (u,s)\rangle} \mathrm{tr} \left( P_{(u,s')} \mathcal{N}(P_{(u,s)}) \right),
\end{align}
where $(a,b)$ denotes concatenation, and we used the cyclicity of the trace in the first equality, Eq.~(\ref{eq:conjug}) in the second equality, and Eq.~(\ref{eq:bitadd}) in the third.

Defining the `twirled probability distribution' $\widetilde{\mathrm{Pr}}(v,e) \equiv\frac{1}{\norm{\mathcal{P}_n}} \sum_r \Pr(v,e, \mathcal{N}_{r})$, we then have:
\begin{align}
    \widetilde{\mathrm{Pr}}(v,e) &= \frac{1}{\norm{\mathcal{P}_n}} \frac{1}{2^{2n+n_{\text{anc}}}} \sum_{r, u, s, s'} (-1)^{\langle u,v \rangle + \langle s', e \rangle} (-1)^{\langle r, (u,s') \oplus (u,s)\rangle} \mathrm{tr} \left( P_{(u,s')} \mathcal{N}(P_{(u,s)}) \right) \\
    &= \frac{1}{2^{n+n_{\text{anc}}}} \sum_{u, s, s'} \left(\frac{1}{\norm{\mathcal{P}_n}} \sum_r (-1)^{\langle r, (u,s') \oplus (u,s)\rangle} \right) (-1)^{\langle u,v \rangle + \langle s', e \rangle}  \frac{1}{2^n} \mathrm{tr} \left( P_{(u,s')} \mathcal{N}(P_{(u,s)}) \right) \\
    &= \frac{1}{2^{n+n_{\text{anc}}}} \sum_{u, s, s'} \delta_{s,s'} (-1)^{\langle u,v \rangle + \langle s', e \rangle}  \frac{1}{2^n} \mathrm{tr} \left( P_{(u,s')} \mathcal{N}(P_{(u,s)}) \right) \\
    &= \frac{1}{2^{n+n_{\text{anc}}}} \sum_{u, s} (-1)^{\langle u,v \rangle + \langle s, e \rangle} \frac{1}{2^n} \mathrm{tr} \left( P_{(u,s)} \mathcal{N}(P_{(u,s)}) \right),
\end{align}
where in the third equality we applied the twirling identity Eq.~(\ref{eq:twirling_id}).

Subsequently, observing that $\widetilde{\mathrm{Pr}}(v,e)$ and $\frac{1}{2^n} \mathrm{tr} \left( P_{(u,s)} \mathcal{N}(P_{(u,s)}) \right)$ are related by a Walsh-Hadamard transform, it can be inverted to yield (can be proven e.g. directly using Eqs.~(\ref{eq:bitadd}) and (\ref{eq:twirling_id})):
\begin{align}
    \frac{1}{2^n} \mathrm{tr} \left( P_{(u,s)} \mathcal{N}(P_{(u,s)}) \right) &= \sum_{v,e} \widetilde{\mathrm{Pr}}(v,e) (-1)^{\langle u,v \rangle + \langle s, e \rangle} \\
    &= \mathbb{E}_{(v,e) \sim \widetilde{\mathrm{Pr}}(v,e)}[(-1)^{\langle u,v \rangle + \langle s, e \rangle}].
\end{align}
In other words, the random variable $(-1)^{\langle u,v \rangle + \langle s, e \rangle}$ is an unbiased estimator of the diagonal PTM element of $\mathcal{N}$, given the ability to sample from the twirled distribution $\widetilde{\mathrm{Pr}}(v,e)$. By Hoeffding's inequality, it can be estimated to additive error $\epsilon$ and success probability $1-\delta$ using $\bigo{\log(1/\delta)/\epsilon^2}$ samples.

Finally, to recover all diagonal PTM elements, it suffices to observe that there exists a stabilizer covering $O$ of size $2^{n-n_{\text{anc}}}+1$ for $n-n_{\text{anc}}$ qubit systems \cite{lawrence2002mutually}, and apply the union bound to yield the final $\bigo{n 2^{n-n_{\text{anc}}} \log(1/\delta) 1/\epsilon^2}$ scaling.
\end{proof}

Indeed, the above algorithm and scaling is optimal for the problem of estimating all diagonal PTM elements, among the class of algorithms without adaptivity and quantum memory. This is discussed in Appendix~\ref{app:family_diag_ptm} as Theorem~\ref{theorem:diag_ptm_optimal}.

We remark that when $n_{\text{anc}}=n$, the randomization over $P_r \in \mathcal{P}_{n}$ becomes redundant and can be eliminated, in which case it reduces to the $2n$-qubit algorithm based on Bell-sampling. Also, for general stabilizer groups $S$, preparing and measuring subsystem C in the basis $\{\ket{\phi^S_e}\}_{e}$ generally requires entangling gates acting on up to $n-n_{\text{anc}}$ qubits. A more experimentally amenable version involves choosing the stabilizer covering to be the one generated by all possible Pauli measurements, where $\abs{O} = 3^{n-n_{\text{anc}}}$. This yields the suboptimal scaling of $\bigo{n 3^{n-n_{\text{anc}}} \log(1/\delta) 1/\epsilon^2}$, but only requires local Pauli measurements.

\subsubsection{Algorithm for diagonal OTOCs} \label{app:algo_diag_otoc}
To solve the All Diagonal OTOC Estimation Problem, it suffices to input the Loschmidt-echo channel $\mathcal{E}^\dagger \circ \tilde{O} \circ \mathcal{E}$ to Algorithm~\ref{alg:diag_ptm_nk}. Applying Theorem~\ref{theorem:ptm_diagonal_algo_nk}, the query complexity of the resulting algorithm is $\bigo{n 2^{n-n_{\text{anc}}} \log(1/\delta) 1/\epsilon^2}$, directly yielding:
\begin{theorem}[Upper bound of Theorem~\ref{theorem:expsep_diag}.3] \label{theorem:diagonal_otoc_algo_nk}
Given the ability to query the $n$-qubit channel $\mathcal{O}_{\mathcal{E}, O} \equiv \mathcal{E}^\dagger \circ \tilde{O} \circ \mathcal{E}$, there is an algorithm that computes all $4^n-1$ diagonal OTOCs $\left \{\tr(O \mathcal{E}(Q) O \mathcal{E}(Q))/2^n \right \}_{Q \in {\mathcal{P}}_n}$ in $\bigo{n 2^{n-n_{\text{anc}}} \log(1/\delta) 1/\epsilon^2}$ runs. The algorithm uses $n+n_{\text{anc}}$ qubits, is non-adaptive, and does not require quantum memory.
\end{theorem}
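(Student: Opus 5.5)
The plan is a direct reduction to Theorem~\ref{theorem:ptm_diagonal_algo_nk}. The first step is to verify that $\mathcal{O}_{\mathcal{E}, O}$ is a legitimate input to Algorithm~\ref{alg:diag_ptm_nk}. By Lemma~\ref{lemma:prop_o}, when $\mathcal{E}$ is unital and $O$ is Pauli, $\mathcal{O}_{\mathcal{E}, O}$ is a CPTP (indeed doubly-stochastic) $n$-qubit channel. It can therefore be queried exactly as the generic channel $\mathcal{N}$ in Algorithm~\ref{alg:diag_ptm_nk}: prepare $\kett{\mathbb{I}}_{AB}\otimes\ket{\phi^S_0}_C$, apply $\tilde P_r\circ\mathcal{O}_{\mathcal{E}, O}\circ\tilde P_r$ on $BC$, and measure in $\{\kett{P_v}\otimes\ket{\phi^S_e}\}$. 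The Pauli twirls are known unitaries, so each run uses exactly one query to $\mathcal{O}_{\mathcal{E}, O}$. This keeps the protocol non-adaptive and memory-free, and it acts on $n+n_{\text{anc}}$ qubits.

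The second step identifies the outputs. For every $Q\in\mathcal{P}_n$, I would show that the diagonal PTM element of $\mathcal{O}_{\mathcal{E}, O}$ equals the diagonal OTOC. Using Eq.~(\ref{eq:ricochet_channel}), or directly the adjoint relation and cyclicity of the trace,
\begin{equation*}
\frac{1}{2^n}\tr\!\left(Q\,\mathcal{O}_{\mathcal{E}, O}(Q)\right)=\frac{1}{2^n}\tr\!\left(\mathcal{E}(Q)\,O\,\mathcal{E}(Q)\,O^\dagger\right),
\end{equation*}
which is $\tr(O\mathcal{E}(Q)O\mathcal{E}(Q))/2^n$ for Hermitian Pauli $O$. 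This is the same identity already recorded in Eq.~(\ref{eq:expt_otoc}). The estimators $\gamma_a$ returned by Algorithm~\ref{alg:diag_ptm_nk} are therefore exactly estimators of the $4^n-1$ diagonal OTOCs, and no additional bias is introduced.

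The final step imports the complexity. Theorem~\ref{theorem:ptm_diagonal_algo_nk} guarantees additive error $\epsilon$ for all diagonal PTM elements simultaneously, with probability $1-\delta$, using $\bigo{n2^{n-n_{\text{anc}}}\log(1/\delta)/\epsilon^2}$ queries. Two ingredients give this count: a stabilizer covering of $\mathcal{P}_{n-n_{\text{anc}}}$ of size $2^{n-n_{\text{anc}}}+1$, and a union bound over the $4^n$ quantities, which is the source of the factor $n$. The same count transfers verbatim.

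The only potential subtlety is that the twirling argument in Theorem~\ref{theorem:ptm_diagonal_algo_nk} must hold for an arbitrary, non-Pauli channel. That proof was written for general $\mathcal{N}$, however, so no extra work is needed. I would still briefly confirm that the Hermiticity and self-adjointness from Lemma~\ref{lemma:prop_o} keep all estimated quantities real.
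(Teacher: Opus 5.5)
Your proposal is correct and matches the paper's proof: the paper likewise inputs the Loschmidt-echo channel $\mathcal{O}_{\mathcal{E},O}$ into Algorithm~\ref{alg:diag_ptm_nk} and takes the query count directly from Theorem~\ref{theorem:ptm_diagonal_algo_nk}, relying on the fact that diagonal PTM elements of $\mathcal{O}_{\mathcal{E},O}$ are exactly the diagonal OTOCs (Eq.~(\ref{eq:expt_otoc})). Your extra checks, namely that $\mathcal{O}_{\mathcal{E},O}$ is a valid channel via Lemma~\ref{lemma:prop_o} and that each run uses one query with no adaptivity or memory, make explicit what the paper leaves implicit.
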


This saturates the lower bound of Theorem~\ref{theorem:expsep_diag_nonad}, and is therefore the optimal algorithm that solves the All Diagonal OTOC Estimation Problem using $0 \leq n_{\text{anc}} \leq n$ ancillas, among the class of algorithms without adaptivity and quantum memory. This proves the query complexity $\Theta(n 2^{n-n_{\text{anc}}}/\epsilon^2)$ of Theorem~\ref{theorem:expsep_diag}.3, at constant success probability.

\section{Exponential separation for learning all OTOCs with vs without quantum memory} \label{app:proof_memory}

In this appendix, we are concerned with the following problem:\\

\noindent \textbf{All OTOC Estimation Problem:} Estimate $\left \{\tr(O \mathcal{E}(Q) O \mathcal{E}(Q'))/2^n \right \}_{Q,Q' \in {\mathcal{P}}_n}$ up to additive error $\epsilon$ with success probability $1-\delta$. \\

We will prove Theorem~\ref{theorem:exp_sep_all_otocs_query}, which shows an exponential separation in query complexity for this problem, for protocols with vs without quantum memory. We begin with the upper bound:

\begin{lemma}[Upper bound; algorithm with quantum memory for all OTOCs] \label{app:qmem_ub}
There is an algorithm with quantum memory that solves the All OTOC Estimation Problem using $O(n/\epsilon^4)$ queries of the channel $\mathcal{O}_{\mathcal{E}, O}$.
\end{lemma}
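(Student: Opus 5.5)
The plan is to reduce the All OTOC Estimation Problem to Pauli expectation estimation on the Choi state $\rho_{\mathcal{O}_{\mathcal{E}, O}}$, and then invoke the quantum-memory Pauli tomography result of Theorem~\ref{result:pauli_tomo}. First, each query of $\mathcal{O}_{\mathcal{E}, O}$ (a bona fide channel by Lemma~\ref{lemma:prop_o}, since $\mathcal{E}$ is doubly stochastic) is applied to one half of $n$ fresh Bell pairs $\kettbbra{\mathbb{I}}{\mathbb{I}}$. This produces exactly one copy of $\rho_{\mathcal{O}_{\mathcal{E}, O}} = (\mathcal{O}_{\mathcal{E}, O}\otimes\mathbb{I})(\kettbbra{\mathbb{I}}{\mathbb{I}})$, so $N$ queries yield $N$ copies. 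Since the copies are then held coherently and measured jointly, the protocol falls under quantum memory in the sense of Appendix~\ref{app:models}: multiple queries occur within a single coherent run.

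Second, I will identify the targets. By Eq.~(\ref{eq:ricochet_channel}) and the discussion following Lemma~\ref{lemma:prop_o}, for any $Q,Q'\in\mathcal{P}_n$ we have
\[
\tr\big((Q\otimes Q'^T)\,\rho_{\mathcal{O}_{\mathcal{E}, O}}\big)=\tfrac{1}{2^n}\tr\big(Q\,\mathcal{O}_{\mathcal{E}, O}(Q')\big)=\tfrac{1}{2^n}\tr\big(O\mathcal{E}(Q)O\mathcal{E}(Q')\big),
\]
using the self-adjointness of $\mathcal{E}^\dagger$ relative to $\mathcal{E}$ under the Hilbert–Schmidt inner product. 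Since $Q'^T=\pm Q'$, each OTOC is, up to a known sign, the expectation value of the $2n$-qubit Pauli $Q\otimes Q'$. Hence the full set of $M=16^n$ OTOCs is a subset of the Pauli expectation values of a $2n$-qubit state.

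Third, Theorem~\ref{result:pauli_tomo} (Theorem~2 of \cite{huang2021information}) estimates $M$ Pauli expectations to additive error $\epsilon$ with probability $1-\delta$ from $\bigo{\log(M/\delta)/\epsilon^4}$ copies. With $M=16^n$ this gives $\bigo{(n+\log(1/\delta))/\epsilon^4}$, which is $O(n/\epsilon^4)$ for constant $\delta$, matching the claim. Equivalently, this is the PTM-learning statement of \cite{caro2024learning} applied to $\mathcal{N}=\mathcal{O}_{\mathcal{E}, O}$.

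There is no substantial obstacle here. The only point needing care is that the \cite{huang2021information} procedure is stated for copies of a state, whereas the access model provides a channel. This is resolved by noting that Choi-state preparation is a single query per copy, and that the subsequent collective (Bell-type, two-copy) measurements are permitted under the quantum-memory model. The sign from the transpose is known classically and is corrected in post-processing.
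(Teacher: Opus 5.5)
Your proposal is correct and follows the paper's proof: both identify all OTOCs with the $16^n$ Pauli expectation values $\tr(\rho_{\mathcal{O}_{\mathcal{E},O}}\, Q\otimes Q'^T)$ of the Choi state, prepared with one query per copy. Both then invoke the quantum-memory Pauli shadow tomography of \cite{huang2021information} to obtain $O(\log(16^n/\delta)/\epsilon^4)=O(n/\epsilon^4)$ queries for constant $\delta$.
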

\begin{proof}
By Eq.~(\ref{eq:expt_otoc}), OTOCs correspond to Pauli expectation values $\tr(\rho_{\mathcal{O}_{\mathcal{E}, O}}  Q \otimes Q'^T)$. To estimate them simultaneously, use the Pauli shadow tomography algorithm of \cite{huang2021information}, which requires quantum memory, and uses $O(\log(\frac{16^n}{\delta})/\epsilon^4)$ copies of the Choi state of $\mathcal{O}_{\mathcal{E}, O}$, which can be prepared by applying $(\mathcal{O}_{\mathcal{E}, O}\otimes \mathbb{I})$ to $\kett{\mathbb{I}}$ (Theorem 2, \cite{huang2021information}).
\end{proof}

To prove the lower bound, we employ the proof techniques of \cite{caro2024learning,chen2022exponential}. Consider the observables:
\begin{equation}
    O_{ij} \equiv \frac{1}{2}(Q_i \otimes Q_j^T + Q_j \otimes Q_i^T),
\end{equation}
which yield our desired OTOCs since $\tr(\rho_{\mathcal{O}_{\mathcal{E}, O}} O_{ij}) = \frac{1}{2^n} \tr(P \mathcal{E}(Q_i) P \mathcal{E}(Q_j))$.

Lemma~\ref{lemma:hard_instances_echo_form} implies that the states $\frac{1}{4^n}(\mathbb{I} + O_{ij})$ can be written in the form $\rho_{ij} \equiv \rho_{\mathcal{E}_{ij}^* \circ \tilde{P}_{ij} \circ \mathcal{E}_{ij}}$. Subsequently, since an algorithm that can solve the All OTOC estimation task implies an algorithm for distinguishing between the maximally mixed state and a uniform mixture of $\rho_{ij}$, the sample complexity of the former task inherits the lower bound for the latter task. To obtain this lower bound, \cite{caro2024learning} represents the querying algorithm as a learning tree and applies Le Cam's two-point method.

\begin{lemma}(Modification of \cite{caro2024learning}, Lemma D.5) \label{lemma_bound}
Let $\{O_i\}_{i=1}^M$ be $M$ traceless and self-adjoint $2n$-qubit observables satisfying the following:
\begin{enumerate}
    \item $\forall i = 1,...,M$, $\norm{O_i}_{\infty} \leq 1$.
    \item $\forall i = 1, ..., M/2$, $O_i = -O_{i+M/2}$.
    \item $\forall i = 1,..., M$, there exists a doubly-stochastic quantum channel $\mathcal{E}_i$ and Pauli operator $P_i$ such that:
    \begin{equation} \label{eq:hc_cond}
        \rho_{\mathcal{E}^\dagger_i \circ \tilde{P}_i \circ \mathcal{E}_i} = \frac{1}{4^n} (\mathbb{I}^{\otimes 2n} + 3 \epsilon O_i).
    \end{equation}
\end{enumerate}
Then any protocol without quantum memory, with any number of ancillas requires:
\begin{equation}
    \Omega \left(\frac{1}{\epsilon^2 \Delta(O_{1,...,O_M})} \right)
\end{equation}
queries of an unknown channel $\mathcal{E}^\dagger \circ \tilde{P} \circ \mathcal{E}$ to predict expectation values $\left \{ \tr(O_i \rho_{\mathcal{E}^\dagger \circ \tilde{P} \circ \mathcal{E}}) \right \}_{i=1}^M$ simultaneously to additive error $\epsilon$, with success probability $\geq 2/3$, where:
\begin{equation} \label{eq:delta}
    \Delta(O_1,...,O_M)\equiv \sup_{\substack{k \in \mathbb{N} \\\ket{\phi} \in (\mathbb{C}^2)^{\otimes (n+k)}, \norm{\phi} = 1 \\ \rho \in (\mathbb{C}^2 \times \mathbb{C}^2)^{\otimes (n+k)} , \tr(\rho)=1}} \frac{2}{M} \sum_{i=1}^{M/2} \left( \frac{\bra{\phi} \text{tr}_{in}((\rho^{T_{in}} \otimes \mathbb{I}_{out})(\mathbb{I}_{anc} \otimes O_i)) \ket{\phi}}{\bra{\phi} \rho_{anc} \otimes \mathbb{I}_{out} \ket{\phi}} \right)^2.
\end{equation}
$\rho^{T_{in}}$ denotes the partial transpose of $\rho$ in the ($n$-qubit) input subsystem. 
\end{lemma}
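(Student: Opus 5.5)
We plan to follow the learning-tree argument of \cite{chen2022exponential} as adapted in \cite{caro2024learning}, Lemma D.5, with one change: the hard instances must now be channels of the Loschmidt-echo form. We first reduce prediction to hypothesis testing. Condition 3 of the lemma gives, for each $i$, a doubly-stochastic $\mathcal{E}_i$ and a Pauli $P_i$ whose echo channel $\Lambda_i \equiv \mathcal{E}_i^\dagger\circ\tilde{P}_i\circ\mathcal{E}_i$ has Choi state $\frac{1}{4^n}(\mathbb{I}+3\epsilon O_i)$. Lemma~\ref{lemma:hard_instances_echo_form} shows the completely depolarizing channel $\Lambda_0$ is also of echo form. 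Since the $O_i$ are traceless with $\norm{O_i}_\infty\le 1$, we have $\tr(O_i\rho_{\Lambda_0})=0$. For $\Lambda_i$, the value is $\tr(O_i\rho_{\Lambda_i}) = 3\epsilon\tr(O_i^2)/4^n$. Rather than assume $O_i^2=\mathbb{I}$, we will state that $\tr(O_i\rho_{\Lambda_i})$ is separated from $0$ by more than $2\epsilon$; this holds for the Pauli-type $O_{ij}$ used later, where $\tr(O_{ij}^2)/4^n=1/2$. A predictor accurate to $\epsilon$ with probability $2/3$ therefore distinguishes $\Lambda_0$ from a uniformly random $\Lambda_i$ by thresholding. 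It is enough to lower bound this many-versus-one testing problem.

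For the testing problem, we model any ancilla-assisted, adaptive protocol without quantum memory as a learning tree of depth $T$. Each node $u$ specifies an input $\ket{\phi_u}$ on $n+k$ qubits and a POVM $\{M_s^u\}$. We compare the leaf distributions $p_0$ (under $\Lambda_0$) and $\mathbb{E}_i\, p_i$ using Le Cam's two-point method. Success with probability $2/3$ requires $\mathrm{TV}(p_0,\mathbb{E}_i p_i)=\Omega(1)$. Each edge likelihood ratio can be written via the Choi state:
\begin{equation*}
\frac{p_i(s|u)}{p_0(s|u)} = 1 + 3\epsilon\,\frac{\bra{\phi_u}\tr_{in}\big(((M_s^u)^{T_{in}}\otimes\mathbb{I}_{out})(\mathbb{I}_{anc}\otimes O_i)\big)\ket{\phi_u}}{\bra{\phi_u} M^u_{s,anc}\otimes\mathbb{I}_{out}\ket{\phi_u}} \equiv 1+3\epsilon\, x_{i}(u,s).
\end{equation*}
The sign symmetry $O_i=-O_{i+M/2}$ makes the first-order terms cancel when averaging over $i$. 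We then apply the standard one-sided bound $\mathrm{TV}\le \Pr_{p_0}[\mathbb{E}_i\prod_t(1+3\epsilon x_i)\le 1-c]$. Pairing $i$ with $i+M/2$ gives $\frac12[\prod(1+y)+\prod(1-y)]\ge 1-O(\sum_t y_t^2)$ roughly, up to the usual exponential/AM--GM step. The resulting deviation is controlled by $\epsilon^2\sum_t \mathbb{E}_i x_i(u_t,s_t)^2$.

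Next we bound the expected sum of squares along a root-to-leaf path. We take expectation over $s\sim p_0(\cdot|u)$ at each node and normalize $M_s^u$ to the trace-one $\rho$ appearing in \eqref{eq:delta}. Each per-step contribution is then at most $O(\epsilon^2\Delta(O_1,\dots,O_M))$, with the $\frac{2}{M}\sum_{i\le M/2}$ averaging coming from the pairing. Summing over $T$ steps gives $\mathrm{TV}\le O(T\epsilon^2\Delta)$, which forces $T=\Omega(1/(\epsilon^2\Delta))$.

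The main obstacle we expect is making this pathwise second-moment step rigorous. A Markov argument must convert the expected sum of squares into a high-probability bound, and we must handle the fact that the denominator $\bra{\phi}M_{anc}\otimes\mathbb{I}\ket{\phi}$ can be small. We will import this step essentially verbatim from \cite{caro2024learning}, Lemma D.5. The only genuinely new checks are that the prefactor $3\epsilon$ sits inside a valid state, which needs $\norm{3\epsilon O_i}_\infty\le 1$ and hence $\epsilon\le 1/3$, and that all hard instances, including $\Lambda_0$, are admissible echo channels.
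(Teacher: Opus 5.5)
Your architecture is the paper's. You reduce prediction to distinguishing $\Lambda_0$ (echo form by Lemma~\ref{lemma:hard_instances_echo_form}) from a uniform mixture of the condition-3 channels. You then import the learning-tree and Le Cam argument of \cite{caro2024learning}, Lemma D.5, with the $O_i\leftrightarrow -O_i$ pairing and a per-step bound by $\Delta$. Your likelihood ratio puts the POVM element in the partially transposed slot and the input state in the sandwich, the reverse of how $\Delta$ is set up. This is harmless because echo channels are self-adjoint (Lemma~\ref{lemma:prop_o}) and $\Delta$ takes a supremum over both.

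The step that fails is the reduction. You are right that $\norm{O_i}_\infty\le 1$ alone gives no separation. In fact the lemma is false under that hypothesis alone. Take $O_i=\pm c\,Q_i\otimes Q_i^T$ with $c\le\epsilon$: all three conditions hold (set $\alpha=3c\epsilon$ in Lemma~\ref{lemma:hard_instances_echo_form}). Outputting $0$ is then always $\epsilon$-accurate, since $|\tr(O_i\rho)|\le c$. Yet $\Delta$ scales as $c^2$, so the claimed bound blows up as $c\to 0$.

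However, your substitute hypothesis does not hold where you invoke it. With $\tr(O_{ij}^2)/4^n=1/2$ you get $\tr(O_{ij}\rho_{\Lambda_{ij}})=\tfrac32\epsilon<2\epsilon$. An $\epsilon$-accurate estimate of $0$ and one of $\pm\tfrac32\epsilon$ can therefore coincide, so thresholding does not separate the hypotheses.

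The repair is a constant rescaling:
\begin{itemize}
\item Assume $\tr(O_i^2)\ge c_0 4^n$.
\item Replace $3\epsilon$ in condition 3 by $\kappa\epsilon$ with $\kappa c_0>2$. For the $O_{ij}$, $\kappa=5$ works, realized by $\alpha=5\epsilon/2\le 1/2$ in Lemma~\ref{lemma:hard_instances_echo_form}.
\item Take $\epsilon$ below a small constant, so that $\kappa\epsilon\,|x_i|$ stays bounded away from $1$ in the $\log(1-y^2)\ge -Cy^2$ step you import. Your $\epsilon\le 1/3$ allows $|y|=1$ there.
\end{itemize}
This changes only the constant in $\Omega(1/(\epsilon^2\Delta))$.

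The paper's proof has the same hole. It defers to Caro's $\pm1$-eigenvalue setting and later asserts $\tr(O_{ij}^2)=4^n$, which fails for $i\ne j$. So state the extra hypothesis explicitly rather than claiming it is satisfied.
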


\begin{proof}
The proof is a minor extension of the proof of \cite{caro2024learning}, Lemma D.5, for (i) the more relaxed condition of $O_i$ having eigenvalues $\pm1$ to $\norm{O_i}_{\infty} \leq 1$, and (ii) restricting the queried quantum channel to take the form $\mathcal{E}^\dagger \circ \tilde{P} \circ \mathcal{E}$.

Essentially, observe that the ability to solve the expectation estimation task implies the ability to distinguish between the maximally mixed state and a uniform mixture of the states $\rho_{\mathcal{E}^\dagger_i \circ \tilde{P}_i \circ \mathcal{E}_i}$; a lower bound for the latter task is therefore inherited by the former task. To obtain this bound, \cite{caro2024learning} represents the querying algorithm as a learning tree and applies Le Cam's two-point method to arrive at the lower bound, proving the claim.
\end{proof}

\begin{lemma}
Consider the $M = 4^n(4^n-1)$ observables:
\begin{equation}
    O_{ij} \equiv \pm \frac{1}{2}(Q_i \otimes Q_j^T + Q_j \otimes Q_i^T),
\end{equation}
where $1 \leq i \leq j \leq 4^n$ (so that $Q_i,Q_j \neq \mathbb{I}^{\otimes n}$, and repeated terms due to $O_{ij} = O_{ji}$ are eliminated). They satisfy the conditions of Lemma~\ref{lemma_bound}. Furthermore, we have $\Delta(O_{ij}) = \Theta(1/4^n)$.
\end{lemma}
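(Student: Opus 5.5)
The plan has three parts: check the three conditions of Lemma~\ref{lemma_bound}, then prove $\Delta(O_{ij}) = O(1/4^n)$ by adapting the Pauli-observable argument of \cite{caro2024learning}, then give a matching lower-bound witness.

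\emph{Conditions of Lemma~\ref{lemma_bound}.} The observables are indexed by unordered pairs of non-identity Paulis (including $i=j$) together with a sign, and the list splits into $\pm$ halves, which gives condition 2. For $i\neq j$, $Q_i\otimes Q_j^T$ and $Q_j\otimes Q_i^T$ are Hermitian up to a sign, so a phase fix handles Hermiticity. Each term is traceless and has operator norm $1$, so the triangle inequality gives $\norm{O_{ij}}_\infty\le 1$, which is condition 1.

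Condition 3 comes from Lemma~\ref{lemma:hard_instances_echo_form}:
\begin{itemize}
\item For $i\neq j$, Eq.~(\ref{eq:choi_hard_instances_gen}) with $\alpha=3\epsilon/2\le 1/2$ gives $\frac{1}{4^n}(\mathbb I+3\epsilon O_{ij})$, with the sign supplied by $s$.
\item For $i=j$, $O_{ii}=\pm Q_i\otimes Q_i^T$, and Eq.~(\ref{eq:choi_diag}) with $\alpha=3\epsilon$ applies.
\end{itemize}
Both require $\epsilon$ to be at most a small constant, which is harmless.

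\emph{Upper bound on $\Delta$.} Write $\rho$ in the Pauli basis as $\rho=\frac{1}{2^{n+k}\cdot 2^{n}}\sum_{A,B,C} c_{A,B,C}\,R_A\otimes P_B\otimes P_C$ (ancilla $\otimes$ in $\otimes$ out). Fix $\ket\phi$ and define the normalized reduced objects. For $O=Q_i\otimes Q_j^T$, the numerator in Eq.~(\ref{eq:delta}) becomes, up to normalization, $\bra{\phi}\,\tr_{in}\!\big(\rho^{T_{in}}(\mathbb I\otimes Q_j^{T})\big)\otimes Q_i\,\ket\phi$. Here $\tr_{in}(\rho^{T_{in}} Q_j^T)$ is $2^n$ times the $Q_j$-component of $\rho$ on the input register, viewed as an operator $\sigma_j$ on anc$\otimes$out. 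The key quantity is therefore $x_{ij}=\langle\phi|\sigma_j\otimes Q_i|\phi\rangle/\langle\phi|\rho_{anc}\otimes\mathbb I|\phi\rangle$.

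Using $(a+b)^2\le 2a^2+2b^2$, it suffices to bound $\sum_{i,j}x_{ij}^2$. I would proceed in two Cauchy--Schwarz/Parseval steps, following Lemma D.6 of \cite{caro2024learning}:
\begin{enumerate}
\item \textbf{Sum over $i$.} Let $\ket{\psi_j}\propto(\sigma_j^{1/2}\otimes\mathbb I)\ket\phi$. For fixed $j$, $\sum_i \langle\psi|\mathbb I\otimes Q_i|\psi'\rangle^2$ is bounded by $2^n$ times a reduced purity of the output marginal of $\ket\phi$. This uses the Pauli-completeness identity $\sum_Q Q\otimes Q^T=2^n\,\mathrm{SWAP}$-type relations.
\item \textbf{Sum over $j$.} The analogous identity on the input register bounds $\sum_j\sigma_j^2$-type terms by $2^n\rho_{anc}$-weighted quantities. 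Together these give $\sum_{i,j}x_{ij}^2\le 4^n\cdot O(1)$.
\end{enumerate}
Dividing by $M/2=\Theta(16^n)$ then gives $\Delta=O(1/4^n)$.

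The main obstacle is that $\sigma_j$ need not be positive, so the square-root trick in step 1 needs care. I would first try to decompose $\sigma_j$ into positive and negative parts. Failing that, I would invoke Caro's bound directly: it is stated for Pauli observables $Q_i\otimes Q_j^T$ and gives $\frac{1}{M}\sum x^2=O(1/4^n)$, and the symmetrized $O_{ij}$ inherit it through the $(a+b)^2$ inequality.

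\emph{Lower bound $\Delta=\Omega(1/4^n)$.} I would exhibit a witness. Take $k=n$, $\rho$ the Choi state of the identity channel, i.e. $\kettbbra{\mathbb I}{\mathbb I}$ between input and output, tensored with a maximally mixed ancilla, and $\ket\phi=\kett{\mathbb I}$ between ancilla and output. Then the numerators for diagonal-type terms $Q_i\otimes Q_i^T$ are constants of order $1$. There are $\Theta(4^n)$ such terms among $M=\Theta(16^n)$, so the average is $\Theta(1/4^n)$, matching the upper bound.
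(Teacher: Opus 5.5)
Your check of the three conditions is correct. No phase fix is needed, since $Q_j^T$ is Hermitian and so each $Q_i\otimes Q_j^T$ already is. Treating $i=j$ separately via Eq.~(\ref{eq:choi_diag}) with $\alpha=3\epsilon$ is more careful than the paper, which only invokes the $i\neq j$ construction. Your direct Cauchy--Schwarz route to the upper bound is not a proof as written: the square-root step needs $\sigma_j\geq 0$, which you flag but do not resolve. Your fallback does suffice, and it is exactly the paper's argument: bound $f(O_{ij})^2\leq\frac12\bigl(f(Q_i\otimes Q_j^T)^2+f(Q_j\otimes Q_i^T)^2\bigr)$, then apply Lemma~D.8 of \cite{caro2024learning} to the full Pauli sum.

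The genuine gap is the lower-bound witness. In Eq.~(\ref{eq:delta}), $\rho$ is the learner's $(n+k)$-qubit input state on ancilla$\otimes$input. It cannot hold Bell pairs between input and output, because the output register only exists after the query. The closest well-typed version of your choice is $\rho=\frac{\mathbb{I}_{anc}}{2^n}\otimes\sigma_{in}$ for any $\sigma$ on the input, and it gives
\[
\text{tr}_{in}\bigl((\rho^{T_{in}}\otimes\mathbb{I}_{out})(\mathbb{I}_{anc}\otimes Q_j^T\otimes Q_i)\bigr)=\frac{\tr(\sigma Q_j)}{2^n}\,\mathbb{I}_{anc}\otimes Q_i .
\]
Its expectation in the ancilla--output Bell state $\kett{\mathbb{I}}$ is proportional to $\tr(Q_i)=0$. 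Every summand vanishes, so you only obtain $\Delta\geq 0$. An ancilla uncorrelated with the input carries nothing through the query; the Bell pairs must sit between ancilla and input.

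The fix is the paper's choice. Take $\rho=\kettbbra{\mathbb{I}}{\mathbb{I}}$ on ancilla$\otimes$input and $\ket{\phi}=\kett{\mathbb{I}}$ on ancilla$\otimes$output. Then $\rho^{T_{in}}=S_{anc,in}/2^n$ (the swap), and the partial trace transports $O_{ij}$ onto ancilla$\otimes$output. The numerator is $\frac{1}{2^n}\bbra{\mathbb{I}}O_{ij}\kett{\mathbb{I}}=\delta_{ij}/2^n$, and the denominator is $\bra{\phi}\rho_{anc}\otimes\mathbb{I}_{out}\ket{\phi}=1/2^n$. Each of the $4^n-1$ diagonal terms therefore contributes exactly $1$, giving $\Delta\geq\frac{2}{M}(4^n-1)=2/4^n$. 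Note that the numerators themselves are $1/2^n$, not order $1$. What is order $1$ is their ratio to the denominator, which your sketch never computes.
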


\begin{proof}
We first show that the observables satisfy the conditions of Lemma~\ref{lemma_bound}, and then compute $\Delta$. For the first step, conditions (1) and (2) are clear. Notably, for condition (1), we only require $\norm{O_i}_\infty = 1$, $\tr(O_i) = 0$, and $\tr(O_i^2) = 4^n$, which guarantees that:
\begin{equation}
    \tr(O_i (\frac{1}{4^n} (\mathbb{I}^{\otimes 2n} + 3 \epsilon O_i))) = \frac{3 \epsilon}{4^n} \tr(O_i^2) = 3 \epsilon, 
\end{equation}
and therefore that the argument of \cite{chen2022exponential} carries over, with the choice $\epsilon \leq 1/3$. This is satisfied by our choice of $O_i$, because $\tr(O_i^2) = 4^n$. It also leads to a small modification of a constant that appears when bounding the log in Eq.~(87) of \cite{chen2022exponential}, without further changes to the proof. By Lemma~\ref{lemma:hard_instances_echo_form}, condition (3) also holds. It suffices to choose $\mathcal{E}_{ij}$ and $P_{ij}$ based on $O_{ij}$, and set $s$ and $\alpha=3\epsilon/2$, with $\epsilon \leq 1/3$.

Next, we evaluate Eq.~(\ref{eq:delta}), i.e. the quantity:
\begin{equation*} \label{eq:proof_delta}
    \Delta(O_1,...,O_M) \equiv \sup_{\substack{k \in \mathbb{N} \\ \ket{\phi} \in (\mathbb{C}^2)^{\otimes (n+k)}, \norm{\phi} = 1 \\ \rho \in (\mathbb{C}^2 \times \mathbb{C}^2)^{\otimes (n+k)} , \tr(\rho)=1}} 
    \frac{2}{4^n(4^n-1)} 
    \sum_{1 \leq i \leq j \leq 4^n} \left( \frac{\bra{\phi} \text{tr}_{in}((\rho^{T_{in}} \otimes \mathbb{I}_{out})(\mathbb{I}_{anc} \otimes O_{ij})) \ket{\phi}}{\bra{\phi} \rho_{anc} \otimes \mathbb{I}_{out} \ket{\phi}} \right)^2.
\end{equation*}
Denoting the summands above as $f(O_{ij})^2$, they can be upper bounded as: $f(O_{ij})^2 = \left( \frac{f(P_i \otimes P_j^T) + f(P_j \otimes P_i^T)}{2}\right)^2 \leq \frac{f(P_i \otimes P_j^T)^2 + f(P_j \otimes P_i^T)^2}{2}$. Applying it and re-arranging terms in the sum yields:
\begin{align*}
    \Delta(O_1,...,O_M) &\leq \sup_{\substack{k \in \mathbb{N} \\ \ket{\phi} \in (\mathbb{C}^2)^{\otimes (n+k)}, \norm{\phi} = 1 \\ \rho \in (\mathbb{C}^2 \times \mathbb{C}^2)^{\otimes (n+k)} , \tr(\rho)=1}} 
    \frac{2}{4^n(4^n-1)} 
    \left( \frac{1}{2}\sum_{i=1}^{4^n} f(P_i \otimes P_i^T)^2 + \frac{1}{2}\sum_{i=1, j=1}^{4^n, 4^n} f(P_i \otimes P_j^T)^2\right) \\
    & \leq  \sup_{\substack{k \in \mathbb{N} \\ \ket{\phi} \in (\mathbb{C}^2)^{\otimes (n+k)}, \norm{\phi} = 1 \\ \rho \in (\mathbb{C}^2 \times \mathbb{C}^2)^{\otimes (n+k)} , \tr(\rho)=1}} 
    \frac{2}{4^n(4^n-1)} \sum_{i=1, j=1}^{4^n, 4^n} f(P_i \otimes P_j^T)^2 \\
    & = \frac{2(4^n-1)}{4^n} \sup_{\substack{k \in \mathbb{N} \\ \ket{\phi} \in (\mathbb{C}^2)^{\otimes (n+k)}, \norm{\phi} = 1 \\ \rho \in (\mathbb{C}^2 \times \mathbb{C}^2)^{\otimes (n+k)} , \tr(\rho)=1}} 
    \frac{1}{(4^n-1)^2} \sum_{i=1, j=1}^{4^n, 4^n} f(P_i \otimes P_j^T)^2,
\end{align*} 
where in the second equality we also used $\sum_{i=1}^{4^n} f(P_i \otimes P_i^T)^2 \leq \sum_{i=1, j=1}^{4^n, 4^n} f(P_i \otimes P_j^T)^2$. At this point, observe that the supremum above is identical to the sum in Lemma D.8 of \cite{caro2024learning}, which by the same Lemma is bounded above by $\frac{1}{(2^n - 1)^2}$ (Eq.~(D.85) of \cite{caro2024learning}). This yields the upper-bound:
\begin{equation}
    \Delta(O_1,...,O_M) \leq \frac{2(4^n-1)}{4^n(2^n-1)^2}.
\end{equation}

It remains to lower-bound $\Delta(O_1,...,O_M)$. To do this, choose $\rho = \ketbra{\phi}{\phi} = \kettbbra{\mathbb{I}}{\mathbb{I}}$, and evaluate Eq.~(\ref{eq:proof_delta}). We will make use of the fact that $\kettbbra{\mathbb{I}}{\mathbb{I}} = \frac{1}{4^n} \sum_{P \in {\mathcal{P}}_n} P \otimes P$, which implies that $\left( \kettbbra{\mathbb{I}}{\mathbb{I}} \right)^{T_{A}} = \frac{1}{2^n} S_{AB}$ and $\text{tr}_A\left( \kettbbra{\mathbb{I}}{\mathbb{I}} \right) = \frac{1}{2^n} \mathbb{I}$, where $S_{AB}$ is the SWAP operator between subsystems $A$ and $B$. The denominator of $f(O_{ij})$ evaluates to $\bra{\phi} \rho_{anc} \otimes \mathbb{I}_{out} \ket{\phi} = \bbra{\mathbb{I}} \mathbb{I}/2^n \kett{\mathbb{I}} = 1/2^n$. For the numerator, note that:
\begin{align*}
    \bra{\phi} \text{tr}_{in}((\rho^{T_{in}} \otimes \mathbb{I}_{out})(\mathbb{I}_{anc} \otimes O_{ij})) \ket{\phi} &= \frac{1}{2^n}  \bra{\phi} \text{tr}_{in}((S_{anc, in} \otimes \mathbb{I}_{out})(\mathbb{I}_{anc} \otimes O_{ij}))) \ket{\phi} \\
    &= \frac{1}{2^n}  \bra{\phi} \text{tr}_{in}((\mathbb{I}_{in} \otimes O_{ij}))(S_{anc, in} \otimes \mathbb{I}_{out})) \ket{\phi} \\
    &= \frac{1}{2^n}  \bra{\phi} O_{ij}  \ket{\phi} \\
    &= \frac{1}{2(2^n)} (\bbra{\mathbb{I}} Q_i \otimes Q_j^T \kett{\mathbb{I}} + \bbra{\mathbb{I}} Q_j \otimes Q_i^T \kett{\mathbb{I}}) \\
    &= \frac{1}{2^n} \delta_{ij}.
\end{align*}
Putting all together, we have:
\begin{align}
    \frac{2}{4^n(4^n-1)}  \sum_{1 \leq i \leq j \leq 4^n} \delta_{ij} ^2 = \frac{2}{4^n(4^n-1)}*(4^n-1) = \frac{2}{4^n},
\end{align}
since only the diagonal terms contribute. This finally yields:
\begin{equation}
    \frac{2}{4^n} \leq \Delta(O_1,...,O_M) \leq \frac{2(4^n-1)}{4^n(2^n-1)^2}.
\end{equation}
\end{proof}

Combining the above two Lemmata and the upper bound of Lemma~\ref{app:qmem_ub} proves Theorem~\ref{theorem:exp_sep_all_otocs_query}.

\section{Learning PTM elements of general and Pauli channels}

In this section, we consider the problem of estimating the Pauli Transfer Matrix (PTM) elements of a quantum channel, defined as the quantities $\tr(P_i \mathcal{N}(P_j))/2^n$, where $P_i, P_j \in \mathcal{P}_n$, and $\mathcal{N}$ is an unknown (but queryable) quantum channel acting on $n$ qubits. They can be regarded as the matrix elements of the matrix representation of the channel $\mathcal{N}$ (i.e. its transfer matrix) in the Pauli basis.

The learning of all PTM elements of an unknown channel has been studied in \cite{caro2024learning}, which provides lower and upper bounds for protocols possessing resources such as adaptivity and quantum memory. When $\mathcal{N}$ is additionally promised to be a Pauli channel, the same problem is also known as the Pauli channel eigenvalue estimation problem \cite{chen2022quantum,chen2024tight,chen2025efficient,kim2026fundamental}. 

Our results in this section -- which are direct implications of our results for learning OTOCs, and known results on the Pauli channel eigenvalue estimation problem -- complement existing results on learning PTM elements of general channels by providing lower and (algorithmic) upper bounds in the cases where the targeted set of PTM elements (1) are not known in advance to the learner, and (2) is the set of all diagonal PTM elements.

\subsection{Algorithms and upper bounds for estimating arbitrary PTM elements}

Algorithms~\ref{algo:pauli_shad_2n}, \ref{algo:pauli_shad_n}, \ref{algo:diag_pauli_shad_2n} and \ref{algo:diag_pauli_shad_n} yield the classical shadow of the Choi state $\rho_{{\mathcal{O}_{\mathcal{E}, O}}}$ of quantum channels of the form ${\mathcal{O}_{\mathcal{E}, O}} = \mathcal{E}^\dagger \circ \tilde{O} \circ \mathcal{E}$, which subsequently enables the efficient estimation of OTOCs. Because OTOCs can be viewed as PTM elements of the channel ${\mathcal{O}_{\mathcal{E}, O}}$, as (c.f. also Eq.~(\ref{eq:otoc_to_distinguish})):
\begin{equation}
    \frac{1}{2^n} \tr(\mathcal{E}(P_i) O \mathcal{E}(P_j) O) = \frac{1}{2^n} \tr(P_i {\mathcal{O}_{\mathcal{E}, O}}(P_j)),
\end{equation}
the above algorithms can be regarded as special instances of the more general problem of learning the classical shadow of the Choi state of general quantum channels $\mathcal{N}$, which enables the efficient estimation of PTM elements $\tr(P_i \mathcal{N}(P_j))/2^n$. We therefore immediately obtain the following algorithms/upper bounds for estimating the PTM elements of general quantum channels $\mathcal{N}$, by simply replacing ${\mathcal{O}_{\mathcal{E}, O}}$ with $\mathcal{N}$ in Algorithms~\ref{algo:pauli_shad_2n}, \ref{algo:pauli_shad_n}, \ref{algo:diag_pauli_shad_2n} and \ref{algo:diag_pauli_shad_n}.

\begin{theorem}[General PTM elements] \label{theorem:general_ptm_shad}
Consider the task of simultaneously estimating $M$ arbitrary PTM elements $\{\tr(Q_i \mathcal{N}(Q_j))/2^n\}_{i,j}$, where $Q_i, Q_j$ are weight-$w$ tensor product operators, to additive error $\epsilon$ and success probability at least $1-\delta$. Algorithms~\ref{algo:pauli_shad_2n} and \ref{algo:pauli_shad_n} achieves this using $\bigo{9^w \log(M/\delta)/\epsilon^2}$ queries of the channel $\mathcal{N}$.
\end{theorem}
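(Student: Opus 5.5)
The plan is to repeat the proof of Theorem~\ref{theorem:general_op_shad} with ${\mathcal{O}_{\mathcal{E}, O}}$ replaced by an arbitrary queryable channel $\mathcal{N}$. That earlier proof never used the Loschmidt-echo structure; it used only that the queried map is a channel whose Choi state can be prepared and measured. The first step is the identity Eq.~(\ref{eq:ricochet_channel}), which expresses each target quantity as a local Pauli expectation value on the Choi state:
\begin{equation*}
\frac{1}{2^n}\tr(Q_i \mathcal{N}(Q_j)) = \tr\big((Q_i^\dagger \otimes Q_j^T)\,\rho_{\mathcal{N}}\big),
\end{equation*}
where $\rho_{\mathcal{N}} = (\mathcal{N}\otimes\mathbb{I})(\kettbbra{\mathbb{I}}{\mathbb{I}})$. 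Since $Q_i, Q_j$ are weight-$w$ tensor-product operators, $Q_i^\dagger\otimes Q_j^T$ is a tensor-product observable on $2n$ qubits with weight at most $2w$.

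For the $2n$-qubit Algorithm~\ref{algo:pauli_shad_2n}, each query prepares one copy of $\rho_{\mathcal{N}}$ by applying $\mathcal{N}\otimes\mathbb{I}$ to $n$ Bell pairs. The protocol then applies random single-qubit Cliffords to all $2n$ qubits and measures in the computational basis, which is exactly the Pauli classical shadow protocol on $\rho_{\mathcal{N}}$. Theorem~\ref{theorem:huang_predicting_pauli} then gives the median-of-means guarantee with $N=\bigo{3^{2w}\log(M/\delta)/\epsilon^2} = \bigo{9^w\log(M/\delta)/\epsilon^2}$ snapshots. Equivalently, Lemma~3 of \cite{huang2020predicting} bounds the shadow norm of a weight-$2w$ observable by $3^{2w}$.

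For the $n$-qubit Algorithm~\ref{algo:pauli_shad_n}, I would reuse the computation in Eq.~(\ref{eq:probability_equiv}) with ${\mathcal{O}_{\mathcal{E}, O}}$ replaced by $\mathcal{N}$. The probability of the pair $(b_L,b_R)$ given $(V_L,V_R)$ is
\begin{equation*}
2^{-n}\tr\big(V_R^\dagger|b_R\rangle\langle b_R|V_R\,\mathcal{N}((V_L^\dagger|b_L\rangle\langle b_L|V_L)^T)\big),
\end{equation*}
and Eq.~(\ref{eq:ricochet_channel}) turns this into $\tr(V^\dagger|b\rangle\langle b|V\,\rho_{\mathcal{N}})$. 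So the outcome statistics, and hence the snapshots, coincide with those of the $2n$-qubit protocol. The only properties of $\mathcal{N}$ used here are linearity and trace preservation, which ensure the outcome distribution is normalized; self-adjointness and unitality are not needed.

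The main obstacle is checking that Eq.~(\ref{eq:ricochet_channel}) and the transpose/adjoint bookkeeping hold for a general channel that need not be self-adjoint or Hermitian-preserving in the relevant sense. Because the observables are Paulis (or Hermitian tensor products), the conjugation and transpose only introduce signs or a relabelling of the local bases, so the weight and the shadow-norm bound are unaffected. One must also confirm that both algorithms' outputs depend on $\mathcal{N}$ only through $\rho_{\mathcal{N}}$, which follows from the same probability-equivalence argument.
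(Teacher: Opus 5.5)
Your proposal is correct and follows essentially the same route as the paper, which obtains this theorem by rerunning the proof of Theorem~\ref{theorem:general_op_shad} with $\mathcal{O}_{\mathcal{E}, O}$ replaced by $\mathcal{N}$: Eq.~(\ref{eq:ricochet_channel}) makes each PTM element a weight-$2w$ expectation value on $\rho_{\mathcal{N}}$, Pauli shadows give $\bigo{3^{2w}\log(M/\delta)/\epsilon^2}$, and the probability-matching computation of Eq.~(\ref{eq:probability_equiv}) carries this over to the $n$-qubit algorithm. One cosmetic correction: the identity should read $\tr\big((Q_i\otimes Q_j^T)\rho_{\mathcal{N}}\big) = 2^{-n}\tr(Q_i\mathcal{N}(Q_j))$, with no dagger on either side, though this makes no difference for Pauli (Hermitian) $Q_i$.
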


\begin{theorem}[Diagonal PTM elements] \label{theorem:general_ptm_shad_diag}
Consider the task of simultaneously estimating $M$ arbitrary diagonal PTM elements $\{\tr(P \mathcal{N}(P))/2^n\}_{i,j}$, where $P$ are weight-$w$ tensor product operators, to additive error $\epsilon$ and success probability at least $1-\delta$. Algorithms~\ref{algo:diag_pauli_shad_2n} and \ref{algo:diag_pauli_shad_n} achieves this using $\bigo{3^w \log(M/\delta)/\epsilon^2}$ queries of the channel $\mathcal{N}$.
\end{theorem}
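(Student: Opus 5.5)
The plan is to reduce Theorem~\ref{theorem:general_ptm_shad_diag} to the already-proven Theorem~\ref{theorem:general_op_shad_diag}. The key observation is that nothing in the analysis of Algorithms~\ref{algo:diag_pauli_shad_2n} and \ref{algo:diag_pauli_shad_n} used the Loschmidt-echo structure $\mathcal{E}^\dagger\circ\tilde{O}\circ\mathcal{E}$ of the queried channel. What the analysis did use is:
\begin{itemize}
\item[(i)] querying the channel on half of $\kett{\mathbb{I}}$ produces its Choi state;
\item[(ii)] Eq.~(\ref{eq:ricochet_channel}) identifies the target quantities with Pauli expectation values on that Choi state;
\item[(iii)] the correlated shadow channel $\mathcal{M}_d$ and its shadow norm, which are properties of the measurement ensemble alone.
\end{itemize}
I would therefore replace $\mathcal{O}_{\mathcal{E},O}$ by an arbitrary channel $\mathcal{N}$ throughout.

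For the $2n$-qubit Algorithm~\ref{algo:diag_pauli_shad_2n}, I first apply $\mathcal{N}\otimes\mathbb{I}$ to $\kettbbra{\mathbb{I}}{\mathbb{I}}$, which prepares $\rho_{\mathcal{N}}$ by Eq.~(\ref{eq:choi_def}). Eq.~(\ref{eq:ricochet_channel}) then gives $\tr(P\otimes P^T\rho_{\mathcal{N}})=\tr(P\,\mathcal{N}(P))/2^n$, since $P^\dagger=P$ for Paulis. Lemma~\ref{lemma:m_inverse} holds verbatim for any $2n$-qubit input state. Consequently the estimator $\tr(P\otimes P^T\hat\sigma_{\mathcal{N}})$ is unbiased for the diagonal PTM element, provided $P\otimes P^T$ lies in the image of $\mathcal{M}_d$, which it does by that lemma.

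The variance bound Eq.~(\ref{eq:diag_variance}) and the computation $L_{P\otimes P^T}=3^w\,\mathbb{I}^{\otimes 2n}$ depend only on $P$ and the ensemble. This gives a shadow norm of $3^w$ uniformly over input states. Median-of-means together with a union bound over the $M$ targets then yields $\bigo{3^w\log(M/\delta)/\epsilon^2}$ queries. For weight-$w$ tensor-product (non-Pauli) operators I would expand them in the local Pauli basis, as in \cite{huang2020predicting}. Strictly, however, the diagonal structure $P\otimes P^T$ is what matters, so I would restrict the statement to Paulis, consistent with the proof of Theorem~\ref{theorem:general_op_shad_diag}.

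For the $n$-qubit Algorithm~\ref{algo:diag_pauli_shad_n}, I would repeat the computation of Eq.~(\ref{eq:probability_equiv}) with $\mathcal{O}_{\mathcal{E},O}$ replaced by $\mathcal{N}$, imposing $V_L=V_R$. That derivation uses only Eq.~(\ref{eq:ricochet_channel}), valid for any channel, together with the uniform sampling of $b_L$. It shows that the outcome statistics coincide with those of the $2n$-qubit scheme, so the snapshots are identically distributed and the same bound follows.

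The main point needing care is the transpose bookkeeping in the $n$-qubit simulation. One must check that preparing $(V^\dagger\ket{b_L})^{*}$ (equivalently, applying $V^T$) reproduces the term $V\otimes V$ in the Choi picture, so that correlated rather than anti-correlated bases are measured. Since single-qubit Cliffords map Paulis to Paulis up to sign, and signs square away in the shadow norm, I expect this to go through. It is otherwise identical to the general-channel step already made in the proof of Theorem~\ref{theorem:general_op_shad}.
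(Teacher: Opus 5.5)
Your proposal is correct and matches the paper. The paper obtains this theorem simply by replacing $\mathcal{O}_{\mathcal{E},O}$ with $\mathcal{N}$ in Algorithms~\ref{algo:diag_pauli_shad_2n} and~\ref{algo:diag_pauli_shad_n} and in the proof of Theorem~\ref{theorem:general_op_shad_diag}, whose only channel-dependent ingredients are Eqs.~(\ref{eq:choi_def}) and~(\ref{eq:ricochet_channel}). Your restriction to Pauli $P$ is also the right reading, since $\mathcal{M}_{d,1}$ annihilates $A\otimes B$ for distinct non-identity Paulis, so for a general tensor-product $P$ the operator $P\otimes P^T$ need not lie in the image of $\mathcal{M}_d$.

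The transpose check you flag is settled exactly by Eq.~(\ref{eq:probability_equiv}) with $V_L=V_R$. The input $V^T\ket{b_L}$ has density matrix $(V^\dagger\ketbra{b_L}{b_L}V)^T$, which Eq.~(\ref{eq:ricochet_channel}) turns into the projector $V^\dagger\ketbra{b_L}{b_L}V$ on the reference half of $\rho_{\mathcal{N}}$, so the snapshots are identically distributed. No appeal to signs squaring away is needed; indeed, an outcome-label mismatch would bias the $Y$-containing estimates rather than merely affect the variance.
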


The algorithm of Theorem~\ref{theorem:general_ptm_shad} is essentially identical to existing proposals such as \cite{mcginley2022quantifying,kunjummen2023shadow,levy2024classical} to obtain the classical shadow of the Choi state $\rho_{\mathcal{N}}$. These references study its usage for tasks such as the computation of operator space entanglement, the prediction of expectation values, and Hamiltonian learning, whereas our results concern the prediction of PTM elements. We further provide matching lower bounds in the following section.

\subsection{Matching lower bounds for estimating arbitrary PTM elements}

Theorems~\ref{theorem:shadow_diag_lowerbound_formal} and \ref{theorem:shadow_gen_lowerbound_formal} apply to protocols (satisfying the stated properties) that query quantum channels of the form ${\mathcal{O}_{\mathcal{E}, O}} = \mathcal{E}^\dagger \circ \tilde{O} \circ \mathcal{E}$ to output estimates of OTOCs, which can be viewed as PTM elements of the channel ${\mathcal{O}_{\mathcal{E}, O}}$. Any protocol (satisfying the same properties) that query a general quantum channel $\mathcal{N}$ to output estimates of PTM elements $\tr(P_i \mathcal{N}(P_j))/2^n$ must therefore also obey the same lower bounds; otherwise they can be used to estimate OTOCs more efficiently, violating Theorems~\ref{theorem:shadow_diag_lowerbound_formal} and \ref{theorem:shadow_gen_lowerbound_formal}. We therefore immediately obtain the following analogues of Theorems~\ref{theorem:shadow_diag_lowerbound_formal} and \ref{theorem:shadow_gen_lowerbound_formal} for the problem of estimating weight-$w$ PTM elements.

\begin{theorem}[Sample complexity lower bound on predicting unknown diagonal PTM elements; arbitrary initialization] \label{theorem:ptm_diag_lowerbound_formal}
Consider any protocol that is able to predict any set of $M$ weight-$w$ diagonal PTM elements $\{\tr(P_i \mathcal{N}(P_i))/2^n\}_{i=1}^M$, i.e. $\forall P_i \in \mathcal{P}_n, ~w(P_i)=w$, based on initializing as arbitrary states of $n$ or more qubits, querying the (arbitrary) channel $\mathcal{N}$, and performing local measurements in each of the $N$ runs. Any such protocol requires at least $N \geq \Omega\left(3^w \log(M)/\epsilon^2\right)$ runs to solve this problem with constant success probability, $\forall w$ and $M \leq 3^w {n \choose w}$.
\end{theorem}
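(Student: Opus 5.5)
The plan is a reduction to Theorem~\ref{theorem:shadow_diag_lowerbound_formal}, mirroring the paragraph preceding the statement. Suppose a protocol $\mathcal{A}$, with the stated resources (arbitrary initial states on $n$ or more qubits, a single query per run, local measurements, $N$ runs), predicts any set of $M$ weight-$w$ diagonal PTM elements $\{\tr(P_i\mathcal{N}(P_i))/2^n\}_{i=1}^M$ of an \emph{arbitrary} channel $\mathcal{N}$ to error $\epsilon$ with constant success probability. Restrict the input class to channels of the Loschmidt-echo form $\mathcal{N}=\mathcal{O}_{\mathcal{E},O}=\mathcal{E}^\dagger\circ\tilde{O}\circ\mathcal{E}$. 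By Lemma~\ref{lemma:prop_o} these are bona fide channels whenever $\mathcal{E}$ is doubly stochastic. Their diagonal PTM elements are exactly the diagonal OTOCs $\tr(\mathcal{E}(P_i)O\mathcal{E}(P_i)O)/2^n$, by Eq.~(\ref{eq:ricochet_channel}) and Eq.~(\ref{eq:otoc_to_distinguish}). So $\mathcal{A}$, run unchanged, is a diagonal-OTOC protocol satisfying the hypotheses of Theorem~\ref{theorem:shadow_diag_lowerbound_formal}, and it inherits $N\geq\Omega(3^w\log(M)/\epsilon^2)$.

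The only point requiring care is the range of $M$. The statement allows $M\leq 3^w\binom{n}{w}$, whereas Theorem~\ref{theorem:shadow_diag_lowerbound_formal} is stated for all $M$. Its proof, however, uses the hard family $\Lambda_{a,s}$ of Eq.~(\ref{eq:hard_instances_diag}), indexed by weight-$w$ Paulis $P_a$, of which there are exactly $3^w\binom{n}{w}$. The constraint is therefore the honest range of that argument. For completeness I would re-state the chain explicitly.

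Alice picks $(a,s)$ uniformly among $2M$ labels, with $M$ weight-$w$ Paulis. Each $\Lambda_{a,s}$ is of echo form by Lemma~\ref{lemma:hard_instances_echo_form}(2) with $\alpha=2\epsilon_0$. Loki twirls with $\mathcal{U}=\mathrm{Cl}(2)^{\otimes n}$. Bob decodes using the estimates with $\epsilon\leq\epsilon_0/2$. Fano's inequality gives $I(X:Y|U)\geq\Omega(\log M)$. Lemma~\ref{lemma_mi_expression} together with Eqs.~(\ref{eq:apa_inequality}), (\ref{eq:haar_eq1}) and (\ref{eq:sum_unity}) bounds each run by $I(X:Y_j|U)\leq 4\epsilon_0^2/3^w$. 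Summing over the $N$ runs via Eq.~(\ref{eq:mutual_inf_ineq}) yields the bound.

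I expect no real obstacle. The subtle step is the direction of the reduction. A lower bound for the restricted class $\{\mathcal{O}_{\mathcal{E},O}\}$ transfers to arbitrary $\mathcal{N}$ because the arbitrary-channel task contains the restricted one as a special case, not the other way around. I would also check that Loki's conjugation keeps the channel within the allowed input class: it preserves both the echo form and the Pauli-channel structure.
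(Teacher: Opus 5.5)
Your proposal is correct and follows the paper's argument: the paper also obtains this bound by noting that diagonal OTOCs are exactly the diagonal PTM elements of the echo-form channels $\mathcal{O}_{\mathcal{E},O}$, so any protocol for arbitrary $\mathcal{N}$ restricts to a diagonal-OTOC protocol and inherits Theorem~\ref{theorem:shadow_diag_lowerbound_formal}. Your explicit re-derivation of the Alice--Bob--Loki chain, where the $3^w\binom{n}{w}$ count of hard instances accounts for the range of $M$, is a faithful unpacking of that theorem's proof rather than a different route.
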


\begin{theorem}[Sample complexity lower bound on predicting unknown PTM elements] \label{theorem:ptm_gen_lowerbound_formal}
Consider any protocol that is able to predict any set of $M$ weight-$w$ PTM elements $\{\tr(Q_i \mathcal{N}(Q_j))/2^n\}_{i,j}$ where $Q_i, Q_j$ are weight-$w$ tensor product operators, i.e. $\forall ~ (i,j), ~w(Q_i)=w(Q_j)=w$, that uses $n$ or more qubits, based on querying the (arbitrary) channel $\mathcal{N}$, and performing local measurements in each of the $N$ runs.
\begin{enumerate}
    \item Any such protocol that can initialize as arbitrary product states requires at least $N \geq \Omega\left(9^w \log(M)/\epsilon^2\right)$ runs to solve this problem, if $w \leq \lfloor n/2 \rfloor$ and $M \leq \frac{3^{2w}}{2} {n \choose w}{n-w \choose w}$.

    \item Any such protocol that can initialize as arbitrary states requires at least $N \geq \Omega\left((9/2)^w \log(M)/\epsilon^2\right)$ runs to solve this problem, if $w \leq \lfloor n/2 \rfloor$ and $M \leq \frac{3^{2w}}{2} {n \choose w}{n-w \choose w}$.

    \item If $w > \lfloor n/2 \rfloor$ and $M \leq \frac{3^{2w}}{2} {2n \choose 2w}$, we have the weaker $N \geq \Omega\left(3^w \log(M)/\epsilon^2\right)$ bound instead, for any such protocol that can initialize as arbitrary states.
\end{enumerate}
\end{theorem}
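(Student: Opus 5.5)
The plan is a direct reduction from Theorem~\ref{theorem:shadow_gen_lowerbound_formal}. The quantities to be predicted are PTM elements of an arbitrary channel $\mathcal{N}$. The OTOCs of Theorem~\ref{theorem:shadow_gen_lowerbound_formal} are exactly PTM elements of the particular channel $\mathcal{N}=\mathcal{O}_{\mathcal{E},O}=\mathcal{E}^\dagger\circ\tilde{O}\circ\mathcal{E}$, since
\[
\tfrac{1}{2^n}\tr(\mathcal{E}(Q_i)O\mathcal{E}(Q_j)O)=\tfrac{1}{2^n}\tr(Q_i\,\mathcal{O}_{\mathcal{E},O}(Q_j)).
\]
Fix any protocol $\mathcal{A}$ that predicts weight-$w$ PTM elements of arbitrary channels with the stated resources: initialization as (product or arbitrary) states on $n$ or more qubits, one query per run, and local measurements. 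Running $\mathcal{A}$ on the channel $\mathcal{O}_{\mathcal{E},O}$ is a legitimate OTOC-learning protocol in the model of Theorem~\ref{theorem:shadow_gen_lowerbound_formal}. By Lemma~\ref{lemma:prop_o}, $\mathcal{O}_{\mathcal{E},O}$ is CPTP whenever $\mathcal{E}$ is doubly stochastic, so it is an admissible input to $\mathcal{A}$. Its outputs are precisely the OTOCs to the same precision $\epsilon$ and success probability, using the same number $N$ of runs, and no resource is changed.

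Consequently, any bound $N<\Omega(\cdot)$ for $\mathcal{A}$ would contradict the corresponding item of Theorem~\ref{theorem:shadow_gen_lowerbound_formal}. Item by item, the required bounds are:
\begin{itemize}
\item $9^w\log M/\epsilon^2$ for product-state initialization with disjoint-support instances, valid when $w\le\lfloor n/2\rfloor$ and $M\le\frac{3^{2w}}{2}\binom{n}{w}\binom{n-w}{w}$;
\item $(9/2)^w\log M/\epsilon^2$ for arbitrary initialization, under the same conditions;
\item $3^w\log M/\epsilon^2$ when $w>\lfloor n/2\rfloor$ and $M\le\frac{3^{2w}}{2}\binom{2n}{2w}$.
\end{itemize}
Thus each item transfers verbatim, with the same parameter ranges.

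To make the argument self-contained, I would also note the following. The hard instances used there, namely $\tilde{\Lambda}_{a,b,s}$ from Eq.~(\ref{eq:loki_scrambling_channel}), are themselves channels. Loki's local-unitary conjugation $\mathcal{U}\circ\Lambda_{a,b,s}\circ\mathcal{U}$ preserves CPTP-ness, so one could equally run the Alice--Bob--Loki mutual-information argument of Appendix~\ref{app:shadow_proof_strat} directly on these channels viewed as generic $\mathcal{N}$. Every estimate there (Lemma~\ref{lemma_mi_expression}, the factorization in Eq.~(\ref{eq:disjoint_factorize}), and Eqs.~(\ref{eq:haar_eq1}) and (\ref{eq:haar_eq3})) uses only the channel's action, never its echo form.

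The only point I expect to need care is checking that the class of admissible inputs really does contain the hard instances. This requires the positivity conditions of Lemma~\ref{lemma:hard_instances_echo_form}, i.e. $\alpha=\epsilon_0\le 1/2$, which fixes the choice $\epsilon\le\epsilon_0/2$ with $\epsilon$ small. It also requires that replacing weight-$w$ Paulis by Haar-rotated weight-$w$ tensor-product operators keeps the task within "weight-$w$ tensor product operators," which holds by the statement's hypothesis.
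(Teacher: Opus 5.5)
Your proposal is correct and follows the paper's own argument. The paper proves this theorem by the same reduction: OTOCs are PTM elements of the echo channel $\mathcal{O}_{\mathcal{E},O}=\mathcal{E}^\dagger\circ\tilde{O}\circ\mathcal{E}$, so any PTM-learning protocol for arbitrary channels, restricted to such channels, inherits each item of Theorem~\ref{theorem:shadow_gen_lowerbound_formal} with the same parameter ranges. Your additional remark, that the Alice--Bob--Loki argument could be run directly on the hard instances viewed as generic channels, is valid but not needed.
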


Theorems~\ref{theorem:ptm_diag_lowerbound_formal} and \ref{theorem:ptm_gen_lowerbound_formal} complement existing lower bounds on the problem of estimating (all) PTM elements of general channels (\cite{caro2024learning}, Theorem 5.6 (for adaptive protocols)), generalizing it to the case of estimating weight-$w$, a priori unknown, PTM elements.

Furthermore, noting that the hard instances used in the proofs of Theorems~\ref{theorem:shadow_diag_lowerbound_formal} and Theorems~\ref{theorem:ptm_diag_lowerbound_formal} are Pauli channels (c.f. Eq.~(\ref{eq:hard_instances_diag})), we also immediately have the following stronger result for estimating the PTM elements of Pauli channels:
\begin{corollary}[Sample complexity lower bound on predicting local diagonal PTM elements of Pauli channels; arbitrary initialization] \label{theorem:ptm_pauli_diag_lowerbound_formal}
Consider any protocol that is able to predict any set of $M$ weight-$w$ diagonal PTM elements $\{\tr(P_i \mathcal{N}(P_i))/2^n\}_{i=1}^M$, i.e. $\forall P_i \in \mathcal{P}_n, ~w(P_i)=w$, based on initializing as arbitrary states of $n$ or more qubits, querying the (arbitrary) Pauli channel $\mathcal{N}$, and performing local measurements in each of the $N$ runs. Any such protocol requires at least $N \geq \Omega\left(3^w \log(M)/\epsilon^2\right)$ runs to solve this problem with constant success probability, $\forall w$ and $M \leq 3^w {n \choose w}$.
\end{corollary}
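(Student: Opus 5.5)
The corollary follows by rereading the proof of Theorem~\ref{theorem:shadow_diag_lowerbound_formal} (equivalently Theorem~\ref{theorem:ptm_diag_lowerbound_formal}) and checking that every channel Bob is ever handed is a Pauli channel. A lower bound proved over a family of instances that all lie in a subclass applies verbatim to protocols promised only that subclass. So I would not redo the information-theoretic computation. Instead I would verify closure under the three stages of the communication protocol of Appendix~\ref{app:shadow_proof_strat}.

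\textbf{Step 1: Alice's instances are Pauli channels.} For the diagonal case, Alice's instances are $\Lambda_{a,s}(\cdot)=\frac{1}{2^n}(\mathbb{I}\tr(\cdot)+2s\epsilon_0 P_a\tr(P_a\,\cdot))$ from Eq.~(\ref{eq:hard_instances_diag}). Their PTM is diagonal, with eigenvalue $1$ on $\mathbb{I}$, $2s\epsilon_0$ on $P_a$, and $0$ elsewhere. These eigenvalues define a Pauli channel with error rates $p_b=4^{-n}\sum_c \lambda_c(-1)^{\langle b,c\rangle}$. These rates are nonnegative once $2\epsilon_0\le 1$, which matches the choice $\epsilon\le\epsilon_0/2$ with $\epsilon$ small. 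Lemma~\ref{lemma:hard_instances_echo_form}(2) already certifies that they are CPTP.

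\textbf{Step 2: Loki's scrambling preserves the class.} In the diagonal proof Loki draws $U$ from $\mathrm{Cl}(2)^{\otimes n}$, not from Haar single-qubit unitaries. Conjugating a Pauli channel by a Clifford gives another Pauli channel, here one of the same form with $P_a$ replaced by the weight-$w$ Pauli $UP_aU^\dagger$, up to a sign that squares away. So every channel Bob receives is a Pauli channel whose diagonal PTM element on $Q_{\tilde a}$ is $\pm 2\epsilon_0$, with all others zero. Bob can therefore run the hypothesised Pauli-channel learner.

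\textbf{Step 3: carry the bounds over.} The mutual-information bound $I(X:Y_j|U)\le 4\epsilon_0^2/3^w$ used only the 3-design property of $\mathrm{Cl}(2)^{\otimes n}$ and Eqs.~(\ref{eq:apa_inequality}) and (\ref{eq:haar_eq1}). It did not use the doubly-stochastic echo structure beyond realizability, so it holds unchanged. The counting constraint $M\le 3^w\binom{n}{w}$ is simply the number of available weight-$w$ Paulis. Fano's inequality and the subadditivity bound of Eq.~(\ref{eq:mutual_inf_ineq}) then give $N\ge\Omega(3^w\log M/\epsilon^2)$.

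The only point needing care is Step 2: confirming that the diagonal-case proof indeed uses Clifford scrambling rather than the Haar scrambling of the general case. Haar scrambling would produce non-Pauli channels, whose PTM elements lie on tensor-product, non-Pauli operators.
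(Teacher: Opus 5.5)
Your proposal is correct and follows the paper's own argument. The paper simply observes that the hard instances $\Lambda_{a,s}$ of Eq.~(\ref{eq:hard_instances_diag}), used in the proof of Theorem~\ref{theorem:shadow_diag_lowerbound_formal}, are Pauli channels. Your extra check that Loki's $\mathrm{Cl}(2)^{\otimes n}$ scrambling keeps every channel Bob receives within this Pauli-channel family makes explicit a step the paper leaves implicit; Haar scrambling, as in the general case, would break it.
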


Corollary~\ref{theorem:ptm_pauli_diag_lowerbound_formal} complements existing results on estimating all PTM elements of Pauli channels \cite{chen2022quantum,chen2024tight,chen2025efficient,kim2026fundamental}, extending them to the case of arbitrary $M$, and when the targeted indices $\{P_i\}_{i=1}^M$ are not known in advance to the learner.

\subsection{Family of tight bounds for estimating all diagonal PTM elements using ancillas} \label{app:family_diag_ptm}

Consider the problem of estimating all $4^n-1$ diagonal PTM elements $\{\tr(P_i \mathcal{N}(P_i))/2^n\}_{P_i \in \mathcal{P}_n}$ of a general channel $\mathcal{N}$. Since this problem is harder than both the Pauli channel eigenvalue problem and the All Diagonal OTOC Estimation Problem, any lower bound on these two problems immediately imply lower bounds on the diagonal PTM elements problem. Such bounds are given by Theorem 3.A of \cite{chen2022quantum} (for the Pauli channel eigenvalue problem) and Theorem~\ref{theorem:expsep_diag_nonad} (for the All Diagonal OTOC Estimation Problem), for the class of protocols that can query general channels $\mathcal{N}$ on arbitrary $(n+n_{\text{anc}})$-qubit density matrices, with $0 \leq n_{\text{anc}} \leq n$, but without adaptivity and quantum memory. We therefore immediately have:
\begin{lemma}[Lower bounds for estimating all diagonal PTM elements of general channels]
Consider protocols for learning all $4^n-1$ diagonal PTM elements of general channels $\mathcal{N}$, based on querying $\mathcal{N}$ on arbitrary $(n+n_{\text{anc}})$-qubit density matrices, with $0 \leq n_{\text{anc}} \leq n$, but without adaptivity and quantum memory. The query complexity lower bound $\Omega(n 2^{n-n_{\text{anc}}}/\epsilon^2)$ holds.
\end{lemma}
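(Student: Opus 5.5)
The plan is a reduction argument: any protocol for all diagonal PTM elements of a general channel is, in particular, a protocol for an easier problem whose lower bound is already established. The natural easier problem is the All Diagonal OTOC Estimation Problem of Theorem~\ref{theorem:expsep_diag_nonad}. I would take the class of channels $\{\mathcal{O}_{\mathcal{E},O} = \mathcal{E}^\dagger\circ\tilde O\circ\mathcal{E}\}$, with $\mathcal{E}$ doubly stochastic and $O$ Pauli. By Lemma~\ref{lemma:prop_o} every such map is a bona fide (doubly-stochastic) quantum channel, so it is an admissible input to any protocol that queries general channels $\mathcal{N}$. Moreover, its diagonal PTM elements are exactly the diagonal OTOCs:
\[
\tfrac{1}{2^n}\tr\!\big(Q\,\mathcal{O}_{\mathcal{E},O}(Q)\big)=\tfrac{1}{2^n}\tr\!\big(O\mathcal{E}(Q)O\mathcal{E}(Q)\big),
\]
as already recorded in Eq.~(\ref{eq:expt_otoc}).

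Given a non-adaptive, memoryless protocol on $n+n_{\text{anc}}$ qubits that outputs all $4^n-1$ diagonal PTM elements of an arbitrary $\mathcal{N}$ to error $\epsilon$ with $N$ queries, I would run it unchanged on $\mathcal{N}=\mathcal{O}_{\mathcal{E},O}$. Its outputs are then estimates of all diagonal OTOCs, obtained with the same number of queries and the same resources: no adaptivity, no quantum memory, and the same ancilla count. Theorem~\ref{theorem:expsep_diag_nonad} therefore forces $N=\Omega(n2^{n-n_{\text{anc}}}/\epsilon^2)$.

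As a cross-check, the same conclusion should follow from the Pauli-channel eigenvalue lower bound (Theorem 3.A of \cite{chen2022quantum}), since Pauli channels are a subclass of general channels and their Pauli eigenvalues are exactly the diagonal PTM elements. The only care needed is to confirm that that theorem covers the same access model (non-adaptive, no concatenation, $n_{\text{anc}}$ ancillas) and that its error and success-probability conventions match ours up to constants. This is bookkeeping rather than a real obstacle.

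The main point to verify is that the hard instances used in the proof of Theorem~\ref{theorem:expsep_diag_nonad}, namely $\Lambda_{i,s}$ with coefficient $3\epsilon$, are valid channels. This holds because, by Lemma~\ref{lemma:hard_instances_echo_form} with $\alpha=3\epsilon\le 1$, they are Pauli channels of echo form. Beyond that, the argument consists only of the observation that lower bounds over a restricted input class transfer to the larger class of arbitrary $\mathcal{N}$.
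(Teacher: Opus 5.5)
Your proposal is correct and matches the paper's proof. The paper likewise argues that estimating all diagonal PTM elements of a general channel is at least as hard as two other problems: the All Diagonal OTOC Estimation Problem (Theorem~\ref{theorem:expsep_diag_nonad}, via the echo-form channels $\mathcal{O}_{\mathcal{E},O}$, whose diagonal PTM elements are the diagonal OTOCs) and the Pauli-channel eigenvalue problem (Theorem 3.A of \cite{chen2022quantum}), so the $\Omega(n2^{n-n_{\text{anc}}}/\epsilon^2)$ bound transfers immediately. The only difference is emphasis: you treat the OTOC reduction as primary and the Pauli-channel one as a cross-check, whereas the paper cites both on equal footing.
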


Together with the matching upper bound of Theorem~\ref{theorem:ptm_diagonal_algo_nk} on the same problem (for constant success probability $1-\delta$), we have the following result:
\begin{theorem}[Tight bounds for estimating all diagonal PTM elements of general channels] \label{theorem:diag_ptm_optimal}
    Consider the problem of estimating all $4^n-1$ diagonal PTM elements of a general quantum channel $\mathcal{N}$. The query complexity of this problem, for algorithms based on querying $\mathcal{N}$ on arbitrary $(n + n_{\text{anc}})$-qubit initial density matrices, with $0 \leq n_{\text{anc}} \leq n$, and measuring in arbitrary POVMs, without adaptivity or quantum memory, is $\Theta(n 2^{n-n_{\text{anc}}}/\epsilon^2)$.
\end{theorem}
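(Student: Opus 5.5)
The theorem is obtained by pairing a matching upper bound with a matching lower bound. Both are already available from earlier results in the paper, so the remaining work is to check that their hypotheses line up with the stated resource model.

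\emph{Upper bound.} We invoke Theorem~\ref{theorem:ptm_diagonal_algo_nk}. Algorithm~\ref{alg:diag_ptm_nk} queries an arbitrary channel $\mathcal{N}$ on $n+n_{\text{anc}}$ qubits. In each run it prepares the fixed state $\kett{\mathbb{I}}_{AB}\otimes\ket{\phi^S_0}_C$, applies a Pauli-twirled query, and measures in the fixed basis $\{\kett{P_v}\otimes\ket{\phi^S_e}\}$. The only randomness is the twirl $P_r$, which can be folded into the input state and the POVM. Hence the protocol is non-adaptive, uses a single query per run, and fits the model of the theorem. For constant $\delta$ it uses $O(n2^{n-n_{\text{anc}}}/\epsilon^2)$ queries. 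This count comes from a stabilizer covering of size $2^{n-n_{\text{anc}}}+1$, Hoeffding's inequality for each unbiased estimator $(-1)^{\langle u,v\rangle+\langle s,e\rangle}$, and a union bound over the $4^n-1$ elements. The union bound is the source of the factor $n$.

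\emph{Lower bound.} We use the preceding Lemma, which is a reduction. Any protocol estimating all diagonal PTM elements of a general $\mathcal{N}$ in particular estimates them for $\mathcal{N}=\mathcal{O}_{\mathcal{E},O}$. Those elements are exactly the diagonal OTOCs $\tr(O\mathcal{E}(Q)O\mathcal{E}(Q))/2^n$, by the identity $\tr(P_i\mathcal{O}_{\mathcal{E},O}(P_i))/2^n=\tr(\mathcal{E}(P_i)O\mathcal{E}(P_i)O)/2^n$. Theorem~\ref{theorem:expsep_diag_nonad} therefore transfers directly and gives $\Omega(n2^{n-n_{\text{anc}}}/\epsilon^2)$ in the same model: arbitrary $(n+n_{\text{anc}})$-qubit inputs, arbitrary POVMs, no adaptivity and no quantum memory. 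Alternatively, one can cite Theorem~3.A of \cite{chen2022quantum} on Pauli channels, since Pauli channels are a subclass of general channels.

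The main thing to check is that the two bounds refer to the same access model, namely arbitrary initial states including entangled ancillas, arbitrary POVMs, and constant success probability. The delicate point is that the twirl in Algorithm~\ref{alg:diag_ptm_nk} uses classical randomness and not adaptivity. Once these are matched, $\Omega$ and $O$ coincide up to constants and the theorem follows.
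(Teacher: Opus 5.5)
Your proposal is correct and follows essentially the same route as the paper. The upper bound is Theorem~\ref{theorem:ptm_diagonal_algo_nk} at constant $\delta$. The lower bound is inherited from Theorem~\ref{theorem:expsep_diag_nonad} (or Theorem~3.A of \cite{chen2022quantum}), because the hard instances there are Pauli channels of the form $\mathcal{O}_{\mathcal{E},O}$, whose diagonal PTM elements are exactly the diagonal OTOCs.

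Your remark that the random Pauli twirl is non-adaptive classical randomness, which can be absorbed into the input state and the POVM, is correct. The paper leaves this check implicit.
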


The above result complements the tight bounds of \cite{chen2022quantum} for the Pauli channel eigenvalue estimation problem. It shows that both problems attain the same query complexity of $\Theta(n 2^{n-n_{\text{anc}}}/\epsilon^2)$, and that a simple adaptation of the optimal $(n + n_{\text{anc}})$-qubit algorithm of \cite{chen2022quantum} (by twirling) suffices to also yield an optimal algorithm for the problem of learning all diagonal PTM elements of general channels.

\end{document}